\definecolor{cnblau}{RGB}{0,80,147}
\newcommand{\zs}{{\underline z}}
\newcommand{\dd}{\mathrm{d}}
\newcommand{\bq}{\bar{l}}
\newenvironment{gleichung}{\begin{equation}\begin{aligned}}{\end{aligned}\end{equation}\\ \noindent}
\newenvironment{gleichung*}{\begin{equation*}\begin{aligned}}{\end{aligned}\end{equation*}}
\newcommand{\eps}[0]{\epsilon}
\newcommand{\cL}[0]{\mathcal{L}}
\newcommand{\ord}[0]{\mathcal{O}}
\newcommand{\cI}[0]{\mathcal{I}}
\def\beq{\begin{equation}}
\def\eeq{\end{equation}}
\def\bsp#1\esp{\begin{split}#1\end{split}}
\newcommand{\uz}{\underline{z}}
\newcommand{\ux}{\underline{x}}
\newcommand{\uk}{\underline{k}}
\newcommand{\uI}{\underline{I}}
\newcommand{\uJ}{\underline{J}}
\newcommand{\uN}{\underline{N}}
\newcommand{\uL}{\underline{L}}
\newcommand{\unu}{\underline{\nu}}
\newcommand{\uPi}{\underline{\Pi}}
\newcommand{\ualpha}{\underline{\alpha}}
\newcommand{\rd}{\textrm{d}}
\newtheorem{prop}{Proposition}
\title{Feynman Integrals in Dimensional Regularization and Extensions of Calabi-Yau Motives }
\author[ab]{Kilian B\"onisch}
\author[a]{Claude Duhr}
\author[a]{Fabian Fischbach}
\author[acd]{Albrecht Klemm}
\author[a]{Christoph Nega}
\affiliation[a]{Bethe Center for Theoretical Physics, Universit\"at Bonn, D-53115, Germany}
\affiliation[b]{Max-Planck-Institut f\"ur Mathematik, Bonn, D-53111, Germany}
\affiliation[c]{Hausdorff Center for Mathematics, Universit\"at Bonn, D-53115, Germany}
\affiliation[d]{Institute for Theoretical Studies, ETH Z\"urich, CH-8092 Zurich,  Switzerland}
\emailAdd{kilian@mpim-bonn.mpg.de}
\emailAdd{cduhr@uni-bonn.de}
\emailAdd{fischbach@physik.uni-bonn.de}
\emailAdd{aklemm@th.physik.uni-bonn.de}
\emailAdd{cnega@th.physik.uni-bonn.de}
\abstract{
We provide a comprehensive summary of concepts from Calabi-Yau motives relevant to the computation of multi-loop Feynman integrals. From this we derive several consequences for multi-loop integrals in general, and we illustrate them on the example of multi-loop banana integrals. For example, we show how Griffiths transversality, known from the theory of variation of mixed Hodge structures, leads quite generically to a set of quadratic relations among maximal cut integrals associated to Calabi-Yau motives. These quadratic relations then naturally lead to a compact expression for $l$-loop banana integrals in $D=2$ dimensions in terms of an integral over a period of a Calabi-Yau $(l-1)$-fold. This new integral representation generalizes in a natural way the known representations for $l\le 3$ involving logarithms with square root arguments and iterated integrals of Eisenstein series. In a second part, we show how  the results obtained by some of the authors in earlier work can be extended to dimensional regularization. We present a method to obtain the differential equations for banana integrals with an arbitrary number of loops in dimensional regularization without the need to solve integration-by-parts relations. We also present a compact formula for the leading asymptotics of banana integrals with an arbitrary number of loops in the large momentum limit. This generalizes the novel $\widehat{\Gamma}$-class introduced by some of the authors to dimensional regularization and provides a convenient boundary condition to solve the differential equations for the banana integrals. As an application, we present for the first time numerical results for equal-mass banana integrals with up to four loops and up to second order in the dimensional regulator.
}
\begin{document}

\rightline{
BONN-TH-2021-05
}

\maketitle

\flushbottom


\section{Introduction}
\label{sec:intro}

Multi-loop Feynman integrals are the cornerstone of perturbative quantum field theory, and they are the main tool to make precise predictions for collider experiments in particle physics. Having efficient methods for their computation is therefore not only important in order to explore the mathematical structure of quantum field theories, but also to compare theory and experiment. Over the last decade, it has become clear that there is a connection between Feynman integrals and certain branches of mathematics. In particular, Feynman integrals are (families of) periods~\cite{Bogner:2007mn} in the sense of Kontsevich and Zagier~\cite{MR1852188}, depending parametrically on the external kinematic data, e.g., the masses and momenta of all the particles involved. They then satisfy linear systems or first-order differential equations in the external kinematic data~\cite{Kotikov:1991hm,Kotikov:1991pm,Kotikov1991,Gehrmann2000,Henn:2013pwa}, or equivalently (inhomogeneous) linear differential equations of higher order. These differential equations are reminiscent of the first order 
differential system that encodes the Gauss-Manin connection and the Picard-Fuchs equations describing equivalently the variation of mixed Hodge structures attached to families of algebraic varieties~\cite{MullerStach:2012mp,Vanhove:2014wqa}. A closely related observation concerning  the geometrical interpretation of 
Feynman integrals was made already in ref.~\cite{MR1011353}, where linear differential equations were identified with generalized  hypergeometric equations, which became later known more generally  
as \emph{Gel\cprime fand-Kapranov-Zelevinsk\u{\i} (GKZ) systems}.   

 The simplest examples of periods that show up at low loop orders are the so-called \emph{multiple polylogarithms} (also known as hyperlogarithms), which were first introduced in the works of Poincar\'e, Kummer and Lappo-Danilevsky~\cite{Kummer,Lappo:1927} and have recently reappeared in both mathematics~\cite{GoncharovMixedTate,Goncharov:1998kja,Brown:2011ik} and physics~\cite{Remiddi:1999ew,Gehrmann:2000zt,Ablinger:2011te}. Large classes of phenomenologically-important integrals can be expressed in terms of them, and several efficient numerical techniques exist for their computation~\cite{Gehrmann:2001jv,Gehrmann:2001pz,Vollinga:2004sn,Buehler:2011ev,Frellesvig:2016ske,Ablinger:2018sat,Naterop:2019xaf}. An important ingredient in the success of multiple polylogarithms to compute Feynman integrals lies in the fact that their mathematical and algebraic properties are well understood (see, e.g., ref.~\cite{Duhr:2014woa} for a review).
 
It has been known for several decades that starting from two loops not all Feynman integrals can be expressed in terms of multiple polylogarithms~\cite{Broadhurst:1987ei,Bauberger:1994by,Bauberger:1994hx,Laporta:2004rb,Kniehl:2005bc,Aglietti:2007as,Czakon:2008ii,Brown:2010bw,Muller-Stach:2011qkg,CaronHuot:2012ab,Huang:2013kh,Brown:2013hda,Nandan:2013ip}, but no complete analytic results were known. In a landmark paper~\cite{Bloch:2013tra}, Bloch and Vanhove showed that the so-called two-loop sunrise graph with three massive propagators can be expressed in terms of an elliptic dilogarithm (see also refs.~\cite{Adams:2013nia,Adams:2014vja,Adams:2015gva,Adams:2016xah,Ablinger:2017bjx}), which is a special case of the multiple elliptic polylogarithms defined in refs.~\cite{MR1265553,LevinRacinet,BrownLevin}. In the case where the values of the three masses are equal and non-zero, one can also express the result in terms of iterated integrals of Eisenstein series~\cite{Adams:2017ejb,Broedel:2018iwv}, which have recently been introduced in pure mathematics, cf.,~e.g., refs.~\cite{ManinModular,Brown:mmv,Matthes:QuasiModular}. By now it is clear that elliptic polylogarithms and iterated integrals of modular forms are relevant for large classes of multi-loop Feynman integrals~\cite{Broedel:2017kkb,Broedel:2017siw,Adams:2017ejb,Broedel:2018iwv,Broedel:2018qkq,Adams:2018yfj,Adams:2018bsn,Adams:2018kez,Broedel:2019hyg,Broedel:2019kmn,Duhr:2019rrs,Bogner:2019lfa,Abreu:2019fgk,Campert:2020yur,Walden:2020odh,Bezuglov:2020ywm,Weinzierl:2020fyx,Kristensson:2021ani}. This has led to several studies on the properties of these functions from a physics perspective, including methods for their numerical evaluation~\cite{Ablinger:2017bjx,Bogner:2017vim,Walden:2020odh}.

Period integrals associated to elliptic and modular curves, however, are still not sufficient to capture the full breath of Feynman integrals, even for small loop numbers. In particular, there are several infinite families of Feynman graphs~\cite{Vanhove:2018mto,Klemm:2019dbm,Bonisch:2020qmm,Bourjaily:2018ycu,Bourjaily:2018yfy,Bourjaily:2019hmc}, most prominently the  so-called \emph{banana graphs}~\cite{Vanhove:2018mto,Klemm:2019dbm,Bonisch:2020qmm} and 
\emph{train-track graphs}~\cite{Bourjaily:2018ycu}, where the associated geometry at three loops involves a K3 surface, and more generally Calabi-Yau $(l-1)$-folds at $l$ loops. The geometry associated to the three-loop banana graph in $D=2$ space-time dimensions was first studied in ref.~\cite{Bloch:2014qca,MR3780269}. In the case of four equal non-zero masses, this K3 surface is elliptically fibered by the same family of elliptic curves governing the structure in the two-loop case~\cite{verrill1996,Primo:2017ipr}. As a consequence, the three-loop banana graph in $D=2$ space-time dimensions can be expressed in terms of the same class of iterated integrals of Eisenstein series that appear in the two-loop sunrise graph~\cite{Bloch:2014qca,Broedel:2019kmn}. For different masses or higher loops, no analytic representation of the answer in terms of this class of functions is expected to exist. In order to make progress in our understanding of $l$-loop Feynman integrals, it is therefore important to develop new mathematical techniques 
to tackle those period integrals that are defined by integrating the unique holomorphic $(l-1,0)$-Calabi-Yau-form $\Omega_l$ over cycles or chains that arise  in 
embeddings of families of Calabi-Yau  $(l-1)$-folds. The latter integrals are the  natural generalization of the elliptic 
integrals for $l=2$ or K3 periods for $l=3$ to all loop orders. The $l$-loop banana graphs provide the simplest sequence of Feynman 
integrals that can be understood  as integrals on higher-dimensional  Calabi-Yau manifolds  using  
techniques that have been developed in the context of mirror symmetry~\cite{Klemm:2019dbm,Bonisch:2020qmm}.
 
Important progress in identifying  the suitable families $M_{l-1}$ of Calabi-Yau $(l-1)$-folds, understanding their associated differential systems, and matching the solutions of these systems 
to the banana integrals in $D=2$ dimensions using the boundary behavior at the point of maximal unipotent monodromy  (MUM-point, see sections~\ref{sec:deqs} and~\ref{sec:CY}) was 
made by some of the authors in refs.~\cite{Klemm:2019dbm,Bonisch:2020qmm}. 
A  key observation was that the homogeneous solutions of the Picard-Fuchs system describe period integrals of the same integrand $\Omega_l$, 
but with the integration domain replaced by closed cycles, i.e., integration domains without boundary, in the homology of a family of $M_{l-1}$. 
In particular, the  maximal cut integrals of the banana graphs\footnote{See also refs.~\cite{Primo:2016ebd,Frellesvig:2017aai,Primo:2017ipr,Bosma:2017ens} for a discussion of the maximal cut integrals and homogeneous differential equations.} 
can be identified with periods over cycles in the integral middle homology $H_{l-1}(M_{l-1},\mathbb{Z})$  of the Calabi-Yau family $M_{l-1}$. The corresponding 
particular linear combination of the homogeneous solutions can be identified in terms of a Frobenius basis for all closed periods at the MUM-point. 
Similarly, the full Feynman integral is geometrically identified as an  integral of $\Omega_l$ over a special geometrical chain  and given  
as a linear combination of the homogeneous solutions and a special inhomogeneous solution. Using mirror symmetry the leading logarithmic behavior 
in the large momentum regime was determined in ref.~\cite{Bonisch:2020qmm}  using  
the $\widehat \Gamma$-class formalism applied to the large volume regime of the geometry $W_{l-1}$ that is mirror dual to $M_{l-1}$.  
The physical relevant parts of the variation of the Hodge structure  of $M_{l-1}$ and its complexified K\"ahler 
structure are restricted sectors  that are exchanged under mirror symmetry. These sectors of $M_{l-1}$ can 
be identified with the corresponding sectors of $W_{l-1}$ as explained in Section \ref{ssec:cutsbanana}. Using these indentifications, the maximal cut solution that corresponds to the 
imaginary part of the banana integral by the optical theorem, is determined  in particular by the $\widehat \Gamma$-class of the mirror $W_{l-1}$,  that is well-known in the context of topological string 
theory on Calabi-Yau manifolds \cite{Hosono:1994ax,MR3965409}. The full banana integral in $D=2$ dimension was  determined in ref.~\cite{Bonisch:2020qmm}  by a novel $\widehat \Gamma$-class evaluation on the $l$-dimensional Fano ambient space $F_l$, a degree $(1,...,1)$ Fano hypersurface  in $(\mathbb{P}^1)^{l+1}$ in which $W_{l-1}\subset F_l$ 
is embedded.  This  proposed $\widehat \Gamma$-class evaluation in $F_l$, was also subsequently mathematically confirmed in ref.~\cite{Iritani:2020qyh}. 

The results of refs.~\cite{Klemm:2019dbm,Bonisch:2020qmm} show that techniques from geometry offer a promising direction to evaluate more general classes of Feynman integrals, even those with many loops and depending on many scales (see also refs.~\cite{delaCruz:2019skx,Klausen:2019hrg,NasrollahpPeriodsFeynmanDiagrams2016}). At the same time, these techniques are often not known or used by the Feynman integral community, mostly due to a lack of knowledge of the mathematics involved. In particular, it is not clear how these techniques are related to, or how they can be combined with, state-of-the-art techniques developed for the computation of Feynman integrals. The latter are often based on solving differential equations whose solutions involve so-called iterated integrals~\cite{ChenSymbol}, a very general class of functions of which the aforementioned (elliptic) polylogarithms and iterated integrals of modular forms are specific examples. Which classes of iterated integrals arise from Feynman integrals associated to Calabi-Yau geometries (or more generally Calabi-Yau motives), and what are their properties, is still unexplored.
Moreover, the techniques from geometry of refs.~\cite{Klemm:2019dbm,Bonisch:2020qmm} are not yet fully satisfactory from  physics perspective, because they only apply to banana integrals in strictly $D=2$ space-time dimensions, where all integrals are finite for non-zero values of the masses. Phenomenologically-interesting Feynman integrals, however, are usually divergent, and the divergences are regularized using dimensional regularization. The integrals are then considered in $D=D_0-2\epsilon$ dimensions ($D_0$ an even integer), and the divergences in the limit $\epsilon\to0$ show up as poles in the Laurent expansion in the dimensional regulator $\epsilon$. Understanding how to extend the techniques and results of refs.~\cite{Klemm:2019dbm,Bonisch:2020qmm} to dimensionally-regulated integrals is an important step if one wants to apply techniques from geometry to interesting Feynman integrals that require regularization (see also ref.~\cite{Klausen:2019hrg}).  
The aim of this paper is to address some of the aforementioned issues. 

First, we provide a thorough and in depth summary and review of the mathematics underlying Calabi-Yau geometries that seem to play an important role for Feynman integral computations. We provide explicit examples of how abstract concepts from geometry are related to Feynman integrals. From these concepts, we derive several new consequences for Feynman integrals (or more precisely, their maximal cuts), like how the Griffiths transversality on the geometry side leads to the existence of quadratic relations among maximal cut integrals on the physics side, or how to motivate the expected transcendental weight of banana integrals by studying the monodromy group of the integral. The main goal is to present abstract geometrical concepts in a way that directly connects them to Feynman integrals, and we expect  that these mathematical concepts will play a increasingly important role for Feynman integrals in the future.

Second, we show how the quadratic relations among maximal cuts can be used to obtain an expression for all master integrals for the equal-mass banana integrals in $D=2$ dimensions with an arbitrary number of loops in terms of an (iterated) integral over a Calabi-Yau period. This representation is the direct analogue of the well-known representation in terms of polylogarithms and iterated Eisenstein integrals known for low loop orders, and it shows which classes of iterated integrals arise from Feynman integrals associated to Calabi-Yau motives. We then use this representation to speculate how the concept of \emph{transcendental weight} known from physics (cf., e.g.,~refs.~\cite{Kotikov:2001sc,Kotikov:2002ab,Kotikov:2004er,Kotikov:2007cy}) extends to the situation of higher-loop banana integrals associated to Calabi-Yau $(l-1)$-folds.

Third, we extend the results of refs.~\cite{Klemm:2019dbm,Bonisch:2020qmm} to obtain results for all $l$-loop banana graphs for even values of $D_0$ to 
arbitrary order in the dimensional regulator $\epsilon$. Starting from a Mellin-Barnes integral representation valid for any banana integral, we show how one can derive the set of differential equations satisfied by the master integrals in dimensional regularization (the so-called Picard-Fuchs ideal), and also a convenient boundary condition. The latter generalizes the novel $\widehat{\Gamma}$-class of ref.~\cite{Bonisch:2020qmm} to dimensional regularization. We also show that in the context of dimensional regularization, it is possible to give an interpretation of the structure of the solution space of the Picard-Fuchs ideal of ref.~\cite{Bonisch:2020qmm} in terms of a basis of (non-maximal) cut integrals. We also present an alternative and efficient method to derive the Picard-Fuchs operators in the equal-mass case from a Bessel representation, and we use this method to present for the first time numerical results for four-loop equal-mass banana integrals to higher order in the dimensional regulator.

This paper is organized as follows: 
In section~\ref{sec:deqs} we define our notations and conventions and give a brief summary of banana integrals and the differential equations that they satisfy. In section~\ref{sec:CY} we provide a detailed review of the main properties of Calabi-Yau motives that are relevant for Feynman integrals. In section~\ref{sec:banint} we use Griffiths transversality to derive quadratic relations among the maximal cuts of the equal-mass banana integrals in $D=2$ dimensions, and we use those relations to obtain a representation of $l$-loop banana integrals in $D=2$ dimensions as integrals over periods of Calabi-Yau $(l-1)$-folds. In section~\ref{sec:Bananadimreg} we show how the results of ref.~\cite{Bonisch:2020qmm} can be extended to dimensional regularization, and we present a way to derive the Picard-Fuchs ideal and the initial condition for all banana integrals with an arbitrary number of loops. In section~\ref{subsec:equalmasseps} we present an alternative way to derive the Picard-Fuchs operator in the equal-mass case, and we present numerical results for banana integrals with up to four loops and up to $\mathcal{O}(\eps^2)$ in dimensional regularization. In section~\ref{sec:conclusion} we draw our conclusions. We include several appendices with technical material omitted throughout the main text.

\section{Differential equations for Feynman integrals}
\label{sec:deqs}

\subsection{Families of Feynman integrals and master integrals}
\label{sec:fam}

The main focus of this paper are families of $l$-loop Feynman integrals, defined by
\beq\label{eq:feyn_def}
I_{\unu}(\ux;D)	\coloneqq	
\int\left(\prod_{r=1}^l\frac{\rd^Dk_r}{i\pi^{D/2}}\right)\,\left( \prod_{j=1}^p\frac{1}{D_j^{\nu_j}} \right) \,,
\eeq
with $D_j = q_j^2 -m_j^2+i0$. Here $D$ denotes the space-time dimension and the exponents $\unu=(\nu_j)_{1\le j\le p}$ define a point on the integer lattice $\mathbb{Z}^p$, and $|\underline\nu|=\sum_{j=1}^{p}\nu_j$. The propagator masses $m_j^2$ are positive real numbers and the momenta $q_j$ flowing through the propagators are linear combinations of the loop momenta $k_r$ and the external momenta $p_1,\ldots,p_E$, which are constraint to sum up to zero by momentum conservation. By Lorentz invariance the integral only depends on the propagator masses and the dot products between the external momenta. We refer to these collectively as the \emph{scales} $x_k$, and we collect them into the vector ${\ux} = (x_k)_{1\le k\le N}$. By dimensional analysis, the only non-trivial functional dependence is through the ratios \beq\label{eq:scaled_scales}
z_k\coloneqq x_{k+1}/x_1\,, \qquad1\le k<N\,.
\eeq

It is well known that not all the integrals in this family are independent. We can use integration-by-parts (IBP) relations to write every member of this family as a linear combination of a certain set of basis elements, conventionally referred to as \emph{master integrals}~\cite{Chetyrkin:1981qh,Tkachov:1981wb}. The basis of master integrals is known to be always finite~\cite{Smirnov:2010hn,Bitoun:2017nre,Mastrolia:2018uzb}.
In the following it will be useful to group the members of the family into \emph{sectors}, i.e., integrals that share the same set of denominators in the integrand in eq.~\eqref{eq:feyn_def} (though the denominators may be raised to different powers). More precisely, consider the map $\vartheta:\mathbb{Z}^p\to \{0,1\}^p$ which sends $\unu=(\nu_j)_{1\le j\le p}$ to $\vartheta(\unu) = (\theta(\nu_j))_{1\le j\le p}$, where $\theta(m)$ denotes the Heaviside step function:
\beq\label{eq:heaviside}
\theta(m) = \left\{\begin{array}{ll}1\,, &\textrm{ if } m>0\,,\\0\,, &\textrm{ if } m\le0\,.
\end{array}\right.
\eeq
We say that $I_{\unu}(\ux;D)$ and $I_{\unu'}(\ux;D)$ belong to the same sector if $\vartheta(\unu)=\vartheta(\unu')$. There is a natural partial order on sectors, given by $\vartheta(\unu)\le \vartheta(\unu')$ if and only if $\theta(\nu'_i)-\theta(\nu_i)\ge0$, for all $1\le i\le p$.

We work in dimensional regularization, and each member of this family is interpreted as a Laurent series in the dimensional regularization parameter $\eps = (D_0-D)/2$, with $D_0$ a positive integer, cf., e.g., ref.~\cite{speer}. For algebraic values of the scales $\ux$, the Laurent coefficients are periods~\cite{Bogner:2007mn} in the sense of  Kontsevich and Zagier~\cite{MR1852188}. This motivates the use of techniques from 
algebraic geometry to compute Feynman integrals. One of the main goals of this paper is to study how some methods from geometry to compute periods can be used to compute multi-loop Feynman integrals in dimensional regularization. Our recurrent example will be a special class of $l$-loop Feynman integrals in $D=2-2\eps$ with at most $p=l+1$ propagators, known as \emph{banana integrals} (see figure~\ref{figbanana}), and the propagators are given by
\begin{figure}[!t]
	\centering
	\includegraphics[width=0.6\textwidth]{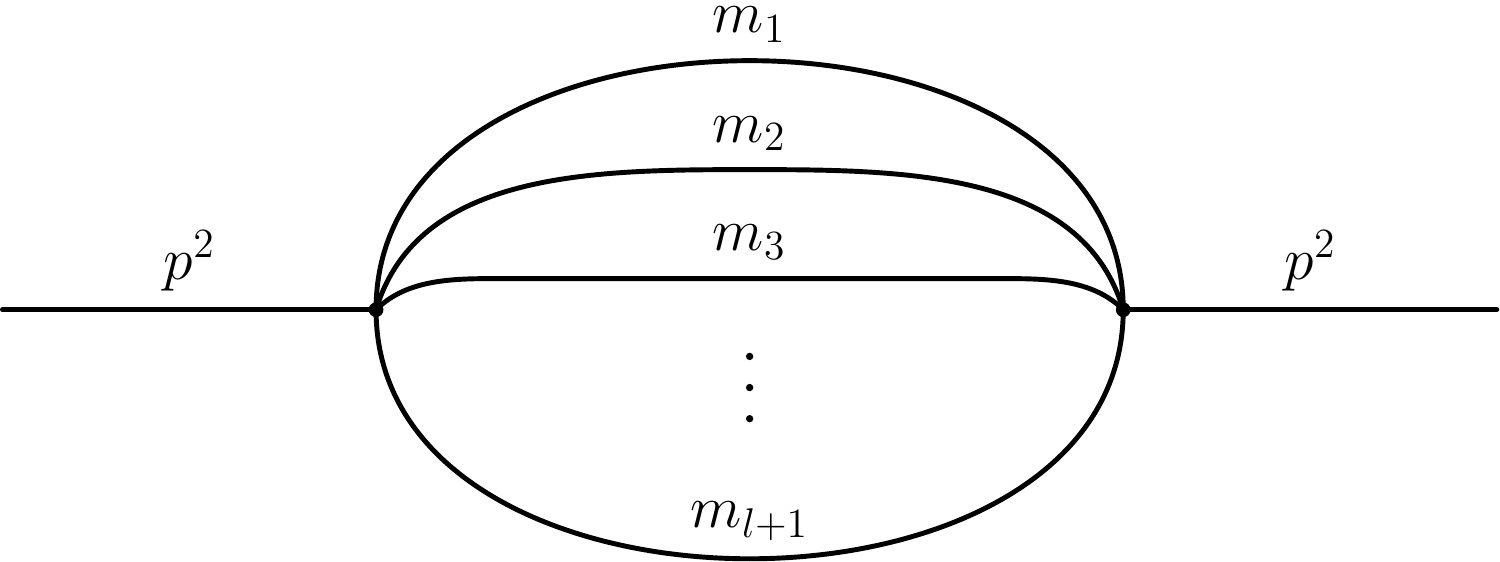}
     \caption{The $l$-loop banana graph with external momentum $p$ and internal masses $m_i$.}
     \label{figbanana}
\end{figure}
\beq\bsp
D_j = k_j^2-m_j^2\,,\qquad 1\le j\le l\,,\\
D_{l+1} = (k_1+\ldots+k_l-p)^2-m_{l+1}^2\,.
\esp\eeq
The integrals depend on the scales $\ux = (p^2,m_1^2,\ldots,m_{l+1}^2)$, and so $\uz = (m_1^2/p^2,\ldots,m_{l+1}^2/p^2)$. In total, there are $2^{l+1}-1$ master integrals, distributed among $l+2$ sectors. There are $l+1$ sectors of the form $\vartheta(\unu) = (1,\ldots,1,0,1\ldots1)$, and every integral in such a sector is proportional to an $l$-loop tadpole integral. For $1\leq i \leq l+1$ we define:
\begin{align}
J_{l,i}(\uz;\eps) 		&\,= 			\frac{(-1)^{l+1}}{\Gamma(1+l\epsilon)}(p^2)^{l\epsilon}\epsilon^l~I_{1,\ldots,1,0,1,\ldots,1}(\ux;2-2\eps) = -\frac{\Gamma(1+\epsilon)^l}{\Gamma(1+l\epsilon)}  \prod_{\substack{j=1\\j\neq i}}^{l+1}z_j^{-\epsilon}\,.
\end{align}
The sector $(1,\ldots,1)$ adds $2^{l+1}-l-2$ master integrals, one for each $\uk\in\{0,1\}^{l+1}$ with $1\le |\uk|\le l-1$ and $|\uk|=\sum_{j=1}^{l+1}k_j\,$:
\begin{equation}
\begin{aligned}\label{eq:banana_MIs}
J_{l,\underline0}({\uz};\eps) 	&\,= 			\frac{(-1)^{l+1}}{\Gamma(1+l\epsilon)}(p^2)^{1+l\epsilon}~I_{1,\ldots,1}(\ux;2-2\eps)	\\
J_{l,\uk}({\uz};\eps) 	&\,= 			(1+2\epsilon)\cdots(1+|{\uk}|\epsilon)~\partial^{\uk}_{\uz}J_{l,\underline0}({\uz};\eps) 	\,,
\end{aligned}
\end{equation}
with $\partial^{\uk}_{\uz}\eqqcolon\prod_{i=1}^{l+1}\partial_{z_i}^{k_i}$. We have explicitly checked for the first few loop orders that these integrals form a basis of master integrals. 
Moreover, it matches with the results of ref.~\cite{Kalmykov:2016lxx}. 
Note that the number $M$ of master integrals may change discontinuously in the limit where some scales vanish or become equal. In the equal-mass case, i.e., $m_i^2\eqqcolon m^2$ for $1\le i\le l+1$, the symmetry implies that there are only $l+1$ master integrals, which can be chosen as:
\beq\bsp
J_{l,0}(z;\epsilon)		&\,=	\frac{(-1)^{l+1}}{\Gamma(1+l\epsilon)}(m^2)^{l\epsilon}\epsilon^l~I_{1,\hdots,1,0}(p^2,m^2;2-2\epsilon)	=	-\frac{\Gamma(1+\epsilon)^l}{\Gamma(1+l\epsilon)}\,,	\\
J_{l,1}(z;\epsilon)		&\,=	\frac{(-1)^{l+1}}{\Gamma(1+l\epsilon)}(m^2)^{1+l\epsilon}~I_{1,\hdots,1}(p^2,m^2;2-2\epsilon)\,,			\\
J_{l,k}(z;\eps) 	&\,= 			(1+2\epsilon)\cdots(1+k\epsilon)~\partial_z^{k-1}J_{l,1}(z;\epsilon)\,,	\qquad \text{for } 2\le k\le l\,,
\label{eq:equal_mass_MIs}
\esp\eeq
where we defined $z\coloneqq \frac{m^2}{p^2}$.

We note that the number of master integrals changes also discontinuously when $\epsilon$ takes special values. In particular, in the  
generic-mass case, for $\eps=0$ we have only $2^{l+1}-\binom{l+2}{\lfloor\frac{l+2}2\rfloor}$ independent 
master integrals instead of $2^{l+1}-(l+1)-1$ in the sector $(1,\ldots,1)$. We note that this corresponds to the even primitive vertical cohomology $H^{k,k}_\text{vert}(W^\text{CI}_{l-1})$ for $k=0,\ldots,l-1$ of $W^\text{CI}_{l-1}$ given in eq.~\eqref{CICY}, or the horizontal middle  cohomology $H_{\text{hor}}^{l-1}(M^\text{CI}_{l-1})$ of its mirror $M^\text{CI}_{l-1}$ (see section~\ref{sec:CY} for the descriptions of these
(co)homology groups). Similar to eq.~\eqref{eq:banana_MIs}, in the latter picture the 
derivatives with respect to the $z_i$ for $i=1,\ldots,l+1$ generate  the cohomology groups in $H_\text{hor}^{l-1-k,k}(M^\text{CI}_{l-1})$, $k=0,\ldots,l-1$, see also eq.~\eqref{eq:transversality}. 
However, keeping in mind the linear dependencies of these derivatives in the cohomology of $M^\text{CI}_{l-1}$, one finds that there are only
\begin{equation} 
h^{l-1-l,k}_\text{hor}(M^\text{CI}_{l-1})	=	\left\{ \begin{array}{ll} 
								\left(l+1\atop k\right) & {\rm if }\ k\le  \left\lceil \frac{l}{2}\right\rceil-1 \\[1ex]
								\left(l+1\atop l-1-k\right) & {\rm otherwise }
						\end{array}\right. 
\end{equation} 
independent ones~\cite{Bonisch:2020qmm}.

\subsection{Gauss-Manin-type differential equations}
\label{subsec:f-odeqs}

Let us collect the master integrals into a vector ${{\uI}}({\ux};\eps)\coloneqq(I_1({\ux};\eps),\ldots,I_M({\ux};\eps))^T$. The master integrals then satisfy a system of first-order linear differential equations~\cite{Kotikov:1990kg,Kotikov:1991hm,Kotikov:1991pm,Gehrmann:1999as,Henn:2013pwa}:
\beq
\rd{\uI}({\ux};\eps) = {\bf A}({\ux};\eps)\,{\uI}({\ux};\eps)\,,
\label{eq:firstorder}
\eeq
where $\rd=\sum_{k=1}^N\rd x_k\,\partial_{x_k}$ is the total differential and ${\bf A}({\ux};\eps)$ is a matrix of rational one-forms. We will refer to this system of first-order differential equations as the \emph{Gauss-Manin system} for the family of integrals. Indeed, from a geometrical point of view, we can interpret the $M$-dimensional vector space spanned by the family of Feynman integrals as a rank $M$ vector bundle over the base defined by the scales $\ux$. On this vector bundle there exists a flat connection called the \emph{Gauss-Manin connection}, and the matrix ${\bf A}({\ux};\eps)$ is the corresponding connection one-form. 

The basis of master integrals is not unique. If ${\bf M}({\ux};\eps)$ is an invertible matrix, we can define a new basis ${\uJ}({\uz};\eps)$ by ${\uI}({\ux};\eps) = {\bf M}({\ux};\eps){\uJ}({\uz};\eps)$, and we have
\beq\label{eq:JDEQ}
\rd{\uJ}({\uz};\eps) = \widetilde{{\bf A}}({\uz};\eps){\uJ}({\uz};\eps)\,,
\eeq
with
\beq\label{eq:A_tilde_def}
\widetilde{{\bf A}}({\uz};\eps) = {\bf M}({\ux};\eps)^{-1}\left[{\bf A}({\ux};\eps){\bf M}({\ux};\eps) - \rd{\bf M}({\ux};\eps)\right]\,.
\eeq
Note that we also use the transformation in eq.~\eqref{eq:A_tilde_def} to pass from $\ux$ to $\uz$. 
It is then possible to choose the matrix ${\bf M}({\ux};\eps)$ such that the new differential equation is as simple as possible. It was argued in ref.~\cite{Lee:2019wwn} that it is always possible to change basis to a so-called \emph{$\eps$-regular basis}, where the master integrals ${J}_i({\uz};\eps)$ are finite and non-zero as $\eps\to0$ (see also ref.~\cite{Chetyrkin:2006dh} for a closely related concept). It is easy to see that in this case also the matrix $\widetilde{{\bf A}}({\uz};\eps)$ remains regular as $\eps\to0$, and we define ${\bf A}_0({\uz}) \coloneqq \lim_{\eps\to0}\widetilde{{\bf A}}({\uz};\eps)$. In the following we assume that this limit exists, though we may allow bases that are not necessarily $\eps$-regular.

In the special case where we can find a matrix ${\bf M}({\ux};\eps)$ that is rational in $\eps$ and algebraic in ${\ux}$ such that $\widetilde{{\bf A}}({\uz};\eps) = \eps\,{\bf A}_1({\uz})$, the Gauss-Manin system in eq.~\eqref{eq:JDEQ} is said to be in \emph{canonical form}~\cite{Henn:2013pwa} and can easily be solved in terms of a path-ordered exponential
\beq\label{eq:Pexp}
{\uJ}({\uz};\eps) = \mathbb{P}\exp\left[\eps\int_{{\uz}_0}^{{\uz}}{\bf A}_1({\uz}')\right]{\uJ}({\uz}_0;\eps)\,,
\eeq
where the integral is over a path from the point ${\uz}_0$ to the point ${\uz}$. This representation has the advantage that the path-ordered exponential can be expanded around $\eps=0$, and the expansion can easily be truncated after a few terms. The coefficient of $\eps^k$ will involve iterated integrals over algebraic one-forms. The same conclusion holds if one can find ${\bf M}({\ux};\eps)$ such that ${\bf A}_0({\uz})=0$, even if the system may not be strictly speaking into canonical form.

If there is no algebraic matrix ${\bf M}({\ux};\eps)$ to bring the system in eq.~\eqref{eq:JDEQ} in canonical form, then we need to proceed in a different fashion.\footnote{Though a canonical form may possibly be reached by allowing a transcendental rotation, cf., e.g., refs.~\cite{Adams:2018yfj,Adams:2018bsn,Adams:2018kez,Broedel:2018rwm,Bogner:2019lfa}.} The partial order on the sectors implies that we can always find a basis in which $\widetilde{{\bf A}}(\uz;\eps)$ is block-triangular. We can order the master integrals such that 
\beq\label{eq:sector_masters}
{\uJ}({\uz};\eps) = ({\uJ}_1({\uz};\eps)^T,\ldots, {\uJ}_s({\uz};\eps)^T)^T\,,
\eeq
where the elements of ${\uJ}_r({\uz};\eps)$ share exactly the same propagators, i.e., they belong to the same sector. The master integrals in each sector satisfy an inhomogeneous differential equation of the type
\beq\label{eq:DEG_inhom}
\rd{\uJ}_r({\uz};\eps) = {\bf B}_{r}({\uz};\eps)\,{\uJ}_r({\uz};\eps) + {\uN}_r({\uz};\epsilon)\,,\qquad 1\le r\le s\,.
\eeq
where the inhomogeneity ${\uN}_r({\uz};\epsilon)$ collects contributions from Feynman integrals from lower sectors, which we assume to be known. The associated homogeneous equation, obtained by putting ${\uN}_r({\uz};\eps)$ to zero, is the differential equation satisfied by the maximal cuts of ${\uJ}_r({\uz};\eps)$, defined, loosely speaking, by putting all the propagators in eq.~\eqref{eq:feyn_def} on shell~\cite{Primo:2016ebd,Frellesvig:2017aai,Harley:2017qut}. If the basis $\uJ(\uz;\eps)$ is $\eps$-regular, then so are ${\bf B}_{r}({\uz};\eps)$ and $ {\uN}_r({\uz};\epsilon)$. We define ${\bf B}_{r,0}({\uz}) \coloneqq \lim_{\eps\to0}{\bf B}_{r}({\uz};\eps)$.

Assume that we have found the general solution to the homogeneous equation for $\eps=0$. If ${\uJ}_r({\uz};\eps)$ has $M_r$ elements, this general solution can be conveniently cast in the form of an $M_r\times M_r$  matrix ${\bf W}_r({\uz})$ (called the \emph{Wronskian matrix}):
\beq
\rd{\bf W}_r({\uz}) = {\bf B}_{0,r}({\uz})\,{\bf W}_r({\uz})\,.
\eeq
Since the columns of ${\bf W}_r({\uz})$ form a basis for the solution space, this matrix must have full rank (for generic values of ${\uz}$). Letting 
\beq
{\uL}_r({\uz};\eps) = {\bf W}_r({\uz})^{-1}\,{\uJ}_r({\uz};\eps)\,,
\label{eq:L_to_J}
\eeq
we obtain the equation
\beq\label{eq:DEG_inhom_L}
\rd{\uL}_r({\uz};\eps) =  \widetilde{{\bf B}}_r({\uz};\eps)\,{\uL}_r({\bf z};\eps) + \widetilde{{\uN}}_r({\uz};\eps)\,,
\eeq
with 
\beq\bsp\label{eq:B_tilde_N_tilde}
\widetilde{{\bf B}}_r({\uz};\eps) &\,= {\bf W}_r({\uz})^{-1}\,\left[{\bf B}_{r}({\uz};\eps)-{\bf B}_{r,0}({\uz})\right]{\bf W}_r({\uz})\,,\\
\widetilde{{\uN}}_r({\uz};\eps) &\,= {\bf W}_r({\uz})^{-1}{\uN}_r({\uz};\epsilon)\,.
\esp\eeq
Note that by construction we have $\lim_{\eps\to0}\widetilde{{\bf B}}_r({\uz};\eps) =0$. Hence, we can easily solve the Gauss-Manin system in eq.~\eqref{eq:DEG_inhom_L} order-by-order in $\eps$. Since ${\uL}_r$ and ${\uN}_r$ must be regular at $\eps=0$, they admit a Taylor expansion:
\beq
{\uL}_r({\uz};\eps) = \sum_{k=0}^{\infty}\eps^k\,{\uL}_r^{(k)}({\uz})\textrm{~~~and~~~}
\widetilde{{\uN}}_r({\uz};\eps) = \sum_{k=0}^{\infty}\eps^k\,\widetilde{{\uN}}_r^{(k)}({\uz})\,.
\eeq
In particular, the leading order in $\eps$ leads to the equation:
\beq\label{eq:DEG_inhom_0}
d{\uL}_r^{(0)}({\uz}) =  \widetilde{{\uN}}_r^{(0)}({\uz})\,,
\eeq
which can easily be solved by quadrature:
\beq
{\uL}_r^{(0)}({\uz}) = {\uL}_r^{(0)}({\uz}_0) +\int_{{\uz}_0}^{\uz} \widetilde{{\uN}}_r^{(0)}({\uz}')\,.
\eeq
We can iteratively solve eq.~\eqref{eq:DEG_inhom_L} order by order in $\eps$ by inserting the solution into the expansion. This strategy was successfully applied to several complicated Feynman integrals for which no canonical form can be reached via an algebraic transformation matrix ${\bf M}({\ux};\eps)$, see, e.g., refs.~\cite{Remiddi:2016gno,Primo:2017ipr,vonManteuffel:2017hms,Chen:2017pyi,Kniehl:2019vwr,Lee:2019wwn,Lee:2020obg,Lee:2020mvt}.

\paragraph{The Gauss-Manin system for the equal-mass banana integrals.}

For the equal-mass banana family, we can collect the master integrals from the sector $(1,\ldots,1)$ in eq.~\eqref{eq:equal_mass_MIs} into the vector ${\uJ}_l(z;\eps)= (J_{l,1}(z,\eps),\ldots,J_{l,l}(z,\eps))^T$. At every loop order, this vector satisfies an inhomogeneous differential equation of the form (cf.~eq.~\eqref{eq:DEG_inhom})
\beq\label{eq:banana_DEQ}
\partial_z{\uJ}_{l}(z;\eps) = {\bf B}_{l}(z;\eps)\,{\uJ}_{l}(z;\eps) + {\uN}_{l}(z,\eps)\,,
\eeq
with 
\beq\bsp
{\bf B}_{l}(z;\eps)	&=	{\bf B}_{l,0}(z) + \sum_{k=1}^l {\bf B}_{l,k}(z) \eps^k	\,, \label{1}\\
{\uN}_l(z,\eps) 
&=\left(0,\ldots,0, (-1)^{l+1}(l+1)!\, \frac{z}{z^l\prod_{k\in\Delta^{(l)}}(1-kz)}\frac{\Gamma(1+\epsilon)^l}{\Gamma(1+l\epsilon)}\right)^T \,,
\esp\eeq
where we defined
\begin{equation}
	\Delta^{(l)}	\coloneqq	\bigcup_{j=0}^{\lceil\frac{l-1}{2}\rceil}\left\{(l+1-2j)^2\right\}	~.
\label{singpts}
\end{equation}
In section \ref{sec:diffeqs} we will discuss how to derive eq.~\eqref{1}, and we will show how to obtain the ${\bf B}_{l,k}(z)$ for $k=1,\hdots,l$. 

The Wronskian of the system is given by the maximal cuts of equal-mass banana integrals in $D=2$ dimensions. They can be defined by replacing the integration contour in eq.~\eqref{eq:feyn_def} by a contour $\Gamma$ that encircles the poles of the propagators $D_j=0$, $1\le j\le l+1$:
\beq\label{eq:cut_def}
\textrm{Cut}_{\Gamma}I_{\unu}(p^2,m^2;2) \coloneqq \oint_{\Gamma}\left(\prod_{j=1}^l\frac{d^2k_j}{D_j^{\nu_j}}\right)\frac{1}{D_{l+1}^{\nu_{l+1}}}\,.
\eeq
Let us introduce the following notation for the maximal cuts of the master integrals in eq.~\eqref{eq:equal_mass_MIs}:
\beq\bsp\label{eq:equal-mass_cuts}
J_{l,1}^{\Gamma}(z) &\,= \frac{(-1)^{l+1}}{\Gamma(1+l\eps)}\,(m^2)^{1+l\eps}\,\textrm{Cut}_{\Gamma}I_{1,\ldots,1}(p^2,m^2;2)\,,\\
J_{l,k}^{\Gamma}(z) &\,= (1+2\eps)\ldots(1+k\eps)\,\partial_z^{k-1}J_{l,1}^{\Gamma}(z)\,,\quad 2\le k\le l\,.
\esp\eeq
The vector $\uJ_l^{\Gamma}(z) \coloneqq (J_{l,1}^{\Gamma}(z),\ldots,J_{l,l}^{\Gamma}(z))^T$ satisfies the homogeneous version of the differential equation~\eqref{eq:banana_DEQ} for $\eps=0$:
\beq\label{eq:homogeneous_equation}
\partial_z\uJ_l^{\Gamma}(z) = {\bf B}_{l,0}(z)\uJ_l^{\Gamma}(z)\,.
\eeq
If we fix a basis of independent integration contours $\Gamma_1,\ldots,\Gamma_l$, then the Wronskian of the differential equation can be identified with the matrix of maximal cuts for the $l$-loop equal-mass banana graphs:
\beq
{\bf W}_l(z) \coloneqq \left(\uJ_l^{\Gamma_1}(z),\ldots, \uJ_l^{\Gamma_l}(z)\right)\,.
\eeq
Any other maximal cut contour can then be written in this basis, e.g., $\Gamma=\sum_{i=1}^{l}\alpha_i^{\Gamma}\,\Gamma_i$, and we have
\beq\label{eq:max_cut_Gamma}
\uJ_l^{\Gamma}(z) = {\bf W}_l(z)\ualpha^{\Gamma}\,,\qquad \ualpha^{\Gamma} = (\alpha_1^{\Gamma}\,\ldots,\alpha_l^{\Gamma})^T\,.
\eeq
For $l=1$, there is only one maximal cut (up to normalization), and it is simply an algebraic function.
The matrix of maximal cuts was evaluated for $l=2$ in ref.~\cite{Laporta:2004rb} in terms of complete elliptic integrals of the first and second kind:
\beq\bsp\label{eq:complete_elliptic}
\textrm{K}(z) &\,= \int_{0}^1\frac{dx}{\sqrt{(1-x^2)(1-zx^2)}}\,,\\
\textrm{E}(z) &\,= \int_{0}^1dx\,\sqrt{\frac{1-zx^2}{1-x^2}}\,.
\esp\eeq
For $l=3$, the solution can be expressed in terms of products of two elliptic integrals of the first and/or second kind~\cite{Primo:2017ipr}. For $l\ge 4$, it can be extremely challenging to write down an explicit basis of contours $\Gamma_j$ or to evaluate the corresponding cut integrals in eq.~\eqref{eq:equal-mass_cuts}. In particular, for $l\ge 4$ no representation of the maximal cuts in terms of known classical transcendental functions like elliptic integrals is known to exist.

It follows from the optical theorem (see, e.g., refs.~\cite{Landau:1960jol,Cutkosky:1960sp,tHooft:1973wag,Remiddi:1981hn,Veltman:1994wz}) that the banana integral develops a non-zero imaginary part for $p^2 > (l+1)^2m^2$ (or equivalently $0<z<1/(l+1)^2$) and the value of the imaginary part is proportional to a maximal cut integral.\footnote{Note that for Feynman integrals with more propagators, the imaginary part is given by a combination of non-maximal cuts.} In other words, there is a maximal cut contour $\Gamma_{\textrm{Im}}$ such that
\beq\label{eq:optical_thm}
\textrm{Im } \uJ_l(z) \sim \uJ_l^{\Gamma_{\textrm{Im}}}(z)\,,\qquad \textrm{for } 0<z<\frac{1}{(l+1)^2}\,.
\eeq

From the previous discussion it is clear that in order to solve eq.~\eqref{eq:banana_DEQ} order by order in $\eps$ in terms of iterated integrals, we first need to understand the Wronskian matrix of the associated homogeneous solution. From refs.~\cite{Bloch:2014qca,Bloch:2016izu,Vanhove:2014wqa,Vanhove:2018mto,Bonisch:2020qmm,Klemm:2019dbm} it is known that the geometry associated to the maximal cuts at $l$ loops is a Calabi-Yau $(l-1)$-fold. Understanding this geometry in some detail is needed in order to evaluate the (iterated) integrals that arise from eq.~\eqref{eq:DEG_inhom}. One of the goals of this paper is to show how the geometry associated to the $l$-loop banana integrals provides new methods to solve the integrals. In section~\ref{sec:banint} we will study in some detail what these Calabi-Yau $(l-1)$-folds can teach us about the solutions of the $l$-loop equal-mass banana integrals.

\subsection{Picard-Fuchs-type differential equations}
\label{subsec:PF}

Instead of solving the system of first-order differential equations for the vector $\underline I(\ux;\epsilon)$ of master integrals, it is also possible to consider an inhomogeneous higher-order differential equation satisfied by each master integral:
\beq\label{eq:higher_DEQ}
\cL_{k,\eps}\, I_k(\ux;D) = R_k(\ux;\eps)\,,
\eeq
where the inhomogeneity $R_k(\ux;\eps)$ is related to master integrals from lower sectors, and the differential operator $\cL_{k,\eps}$ has the form
\beq
\cL_{k,\eps} = \sum_{j_1,\ldots,j_m\ge 0}Q_{k,j_1\ldots j_m}(\ux;\eps)\,\partial_{x_1}^{j_1}\ldots \partial_{x_m}^{j_m}\,,
\eeq
where $Q_{k,j_1\ldots j_m}(\ux;\eps)$ are polynomials in $\ux$ and $\eps$. The operator $\cL\coloneqq\cL_{k,\eps=0}$ will annihilate the maximal cuts for $\eps=0$. Geometrically, these higher-order equations are also known as Picard-Fuchs differential equations. They describe the periods of algebraic varieties. This will be explained further in section~\ref{sec:CY}. The higher-order differential equations can for example be obtained by decoupling the first-order Gauss-Manin system. However, this may not be the only way to obtain them. In the case of the banana integrals, we will see that it is easier to derive the decoupled higher-order differential equations directly, without passing through the coupled first-order system. 

In the remainder of this section we review some general strategies to solve homogeneous linear higher-order differential equations (inhomogeneous equations can be brought into homogeneous form by acting with a suitable differential operator). The material in this section is well known in the literature (see, e.g., refs.~\cite{MR0010757,Orszac,Yoshida}), but we review it here because it will play an important role to understand the properties of the banana integrals, as studied by some of the authors in refs.~\cite{Klemm:2019dbm,Bonisch:2020qmm}. We start by reviewing in some detail the case of a single variable $z$ (which corresponds to the case of Feynman integrals depending on 2 scales), and briefly comment on the multi-variate generalization at the end.

\paragraph{One-parameter Picard-Fuchs-type differential equations.}
\label{para:OneParameterPF}

Consider a differential equation of the form
\beq\label{defdiffeq}
\mathcal L f(z) = 0 \qquad \textrm{with} \qquad \mathcal L=q_n(z)\partial_z^n+q_{n-1}(z)\partial_z^{n-1} + \hdots + q_0(z)\,,\qquad q_n(z) \neq 0\,,
\eeq
where the $q_i(z)$ are polynomials, and we assume that the $q_i(z)$ do not have any common zero. The leading coefficient $q_n(z)\eqqcolon\textrm{Disc}(\cL)$ is called the \emph{discriminant}. 
It will often be convenient to write the differential operator in the equivalent form
\beq
\cL = \tilde{q}_n(z)\theta^n+ \tilde{q}_{n-1}(z)\theta^{n-1} + \hdots + \tilde{q}_0(z) \qquad \text{with } \theta = \theta_z \coloneqq z\,\partial_z\,.
\eeq
One can relate both forms simply by the relations
\begin{align}\label{eq:d_to_theta}
	\theta^n	=	\sum_{i=1}^n	s_2(n,k) z^k\partial_z^k	\qquad\text{or}\qquad		z^n\partial_z^n	=	\prod_{j=0}^{n-1}(\theta-j)~,
\end{align}
with the Stirling numbers of second kind $s_2(n,k)=\frac1{k!}\sum_{i=0}^k(-1)^i \binom{k}{i} (k-i)^n$. In particular, we have $q_n(z)= z^n\tilde{q}_n(x)$.

This equation has $n$ independent solutions $f_i(z)$ for $1\leq i \leq n$. The solution space $\textrm{Sol}(\cL)$ is the $\mathbb{C}$-vector space generated by the $f_i(z)$.  
 Let $p_i(z) \coloneqq q_i(z)/q_n(z)$, $0\le i<n$. We want to understand the singularities of the solutions. We say that the differential equation \eqref{defdiffeq} has an \emph{ordinary point} at $z=z_0$ if the coefficient functions $p_i(z)$ are analytic in a neighbourhood of $z_0$ for all $0\leq i< n$. A point $z_0$ is called \emph{regular singular point} if the $(z-z_0)^{n-i}p_i(z)$ are analytic in a neighbourhood of $z_0$. An \emph{irregular singular point} is neither an ordinary nor a regular singular point. Note that all singular points $z_0\neq\infty$ are zeroes of the discriminant, $\left.\textrm{Disc}(\cL)\right|_{z=z_0} = q_n(z_0) = 0$. For the point at infinity $z_0=\infty$ one introduces the variable $t=1/z$ and makes the analysis around $t=0$. A differential equation without irregular singular points is called a \emph{Fuchsian} differential equation. Feynman integrals are expected to have only regular singularities, and no irregular singularities can appear. We therefore do not distinguish between regular and irregular singularities from now on.

Let us now briefly review how one can obtain a basis for the solution space using the well-known \emph{Frobenius method}. The goal will be to construct for every point $z_0\in\mathbb{C}$ $n$ linearly independent local solutions. Each local solution will be given in terms of power series convergent up to the nearest singularity. These local solutions can be analytically continued to multivalued global solutions over the whole parameter space. In the following we assume without loss of generality $z_0=0$ (if not, we perform a variable substitution $z\rightarrow z^\prime=z-z_0$ or $z^\prime=1/z$). Our starting point is the indicial equation
\begin{equation}
	\tilde{q}_n(0) \alpha^n + \tilde{q}_{n-1}(0) \alpha^{n-1} 	+ \hdots	+	\tilde{q}_0(0) \alpha	=	0\,,
\label{indicialeq}
\end{equation}
The solutions of eq.~\eqref{indicialeq} are called the \emph{indicials} or \emph{local exponents} at $z_0=0$.

We now discuss the structure of the solution space close to an ordinary or regular-singular point $z_0$.
If $z_0=0$ is an ordinary point, then there are $n$ different solutions $\alpha_1,\ldots,\alpha_n$ to eq.~\eqref{indicialeq}. The $n$-dimensional solution space is then spanned by:
\begin{equation}
	z^{\alpha_i} \Sigma_{i,0}(z)	=	z^{\alpha_i} \sum_{k=0}^\infty a_{i,k}z^k\,,	\qquad a_{i,0}\neq0\,, \qquad1\leq i \leq n~,
\label{solutionspower}
\end{equation}
where the $\Sigma_{i,0}(z)$ are power series around $z_0=0$ with non-vanishing radius of convergence and normalized according to $\Sigma_{i,0}(0)=1$. The coefficients $a_{i,k}$ can be computed from recurrence relations obtained by applying the operator $\mathcal L$ on the ansatz in eq.~\eqref{solutionspower}.
 
 If $z_0=0$ is a regular-singular point, there are still $n$ independent local solutions, but the solution space contains also solutions other than those in eq.~\eqref{solutionspower}. Again one analyzes the indicial equation in eq.~\eqref{indicialeq}, but now some solutions appear with multiplicities. Let us sort them as $(\alpha_1, \hdots, \alpha_1, \alpha_2,$ $ \hdots, \alpha_2, \hdots, \alpha_m, \hdots, \alpha_m)$. For all indicials $\alpha_1, \hdots, \alpha_r$ such that $\alpha_i-\alpha_j\notin\mathbb Z$ for pairwise distinct $i,j$, one gets $r$ power series-type solutions as in eq.~\eqref{solutionspower}. The missing $n-r$ solutions contain powers of $\log (z)$ and are constructed by the following procedure. For an indicial $\alpha_i\in\{ \alpha_1, \hdots, \alpha_r\}$ appearing with multiplicity $s$, one has $s-1$ different logarithmic solutions containing up to $s$ powers of $\log(z)$. They are given by
\begin{equation}
	z^{\alpha_i} \sum_{j=0}^k \frac1{(k-j)!}\,\log^{k-j}(z)\,\Sigma_{i,j}(z)	\quad\text{for } 0\leq k\leq s-1~,
\label{logsols}
\end{equation}
where again $\Sigma_{i,j}(z)$ are power series convergent until the nearest singularity, normalized such that $\Sigma_{i,j}(z)=\delta_{j0}+\mathcal O(z)$ for $j\geq 1$. For indicials $\alpha_i$ and $\alpha_k$ such that $\alpha_i-\alpha_k \in \mathbb Z$, one has to check case by case whether one obtains a power series-type solution as in eq.~\eqref{solutionspower} or a logarithmic solution as in eq.~\eqref{logsols}. 

For some Fuchsian differential equations there is a special singular point where all indicials are equal. Close to such a point the solution space can be characterized by an increasing hierarchical structure of logarithmic solutions, i.e., there exists a power series-type solution $\varpi_0$, a single logarithmic solution $\varpi_1$, and so on, up to $\log^{n-1}(z)$. Such a point is also-called a \emph{point of maximal unipotent monodromy} (MUM-point), and the associated basis $\varpi_0(z),\ldots,\varpi_{n-1}(z)$ is called a \emph{Frobenius basis}.

It may be convenient to collect the information about all singular points and their indicials in the so-called \emph{Riemann $\mathcal P$-symbol}. Let $\{z_1, \hdots, z_s \}$ be the singular points of the $n$-th order operator $\cL$, including possibly also the point at infinity. We denote the indicials for the singular point $z_i\in\{z_1, \hdots, z_s \}$ by $\{ \alpha^{(i)}_1, \hdots , \alpha^{(i)}_n\}$ (some indicials may be equal). The Riemann $\mathcal P$-symbol is then:
\begin{equation}
	\mathcal P		\left\{	\begin{matrix}	%
							z_1 			& z_2 			& \hdots	&	z_s			\\ \hline
							\alpha^{(1)}_1	& \alpha^{(2)}_1	& \hdots	&	\alpha^{(s)}_1	\\
							\vdots		& \vdots			& \ddots	&	\vdots		\\
							\alpha^{(1)}_n	& \alpha^{(2)}_n	& \hdots	&	\alpha^{(s)}_n
						\end{matrix} \right\}~.
\label{Riemannsymbol}
\end{equation}
The sum of all indicials fulfills the so-called \emph{Fuchsian relation}:
\begin{equation}
	\sum_{i=1}^s\sum_{j=1}^n \alpha^{(i)}_j	=	\frac{n(n-1)(s-2)}2~.
\label{fuchsianrelation}
\end{equation}

\paragraph{Equal-mass banana integrals in $D=2$ dimensions from the Frobenius method.}

Let us illustrate the concepts from the previous section on the example of the equal-mass banana integrals in $D=2$ dimensions (see section~\ref{sec:fam}, and in particular eq.~\eqref{eq:equal_mass_MIs}). We only quote here the results, and we refer to ref.~\cite{Bonisch:2020qmm} for details. The maximal cuts of $J_{l,0}(z;0)$ are annihilated by an $l^{\textrm{th}}$-order differential operator $\cL_l$. For low loop order, the explicit form of $\cL_{l}$ (and of its solutions) can be found in refs.~\cite{Laporta:2004rb,Muller-Stach:2011qkg,Bloch:2013tra,Bloch:2014qca,Adams:2014vja,Bloch:2016izu,Remiddi:2016gno,Primo:2017ipr,Broedel:2019kmn}. A procedure to obtain the operators $\cL_{l}$ for arbitrary values of $l$, and to construct their solutions, was presented in ref.~\cite{Bonisch:2020qmm} (see, in particular table 1 of ref.~\cite{Bonisch:2020qmm}). The operator $\cL_l$ is the Picard-Fuchs operator associated to a family of Calabi-Yau $(l-1)$-folds (see section~\ref{sec:CY})  parametrized by $z=m^2/p^2 \in\mathbb{R}$.\footnote{We could also consider complex values $z$, but for physics applications it is sufficient to consider $z$ real.}

In general, the differential operator $\cL_{l}$ has regular singular points at
\beq\label{eq:singularities}
z\in\{0,\infty\}\cup \bigcup_{j=0}^{\lceil\frac{l-1}{2}\rceil}\left\{\frac{1}{(l+1-2j)^2}\right\}\,,
\eeq
and the discriminant of $\cL_{l}$ is~\cite{Bonisch:2020qmm}
\beq\label{eq:discriminant}
\text{Disc}(\cL_{l}) = (-z)^l\,\prod_{k\in\Delta^{(l)}}(1-kz)~,
\eeq
with $\Delta^{(l)}$ defined in eq.~\eqref{singpts}. The discriminant is up to a factor of $z^l$ the coefficient of the highest $\theta$-derivative in the operator $\mathcal L_l$.
 
The Riemann $\mathcal P$-symbol has to be computed for each loop order separately, but there are some common features. The point at $z=0$ is a MUM-point with indicials $\alpha_i=1$ for $1\leq i \leq l$ for all loop orders. The closest singularity to $z=0$ is at $z=1/(l+1)^2$ and corresponds to the physical threshold at $p^2=(l+1)^2m^2$. Close to this threshold there is an additional single logarithmic solution for $l$ even and a square root solution for $l$ odd, which corresponds to a half integer indicial. At the other singularities except infinity the situation is similar. At $z=\infty$ the situation is more complicated, and one obtains more and higher logarithmic solutions depending on the loop order $l$. For example, for $l=2,3,4$, one finds
\begin{equation}
	\mathcal P_2		\left\{	\begin{matrix}	%
							0 	& \frac1{9}			& 1		&	\infty			\\ \hline
							1	& 0					& 0		&	0			\\
							1	& 0					& 0		&	0
						\end{matrix} \right\}~,\quad \mathcal P_3		\left\{	\begin{matrix}	%
							0 	& \frac1{16}			& \frac14	&	\infty			\\ \hline
							1	& 0					& 0		&	0			\\
							1	& \frac12				& \frac12	&	0			\\
							1	& 1					& 1		&	0
						\end{matrix} \right\}~,\quad	 \mathcal P_4		%
					\left\{	\begin{matrix}	%
							0 	& \frac1{25}			& \frac19	& 1		&	\infty			\\ \hline
							1	& 0					& 0		& 0		&	0			\\
							1	& 1					& 1		& 1		&	0			\\
							1	& 1					& 1		& 1		&	1			\\
							1	& 2					& 2		& 2		&	1
						\end{matrix} \right\}~.
\label{Riemann34}
\end{equation}
One can easily check that all Riemann $\mathcal{P}$-symbols satisfy the Fuchsian relation in eq.~\eqref{fuchsianrelation}. For more details we refer again to ref.~\cite{Bonisch:2020qmm}.

The point $z=0$ is a MUM-point, and we have the Frobenius basis\footnote{Note that we use here a different normalization of the periods compared to ref.~\cite{Bonisch:2020qmm}.} $\varpi_{l,0}(z), \ldots, \varpi_{l,l-1}(z)$ such that
\beq
\label{eq:frobenius}
	\varpi_{l,k}(z)	=	\sum_{j=0}^{k}	\frac 1{(k-j)!}\log^{k-j}(z)\, \Sigma_{l,j}(z)~,
\eeq
where the $\Sigma_{l,k}(z)$ are holomorphic in a neighbourhood of the MUM-point $z=0$, normalized such that
$\Sigma_{l,k}(z) = \delta_{k0}\,z + \ord(z^2)$. For $k=0$, we have~\cite{Bonisch:2020qmm}
\beq
\varpi_{l,0}(z) = \Sigma_{l,0}(z) = \sum_{k_1,\ldots,k_{l+1}\ge 0}\binom{\scriptstyle|k|}{\scriptstyle k_1,\ldots,k_{l+1}}^2\,z^{|k|+1}\,,
\eeq
with $|k| = k_1+\ldots+k_{l+1}$, and we have introduced the multinomial coefficient $\binom{\scriptstyle k}{\scriptstyle k_1,\ldots,k_{l+1}} = \frac{k!}{k_1!\cdots k_{l+1}!}\,$. Later on it will be useful to package the elements of the Frobenius basis into a vector:
\beq\label{eq:Pi_vec}
\uPi_l(z) \coloneqq \left(\varpi_{l,0}(z) ,\ldots, \varpi_{l,l-1}(z) \right)^T\,.
\eeq

The power-series representations for the $\Sigma_{l,k}(z)$ have a finite radius of convergence. The singularities are all located on the real axis (or at infinity), and we need to carefully analytically continue the functions across each of the singularities. The branch of the logarithm must be chosen such as to comply with the standard $i0$-prescription from physics, i.e., we need to pick the branch according to
\beq
\log(z-z_0) = \log|z-z_0| - i\pi\,\theta(z_0-z)	\qquad \text{for }z_0\in\mathbb{R}\,,
\eeq
with $\theta(z)$ defined in eq.~\eqref{eq:heaviside}.
The Frobenius basis was described in detail in ref.~\cite{Bonisch:2020qmm}. An algorithm\footnote{The code is available on \url{http://www.th.physik.uni-bonn.de/Groups/Klemm/data.php} in the supplementary material of the paper ''Analytic Structure of all Loop Banana Integrals``.} for their evaluation was implemented by some of us into {\tt PariGP}. It allows to evaluate the periods for high values of $l$ and for all positive real values of $z$ in a fast and efficient way by computing series expansions around any singular point.

\paragraph{Multi-parameter Picard-Fuchs operators.}
\label{para:MultiParameterPF}

Let us conclude by making some brief comments about how the Frobenius method generalizes to the multi-parameter case. 
In the multi-parameter case one has a set of differential operators $\mathcal D=\{ \mathcal \cL_1, \hdots,\mathcal \cL_r \}$, and we are looking for functions $f(\uz)$ that are simultaneously annihilated by all elements in $\mathcal{D}$. The solution space $\textrm{Sol}(\mathcal D)$ is the $\mathbb{C}$-linear span of all common solutions, i.e.,
\begin{equation}
	\text{Sol}(\mathcal D)	\coloneqq	\{ f(\uz) 	|	\mathcal \cL_i f(\uz)=0	 \text{ for all operators }	\mathcal \cL_i \in \mathcal D	\}~.
\label{solspacemulti}
\end{equation}
The set $\mathcal D$ actually generates a (left-)ideal of differential operators. Indeed, if $\mathcal \cL_i\in\mathcal D$ and $f\in \text{Sol}(\mathcal D)$, we have  $\tilde{\mathcal \cL}\mathcal \cL_if(\uz)=0$, for every differential operator $\tilde{\mathcal \cL}$.

It is possible to generalize the Frobenius method to the multi-variate case. 
Close to an ordinary point (in the sense of section \ref{para:OneParameterPF}), one again finds a basis of local solutions in terms of generalized power series-type solutions:
\begin{equation}	
	\left(\prod_{i=1}^mz_i^{\alpha_i}\right) \sum_{j_1,\hdots, j_m\ge 0} a_{j_1,\hdots,j_m}z_1^{j_1} \cdots z_m^{j_m}\,,
\label{multipower}
\end{equation}
with indicials $(\alpha_1, \hdots, \alpha_m)$.
At singular points also multi-variate logarithmic solutions can show up, and we have
\begin{equation}	
	\left(\prod_{i=1}^mz_i^{\alpha_i}\right) \sum_{\substack{j_1,\hdots, j_m\ge0\\ k_1,\hdots, k_m\ge0}} a_{j_1,\hdots,j_m,k_1,\hdots,k_m}\log^{j_1}(z_1)\cdots\log^{j_m}(z_m)z_1^{k_1} \cdots z_m^{k_m}~,
\label{multilog}
\end{equation}
where $j_1+\hdots+j_m$ depends on the multiplicity and the differences of the local indicials. 
Similarly to the one-parameter case, the local basis can be analytically continued to a global solution. In the multi-parameter case, however, this is a much harder problem, and may require blow ups at certain singular points, see, e.g., refs.~\cite{Candelas:1993dm,Candelas:1994hw}. As a side remark we note that if the singularity 
is too `bad' due to a crossing of many singular loci, choosing a good set of coordinates $(z_1, \hdots, z_m)$ can be important. It may happen that with the wrong choice of coordinates the Frobenius method does not produce all expected solutions. For a more thorough discussion we refer again to refs.~\cite{Candelas:1993dm,Candelas:1994hw}.

There may be different ways to choose the set of differential operators, or more precisely, how to choose a representation of the differential ideal generated by $\mathcal D$.  There can be different sets of operators, e.g., $\mathcal{D}=\{\cL_1, \hdots, \cL_s\}$ and $\mathcal{D}'=\{\cL'_1, \hdots, \cL'_{s'}\}$, which generate the same ideal, and thus they have the same solution space, i.e.,
\begin{equation}
	\text{Sol}(\mathcal D) = \text{Sol}(\mathcal D')\,.
\label{solspaces}
\end{equation}
Note that the sets $\mathcal{D}$ and $\mathcal{D}'$ can have different lengths, $s\neq s'$, and also the degrees of the operators can be different. Sometimes even a single but complicated operator is enough to generate the complete ideal. A clever choice of how to represent the ideal can have an impact on how complicated it is to find all the solutions. In particular, the higher-order differential operators obtained by decoupling the Gauss-Manin-type system is only one possible way to choose a set $\mathcal{D}$ that generates the ideal of differential operators; other, equivalent, choices are possible, and may lead to simplifications. We will exploit this freedom in later sections to obtain the differential equations satisfied by banana graphs at high loop orders.

Let us conclude with a comment. In the case of differential operators in one variable, it is always possible to choose a single 
differential operator that generates the differential ideal completely. More precisely, in appendix~\ref{app:pid_one_var} 
we show that the ring of linear differential operators in one variable with rational coefficients is a principle left-ideal domain, 
i.e., every differential operator in this ideal is of the form ${\cL}\cL_0$, for some distinguished differential operator $\cL_0$.

\section{Calabi-Yau motives and  Feynman graphs}
\label{sec:CY}

The first-order homogeneous differential equations discussed in the previous section
correspond quite generally to the Gauss-Manin connection for periods of families of 
algebraic varieties. In examples such as the sequence of $l$-loop banana or train-track graphs~\cite{Bourjaily:2018ycu,Vergu:2020uur}, families of Calabi-Yau $(l-1)$-folds (or 
more generally Calabi-Yau motives) play an important role, see also 
refs.~\cite{Bourjaily:2018yfy,Bourjaily:2019hmc} for additional examples. One might even 
speculate that this is a quite general feature. In all of these examples, Feynman integrals, or to be more precise their maximal cuts, are identified with Calabi-Yau periods, 
and the complex parameters of the Calabi-Yau families correspond to the dimensionless ratios of  
scales $\uz$ in eq.~\eqref{eq:scaled_scales}.  

\begin{table}[h]
\begin{center}
\begin{tabular}{|c|c|c| } 
 \hline\hline
& \textbf{${\bm{l=(n+1)}}$-loop banana} & \textbf{Calabi-Yau (CY) geometry}  \\ 
& \textbf{integrals in $\bm{D=2}$ dimensions} &\\
 \hline\hline
1 & Maximal cut integrals & $(n,0)$-form periods of CY  \\  
   & in $D=2$ dimensions  & manifolds or  CY motives \\
   && \\ [-5 mm]  
  \hline 
   && \\ [-5 mm]  
2 & Dimensionless ratios  $z_i = {m_i}^2/p^2$   & Unobstructed complex moduli of $M_n$, or  \\
&& equivalently  K\"ahler moduli of the mirror $W_n$   \\
   && \\ [-5 mm]  
  \hline 
   && \\ [-5 mm]     
3 & Integrand-basis for maximal cuts of     & Middle (hyper) cohomology $H^n(M_n)$ of  \\  
   & master integrals in $D=2$ dimensions & $M_n$ \\
   && \\ [-5 mm]  
  \hline 
   && \\ [-5 mm]  
4 & Quadratic relations among    & Quadratic relations from  \\  
 & maximal cut integrals    &  Griffiths transversality \\  
   && \\ [-5mm]
\hline 
   && \\ [-5 mm]  
5 & Integration-by-parts (IBP) reduction & Griffiths reduction method   \\ 
   && \\ [-5 mm]  
  \hline 
   && \\ [-5 mm]  
6 & Complete set of differential & Homogeneous Picard-Fuchs       \\  
 & operators annihilating a given&  differential ideal  (PFI) /    \\  
 &  maximal cut in $D=2$ dimensions & Gauss-Manin (GM) connection   \\ 
&& \\ [-5 mm]  
  \hline 
   && \\ [-5 mm] 
 7& (Non-)maximal cut contours & (Relative) homology of CY\\
  &  &  geometry $H_n(M_n)$ ($H_{n+1}(F_{n+1},\partial\sigma_{n+1})$)\\
 && \\ [-5 mm]  
  \hline 
   && \\ [-5 mm] 
8 &  Contributions from subtopologies & Extensions of the PFI    \\
   &to the differential equations&or the GM connection  \\
  \hline 
   && \\ [-5 mm]  
 9 & Full banana integrals  & Chain integrals in CY geometry  or  \\  
   & in $D=2$ dimensions & extensions of Calabi-Yau motive \\
 && \\ [-5 mm]  
   \hline 
 && \\ [-5 mm]     
  10 & Degenerate kinematics  & Critical divisors   \\  
  &(e.g., $m_i^2=0$ or $p^2/m_i^2\rightarrow 0$) &of the moduli space\\  
   \hline 
 && \\ [-5 mm]    
 11 &  Large-momentum regime & Point of maximal unipotent  \\
   & $p^2\gg m_i^2$ &  monodromy \& $\widehat\Gamma $-classes of $W_n$\\
 && \\ [-5 mm]  
   \hline 
 && \\ [-5 mm]    
 12 & General logarithmic degenerations & Limiting mixed Hodge structure \\
 && from monodromy weight  filtration \\
  && \\ [-5 mm]  
   \hline 
 && \\ [-5 mm]    
 13 &  Analytic structure and& Monodromy  of the CY motive \\
   &   analytic continuation &  and its extension \\
&& \\ [-5 mm]  
   \hline 
 && \\ [-5 mm]  
 14 &  Special values of the integrals & Reducibility of Galois action\\
  & for special values of the $z_i$&  \& $L$-function values  \\
  && \\ [-5 mm]  
    \hline 
 && \\ [-5 mm]  
15 & (Generalized?) modularity of  & Global $\mathrm{O}({\bf\Sigma}, \mathbb{Z})$-monodromy, integrality\\ 
 & Feynman integrals &  of mirror map \& instantons expansion\\
\hline \hline
\end{tabular}
\end{center}
\caption{Relationship between concepts related to $l$-loop banana integrals and  
geometric structures  related to  Calabi-Yau geometries.}
\label{tabledictionary}
\end{table}

The goal of this section is to review the most important features of Calabi-Yau $n$-folds, with a particular  
emphasis on their moduli spaces, their description by periods and variations of mixed Hodge 
structures, their discrete symmetries that allow to focus on sub-motives of the Hodge structures 
and in particular mirror symmetry. The arguably simplest examples of families of Calabi-Yau 
manifolds correspond to $n=1$ and are elliptic curves, also-called Calabi-Yau one-folds. For example, they can be realized by the affine Weierstrass equation:
\begin{equation}
y^2=4 x^3- g_2(z) x -g_3(z) \,,
\label{eq:ellipticcurve} 
\end{equation}  
in a patch\footnote{The homogeneous cubic equation $P_E=w y^2-4 x^3+ g_2(z) x w^2 + g_3(z)w^3=0$ with 
$[w:x:y]$ projective  coordinates of the complex projective $\mathbb{P}^2$ yields the global description 
of the elliptic curve.}  $w=1$ of $\mathbb P^2$. We note that families of elliptic curves are in 
homogeneous coordinates  the vanishing locus of cubics, i.e., of degree-three hypersurfaces, in $\mathbb{P}^2$. Generically such a cubic has ten 
cubic monomials in $x,y,w$ whose coefficients can vary to define a different member of the  complex 
family. However, these ten  variations are not independent, because nine combinations  can be undone by 
 PSL$(3,\mathbb{C})$ reparametrizations  and projective scalings of the ambient space coordinates (recall that the ambient space is the complex projective space $\mathbb{P}^2$). 
The latter are used to bring the curve into Weierstrass form, which depends only on one  complex structure parameter $z$. There are still 
discrete redundancies in this parametrization which can be removed by considering the $j$ invariant
$j(z)=g_2^3(z)/(g_2^3(z)-27 g_3^2(z))$. 

Many concepts familiar from the theory of elliptic curves and their periods, which are {elliptic 
integrals} depending on the complex deformation parameter $z$, generalize in a natural way to $n\ge 1$.  For example 
the simplest $n$-dimensional Calabi-Yau manifolds $M_n$ are degree $n+2$ hypersurfaces  in $\mathbb{P}^{n+1}$ with 
$\left(2n +3\atop n+2\right) -(n+2)^2=h^1(M_n,TM_n)=h^{n-1,1}(M_n)$ independent complex deformations,
see, e.g., eq.~\eqref{eq:isom}. Also  in the  
favorite representations of the  $l$-loop banana integrals in eq.~\eqref{MCicy} we arrive at the physical  and geometrical 
parameters in eq.~\eqref{eq:cicyT} by eliminating the reparametrization of the ambient space.  Unfortunately, no complete set of geometrical 
invariants are known for general Calabi-Yau $n$-folds.        

The geometrical and mathematical  concepts in this section appeared in the mathematical literature cited below. 
We collect here those that have been useful in the analysis of the banana graphs and that seem most likely to  play an important role 
for more general Feynman integrals in the future. This leads to a dictionary between the techniques used to compute 
banana integrals  and geometrical concepts in Calabi-Yau geometries that will pave the way to systematic higher-loop calculations. In table \ref{tabledictionary} we have summarized these concepts in the context of banana integrals. It is interesting to speculate if and in which form this dictionary can be extended to other Feynman integrals.
An overview of mathematical properties of Calabi-Yau $n$-folds can be found in the recent review in ref.~\cite{MR1963559}, 
a review from the point of view of string theory compactifications and mirror symmetry is ref.~\cite{MR3965409}, 
and a mathematical review of mirror symmetry can be found in ref.~\cite{MR1677117}. We  illustrate 
these mathematical concepts with the examples of the Weierstrass family in eq.~\eqref{eq:ellipticcurve}, the Legendre family and the  banana integrals.

\subsection{Calabi-Yau $n$-folds and their complex structure moduli spaces ${\cal M}_\text{cs}$}  
\label{subsec:CYmanifolds}

\emph{Calabi-Yau $n$-folds}  $M_n$ are complex $n$-dimensional \emph{K\"ahler manifolds}. They are equipped with a  
K\"ahler form $\omega$  of Hodge-type  $(1,1)$ that resides in the cohomology group $H^{1,1}(M_n,\mathbb{Z})$. The extra condition of 
being Calabi-Yau implies the existence of a \emph{non-trivial holomorphic $(n,0)$-form $\Omega$} spanning $H^{n,0}(M_n,\mathbb{C})$. 
In the case of the family of elliptic curves in eq.~\eqref{eq:ellipticcurve}, the former is the volume form on the elliptic 
curve and the latter generalizes the familiar holomorphic $(1,0)$-form ${\rm d} x/y$.      
The two forms $\Omega$ and $\omega$ are so characteristic for the Calabi-Yau manifold that one often refers to the triple 
$(M_n,\Omega,\omega)$ as a Calabi-Yau manifold. They are related via {the unique  volume form} $\omega^n/n!=(-1)^{n(n-1)/2} (i/2)^n\Omega\wedge \bar \Omega$. 
One can show that the existence of the form $\Omega$ (which is unique up to a phase) is equivalent to the fact that the 
holonomy group is $\mathrm{SU}(n)$, from which it follows that the first Chern class is trivial,  $c_1(M_n)=0$. This in turn implies by the famous Theorem  
of Yau that a Ricci-flat K\"ahler  metric $g_{i\bar \jmath}$ exists on every Calabi-Yau manifold~\cite{MR480350},\footnote{Calabi constructed some 
of these metrics  $g_{i\bar \jmath}$ explicitly for non-compact Calabi-Yau manifolds. For elliptic curves ($n=1$), Ricci-flatness implies 
flatness of the metric, $g_{1\bar 1}=\text{const}$. For higher-dimensional compact Calabi-Yau manifolds starting with the complex  $\mathrm{K3}$ surfaces ($n=2$) 
the Ricci-flat metric is not known explicitly.} i.e., $R_{i\bar \jmath}(g_{i\bar \jmath})=0$. We want the holonomy group to be 
the full $\mathrm{SU}(n)$ group, which implies that the cohomology groups $H^{k,0}(M_n,\mathbb{C})$ vanish 
unless $k=0$ or $k=n$, in which case its dimension is one. This is due to the fact that $\mathrm{SU}(n)$ acts canonically on these forms and the 
only invariant representations available are the trivial and the totally antisymmetric one.     

Another important property of Calabi-Yau manifolds $M_n$ is that their \emph{complex structure moduli space} ${\cal M}_\text{cs}$ has particularly nice and 
simple structures. The first-order deformations of a complex manifold are given by (finitely) many linearly independent elements 
in the cohomology group  $H^1(M_n,TM_n)$~\cite{MR2109686}. For Calabi-Yau manifolds  this space is isomorphic 
to the space of harmonic $(n-1,1)$-forms  
\beq
\label{eq:isom} 
H^1(M_n,TM_n)\cong  H^{n-1,1}(M_n) \,,
\eeq  
where the isomorphism is simply provided by contracting the elements in  $H^1(M_n,TM_n)$   with $\Omega$. First-order 
complex structure deformations can in general be globally obstructed  by higher-order obstructions,
which generically depend on the position in the complex moduli space. One can think of these obstructions
as higher terms in a potential  $W$ that obstructs the movement of a particle in a specific direction. 
Tian~\cite{MR915841} and Todorov~\cite{MR1027500} have proven the important fact that for Calabi-Yau 
varieties $M_n$ the complex  $h^{n-1,1}$-dimensional moduli space ${\cal M}_\text{cs}$ 
of complex structure deformations is globally unobstructed, i.e., in the picture with the potential, 
one has  $W(\uz)\equiv 0$ on ${\cal M}_\text{cs}$.

\paragraph{Complex families of Calabi-Yau $n$-folds.}
\label{para:CYfamilies}

It is natural to consider \emph{complex families of Calabi-Yau $n$-fold}s ${\cal M}_n$ with projection $\pi: {\cal M}_n\rightarrow {\cal M}_\text{cs}$ over the complex 
moduli space ${\cal M}_\text{cs}$, i.e., at each point $\underline{z}_0\in {\cal M}_\text{cs}$ one
has as a fiber a Calabi-Yau $n$-fold $\pi^{-1}({\underline z_0})= M_{n}^{\underline z_{0}}$ with a fixed complex structure.  In this picture 
one understands easily  that ${\cal M}_\text{cs}$  can have special  so-called \emph{critical} boundary components\footnote{Sometimes they are called singular 
components. However, since a component can itself be singular, we call them critical. As explained in section \ref{sssec:boundary} the moduli space can be 
compactified and resolved to ${\overline  {\cal M}}_\text{cs}$ so that these components become critical divisors with normal crossings in ${\overline  {\cal M}}_\text{cs}$.} . 
At these loci, the manifold itself, i.e., the fiber of the family, becomes singular. 
For example,  at a point in the one-dimensional complex moduli space of an elliptic curve a cycle $S^1$ might 
shrink to a point, and  the elliptic curve develops  a nodal singularity. More generally, a \emph{nodal singularity} corresponds to an $S^n$ shrinking. This is 
the most generic type of singularity for $n$-dimensional Calabi-Yau manifolds. The  corresponding 
critical  boundary of ${\cal M}_\text{cs}$ is called a \emph{conifold divisor} in ${\overline  {\cal M}}_\text{cs}$. However, $n$-dimensional Calabi-Yau manifolds can acquire a
much greater variety of more interesting singularities, which are only classified up to $n=2$ by the celebrated ADE-type classification 
of canonical  surface singularities. Families of higher-dimensional Calabi-Yau manifolds have in general a higher-dimensional 
moduli spaces, ${\rm dim}_\mathbb{C}({\cal M}_\text{cs})= h_{n-1,1}$. The divisors at which the fiber $M_n$ is singular will intersect in higher co-dimensional sub-loci 
in ${\overline  {\cal M}}_\text{cs}$. This produces over the intersection locus an  even more singular Calabi-Yau $n$-fold fibre. Generically, 
these critical divisors are given by the vanishing locus of algebraic equations, $\Delta_i({\underline z})=0$, $i=1,\ldots,r$. This locus can itself have singularities and non-generic intersections. There are mathematical techniques suggesting that within Calabi-Yau moduli spaces these singularities can be resolved by a finite sequence of blow ups to divisors with 
normal crossings~\cite{MR0199184}.  

One of the most important properties of a family of complex manifolds is the monodromy that its  periods or 
its homology groups undergo if one encircles the critical divisors in a normal crossing model
of  the moduli space $\overline {{\cal M}}_\text{cs}$. The local monodromy reflects the nature of the singularity of 
the fibres  and the degeneration of the periods. The global monodromy often restricts the class of functions
that can be periods. For example, for elliptic families these are weight-one modular forms for a congruence 
subgroup of $\mathrm{SL}(2,\mathbb{Z})$, determined by the global monodromy of the family. 
We will discuss  these concepts further in  section~\ref{sssec:boundary}.

\subsection{Geometric structures in the complex moduli space and period integrals} 
\label{subsec:complexmoduliandperiods}

Next, we discuss some features of the structures in the complex moduli space, in particular  
period integrals. We focus on the concepts that we expect to be most relevant for the application to Feynman integrals. 
In  section~\ref{sssec:bulk} we focus on the  interior of the moduli space, commonly called the \emph{bulk}. We give the
conceptual explanations, together with many references, that underly the most useful tools developed over a long 
period of time in mathematics, for example the Gauss-Manin connection, the  Picard-Fuchs differential ideal and the 
Griffiths transversality. As we will show in section~\ref{sssect:Griffithstransversality}, the latter leads straightforwardly to quadratic relations between maximal cuts of Feynman integrals.  In section \ref{sssec:boundary} we review 
concepts relevant to describe the possible degenerations of the geometry and the Feynman integrals at the critical divisors.

The mathematical properties of periods on compact smooth K\"ahler manifolds are captured by a so-called 
\emph{variation of a pure Hodge structure}, characterized by its \emph{decreasing Hodge filtration} in eq.~\eqref{eq:decreasing}. In a 
series of spectacular papers~\cite{MR0441965,MR498551,MR498552}, Deligne generalized that notion to the 
variation of \emph{mixed Hodge structures}  to include open smooth, complete singular and general varieties. In addition to  the decreasing Hodge filtration,
one defines in these situations  a second increasing filtration, the so-called \emph{monodromy weight filtration}. For many applications to Feynman integrals the
generalization to open manifolds  is essential, as their integration domain is open.  In our discussion, however, we 
concentrate on the case of complete singular spaces, as our main point is the study of the generalization of elliptic 
periods  to Calabi-Yau periods, because they appear as the maximal cuts of (at least) the banana integrals.  At the same time, our 
discussion is a prerequisite for the final step to describe the analytic structure 
of the banana integrals in terms of the generalized $\widehat \Gamma$-class. A standard reference  on mixed Hodge structures  
is the book of Peters and Steenbrink \cite{MR2393625}; some applications to mirror symmetry are 
discussed in the book by Cox and Katz~\cite{MR1677117}.

\subsubsection{On the bulk of  ${\cal M}_\text{cs}$} 
\label{sssec:bulk}    
 Away from the critical  divisors, i.e., in the bulk,  ${\cal M}_\text{cs}$ is a nicely behaved globally-defined 
K\"ahler  manifold of dimension $h^{n-1,1}(M_n)$, where the real K\"ahler potential $K(z,\bar z)$ is given  by    
\begin{equation} 
 e^{-K(\zs, \bar \zs)}=  i^{n^2} \int_{M_n} \Omega(\zs) \wedge \bar \Omega({\bar \zs})= i^{n^2} \uPi^\dagger(\zs)\,{\bf \Sigma}\, \uPi(\zs) \ .
 \label{eq:Kaehlerpotential}
 \end{equation}
For the last equal sign in eq.~\eqref{eq:Kaehlerpotential}, in particular the definition of the vector of period functions 
$\uPi(\zs) \coloneqq (\Pi_i(\zs))_{1\le i\le b_n}$ (with $b_k \coloneqq \dim H_k(M_n,\mathbb{Z})$ 
the Betti numbers) and the intersection pairing ${\bf \Sigma}$, see below.  Note that there is a ``gauge freedom'' 
to rescale $\Omega \rightarrow e^{f(\zs)}\Omega(\zs)$ with $f(\zs)$ a holomorphic function, under which  the K\"ahler potential   $K(\zs, \bar \zs)$ 
undergoes a K\"ahler gauge transformation
\begin{equation} 
K(\zs, \bar \zs) \rightarrow  K(\zs, \bar \zs)-f(\zs)-\bar f(\bar \zs) \ . 
\label{eq:kaehlergauge} 
\end{equation}     
 Most  geometrical structures on Calabi-Yau manifolds, e.g., the K\"ahler metric, are invariant under the gauge transformation in eq.~\eqref{eq:kaehlergauge}, 
but the Feynman integral is taken in a specific K\"ahler gauge.\footnote{This means concretely that the holomophic $(n,0)$-form 
could be modified by an $e^{f(z)}$ factor, but we make  an explicit choice in eqs.~\eqref{Griffith residue form} and~\eqref{eq:Omcicy}.}    
In eq. \eqref{eq:Kaehlerpotential} one understands the form $\Omega(\zs)$ to depend on the complex structure parameters  $\zs$ so  
that the  $\Omega(\zs_0)$ is  of type $(n,0)$ exactly for the complex structure defined by $\zs=\zs_0$. 
For each point $\zs_0$, the fibre $M^{\zs_0}_n$ over it enjoys a Hodge decomposition, in particular, of its middle dimensional cohomology:\footnote{We suppress 
the dependence on the point $\zs_0$ to ease the notation.}
\beq\label{eq:Hodge_decomposition}
H^n(M_n,\mathbb{C})=\bigoplus_{p+q=n} H^{p,q}(M_n)\textrm{~~~with~~~}\overline{ H^{p,q}(M_n)}=H^{q,p}(M_n)\,,
\eeq
and a canonical polarization, i.e., for  $\omega^{p,q}\in  H^{p,q}(M_n)$ and $\omega^{r,s}\in  H^{r,s}(M_n)$ with $p+q=r+s=n$, one has
\beq
\begin{array}{rl}
\displaystyle{\int_{M_n}\omega^{p,q}  \wedge  \omega^{r,s}}  &=0, \qquad  {\rm unless} \quad p=s\ \  {\rm and} \ q=r\, ,  \\ [ 3mm]
\displaystyle{i^{p-q} \int_{M_n}\omega^{p,q}  \wedge  {\overline {\omega^{p,q}}}}& >0, \qquad {\rm for} \ \ \omega^{p,q}\neq 0 \ .
\end{array}
\label{eq:type} 
\eeq

The \emph{periods} of $M_n$ are pairings between the middle homology and the middle cohomology. They carry information about how the Hodge structure varies in the family.  
 One fixes an integral topological  basis $\Gamma_i$, $1\le i\le b_n$ for the middle homology  
 $H_n(M_n,\mathbb{Z})$. The choice of this basis is topological, and it does not depend on the complex structure. In particular, the intersection pairing in this fixed topological basis 
 $\Sigma_{ij}=\Gamma_i\cap \Gamma_j$ is given by an integer $b_n\times b_n$-matrix.\footnote{Here $\cap$ denotes the standard intersection pairing on cycles.} If $n$ is odd, then it is skew-symmetric 
 and can be chosen  to be the standard symplectic pairing ${\bf \Sigma}=\left(\begin{array}{rr} 0 & {\mathbbm 1}\\ -{\mathbbm 1}&0 \end{array}\right)$. 
Instead, if $n$ is even, ${\bf \Sigma}$ is symmetric, and general lattice arguments restrict its form considerably. For example, for  a $\mathrm{K3}$ surface, which is the only topological type of 
 Calabi-Yau  geometry in two complex dimensions ($n=2$), the $(22\times 22)$-dimensional integral matrix ${\bf \Sigma}_{\mathrm{K3}}$ has to have  signature $(19,3)$ and 
 has to be even and self-dual. This only leaves the unique possibility ${\bf \Sigma}_{\mathrm{K3}}=E_8(-1)^{ \oplus 2}  \oplus   \left(\begin{array}{rr} 0 & 1\\ 1&0 \end{array}\right)^{\oplus 3}$, where 
 $E_8(-1)$ denotes the negative of the Cartan matrix of the Lie algebra of $E_8$.

 Concretely the \emph{periods} are the pairing $\Pi: H_n(M_n,\mathbb{Z})\times H^n(M_n,\mathbb{C})\rightarrow \mathbb{C}$ given by the integrals 
 \beq
 \Pi_{ij}=\int_{\Gamma_i} \hat\Gamma ^j\,,
 \label{eq:periods}
 \eeq
 with $\hat\Gamma ^j$ some basis of  $H^n(M_n,\mathbb{C})$. One can fix a basis $\gamma_j\in H^n(M_n,\mathbb{C})$ with 
 $\int_{\Gamma_i} \gamma^j=\delta_i^j$ and  $\int_{M_n} \gamma^i \wedge \gamma^j=\Sigma^{ij}=\Sigma_{ij}$. Then one 
 expands $\Omega(\zs)\in H^n(M_n,\mathbb{C}) $ in terms of the period functions $\Pi_i(\zs)$ in this fixed cohomology  basis:
 \begin{equation} 
 \Omega(\uz)=\sum_i \left(\int_{\Gamma_i} \Omega(\uz) \right) \gamma^i= \sum_{i} \Pi_i(\uz) \gamma^i \,.
 \label{eq:perioddef}   
 \end{equation}  

As an example, for the elliptic curve $E$ with modulus $z$ like in eq.~\eqref{eq:ellipticcurve},
one may choose a symplectic basis   $S^1_a,S^1_b \in  H_1(E,\mathbb{Z})$ , with $S^1_a\cap S^1_b=-S^1_b\cap S^1_a=1$ and $S^1_a\cap S^1_a =S^1_b \cap S^1_b=0$ 
in integral homology,
  and a dual symplectic basis $\alpha, \beta\in H^1(E,\mathbb{Z}) $ with  $\int_E \alpha \wedge \beta=-\int_E \beta \wedge \alpha=1$ 
and $\int_E \alpha \wedge \alpha=\int_E \beta \wedge \beta=0$ in integral cohomology. Then $\Pi_a(z)=\int_{S^1_a} {\rm d}x/y $, 
$\Pi_b(z)=\int_{S^1_b} {\rm d}x/y$ are the well-known elliptic integrals, and one can evaluate eq.~\eqref{eq:Kaehlerpotential} in terms 
of these periods. The elliptic periods can in turn be evaluated in terms of complete elliptic integral of the first kind. 
For the two-loop banana and train-track integrals (also known as the sunrise and the elliptic double-box integrals, respectively), the maximal cuts evaluate to the 
periods of an elliptic curve, cf., e.g., refs.~\cite{Laporta:2004rb,Muller-Stach:2011qkg,MullerStach:2012mp,Bourjaily:2017bsb}. 
Introducing the parameter $\tau=\Pi_b(z)/\Pi_a(z)$ on the upper half-plane, and keeping 
in mind  that  $\Pi_a(z)$ (and $\Pi_b(z)$) are holomorphic,  we can evaluate from eq.~\eqref{eq:Kaehlerpotential}, with $G_{\tau\bar \tau}=\partial_\tau \bar \partial_{\bar \tau} K= 1/(2 {\rm Im}(\tau))^2$,
the famous parabolic metric on the upper half-plane (the Teichm\"uller space of $E$). Moreover, using   $\tau$ as complex structure 
variable one can express the periods of the elliptic curve as modular forms of weight one  in $\tau$~\cite{MR2409678}. 
The Calabi-Yau periods generalize this relationship between periods and maximal cuts to higher-loop banana and train-track 
integrals, cf., e.g., refs.~\cite{Bloch:2014qca,Bloch:2016izu,Klemm:2019dbm,Bonisch:2020qmm,Bourjaily:2018ycu}. 

\paragraph{The variation of the Hodge structure  on the middle cohomology.}
\label{para:variationHodgestructure}

As pointed out, for a given complex structure specified, say, by $\zs_0$, the form $\Omega(\zs_0)$ is of Hodge type $(n,0)$. At $\zs_0$ 
one can, due to (\ref{eq:Hodge_decomposition}), chose  a basis $\tilde \gamma^i_{p,q}$ of specific Hodge type $(p,q)$, $p+q=n$, and express them with constant coefficients  in terms of the 
topologically basis $\gamma^i$ at $\uz_0$, cf. eq.~\eqref{eq:perioddef}.  If we vary the complex structure, 
say $\zs_0\to \zs_0+\eps$, then in particular the $n$-form $\Omega(\zs_0+\epsilon)$ gets admixtures of forms 
of other types compared to the original complex structure at $\zs_0$. The period functions $\Pi_i(\uz)$ 
describe this \emph{variation of Hodge structures}, cf. eq.~\eqref{eq:perioddef}.   One could have 
studied  analogs of eq.~\eqref{eq:perioddef} for all forms of Hodge type  $\tilde \gamma^i_{p,q}$ but for Calabi-Yau 
 manifolds  the period functions $\Pi_i(\zs)$ over the unique holomorphic form $\Omega(\zs_0)$  play 
 a special role, and many seemingly more general questions follow from them. 
 
 Mathematically, one captures this variation of the Hodge decomposition of the middle cohomology $H^n$ in 
 eq.~\eqref{eq:Hodge_decomposition}\footnote{We will often suppress the dependence of the the cohomology 
 groups on the manifold $M_n$, i.e., we simply write $H^m$ for $H^m(M_n)$.} in terms of the so-called  \emph{Hodge 
 filtration of  weight $m=n$} with\footnote{A good and more thorough treatment of the variation of the Hodge structure and the limiting 
mixed Hodge structure  can be found in ref.~\cite{MR2393625} and in relation with mirror symmetry in ref.~\cite{MR1677117}.}
\begin{equation} 
F^p H^m=\bigoplus_{l\ge p} H^{l,m-l} ~,
\end{equation}
such that
\beq
H^m=F^0H^m\supset F^1H^m\supset \cdots \supset F^{m}H ^m\supset F^{m+1}H^m=0\,.
\label{eq:decreasing} 
\eeq
One can recover the Hodge decomposition in eq.~\eqref{eq:Hodge_decomposition} from the Hodge filtration $F^\bullet$ via the relations:
\beq
F^pH^m\oplus \overline{F^{m-p+1}H^m}=H^m \qquad\text{and} \qquad H^{p,m-p}=F^p H^m\cap \overline{{ F^{m-p}}H^m}\ .
\label{eq:recoverHodge1}
\eeq
The Hodge cohomology groups come from what is more generally known as the  \emph{associated graded complex} of the filtered complex
\beq
{\rm Gr}_F^p H^m = F^pH^m/F^{p+1}H^m\cong H^{p,m-p}\ .
\label{eq:recoverHodge2}
\eeq
Unlike the $H^{p,q}$, the $F^p H^m$ vary holomorphically with the complex structure and 
fit into locally free constant sheaves $\mathcal{F}^p$ over ${\cal M}_\text{cs}$,  with the inclusions ${\cal F}^p\subset {\cal F}^{p-1}$.   
This  defines a \emph{decreasing varying Hodge filtration} of the Hodge bundle $ {\cal H}$ for the family
\beq
{\cal H}={\cal F}^0\supset {\cal F}^1\supset \cdots \supset{\cal  F}^{n}\supset {\cal F}^{n+1}=0\ ,  
\label{eq:decreasingsheafs} 
\eeq
where we suppressed the dependence on $H^m$ since we understand that we work always  in  the middle cohomology $m=n$.   
By construction ${\cal F}^0=R^n\pi_*\mathbb{C}\otimes {\cal O}_{{\cal M}_\text{cs}}$ contains a local system, namely the locally constant sheaf $R^n\pi_*\mathbb{C}$. 
This defines a flat connection called the Gauss-Manin  connection 
\begin{equation} 
\nabla:{\cal F}^0\mapsto {\cal F}^0 \otimes \Omega^1_{{\cal M}_\text{cs}} \,.
\label{eq:GaussManin}
\end{equation}  
The flat sections of the Gauss-Manin coincide with the local system $R^n\pi_*\mathbb{C}$, i.e., for $f$ a  holomorphic function on ${\cal M}_\text{cs}$ 
and $s$ a flat section of $R^n\pi_*\mathbb{C}$, one defines $\nabla$ by $\nabla (s\otimes f)=s\otimes \rd f$.   As before, one 
chooses in ${\cal H}$ the locally  constant subsheaf ${\cal H}_{\mathbb{C}}=R^n \pi_* \mathbb{C}$, and within that an integer 
subsheaf ${\cal H}_{\mathbb{Z}}=R^n \pi_* \mathbb{Z}$. The quadruple $({\cal H},\nabla, {\cal H}_\mathbb{Z},{\cal F}^\bullet)$ is called 
a variation of pure Hodge structures.   One of the most important features is 
the \emph{Griffiths transversality} of $\nabla$:
\begin{equation}
\nabla {\cal F}^p\subset {\cal F}^{p-1}\otimes \Omega^1_{{\cal M}_\text{cs}}~.
\label{eq:trans} 
\end{equation} 
A modern language proof of eq.~\eqref{eq:trans} can be found in ref.~\cite{MR2451566}.
In every  coordinate  system that  corresponds to a local trivialization of ${\cal H}$, the connection $\nabla$ 
is  the normal derivative,  and for $\Omega(\uz)\in {\cal F}^n$  one gets in particular~\cite{MR717607} 
\beq 
\partial^{\underline k}_{\underline z} \Omega({\underline z})\in {\cal F}^{n-|k|}~,
\label{eq:transversality}
\eeq     
with $\partial^{\underline k}_{\underline z} \coloneqq\partial_{z_1}^{k_1}\ldots \partial_{z_r}^{k_r} $ and $|k|=\sum_{i=1}^{h^{n-1,1}} k_i$. 
Let us note in passing that one can define a non-holomorphic connection that allows 
one to kill the starting terms from ${\cal F}^{n-|k|}$ in eq.~\eqref{eq:transversality}. For example,  we have $\partial_{z_k}\Omega (\zs)\in {\cal F}^{n-1}={\cal  H}^{n,0}\oplus  {\cal  H}^{n-1,1}$, 
but one checks from eqs.~\eqref{eq:Kaehlerpotential} and~\eqref{eq:bryantgriffith} that the application of 
$D_k=\partial_{z_k}+ (\partial_{z_k} K)$  yields $D_k\Omega({\underline z})\in {\cal  H}^{n-1,1}$. Writing 
down higher iterations of the non-holomorphic connection with this  property is more involved, but possible,
and  is known as \emph{special K\"ahler geometry} for $n=3$, and generalizes to higher dimensions, 
see, e.g., ref.~\cite{MR3965409}.     

Since eq.~\eqref{eq:transversality} is  a cohomological inclusion into the finite-dimensional space ${\cal H}$,
there must be, up to exact terms, linear relations between the derivatives.  The coefficients of these linear relations
turn out to be rational functions in the complex moduli $\uz$. The relations form a finitely-generated differential ideal called the \emph{Picard-Fuchs  differential ideal}. One can find  
its generators ${\cal L}_i$, $i=1,\ldots, r$   concretely, for example by the Griffiths reduction method pioneered 
in ref.~\cite{MR0260733}, or for Calabi-Yau spaces embedded in toric varieties by the \emph{Gel\cprime fand-Kapranov-Zelevinsk\u{\i} 
(GKZ) systems}~\cite{MR1011353}, see also refs.~\cite{MR1316509,Hosono:1994ax}.  A  complete Picard-Fuchs differential ideal 
is equivalent to the flat Gauss-Manin connection. In fact, it often allows one to construct more easily the global flat sections
(and other important structures, like the $n$-points couplings~\cite{MR717607}, see section~\ref{sssect:Griffithstransversality}).  
Since exact terms vanish when integrated over any closed cycle, the period functions are annihilated by the ${\cal L}_i$:
\beq 
{\cal L}_i\Pi_*({\underline z})=0 	\qquad\text{for } i=1,\ldots ,r\ . 
\label{eq:solutions}
\eeq  
Here the index $*$ refers to period integrals in any basis. The differential ideal  is complete if it has $b_n$ independent solutions near any $\uz_0\in{\cal M}_\text{cs}$.

\paragraph{Gauss-Manin connection from Griffiths reduction on elliptic families.}
\label{para:GaussManin}

Let us illustrate the previous concepts on the example of the
Legendre family ${\cal E}_{\textrm{Leg}}$ of elliptic curves\footnote{It can be transformed into Weierstrass form with 
$g_2=1/27(2-3 z-3 z^2+2 z^3)$ and $g_3=(2^\frac{2}{3}/3)(1-z+z^2)$.} 
\begin{equation} 
y^2=x(x-1)(x-z)\,.
\end{equation} 
Let us start by deriving the single Picard-Fuchs operator for the Legendre family
using the Griffiths reduction method. First we  write the elliptic curve as the vanishing locus of the 
homogeneous cubic  $P=w y^2- x(x-w)(x-z w)$ in  the projective space $\mathbb{P}^2$. The  
holomorphic form $\rd x/y$, which is valid in the  patch $w=1$ of $\mathbb{P}^2$, can be written 
more symmetrically in the Griffiths residue form:\footnote{A preferred K\"ahler gauge choice of    
$\Omega$ is to replace $\mu_l$ with $a_0 \mu_l$ in eq.~\eqref{Griffith residue form}, where $a_0$ is the coefficient of the symmetric perturbation $x_1\cdots x_{l+1}$ in $P$ 
if the latter exists, see ref.~\cite{MR1115626}. The so-defined $\Omega$ posseses additional symmetries which can be used to derive the Gel\cprime fand-Kapranov-Zelevinsk\u{\i} 
differential ideal for the Calabi-Yau periods in preferred complex structure  coordinates, see refs.~\cite{MR3965409,Klemm:2019dbm}.}
\beq\label{Griffith residue form}
\Omega(z)=\frac{1}{2 \pi i}\oint_\gamma \frac{\mu_2}{P}\,, 
\eeq
where $\gamma$ is a cycle 
around $P=0$ and the measure is the standard volume form on $\mathbb{P}^2$. Since 
this definition of the $(l-1,0)$-form $\Omega$ is valid for hypersurfaces  in $\mathbb{P}^l$, we report the general volume form on $\mathbb{P}^l$:   
\begin{gleichung}
\mu_{l} = \sum_{k=1}^{l+1} (-1)^ {k+1} x_k \ \dd x_1 \wedge  \ldots \wedge {\widehat {\dd x_k} } \wedge  \ldots \wedge \dd x_{l+1}~ ,
\label{measure}
\end{gleichung}
where the hat indicates the omission of one differential. Note that  $\mu_l/P$ is well-defined, because, e.g. for $l=2$, 
it is invariant under non-zero $\lambda\in \mathbb{C}^*$ scalings $(w,x,y)\mapsto  (\lambda w,\lambda x,\lambda y)$ in  $\mathbb{P}^2$. If we set  $w=1$ and evaluate the $\rd y$ integration using 
the residue at $P=0$, we get $1/2 \oint_\gamma \rd x\wedge \rd y/P=\rd x/y$, i.e., we recover (up to an overall constant), the familiar form of the holomorphic one-form on an elliptic curve. Note that  the residue form in eq.~\eqref{Griffith residue form} generalizes naturally to Calabi-Yau 
hypersurfaces $P=0$ in  weighted projective spaces or toric varieties as ambient space in any dimensions.  The residue form of $\Omega(\zs)$  
in eq.~\eqref{Griffith residue form} also generalizes  to complete intersection Calabi-Yau spaces  $P_1=P_2=\ldots =P_r=0$ in products of projective spaces 
in $\otimes_{i=1}^m \mathbb{P}_{(i)}^{n_i}$. The latter are Calabi-Yau $(\sum_{i=1}^m n_i)-r)$-folds, if the sum of the degrees $d_{kl}$ of the homogeneous polynomial 
$P_k$ w.r.t. to the homogeneous coordinates of $\mathbb{P}_{(l)}^{n_l}$ fullfills $\sum_{k=1}^r {d_{kl}}=n_l+1$, $\forall\,  l=1,\ldots,m$.  
Let  $\gamma_k$  be a loop encircling $P_k=0$, then the holomorphic $(n,0)$-form is given as 
\beq
\Omega(\uz)=\frac{1}{(2\pi i)^r}\oint_{\gamma_1}\ldots \oint_{\gamma_r}\frac{\wedge_{i=1}^{m} \mu_{n_i}}{P_1\cdots P_r}\ ,
\label{eq:Omcicy}
\eeq   
see ref.~\cite{MR0260733} for projective spaces and ref.~\cite{MR3965409}  for an overview of the 
measure $\mu$  for more general ambient spaces. Both  the hypersurface-- and  the complete intersection representation 
will be useful for the maximal cut of the banana integrals, as we explain at the end of this section. 

Now, let $\partial_k$ be the derivative w.r.t. the $k^{\textrm{th}}$ coordinate (in $\underline x$)
of the projective space. Then $\partial_k(q({\underline x},z) \mu_2/P^r)$ is exact if and only if $q({\underline x},z)$ 
is chosen so that the expression is scale-invariant~\cite{MR0260733}.  In other words, under the integration over $\gamma$ and 
closed cycles $\Gamma$ of $E$,   the following partial integration formula holds:
\beq
 \frac{r   q({\underline x},z) \partial_k P}{P^{r+1}}\mu_2= \frac{\partial_k q({\underline x},z)}{P^r}  \mu_2\,.
\label{eq:partialintegration}
\eeq 
With the notations $\Pi(z)=\int_\Gamma\Omega(z) =\int_\Gamma \oint_\gamma \frac{\mu_2}{P}$ and $(\cdot)'=\partial_z(\cdot)$,
one gets for the Legendre curve after a short calculation:
\begin{equation}
\begin{aligned}
	\Pi''(z)	&=	\int_{\Gamma}\frac{2(w^2 x-w x^2)^2}{P^3}\,\mu_2	\\
			&=	-\frac13 \int_{\Gamma} \frac{2(w^4-2w^3x)\partial_wP+2(w^2x^2-2w^3x)\partial_xP+(2w^2xy-w^3y)\partial_yP}{P^3}\,\mu_2	\\
			&=	-\frac{1-2z}{z(1-z)}\,\Pi'(z)+\frac1{2z}\int_{\Gamma}\left(\frac{w\partial_wP}{1-z}+w\partial_xP-\frac{y\partial_yP}{2(1-z)}\right)\frac{\mu_2}{P^2}	\\
			&=	-\frac{1-2z}{z(1-z)}\,\Pi'(z)+\frac1{4z(1-z)}\,\Pi(z)~.
\label{eq:firstorderform}
\end{aligned}
\end{equation}
To get the second equality, i.e., to write  $-2 (w^2x\!-\!w x^2)^2=\sum_k q_k({\underline x},z) \partial_k P$, one might use the \emph{Gr\"obner basis algorithm}~\cite{MR3330490}. The third equality is obtained 
using the partial integration rule for the integrand, which reduces the power of $P$ in the denominator. The resulting numerator  is expressed (using again if necessary  the Gr\"obner basis algorithm) in 
the form  $q(z) \partial_z P + \sum_k q_k({\underline x},z) \partial_k P$. This allows one to identify the  $\Pi'(z)$ term and to use partial integration on the rest.  With the logarithmic derivative $\theta=z\,\partial_z$,  we get 
the following differential operator generating the Picard-Fuchs ideal
\begin{equation}
{\cal L}_{\text{Leg}}= z^2 (1-z)\partial_z^2+z(1-2 z)\partial_z-\frac{z}{4} = \theta^2-z \left(\theta  +\frac{1}{2}\right)^2\ ,
\label{eq:difflegendre} 
\end{equation} 
and the Picard-Fuchs differential equation for the Legendre family ${\cal E}_{\textrm{Leg}}$ is
\beq
{\cal L} _{\text{Leg}}\Pi(z)=0\,.
\label{eq:Legendre_PF}
\eeq
It clearly has $b_1(E)=2$ solutions.  
 Our short computation illustrates the general 
strategy based on eq.~\eqref{eq:transversality}. In particular,  $\partial_z \Omega(z)$ generates an 
 expression that can be written, up to exact terms, in terms of the holomorphic $(1,0)$-form $\Omega(z)$ itself  
 and a differential of the second kind, which is a meromorphic one-form with vanishing residues. These span 
 ${\cal H}({\cal E})$, in agreement with eq.~\eqref{eq:transversality}. Taking the second derivative $\partial^2_z \Omega(z)$ closes 
 on this space, and produces up to exact terms a linear relation, leading to the Picard-Fuchs operator in eq.~\eqref{eq:difflegendre}. At this point we stress an important point: arguing that 
 these meromorphic differentials, which have a well-defined pairing with homology,  equivalently encode the Hodge 
 decomposition in eqs.~\eqref{eq:recoverHodge1} and~\eqref{eq:recoverHodge2} at a particular point in ${\cal M}_\text{cs}$ is a non-trivial step. 
By taking derivatives of $\Omega$  with respect to the  complex moduli one of course never literally obtains for example  $\bar \Omega$. 
 Rather, one has to use  \emph{hypercohomology theory} to show that the homology spanned by the emerging 
 meromorphic differential is cohomologically  equivalent to the Hodge cohomology.  This application of  
 hypercohomology  is described in ref.~\cite{MR1288523}, see also ref.~\cite{MR756842} for a simplified physical explanation.  
 Essential for us is  the  general  fact that among the derivatives of $\Omega(\zs)$ w.r.t to the moduli $\uz$ one can find a 
 basis of ${\cal H}$ modulo the relation in eq.~\eqref{eq:partialintegration}.\footnote{More precisely,  the horizontal subspace  ${\cal H}_\text{hor}({\cal M},\mathbb{C})$  of 
 ${\cal H}({\cal M},\mathbb{C})$. This distinction becomes important for higher-dimensional Calabi-Yau manifolds.}
 Integrals  over these cohomology elements correspond  to independent master integrals for the maximal cuts in some integer
 dimension. Geometrically,  they can easily be made explicit. In particular, for hypersurfaces in an $n$-dimensional (weighted) projective ambient space, let $P({\underline{x}})$  be of  (weighted) degree $d$ in the coordinates  $x_1,\ldots, x_{n+1}$, and $\mathrm J(P)$ is the Jacobian ideal, i.e., the polynomial ideal generated by the partial derivatives $\partial_{x_i}P$ at a smooth point of $P$. Then we can consider a finitely generated ring of deformations of $P$: 
 \beq
 \label{eq:ring} 
 {\cal R}(P) =\frac{\mathbb{C}[x_1,\ldots,x_{n+1}]}{{\rm J}(P)}.
 \eeq     
 This ring is graded  with degrees $d_k=k d$ with $k=0,\ldots, n$.  Note that the dimension of ${\cal R}(P)$ and its graded pieces are constant for 
 smooth loci in ${\cal M}_\text{cs}$, and they do of course not depend on reparametrizations of the ambient space. Hence, 
 the simplest smooth configuration of $P_0$ at a smooth reference point, say  ${\underline z}_s$, can be used to study generic 
 properties of ${\cal R}(P)$. Using this freedom, one can choose a monomial basis as its generators. Let us denote 
 generators of degree $d_k$ by $m^{(k)}_i(x)$, $i=1,\ldots,\#_k$. Then one can represent  a basis of differentials as:
 \beq
\label{Griffith residue form2}
\Omega^{(k)}_i= \frac{1}{2 \pi i}\oint_\gamma \frac{ m^{(k)}_i(\underline{x}) \mu_{l+1}}{P^k}\,, \quad i=1,\ldots, \#_k \ .
\eeq
Exept for $\Omega^{(0)}_1=\Omega$,  these differentials are meromorphic, but hypercohomology allows one 
to relate $\Omega^{(k)}_i$ to the forms generating $F^{n-k}H^n(M)$ for $i=1,\dots, h_{\text{hor}}^{n-k,k}=\#_k$.  Via eq.~\eqref{eq:isom} the complex structure  
deformations or local coordinates of ${\cal M}_\text{cs}$ are naturally identified with the deformations $\uz$ in 
\beq
P=P_0+ \sum_{i=1}^{h^{2,1}_\text{hor}} z_i m_i^{(1)}({\underline x})\ .             
 \eeq
For example for the generic cubic in $\mathbb{P}^2$ (see eq.~\eqref{eq:Pbanana}), one  can choose $m^{(0)}_1(\underline{x})=1$ and $m^{(1)}_1(\underline{x})=x y w $ (the former gives rise to the differential of the first kind on the elliptic curve, while the latter represents a differential of the second kind).  
The above concepts generalize to complete intersections. In particular, eq.~\eqref{eq:partialintegration} becomes  for complete 
 intersections~\cite{MR0260733}
\begin{equation}
\sum_{k\ne j}{m_k\over m_j-1}{P_j\over P_k}
{q\partial_{x_i} P_k\over \prod_{l=1}^r P_l^{m_l}}\mu\ =
{1\over m_j-1} {P_j \partial_{x_i} q\over \prod_{l=1}^r P_l^{m_l}}\mu -{q \partial_{x_i} P_j\over \prod_{l=1}^r P_l^{m_l}}\mu \ ,
\label{partialintegrationcicy}
\end{equation}
where $r$ is defined as in eq.~\eqref{eq:Omcicy} and  $q(\underline{x})$ are polynomials of the appropriate degree to ensure the scale invariances and $\mu$ 
is a suitable generalization of the measure in eqs.~\eqref{measure} and \eqref{eq:Omcicy} for the ambient space under consideration.  An example for the application of 
 the Griffiths reduction method for an elliptic curve represented as a complete intersection  can be found in  ref.~\cite{MR1287096}.

Note that eq.~\eqref{eq:Legendre_PF} is the famous Gauss hypergeometric system with solution $\Pi(z)={}_2F_1\left( \frac{1}{2},\frac{1}{2},1;z\right)$. More generally, Calabi-Yau  manifolds embedded canonically into toric ambient spaces have Picard-Fuchs differential ideals given by resonant  GKZ systems~\cite{MR1011353,MR1316509,Hosono:1994ax}, 
 which is one canonical extension of hypergeometric systems to the multi-parameter case.  The canonical embedding 
 of the  Legendre curve in this sense would be by a complete intersection of  two quadrics in $\mathbb{P}^3$, which 
 we denote  by the short form $E= \left(\mathbb{P}^3||2,2\right)$. The residue form  of the holomorphic form $\Omega$ is given in eq.~\eqref{eq:Omcicy}, 
 and the Griffiths reduction method generalizes using  eq.~\eqref{partialintegrationcicy} to complete intersections~\cite{MR0260733}. The explicit  reduction for the case  at hand
 can be found in ref.~\cite{Klemm:1993jj}, where it  is compared for Calabi-Yau $n$-folds with the generalized hypergeometric GKZ  systems. 
 
 Let us now illustrate how the Picard-Fuchs equation~\eqref{eq:Legendre_PF} can be used to obtain the Gauss-Manin connection $\nabla$ for the Legendre family $\mathcal E_\text{Leg}$. If we define $\underline{I}(z) =(\Pi(z),\partial_z \Pi(z))^T$ and $\underline{I}^\theta(z)=(\Pi(z),\theta \Pi(z))^T$, then the Picard-Fuchs equation $\mathcal{L}_{\textrm{Leg}}\Pi(z)=0$ can be cast in the form:
 \begin{equation}
 \partial_z \underline{I}(z)={\bf A}(z)\underline{I}(z) \quad {\rm or} \quad \theta \underline{I}^\theta(z)={\bf A}^{\theta}(z)\underline{I}^\theta(z)   
 \label{eq:firstorderform}
 \end{equation}  
 with
  \begin{equation}
{\bf A}(z)=\left(\begin{array}{cc} 0 &1\\  \frac{1}{4 z (1-z)}&\frac{2 z-1}{z (1-z)} \end{array}\right) \quad {\rm or} \quad  {\bf A}^\theta(z)=\left(\begin{array}{cc} 0 &1\\  \frac{z}{4 (1-z)}&\frac{z}{1-z} \end{array}\right) \ .    
 \label{eq:firstorderform_A}
 \end{equation}     
We see that the Gauss-Manin connection in eq.~\eqref{eq:firstorderform} is indeed very reminiscent of the homogeneous system of first-order differential equations satisfied by the maximal cuts of the master integrals for $\eps=0$, see for example eq.~\eqref{eq:homogeneous_equation}.
 
 As a second example we discuss the maximal cuts of the two-loop banana integrals. 
 At two loops, the geometry associated to the banana graph is a family $\mathcal{E}_{\textrm{ban}_2}$ of elliptic curves, which can be described as the vanishing locus in $\mathbb{P}^2$ of the polynomial (with $1/z=p^2/m^2$) 
 \begin{equation} 
 P= (3-1/z)w x y+w^2 x+w^2 y+w x^2+w y^2+x^2 y+x y^2\ .
 \label{eq:Pbanana}
 \end{equation} 
 Note that this polynomial is closely related to the second Symanzik polynomial of the banana graph, cf. eq.~\eqref{eq:banana_F_pol}.
 We can derive the Picard-Fuchs differential equation as before, but 
 let us point out a slightly more universal way.  Every elliptic curve can be brought into the Weierstrass form in eq.~\eqref{eq:ellipticcurve} by Nagell's algorithm.\footnote{See ref.~\cite{MR3965409}, appendix 1,
 for explicit formulas that bring cubics in $\mathbb{P}^2$  (used here), bi-quadrics in $\mathbb{P}^1\times \mathbb{P}^1$ 
 and all quartics in  $\mathbb{P}(1,1,2)$ into Weierstrass form.} To apply the Griffiths reduction method  to  eq.~\eqref{eq:ellipticcurve}, 
 we introduce the notations:
 \beq\bsp\label{eq:Legendre_shorthands}
 \Delta=g_2^3-27 g_3^2\,,\quad &\, \beta=3 g_2' g_3-2 g_2 g_3'\,,\\
  \rho=3\beta/(2 \Delta)\,, \quad &\,
 \kappa=\log(\Delta)\,,\quad  \alpha=\log(\rho)\,.
 \esp\eeq
 After a short computation we arrive at the following general expression for the Gauss-Manin connection:\footnote{Often, besides  $\frac{\rd x}{y}$, the second basis element 
 $\frac{x \rd x}{y}$ is used as  meromorphic differential of the second kind, i.e., one with no non-vanishing residues. We note the relation 
 $\partial_z   \frac{ \rd x}{y}=-\frac{1}{12} \kappa' \frac{\rd x}{y}-\rho \frac{x \rd x}{y} +\frac{\rd}{\rd x}\left(\frac{18 \beta x^2-\Delta' x -3 g_2 \beta}{6 y \Delta}\right) \rd x$.}     
 \begin{equation}
 \partial_z \underline{I}(z)=\left(\begin{array}{cc} 0 &1\\  \displaystyle{\frac{1}{12}\left( \kappa' \alpha'-\kappa''-\rho^2 g_2 +\frac{(\kappa')^2}{12}\right)}& \alpha' \end{array}\right) \underline{I}(z) \ .    
 \end{equation}         
 For eq.~\eqref{eq:Pbanana} one gets $g_2=(3 z-1) (3 z^3-3 z^2+9 z-1)$ and $g_3=\sqrt{3}(1-6z-3 z^2 ) (9 z^4-36 z^3+30 z^2-12 z+1)/9$. 
 After factoring out a trivial factor, one arrives at the Picard-Fuchs differential operator:
 \beq\bsp
 {\cal L}_{2}&\,=  (1-9 z)(1-z) z^2\,\partial_z^2 -z(1-9z^2) \partial_z+(1-3z)\\
 &\,	=	 (1 - 9 z) (1 - z) \theta^2-(2-10z)\theta+1-3z \ .
 \label{eq:diffbanana}
\esp\eeq
 This homogeneous differential operator annihilates the maximal cut of the two-loop banana integral.
 
 Let us conclude this discussion by an important comment. While the Picard-Fuchs operator ${\cal L}_{2}$ annihilates the maximal cuts of the two-loop banana integrals, it does not annihilate the complete Feynman integral. Instead, ${\cal L}_{2}$ needs to be supplemented by an inhomogeneous term related to the tadpole master integrals (which vanish on the maximal cut). Alternatively, the corresponding integration domain is not closed in the elliptic curve defined by eq.~\eqref{eq:Pbanana}. The corresponding 
 period integral is then called a \emph{relative period}.  This adds one inhomogeneous solution to the $b^\text{hor}_n(M)$ solutions of the Picard-Fuchs differential ideal. The 
 precise  linear combination of homogeneous  and inhomogeneous solutions that correspond to the Feynman 
 integral or the relative period has still to be determined, which happens in  generality for the 
 banana integrals in eq.~\eqref{eq:fullFeynamasymptotic}. Mathematically, adding boundary terms to a Calabi-Yau 
 Picard-Fuchs  differential ideal that describes a variation of Hodge structures  is known as an 
 \emph{extension} of the latter.   
 
 \subsubsection{Cuts of banana integrals as periods of  Calabi-Yau motives}
 \label{ssec:cutsbanana}
 A convenient starting point to relate the banana inte.grals to Calabi-Yau geometries is to consider their Feynman parameter representation in $D=2-2\eps$ dimensions:
\begin{equation}
	 I_{\unu}(p^2,\underline m^2;D)	=	\int_{\sigma_{l}}\left(\prod_{k=1}^{l+1}x_k^{\delta_k}\right)\dfrac{\mathcal{U}^{\omega- \frac{D}{2}}}{\mathcal{F}(p^2,\underline m^2)^{\omega}}\mu_{l}\,,
\label{bananageneralgeneric}
\end{equation}
where we defined $\nu_i = 1+\delta_i$ and $\omega \coloneqq \sum_{i=1}^{l+1} \nu_i - \frac{l D}{2}-1+l\epsilon + \sum_i \delta_i$. The two Symanzik polynomials for the banana graph are given by:
\begin{align}
\mathcal{U} &= \left( \prod_{i=1}^{l+1}x_i \right) \left(\sum_{i=1}^{l+1}\frac{1}{x_i} \right) = \sum_{i=1}^{l+1}\prod_{\substack{
   j=1 \\
   j\neq i
  }}^{l+1} x_j\,, \\
  \label{eq:banana_F_pol}
\mathcal{F}(p^2,\underline m^2) &= -p^2 \left( \prod_{i=1}^{l+1} x_i\right) +  \left( \sum_{i=1}^{l+1}m_i^2 x_i \right) \, \mathcal{U} \,.
\end{align}
The edge variables $x_i$ form a set of homogeneous coordinates for the projective space $\mathbb{P}^l$, and the $l$-real-dimensional 
integration domain $\sigma_l$ is defined as:
 \begin{equation}
 \sigma_l=\{[ x_1:\ldots: x_{l+1} ] \in \mathbb{P}^l \, | \, x_i \in \mathbb{R}_{\ge 0} \ \text{for all} \ 1 \leq i \leq l+1 \}  ~. 
\label{sigmal}
\end{equation}  

Let us briefly illustrate how we can identify the Calabi-Yau geometry associated to the $l$-loop banana graphs. Following ref.~\cite{Vanhove:2018mto}, a maximal cut of the banana integral in $D=2$ dimensions can be obtained by replacing the integration contour in eq.~\eqref{sigmal} by the $l$-dimensional torus 
\beq
T^l \coloneqq \{[ x_1:\ldots: x_{l+1} ] \in \mathbb{P}^l \, | \, |x_i|=1 \ \text{for all} \ 1 \leq i \leq l+1 \}\,.
\eeq
Using the notation introduced in eq.~\eqref{eq:equal-mass_cuts}, there is a cycle $\Gamma_T$ (in loop momentum space) which corresponds to the cycle $T^{l}$ in Feynman parameter space, so that  
\begin{equation}\label{eq:max_cut_feyn_par}
	J_{l,\underline 0}^{\Gamma_T}(\uz;0)=	\int_{T^l} \frac{\mu_l}{\mathcal{F}(1,\uz)}=
	\int_{T^{l-1}} \oint_\gamma \frac{\mu_l}{\mathcal{F}(1,\uz)}=	2\pi i\int_{T^{l-1}} \Omega_{l-1}(\uz)~.
\end{equation}
Now we note that one $S^1$ cycle  of $T^l$ in eq.~\eqref{eq:max_cut_feyn_par} can be identified with $\gamma$ in eq.~\eqref{Griffith residue form},  so that the 
resulting form $\Omega_{l-1}(\uz)$ can be identified as the holomorphic $(l-1,0)$-form of the  Calabi-Yau hypersurface, defined as the vanishing locus 
\begin{equation}
	M^\mathrm{HS}_{l-1}	=	\{ \ux \in \mathbb{P}^l|\mathcal F(1,\uz;\ux)	=	0	\}	~.
\label{eq:HSgeom}
\end{equation}
This hypersurface geometry is singular in the physical sub-slice. Nevertheless  all $l$ residua of the $S^1$ contours 
in eq.~\eqref{eq:max_cut_feyn_par} can be formally performed and  one gets 
\beq
J_{l,\underline 0}^{\Gamma_T}(\uz;0)=(2\pi i)^l\sum_{n=0}^\infty \sum_{|k|=n}\left(n\atop {k_1\ldots k_{l+1}}\right)^2 \ \prod_{i=1}^{l+1} z_i^{k_i} \ ,  
\label{cutMHS}
\eeq    
with $|k|=\sum_{i=1}^{l+1} k_i$.

The period integral in eq.~\eqref{eq:max_cut_feyn_par} is only one possible maximal cut of $J_{l,\underline 0}(\uz;0)$, and this  
period can easily be evaluated explicitly (cf.~eq.~\eqref{cutMHS}), even in the singular geometry in eq.~\eqref{eq:HSgeom}.  However, as we have seen in section~\ref{sec:deqs}, we 
would like to know the complete Wronskian ${\bf W}(\zs)$, or equivalently the vector of period functions $\uPi(\zs)$, preferrably in an integral basis that allows us to identify the periods with the maximal cuts. For this it is 
convenient to have a smooth model. For example, the geometric invariants that enter the  $\widehat \Gamma$-class evaluation, can be properly  
defined only for a smooth model. In ref.~\cite{Klemm:2019dbm} the smooth model was obtained in eq.~\eqref{eq:HSgeom} by considering the  
 reflexive Newton polytope associated to eq.~\eqref{eq:banana_F_pol}. The smooth model is a deformation of the singular configuration 
in eq.~\eqref{eq:banana_F_pol}, and its toric ambient space as well as its mirror manifold  could then be obtained using Batyrev's toric mirror construction.   For 
example, for the two-loop case ($l=2$), this yields the elliptic curve studied in ref.~\cite{MR3780269}. The three-loop integral geometry with 
$\chi(M_2^\text{HS})=24$ is  a K3  with 9 complex structure  deformations and a Picard group of rank 11. For the four-loop integral one gets a Calabi-Yau threefold with 
Euler number $\chi(M_3^\text{HS})=20$, $h_{21}=16$ and $h_{11}=26$, while for the five-loop integral the Calabi-Yau fourfold has 
$\chi(M_4^\text{HS})=540$, $h_{31}=25$ and $h_{11}=57$, $h_{21} =0$ and $h_{22}=422$.  It is easy to see that the complex structure 
deformations grow like $h_{l-2,1}=l^2$. Even though these complex deformations grow much faster than the $l+1$ physical 
parameters $\zs$, the GKZ system  induced from the toric ambient space and the large radius coordinates made it  possible to construct from 
the toric description of eq.~\eqref{cutMHS} the Picard-Fuchs differential ideal  in ref.~\cite{Klemm:2019dbm} for three and 
four loops,  and for general  masses in ref.~\cite{Bonisch:2020qmm}. This was done by restricting the GKZ system 
to the physical sub-slice and determining the inhomogeneous  term 
by an ansatz confirmed by extensive numerical integration of eq.~\eqref{bananageneralgeneric}. Clearly, this method becomes  
more cumbersome at higher loops, because  the physical parameter space becomes an ever  tinier subspace in the 
complex moduli space of the higher-dimensional  hypersurfaces in eq.~\eqref{eq:HSgeom}, 
with an even more redundant middle homology and cohomology. 

A very elegant way to circumvent this problem was proposed in ref.~\cite{Bonisch:2020qmm}. One considers the complete intersection of two polynomials of degree 
$(1,\ldots,1)$ in 
\begin{equation}
	\mathbb{P}_{l+1}	\coloneqq	\bigotimes_{i=1}^{l+1} \mathbb{P}_{(i)}^1	\, , 
\label{Pl+1} 
\end{equation}
i.e., we have
\begin{equation}
M^\text{CI}_{l-1}\!=\!\Biggl\{\Bigl(w_1^{(i)}:w_2^{(i)}\Bigr)\in \mathbb{P}_{(i)}^1,\forall i  \biggr|P_1\!\coloneqq\!\sum_{i=1}^{l+1} a^{(i)} w_1^{(i)} + b^{(i)} w_2^{(i)} =\sum_{i=1}^{l+1} c^{(i)} w_1^{(i)} + d^{(i)} w_2^{(i)}\!\eqqcolon\!P_2 \!=\!0\Biggr\}\,.
\label{MCicy} 
\end{equation}    
Such a complete intersection manifold in a product of manifolds is denoted in short as
\begin{equation} 
M^\mathrm{CI}_{l-1}=
\left(\begin{array}{c} \mathbb{P}^1_1\\ \vdots \\      \mathbb{P}^1_{l+1}\end{array}\right|\!\! \left|  \left. \begin{array}{cc} 1&1 \\ \vdots &\vdots \\ 1& 1\end{array}\right\} \ l+1\right) \ \ \ \subset \ \ \
\left(\begin{array}{c} \mathbb{P}^1_1\\ \vdots \\      \mathbb{P}^1_{l+1}\end{array}\right|\!\! \left|  \left. \begin{array}{c} 1\\ \vdots  \\1\end{array}\right\} \ l+1\right)=F_l\subset \mathbb{P}_{l+1}~.
\label{CICY} 
\end{equation} 
To be a transversal complete intersection, i.e., $\mathrm dP_1\wedge \mathrm dP_2\neq 0$ when $P_1=P_2=0$, we have to have  
{\footnotesize ${\rm det}\left(\begin{array}{cc} a^{(i)}& b^{(i)}\\ c^{(i)}& d^{(i)}\end{array}\right)\neq 0$} for all $i=1,\ldots,l+1$. Moreover, on every 
$\mathbb{P}^1_{(i)}$ there is a natural $\mathrm{SL}_i(2,\mathbb{C})$ action  which allows one to eliminate three of the four deformation 
parameters $a^{(i)},b^{(i)},c^{(i)}$ and $d^{(i)}$. With the choices 
\begin{align} 
\nonumber a^{(i)}&=-\frac{m_i^2}{p^2}=-z_i,  \quad d^{(i)}=x, \quad i=1,\ldots, l+1,\\
\label{eq:birationalmap} b^{(1)}&=\frac{x}{w_2^{(1)}},\quad c^{(1)}=\frac{1}{w_1^{(1)}},\\
\nonumber b^{(i)}&=c^{(i)}=0,\quad i=2,\ldots, l+1,  
\end{align}
we can construct a birational  map from the smooth geometry in eq.~\eqref{MCicy} to the singular hypersurface 
geometry in eq.~\eqref{eq:HSgeom}. Solving for $P_1=0$ one  gets $x=\sum_{i=1}^{l+1} \frac{m_i^2}{p^2} w_1^{(i)}$, 
while $P_2$ becomes $P_2=1-x\sum_{i=1}^{l+1} w_2^{(i)}$. Passing to toric coordinates 
$w_1^{(i)}=W_i$ and $w_2^{(i)}=1/W_i$, for $i=1,\ldots, l+1$, we arrive at 
\begin{equation}  
P_2=p^2-\left(\sum_{i=1}^{l+1} m_i^2\, W_i\right)\left(\sum_{i=1}^{l+1}\frac{1}{W_i}\right)\,  ,
\label{eq:singbananatoric}
\end{equation} 
which is eq.~\eqref{eq:HSgeom} written in toric coordinates. Having found a birational map from eq.~\eqref{MCicy} to  eq.~\eqref{eq:HSgeom}, it is also  
interesting to establish that  $J_{l,\underline 0}^{\Gamma_T}(\uz;0)$ comes out correctly in the representation in eq.~\eqref{MCicy}. Using eq.~\eqref{eq:Omcicy} 
and setting $x=1$, we write the latter as
\begin{align}
\label{eq:cicyT}
&\frac{1}{(2\pi i)^l}\,J_{l,\underline 0}^{\Gamma_T}(\uz;0)	=	\frac{1}{(2\pi i)^{l+1}}\oint_{\gamma_1} \oint_{\gamma_2} \frac{\mu}{P_1 P_2}	\\ \nonumber
&\hspace{5pt}	=	\frac{1}{(2\pi i)^{l+1}}\oint_{W_1=0}\!\!\!\! \cdots\oint_{W_{l+1}=0} \mu{\left(1-\sum_{i=1}^{l+1} z_i\, W_i\right)^{-1}\left(1-\sum_{i=1}^{l+1}
\frac{1}{W_i}\right)^{-1}} \frac{\rd W_1}{W_1}\wedge \ldots \wedge \frac{\rd W_{l+1}}{W_{l+1}}	\\ \nonumber
&\hspace{5pt}	=	\frac{1}{(2\pi i)^{l+1}}\oint_{W_1=0}\!\!\!\! \! \cdots\oint_{W_{l+1}=0}\!\!\! \mu\!\!\! \sum_{|m|=m \atop {|n|=n}}\!\!
\left(m\atop {m_1\ldots m_{l+1}}\right) 
\left(n \atop {n_1\ldots n_{l+1}}\right) 
\prod_{i=1}^{l+1} (z_i\, W_i)^{m_i} 
\left( \frac{1}{W_i}\right)^{n_i}	\\ \nonumber
&\hspace{5pt}= \sum_{n=0}^\infty \sum_{|k|=n}\left(n\atop {k_1\ldots k_{l+1}}\right)^2 \ \prod_{i=1}^{l+1} z_i^{k_i}\ . 
\end{align}
We see that we get exactly the same expression for the torus period as in eq.~\eqref{cutMHS} for the hypersurface. 
Here $\mu=1/2^{l+1}\mu^{(1)}_1\wedge \ldots \wedge \mu^{(l+1)}_1$, where $\mu^{(i)}_1$ is the standard measure
for $\mathbb{P}^1_i$ from eq.~\eqref{measure}. Note that under the identification $w_1^{(i)}=W_i$ 
and $w_2^{(i)}=1/W_i$ the measure becomes $\mu = \frac{\rd W_1}{W_1}\wedge \ldots \wedge \frac{\rd W_{l+1}}{W_{l+1}}$. The last identity is obtained by
performing all the residues.

There are at least four reasons why the smooth complete intersection representation in eq.~\eqref{MCicy}  is superior to the hypersurface 
representation in eq.~\eqref{eq:HSgeom}. First, after using the  $\mathrm{SL}_i(2,\mathbb{C})$ symmetries, it contains exactly the right number 
of deformations which are very easily identifiable with the physical parameters $\uz$. Second, and even more importantly, 
eq.~\eqref{MCicy} defines a natural closed submotive (see below) $H^\text{hor}_{l-1}(M^\text{CI}_{l-1})$ and  $H_\text{hor}^{l-1}(M^\text{CI}_{l-1},\mathbb{Z})$
of the total cohomology and homology of  $M^\text{CI}_{l-1}$.\footnote{Which similar to the cohomology and homology  of $M^\text{HS}_{l-1}$ is much 
bigger then the desired  physical  sub-motive.}  In particular $H_\text{hor}^{l-1}(M^\text{CI}_{l-1})$ is generated by taking derivatives 
of eq.~\eqref{eq:Omcicy} w.r.t. to the independent deformation parameters  in eq.~\eqref{CICY} modulo eq.~\eqref{partialintegrationcicy}.
Among this cohomology group we can identify the integrands of the master integrals in $D=2$ dimensions, 
while among the dual homology group   $H^\text{hor}_{l-1}(M^\text{CI}_{l-1})$ we find a basis of different maximal cut  contours. Third, it allows one to extract the Picard-Fuchs differential ideal with 
general  masses straightforwardly as a simpler GKZ system only in the physical masses, 
as  pioneered for these cases in ref.~\cite{Hosono:1994ax}. The last point  is that, according to ref.~\cite{Hosono:1994ax}, mirror symmetry maps 
the horizontal middle cohomology $H_\text{hor}^{l-1}(M^\text{CI}_{l-1})$ to the vertical cohomology  $H_\text{vert}^{k,k}(W^\text{CI}_{l-1})$, i.e., the one  
that is inherited from the ambient space, and the corresponding middle homology $H_\text{hor}^{l-1}(M^\text{CI}_{l-1})$ 
to  the even homology $H^\text{vert}_\text{even}(W^\text{CY}_{l-1})$, that is obtained by restricting the Chow group of the 
ambient space,  on the same manifold. If we restrict ourselves to these vertical-- and horizontal subspaces of homology and cohomology parametrized by the physical subspace of the  
moduli spaces and denote the restriction by the superscript $\text{res}$  then the restricted  complete intersection geometries for the 
banana graphs are self mirrors 
\beq
M_{l-1}^\text{CI, res}=W_{l-1}^\text{CI, res}\ .
\label{eq:simplemirror}
\eeq
The latter fact allowed some of us in ref.~\cite{Bonisch:2020qmm} to find the full banana integral in $D=2$ dimensions using the $\widehat \Gamma$-classes of the mirror geometry $W_{l-1}^\text{CI, res}$ 
in the large volume regions for the  full physical parameter space and to specify the exact branching behavior of the Feynman 
integral  at the conifold divisors. 

The  important lesson to draw from the two geometrical representations for the banana 
graphs is that there is no such thing as an {unique Calabi-Yau geometry} (or its extension) associated to a Feynman graph. 
To underline the point, we note that $M^\text{HS}_{l-1}$ and $M^\text{CY}_{l-1}$ 
have different  topologies, e.g., the Euler numbers for $M^\text{HS}_{l-1}$ are 
\beq
	\chi(M^\text{HS}_{3})	=	20\,,\qquad  \chi(M^\text{HS}_{4})=540\,,\qquad  \ldots \,, 
\eeq
while for $M^\text{CI}_{l-1}$ they are
\beq
	\chi(M^\text{CI}_{3})	=	-80\,,\qquad \chi(M^\text{CI}_{4})=720\,,\qquad \ldots\,.
\eeq
Therefore, one cannot find a smooth map relating these 
geometries. Rather, one must focus on finding the uniquely defined \emph{family of Calabi-Yau motives}, preferably  in 
the easiest geometrical  setting. 

One of the goals of this paper is to generalize the results of ref.~\cite{Bonisch:2020qmm} to include 
all higher-order terms in the dimensional regulator $\eps$. We find that also in this context that the motive given by eq.~\eqref{MCicy} 
is also more natural. In particular, the homology of the ambient spaces $F_l$ and specially $\mathbb P_{l+1}$ 
seem to play an important role section \ref{subsec:commentsnumbersolutions}. 
So again the motive defined by eq.~\eqref{eq:HSgeom} is less suited to understand that generalization.

\paragraph{Families of Calabi-Yau motives.}
\label{para:CYmotives}

A \emph{family of Calabi-Yau motives} of rank $r$ and weight $n$ can be characterised by its Picard-Fuchs differential ideal and 
an intersection form ${\bf \Sigma}$. The Picard-Fuchs differential ideal has to have the properties that its 
complete set of $r$  solutions $\uPi(\zs)$  fulfill the Griffiths transversality conditions in eq.~\eqref{eq:bryantgriffith} 
and form an irreducible integral representation of the global  monodromy group $\Gamma_{\mathcal{M}_n}$, which is a subgroup of 
${O}({\bf \Sigma},\mathbb{Z})$ defined in eq.~\eqref{eq:defOS}. Depending on wether ${\bf \Sigma}$ 
is even or odd, we speak of even or odd Calabi-Yau motives. These occur in the simplest geometrical setting 
from the middle cohomomology of  $n$-dimensional Calabi-Yau spaces $M_n$ of even or odd complex dimension $n$. 
For one-parameter families the Griffiths transversality conditions can be recasted as the condition in eq.~\eqref{eq:selfadjointness}. 
In  particular, for the case of the three-fold, ${\bf\Sigma}$ is canonical (see the discussion in section~\ref{sssec:bulk}), and  several hundreds of abstract 
families of Calabi-Yau motives  have been found using the 
additional assumption that the motive has a MUM degeneration, see section \ref{sssec:boundary}
in ref.~\cite{MR3822913}. 

Since for higher-dimensional families such abstract  studies of the possible motives have not yet 
been carried out, we describe below two strategies that allow to obtain constructively desired sub-motives of a Calabi-Yau mixed Hodge structure, that are relevant for the banana integrals. A {family of Calabi-Yau motives}  can be characterized as invariant piece or sub-motive of a 
Calabi-Yau mixed Hodge structure under a discrete symmetry or as the sub-slice  
of the complex structure moduli space that  is inherited after a singular transition from 
the singular configuration. 

Let us give first an example of the latter type. This is explained 
in ref.~\cite{MR2164402}, however, only for Calabi-Yau three-folds. The example of ref.~\cite{MR2164402} has direct bearing on realizations of 
the four-loop banana graph motive, and we expect it to generalize to all loops. 
The authors of ref.~\cite{MR2164402} consider the following transitions\footnote{We follow the notation of~\cite{MR2164402}. The subscript  $u$ 
denotes collectively the $16$ complex moduli of the $X^{16,26}_u$ family and subscript $a$ the five moduli of the singular family with $30$ nodes.}  
\begin{equation}
X^{16,26}_u \rightarrow X^\text{sing}_a \rightarrow \hat X^{5,45}\ .
\end{equation} 
Here $X^{16,26}_u$  is our deformed space $M^\text{HS}_{3}$ with $\chi(M^\text{HS}_{3})=20$. The superscripts are $h_{21}$ and $h_{11}$, respectively. $X^\text{sing}_a$ is the singular five-parameter 
space in the physical slice written in torus variables, and  $\hat  X^{5,45}$ is the small resolution of $X^\text{sing}_a$. 
The observation is that $X^\text{sing}_a$ has $30$ nodes (conifolds) in ten two 
planes $S_{ij}=\{W_i=W_j=0\}$, where $S^3$-spheres are shrinking. These conical singularities can be 
resolved small projectively by replacing the singular loci by $\mathbb{P}^1$ blow ups. Each blow 
up adds $\chi(\mathbb{P}^1)=2$ to the Euler characteristic, see  \cite{MR930270} for 
an early application of this technique to construct  Calabi-Yau spaces with positive Euler number. The five-dimensional moduli space of 
$X^\text{sing}$ gets identified with the five-dimensional moduli space of $X^{5,45}$.  It is conceivable that 
$X^{5,45}$ is the mirror of $M_3^\text{CI}$, but in any case in view of the map in eq.~\eqref{eq:birationalmap}, 
it is clear that the rank ten motive defined by the middle cohomology of  $X^{5,45}$ is the same 
as the one of the horizontal cohomology of the smooth manifold in eq.~\eqref{CICY}. Hence, we can 
see the construction of our geometrical realization in eq.~\eqref{MCicy} for the massive banana integral 
as a variant of the transition method, which generalizes in the simplest possible way to all 
dimensions and has  the  additional advantage that the simple mirror identification in eq.~\eqref{eq:simplemirror}
 is available. 

The second  common strategy to characterize a sub-family of motives  within the Hodge structure of  a 
Calabi-Yau manifold $M_n$  is to find a discrete  symmetry group $G$ that acts\footnote{We note in passing 
that Calabi-Yau spaces have no continuous symmetries with this property. The latter would 
correspond to non-trivial holomorphic vector fields which are not present as $h_{1,0}(M_n)=0$.} on 
special configurations of $M_n$ and leaves the holomorphic $(n,0)$-form invariant. One can then focus 
on invariant subspaces of the Hodge structure defined on $M_n/G$. The action on the cohomology can 
be determined from an explicit realization of the latter, e.g., the one in eq.~\eqref{Griffith residue form2}. 
For example, in many constructions of  mirror manifolds one considers the invariant part of the 
cohomology under a maximal phase symmetry group $G$. To obtain the mirror, one 
resolves the quotient singularities of $M_n/G$,  but to construct the differential equation of the 
sub-motive it is only necessary that the invariant family is either smooth or that the generic singularities of 
the family  can be resolved.  Finding all symmetries $G$ is not an easy task, as their detection  
depends very much on how we represent $M_n$.  Starting,  e.g., from eq.~\eqref{eq:HSgeom}, it is not easy to find  
a discrete  symmetry $G$  that restricts the motive of  $M^\text{HS}_{l-1}$ to the motive of  the 
physical sub-slice. The physical sub-slice,  on the other hand,  exhibits further phase symmetries besides the obvious $S_{l+1}$ permutation 
symmetry. Together they can be used to restrict to the equal-mass sub-slice further, 
see the end of section~\ref{para:4lequalmass} for a discussion of that point. 
While the symmetry action of $G$ on $M_n$ that induces a splitting of the Hodge structure into irreducible 
representations of the Galois group might be hard to find, an additional tool to detect it is
to consider  splittings of the Hasse-Weil zeta function using a $p$-adic analysis, as it was 
employed in simple Calabi-Yau cases~\cite{candelas2019one,adek}. 
This strategy might be adaptable  to families of motives.     

It is important to note that by definition the properties of Calabi-Yau Hodge structures that we discuss in 
sections  \ref{sssect:Griffithstransversality}, \ref{sssec:boundary} and \ref{subsection: MirrorSymmetry} 
apply with necessary qualifiers to the families of Calabi-Yau motives. We will comment more on the question of  the 
universal applicability of the latter to Feynman integrals in section \ref{subs:integralsmotives}.

\subsubsection{Quadratic  relations from Griffiths transversality}
\label{sssect:Griffithstransversality}
In this section we explore an important consequence of Calabi-Yau geometries for the maximal cuts of Feynman integrals. More precisely, we will show that the Calabi-Yau geometry leads to quadratic relations among the maximal cuts (in integer dimensions). 
We limit the exposition here to the mathematical background, and we will describe the resulting relations explicitly in the context of the equal-mass banana integrals in section~\ref{sec:quad_rel_max_cut}.

Our starting point is the Griffiths transversality in eq.~\eqref{eq:transversality}. Combing eq.~\eqref{eq:transversality} with the first polarization condition in eq.~\eqref{eq:type} and considerations of type, 
one gets as a generalization of the observations of Bryant and 
Griffiths~\cite{MR717607}  for Calabi-Yau manifolds in any dimension $n$:
\begin{equation} 
\uPi(\zs)^T\,{\bf \Sigma}\, \partial^{\underline k}_{\underline z}\uPi(\zs) = \int_{M_n} \Omega \wedge  \partial^{\underline k}_{\underline z}  \Omega
= \left\{\begin{array}{ll} 0 & \ \ {\rm for}\  0 \le r <n \\ 
         C_{\underline k} ({\underline z})& \ \ {\rm for}\  |k|=n
        \end{array}\right. \ , 
\label{eq:bryantgriffith} 
\end{equation}
where the $C_{\underline k} ({\underline z})$ are rational functions in the complex structure parameters.  For the first equality
 in eq.~\eqref{eq:bryantgriffith}, we used eq.~\eqref{eq:perioddef} and the properties of the integer  basis described earlier. The second equality follows very generally from eqs.~\eqref{eq:transversality} and~\eqref{eq:type}. We point out 
that even in an arbitrary local basis $\tilde \uPi(\zs)$  corresponding to an (implicit)  choice of a basis of cycles $\tilde \Gamma^i\in H_n(M_n,\mathbb{C})$ (obtained, for example, as independent local solutions of the Picard-Fuchs differential ideal),
one can find a ${\bf \tilde \Sigma}$ and write down the corresponding
relations $\tilde{\uPi}(\zs)^T\,{\bf \tilde\Sigma}\, \partial^{\underline k}_{\underline z}\tilde\uPi(\zs)$ among the solutions very explicitly.    

The quadratic relations in eq.~\eqref{eq:bryantgriffith} have important implications for Feynman integrals: Since the vector of periods $\uPi(\zs)$ describes the maximal cuts, the relations in eq.~\eqref{eq:bryantgriffith} can equally be interpreted as a set of quadratic relations among the maximal cuts! Note that these relations are not obvious from a purely physical view, e.g., the momentum-space representation of the Feynman integrals. We will describe these relations among maximal cuts explicitly for the equal-mass banana integrals in section~\ref{sec:quad_rel_max_cut}. Here we only mention that for the equal-mass banana graphs  one finds $l(l+1)/2$ quadratic relations for $l-1=n$ even and $l(l-1)/2$ for  $n$ odd. 
The reason for this difference is that in the latter case the intersection form ${\bf \Sigma}$ is antisymmetric, so symmetric quadratic relations are  trivially fulfilled.

\paragraph{The  $n$-point (Yukawa) couplings and self-adjoint operators.}
\label{para:Yukawacoupling}

In order to understand the quadratic relations in eq.~\eqref{eq:bryantgriffith} and to write them down explicitly, we need to know the functions
$C_{\underline k} ({\underline z})$, sometimes referred to as the Yukawa $n$-point couplings. They can be obtained from the rational coefficients in front of the derivatives 
in the Picard-Fuchs differential operators, if and only if the latter generate the Picard-Fuchs ideal completely, see ref.~\cite{MR3965409} for details.  

Let us illustrate this on the example for the Legendre family $\mathcal E_\text{Leg}$ of elliptic curves. Taking the derivative of $C_1(z) = \uPi(z)^{T}\,{\bf \Sigma}\,\partial_z\uPi(z)$, and using the Picard-Fuchs equation in eq.~\eqref{eq:Legendre_PF} to write $\partial_z^2\uPi(z)$ as a linear combination of  $\uPi(z)$ and $\partial_z\uPi(z)$,
we obtain  the  differential equation $\frac{\partial_z C_1(z)}{C_1(z)}=\alpha'(z)$, with $\alpha(z)$ given in eq.~\eqref{eq:Legendre_shorthands}. Hence $C_1(z)=c \rho(z)$, where $c$ is an integration constant 
which  can be fixed in the integral symplectic basis of the periods to be $c=1$. For the Legendre curve we then find $C_1^\text{Leg}(z)=\frac{1}{z (1-z)}$.  

More generally, if the  Picard-Fuchs differential ideal is generated by a single differential  operator (as it is the case 
for one-parameter families, see appendix~\ref{app:pid_one_var}) with normalization such that  
\begin{equation} 
{\cal L}^{(n+1)}=\partial_z^{n+1}+ \sum_{i=0}^{n} a_i(z) \,\partial_z^i\,,
\end{equation}
then the Yukawa coupling fulfills the differential equation
\beq\label{eq:Yukawa_equation}
\frac{\partial_z C_n(z)}{C_n(z)}=\frac{2}{n+1}a_n(z)\,. 
\eeq
One can define the \emph{adjoint 
differential  operator} \cite{MR1579749}
\begin{equation} 
{\cal L}^{* (n+1)}=\sum_{i=0}^{n+1}\left(-\partial_z \right)^i a_i(z) \,.
\end{equation} 
An operator is called \emph{essentially self-adjoint} if 
\begin{equation} 
{\cal L}^{* (n+1)}A(z)=(-1)^{n+1}A(z){\cal L}^{(n+1)}\ , 
\label{eq:selfadjointness}
\end{equation}       
where $A(z)$ satisfies the differential relation $\frac{\partial_zA(z)}{A(z)}=\frac{2}{n+1} a_n(z)$. Note that $A(z)$ is up to a multiplicative constant given by the Yukawa coupling $C_n(z)$. It was noticed  in the search for Calabi-Yau operators \cite{MR3822913} that the  self-adjointness
of an abstractly constructed  differential operator with regular singularities implies that the 
solutions admit an even or odd intersection form for $n$ even or odd, respectively, if 
$A(z)$ is an algebraic function.  This gives an easy criterium to decide whether one-parameter specializations of Picard-Fuchs operators  
can come from a Calabi-Yau motive: This can only be the case if eqs.~\eqref{eq:selfadjointness} and \eqref{eq:bryantgriffith} 
are fulfilled and in addition the global monodromy is in ${O}({\bf \Sigma},\mathbb{Z})$.  
In other words, imagine that the maximal cut of a Feynman integral depends on a single dimensionless variable (if there are more 
kinematic variables, we may consider a one-parameter slice in the rescaled kinematic space), and that its maximal cut is annihilated 
by some Picard-Fuchs operator ${\cal L}^{(n+1)}$. The previous discussion gives an easy criterion to determine from the 
Picard-Fuchs operator if the Feynman integral is associated with a Calabi-Yau geometry. One can check that this criterion 
is satisfied for all the Picard-Fuchs operators for the maximal cuts  of the banana integrals in $D=2$ dimensions. We will see 
towards the end of this paper that the Picard-Fuchs operators in $D=2-2\eps$ dimensions cease to be self-adjoint.

\subsubsection{Monodromy and limiting mixed Hodge structure on the boundary of  ${\cal M}_\text{cs}$}
\label{sssec:boundary}

In this section we explain the structures related to the boundaries of the moduli space that 
are related to the special monodromies of the periods when we analytically continue them around  
the critical divisors which form the boundaries of ${\cal M}_\text{cs}$.

\paragraph{A normal crossing model for the boundaries of ${\cal M}_\text{cs}$.}
\label{para:Normalcrossing}

By going in a loop $\gamma_{\Delta_k}$ from a base-point $\zs_0$ around a divisor given by $\Delta_k({\underline z})=0$,
induces a monodromy on the period 
integrals,\footnote{See section \ref{sssec:bulk} for a detailed introduction to period integrals on Calabi-Yau varieties.} 
and hence on the maximal cut Feynman integral identified with the latter. These monodromies are 
very characteristic for the singularity that the  fibre $M_{\{\Delta_k=0\}}$  over $\{\Delta_k(\zs)=0\}$ 
acquires. The branching behavior of the periods at the crictical loci is crucial to 
understand the analytic structure of the Feynman integral in all regions of its physical parameters.  
Mirror symmetry suggests that Calabi-Yau $n$-folds have a maximal degenerate singular point in their complex 
moduli space, called point of maximal unipotent monodromy (MUM-point, see section~\ref{subsec:PF}). This point was identified with 
the large momentum regime of the banana integrals  in ref.~\cite{Bonisch:2020qmm} and  used with the monodromy at other 
singularities to clarify the analytic structure of these integrals in $D=2$ dimensions
completely to all loop orders.  

The boundary of the moduli space ${\cal M}_\text{cs}$ refers to the critical divisors at which the 
Calabi-Yau fibre becomes singular.  As we mentioned in section~\ref{subsec:CYmanifolds}, we 
 assume to be able  to compactify and to resolve the moduli space to achieve a situation where
 all  divisors are normal crossing in the compactified moduli space $\overline{ {\cal M}}_\text{cs}$. 
 We refer to ${\cal M}_\text{cs}$ as ${\cal M}_\text{cs}={\overline {\cal M}_\text{cs}}\setminus D$, where $D$ is a divisor
 with normal crossings, $D=\bigcup_k D_k$. 

 Let us explain how we can find the boundary components  $\Delta_k(\zs)=0$. The first method to find them is to identify 
 the sub-loci of ${\cal M}_\text{cs}$ over which the fibre of  the  family becomes singular. For example, for a Calabi-Yau manifold defined 
 by $P_1=\ldots = P_s=0$, we   have to find values in ${\overline {\cal M}}_\text{cs}$ such that  $P_1=\ldots =P_s=0$ and 
 ${\rm d} P_1\wedge \ldots \wedge {\rm d} P_s =0$ admits a solution. Specifically, for the Legendre curve we compactify $\overline{ {\cal M}}_\text{cs}=\mathbb{P}^1$, 
 and by determining the singular (in this case nodal) fibres we find ${\cal M}_\text{cs}=\mathbb{P}^1\setminus \{z=0,1,\infty\}$. For the equal-mass banana 
 graph  in eq.~\eqref{eq:Pbanana} or~\eqref{MCicy} (after setting all masses equal), we get ${\cal M}_\text{cs}=\mathbb{P}^1\setminus \{z=0,1/9,1,\infty\}$. We  can also 
 determine the critical loci from the  Picard-Fuchs differential ideal  $J$ that is generated by differential operators 
 of order  ord$_k$, ${\cal L}^{{\rm ord}_k}_k\left({\underline z},\partial_{\uz}\right) \in 
\mathbb{C}\left[z_1,\ldots,z_m,\partial_{z_1},\ldots, \partial_{z_m}\right]$, $k=1,\ldots,|J|$.
We can replace the $ \partial_{z_i}\mapsto \xi_i$, $i=1,\ldots, m={\rm dim}(M_{cs})$ by formal variables to get
$|J|$ elements in the polynomial ring  in $\underline z$ and $\underline \xi$. Then we consider the smallest differential ideal  that 
characterizes the periods and  restrict to the leading pieces, i.e., to the elements 
$S_k({\underline z},{\underline \xi})\coloneqq\left.{\cal L}^{{\rm ord}_k}_k({\underline z},{\underline \xi})\right|_{\text{deg}(\xi)=\text{ord}_k}$, 
which are homogeneous of order ord$_k$ in the variables $\underline \xi$. One refers to the $S_k$ as the \emph{symbol} of the differential operator ${\cal L}^{{\rm ord}_k}_k$. 
The critical  loci $\tilde \Delta_j({\underline z})=0$  are now given by the resultant of 
the $S_k({\underline z},{\underline \xi})=0$ in the $\uz$ parameters, i.e., resultant$(\{S_k({\underline z},{\underline \xi})=0,\ \forall k\},{\uz})$. 
The resultant characterizes all divisors  $\tilde \Delta_j({\underline z})=0$ for which the system $\{S_k({\underline z},{\underline \xi})=0,\ \forall k\}$ 
has non-trivial solutions, and it contains, in particular the critical divisors of the family. For the Picard-Fuchs ideal generated by a single ordinary 
differential operator, e.g., as in eqs.~\eqref{eq:difflegendre} or \eqref{eq:diffbanana}, 
this amounts to find all zeroes of the coefficients of the highest derivatives for $z\in \mathbb{P}^1$. This {second method} to find the 
boundary components is in general superior as it detects also the apparent singularities, which are not present in our examples as  
can be checked since both methods lead to the same result.   

Let us note that for moduli spaces $\overline {\cal M}_\text{cs}$ of dimension greater than one, non-generic intersections, e.g., tangencies of order $m$ between the $ \{ \tilde \Delta_k(z)=0\}$ or singularities of the $ \{\tilde \Delta_k(z)=0\}$, generally occur. In a procedure that can involve several steps of blow ups, they can be resolved to achieve 
a geometry  of $\overline {\cal M}_\text{cs}$  with only normal crossing divisors.  Adding all the exceptional divisors of the blow ups, 
the critical locus $D\subset \overline {\cal M}_\text{cs}$ is described as the  set of irreducible normal crossing  
divisors $\{D_k\}$, $k=1,\ldots, \#_D$. Normal crossing means that locally we can describe the intersections of components in   
$D$ as  $w_1=0,\cdots, w_p=0$, $p\leq r$  in local coordinates $w_i$, $i=1,\ldots,r={\rm dim}({\cal M}_\text{cs})$.  In this case we say that the 
family  ${\pi}:{{\cal M}_n}\rightarrow  {{\cal M}_\text{cs}} $ can be extended to a family ${\overline \pi}:{\overline {\cal M}_n}\rightarrow  {\overline {\cal M}_\text{cs}}$.
Concrete examples for the blow up  procedure in Calabi-Yau moduli spaces can be found in refs.~\cite{Candelas:1993dm,Candelas:1994hw}.

\paragraph{Local and global monodromies.}
\label{para:Localandglobalmonodromies}
 
We now analyze the monodromies that the vectors  of periods undergo, when we take $\underline z$ around 
a loop $\gamma_{\Delta_i}$ around  the critical  divisor $D$ given by $\Delta_i(z)=0$.  
We illustrate this first on the example of a one-parameter differential operator ${\cal L}$ of order $l$. 
In that case, the compactification of ${\cal M}_\text{cs}$ 
is $\mathbb{P}^1$, and  the divisors are just isolated points, $\Delta_i(z)=z-z_i$, $i=1,\ldots, \# p_{\mathrm{crit}}$. 
We can find a basis for the solution space at $z_0$ using the Frobenius method (see section~\ref{subsec:PF}). 

Let us discuss in some detail the case of the Legendre family of elliptic curves. We have determined its Picard-Fuchs operator $\cL_{\mathrm{Leg}}$ in eq.~\eqref{eq:difflegendre}. We can apply the Frobenius method and solve the indicial equation for each singular point $z_0\in\{0,1,\infty\}$. 
The local exponents at all critical points are summarized in the  Riemann 
$\mathcal P$-symbol:
\begin{equation} 
\mathcal P_\text{Leg}	\left\{	\begin{matrix}	%
							0 	& 1	& \infty		\\ \hline
							0	& 0	& \frac12		\\
							0	& 0	& \frac12	&	
						\end{matrix} \right\}~.
\label{riemannsymbolgeneral}
\end{equation} 
Since the local exponents at each singular point are equal to $\alpha$, say, at each singular point there is a power series solution $\omega(\Delta)=\Delta^\alpha  +{\cal O}(\Delta^{\alpha+1})$ (with $\Delta = z-z_0$) and 
a logarithmic solution  $\hat \omega(\Delta)=\frac{m}{2 \pi i}\omega(\Delta) \log(\Delta) +{\cal O}(\Delta^\alpha)$.
In particular, if 
$\alpha\in \mathbb{Z}$, then the period vector transforms for a positively oriented loop $\gamma_{\Delta_i}$  around $\Delta_i(z)=0$ with a $\textbf{T}_{\gamma_{\Delta_i}}$ monodromy matrix    
\begin{equation} 
\left( \begin{array}{c} \hat \omega \\ \omega \end{array} \right) \mapsto 
\left(\begin{array}{cc} 1&m\\ 0&1 \end{array} \right) \left(\begin{array}{c} \hat \omega\\ \omega \end{array} \right)\eqqcolon \textbf{T}_{\gamma_{\Delta_i}} \left(\begin{array}{c} \hat \omega\\ \omega \end{array} \right)\ . 
\label{eq:elllipticcurvemonodromy} 
\end{equation}
In particular, let  $\omega (\Delta)= \int_{S^1_{\nu}} \rd x/y(z)$ and the dual period is over the dual cycle $\hat \omega(\Delta) =\int_{S^1_{\hat \nu}} \rd x/y(z)$. 
Without loss of generality we can assume that  $S^1_\nu=S^1_a$ and $S^1_{\hat \nu}=m S^1_b$ in an integral symplectic basis $(a,b)$ 
of $H_1(E,\mathbb{Z})$. Then $m$ has to be an integer. Generally, a monodromy  matrix ${\bf T}$ in an integral 
basis has to be integral and has to respect the intersection form, $\textbf{T}^T\,{\bf{\Sigma}}\, \textbf{T}={\bf{\Sigma}}$. 
We denote the group of all  integer matrices that respect the intersection form ${\bf{\Sigma}}$ on the middle cohomology of rank $b_n(M)=r$ (or the 
rank of the Calabi-Yau motive $r$) by
\beq
\mathrm{O}({\bf\Sigma}, \mathbb{Z})=\bigr\{  \textbf{T}\in {\mathrm{GL}}(r,\mathbb{Z})\,\,\bigr |\, \, \textbf{T}^T{\bf{\Sigma}} \textbf{T}={\bf{\Sigma}}\bigl\}\,.
\label{eq:defOS}
\eeq
The subgroup  of  $\mathrm{O}({\bf\Sigma}, \mathbb{Z})$ that is generated by the actual monodromies of the family is denoted by $\Gamma_{{\cal M}_n}$.
For example, in odd dimensions $n$, the intersection pairing ${\bf \Sigma}$ is the standard symplectic pairing, and so $\Gamma_{{\cal M}_n}$ has to be a subgroup of the integral symplectic 
matrices $\mathrm{Sp}(b_n,\mathbb{Z})$. In particular, for elliptic curves ($n=1$), it is a subgroup of $\mathrm{SL}(2,\mathbb{Z})$. Note that  the K\"ahler potential  in eq.~\eqref{eq:Kaehlerpotential} is single-valued under all monodromies.  

A famous theorem of Landman~\cite{MR344248} states that all possible monodromy matrices on an algebraic  $n$-fold have to obey the relation 
\begin{equation} 
(\textbf{T}^k-{\mathbbm{1}})^{n+1}=0\ .
\label{eq:Landman}
\end{equation} 
Here  $k\in \mathbb{N}_0$, implying that the indicial $\alpha$ has to be a rational number. A monodromy matrix $\textbf{T}$ can be unipotent of lower order 
$m<n$, i.e.,  $(\textbf{T}^k-{\mathbbm{1}})^{m+1}=0$. It is clear that $m$  is the size of the biggest Jordan block in $\textbf{T}$.
The  maximal $n$ that can appear is $n=\dim(M)$. It is not too hard to see that the unipotency of 
order $m\leq n$ implies that a period on an $n$-fold cannot degenerate 
worse than with a logarithmic singularity of type $\log(\Delta)^n$. This has an important consequence for Feynman integrals. Assume that we have a maximal cut of a Feynman integral in integer dimensions that degenerates in a dimensionless 
physical parameter $\Delta$ (or, more generally, some polynomial combination thereof) as  $\log(\Delta)^m$.  Then it follows from Landman's theorem that the geometry associated to this integral cannot be an algebraic 
manifold of dimension less than $m$, or a Calabi-Yau motive of weight less than $m$! 

In  the example of the Legendre family one sees  that for $z_0\in\{0,1,\infty\}$ the curve is singular, i.e., $P=0$ and $\mathrm dP=0$ have at least one common solution. 
This happens  at a point on the curve, say $(x,y)=(x_0,y_0)$ (we assume 
that the singularity is not at $w=0$ and use the corresponding  local patch $w=1$). Using local coordinates 
$(x,y)=(x_0+\tilde \epsilon_x,y_0+\tilde \epsilon_y)$ the expansion around this point is given after a linear change 
in the deformation parameters up to  quadratic order  by $P=\epsilon^2_x+\epsilon^2_y=0$.
Allowing in addition small perturbations around the critical point $z=z_0+\mu$ in ${\cal M}_\text{cs}$, the local  singularity becomes  
\beq 
\epsilon^2_x+\epsilon^2_y=\mu^2\ .
\label{eq:S1}
\eeq 
This describes a node, where a $S^1$-cycle  $\nu$  shrinks with $\mu\rightarrow 0$. The $S^1$ vanishing 
cycle $\nu$  can be literally seen by taking the  real slice of the equation \eqref{eq:S1}, which gives the $S^1$ or radius $r^2={\rm Re}(\mu)^2$. 
This can be generalized to higher dimension and the period integral over the $S^n$ can be 
performed perturbatively, see eq. (3.3) in ref.~\cite{MR1115626}. The corresponding critical locus in ${\cal M}_\text{cs}$ is hence a {conifold}, while the singularity in the fibre is a node. The corresponding monodromy follows purely topologically  
from the Picard-Lefshetz formula 
\begin{equation}
W(\Gamma)=\Gamma+(-1)^{(n+2)(n+1)/2} (\Gamma \cap \nu) \nu \,,
\label{eq:PicardLefshetz}
\end{equation}
in any dimension $n$, see ref.~\cite{MR592569} for a clarification regarding  
the signs  in higher dimensions. The formula says that the conifold monodromy  
action $W$ on  any  cycle $\Gamma\in H_n(M,\mathbb{Z})$, which can be 
identified (up to finite multicovering issues in the choice of parametrization $z$ of ${\cal M}_\text{cs}$)
with the monodromy on the periods, depends only on its intersection with the 
vanishing cycle. Together with the self intersection of $n$ spheres  in projective $n$-folds~\cite{MR592569},  
\begin{equation}
S^n \cap S^n=\left\{\begin{array}{ll}
0 &\qquad \quad n\ \ \text{for odd}\,, \\
2(-1)^\frac{n}{2} & \quad \qquad n \ \text{for even}\,,\ 
 \end{array} \right.  
 \label{eq:Snintersection}
\end{equation}
eqs.~\eqref{eq:PicardLefshetz} and~\eqref{eq:Snintersection} give eq.~\eqref{eq:elllipticcurvemonodromy} with 
$S^1_{\hat \nu}\cap S^1_{ \nu}=-m$.  They also imply something completely general for 
the degenerations of Feynman integrals. If the maximal cut integral corresponds to a period of a  
$n$-dimensional  algebraic variety, then the most generic singularity will be a square root cut if $n$ 
is even, and a logarithmic cut if $n$ is odd (cf. the fact that $(l=n+1)$-loop banana integrals have square roots cuts when $l$ is odd and only logarithmic singularities when $l$ is even).  This follows  simply  because  
eqs. \eqref{eq:PicardLefshetz} and \eqref{eq:Snintersection} imply in odd dimensions an infinite-order operation, a so-called 
\emph{symplectic reflection}, and in even dimensions a standard \emph{Euclidean $\mathbb{Z}_2$-reflection}.  

A simple application of this structure is that the logarithmic/square root cut behavior of the solutions to the Picard-Fuchs differential 
at the conifold  detects uniquely the period over the geometric vanishing cycle. As an actual cycle 
the latter might contain information about to the Feynman integral. Let us illustrate this for the banana integrals. 
In section~\ref{sec:deqs} we have argued that as a consequence of the optical theorem, the imaginary part of the banana integral is proportional to a specific maximal cut for $0<z<1/(l+1)^2$, cf.~eq.~\eqref{eq:optical_thm}. The corresponding cycle in eq.~\eqref{eq:optical_thm} must be such that this maximal cut vanishes at the threshold $z=1/(l+1)^2$, which is a conifold divisor. There is a unique period that vanishes at this conifold divisor, and so this period corresponds to the maximal cut that describes the imaginary part of the banana integral above threshold~\cite{Bonisch:2020qmm}. Note that the cycle that describes the imaginary part is different from the cycles $T^l$ or $\Gamma_T$ in eq.~\eqref{eq:max_cut_feyn_par}, and we will comment further on this at the end of the this section.

Another  beautiful example for the even dimensional  situation ($n=2$) are the  monodromies around the $r$ divisors meeting with normal crossing at a 
codim $r$  locus in ${\cal  M}_\text{cs}(\text{K3})$ over which the K3 fibre acquires a rank $r$ ADE singularity  (ADE as a subgroup in $E_8$).
In this case the monodromies in eq.~\eqref{eq:PicardLefshetz} are literally the  Weyl-reflections generating the Weyl group of the corresponding ADE Lie algebra, 
because  the corresponding vanishing $S^2$-spheres intersect according to the negative of the Cartan matrix of the 
Lie algebra, as a direct consequence of the intersection form ${\bf \Sigma}_{\text{K3}}$. 

The analysis of the solutions from the Picard-Fuchs differential ideal or eq.~\eqref{eq:PicardLefshetz}
yields the local monodromies.  To determine the global monodromy group $\Gamma_{{\cal M}_n}$ in 
an integral symplectic basis requires global knowledge of the periods. For the elliptic curve case this  can 
be obtained by analyzing the behavior of the explicit elliptic integrals near the critical points.  
Let the period vector of the Legendre curve be  $(\Pi_b,\Pi_a)=(\int_b \Omega,\int_a \Omega)$, and let $\Pi_b$ be the logarithmic period as in eq.~\eqref{eq:elllipticcurvemonodromy}. Then,  up 
to SL$(2,\mathbb{Z})$ conjugation, the the monodromy group $\Gamma_{{\cal E}_{\text{Leg}}}$ of the Legendre family is generated by the following matrices (we use the notation ${\bf{T}}_{\gamma_{z-a}}\eqqcolon {\bf{T}}_a$, with $a\in\{0,1,\infty\}$):
\begin{equation}
{\textbf{T}}_0=\left(\begin{array}{cc} 1& 2 \\ 0& 1\end{array}\right)  \qquad\text{and} \qquad    {\textbf{T}}_1=\left(\begin{array}{rr} 1& 0 \\ -2& 1\end{array}\right)\,.
\label{eq:monodromy} 
\end{equation}
One can check that these matrices generate the congruence subgroup $\Gamma(2)$ of index 6 in SL$(2,\mathbb{Z})$, and so the monodromy group of the Legendre family is $\Gamma_{{\cal E}_{\text{Leg}}}=\Gamma(2)$. One can also check that the matrices in eq.~\eqref{eq:monodromy} satisfy Landman's theorem in eq.~\eqref{eq:Landman} with $n=k=1$. 
Due to the obvious relation by successively going around all the loops  $\gamma_{\Delta_i}$ for $i=1,2,3$ in $\mathbb{P}^1$ one has 
${\textbf{T}}_0{\textbf{T}}_{1}{\textbf{T}}_{\infty}=\mathbbm{1}$, and therefore {\footnotesize{${\textbf{T}}_{\infty}=\left(\begin{array}{rr} 1& -2 \\ 2& -3\end{array}\right)$}}. We can 
conjugate the basis by $\Pi_b\rightarrow \Pi_b$, $\Pi_a\rightarrow \Pi_a+\Pi_b$ to get 
{\footnotesize{${\textbf{T}}^c_{\infty}=\left(\begin{array}{rr} -1& 2 \\ 0& -1\end{array}\right)$}.} Comparing with eq.~\eqref{eq:Landman},
we see that ${\textbf{T}}^c_{\infty}$ (and thus also ${\textbf{T}}_{\infty}$) satisfy Landman's theorem with $k=2$ and $n=1$.  For the differential 
equation associated to the banana graph one finds the corresponding monodromy group to be $\Gamma_{{\cal E}_{\text{ban}_2}} =\Gamma_1(6)$, cf.~refs.~\cite{Bloch:2013tra,Adams:2017ejb,Frellesvig:2021vdl}. Finding  the integral basis  and the monodromy group  $\Gamma_{{\cal M}_n}$ for families of higher-dimensional Calabi-Yau manifolds with higher-dimensional moduli space can become  
a formidable task.  We comment on some strategies to do this at the very end of this section.  

It follows generally from eq.~\eqref{eq:Landman}  that  ${\textbf{T}}$ can always be 
factored  as ${\textbf{T}}={\textbf{T}}^{(s)} {\textbf{T}}^{(u)}$, where ${\textbf{T}}^{(s)}$  is semi-simple and of finite order and ${\textbf{T}}^{(u)}$ is unipotent, i.e., $({\textbf{T}}^{(u)}-\mathbbm1)^{n+1}=0$. 
For example, all singularities of the fibres in the  Legendre family $\mathcal E_\text{Leg}$ are nodes, and a homologically-different  cycle $S^1_\Gamma$,  
with $\Gamma$ primitive in $H_1(E,\mathbb{Z})$, vanishes at each conifold point. The square root 
cut at $z=\infty$ ($k=2$), as well as the shifts by two at the other points, are due the 
global choice of the parameter $z$.  Locally, one can get rid of the semi-simple piece by choosing different
local variables, e.g., for the Legendre family at $z=\infty$, one can choose $v=\sqrt{w}$ instead of $w=1/z$. Only for elliptic curves the conifold points are also MUM-points.

\paragraph{The limiting mixed Hodge structure.}
\label{para:LimitingMixedHodge}

The general situation of more involved singularities in families with higher-dimensional fibers is 
described by the limiting mixed Hodge structure.  The first statement of Deligne~\cite{alma991033235483005251}
is that the bundle ${\cal F}^0$ on ${\cal M}_\text{cs}$ has a canonical extension  ${\overline {{\cal F}^0}}$ over $\overline{{\cal M}}_\text{cs}$. 
As we have learned from the theorem of Landman applied to the monodromy matrices, the forms the singularities of 
${\overline {{\cal F}^0}}$ are only logarithmic. This is referred also as regular singularities (see section~\ref{subsec:PF}), and allows one to define an extension of the Gauss-Manin connection to
\begin{equation}
{\overline \nabla}:   {\overline {{\cal F}^0}} \rightarrow   {\overline {{\cal F}^0}}\otimes \Omega_{\overline{\cal M}_\text{cs}}^1(\log(D))~ .
\label{eq:GaussManinextend} 
\end{equation}  
Here $\Omega_{\overline {\cal M}_\text{cs}}^1(\log(D))$ are meromorphic one-forms on $\overline{\cal M}_\text{cs}$ which can have the indicated 
logarithmic coefficients. In local coordinates where the divisor $D$ is defined by  $z_1=\cdots= z_p=0$, 
$\Omega_{\overline{\cal M}_\text{cs}}^1(\log(D))$ is generated by $\mathrm dz_1/z_1,\ldots, \mathrm dz_p/z_p,$ $ z_{p+1},\ldots, z_r$. This 
means that in the first-order form of the Picard-Fuchs equation, the entries in the matrices $\mathbf{A}(z)$ and $\mathbf{A}^{\theta}(z)$ in eq.~\eqref{eq:firstorderform_A} for families of any dimension of fibre and base can have only first-order poles at the singular loci! 
Locally, we can model ${\cal M}_\text{cs}$ as products of punctured discs $(\frak{D}^*)^r$ and  $\overline{{\cal M}}_\text{cs}$ as products of  full discs $(\frak{D})^r$. We assume to have changed coordinates such that we got rid of the semi-simple piece, and that going clockwise around  a loop in the $k^{\textrm{th}}$ disc we generate the 
unipotent piece ${\textbf{T}}_k^{(u)}$ of the monodromy. We define
\begin{equation}
{\textbf{N}}_k=-\log({\textbf{T}}_k^{(u)})=-\log(1+[{\textbf{T}}^{(u)}_k-1])=\sum_{l=1}^{\text{max}\bigl|J_{{\textbf{T}}^{(u)}_k}\bigr|} (-1)^{l} [{\textbf{T}}^{(u)}_k-1]^l/l\ .
\label{eq:defNk}
\end{equation}       
It is obvious that the sum is bounded by the maximal size of a Jordan block in ${\textbf{T}}^{(u)}_k$. Now a section $s$
of ${\cal F}^0$ defined on $(\frak{D}^*)^r$  transforms like $s\mapsto {\textbf{T}}_k^{(u)} s$, but one can construct a 
monondromy invariant section  $\overline{s}$, i.e.,  one that is single-valued on $(\frak{D}^*)^r$, by
\begin{equation}
\overline{s}=\exp\left(-\frac{1}{2 \pi i}\sum_{k=1}^r N_k \log(z_k)\right) s\eqqcolon{\cal O}({\underline N})s\,,
\end{equation} 
and extend it canonically over $\frak{D}^r$. This defines a natural  extension  of 
${\overline {{\cal F}^0}}$ over  $\frak{D}^r$. The so-called nilpotent  orbit theorem 
of W. Schmid~\cite{MR382272} guarantees further that the ${\cal F}^p$ extend in a 
canonical way to sub-bundles $\overline {{\cal F}^p}$ of ${\overline {{\cal F}^0}}$ that
extend the fibration in eq.~\eqref{eq:decreasing} to $\frak{D}^r$. In particular, at the origin 
$z=0\in \frak{D}^r$, the  $\overline {{\cal F}^p}$ define the \emph{limiting Hodge  filtration} 
$F^p_{\rm lim}$. The $N_k$ fulfill a transversality $N_k( F^p_{\rm lim})\subset F^{p-1}_{\rm lim}$ 
like the Griffiths transversality in eq.~\eqref{eq:trans}. One can show that the action of the  
extension  of the Gauss-Manin connection ${\overline{ \nabla}}_{\theta_k}$ (with $\theta_k = z_k\, \partial_{z_k}$) 
to   $\frak{D}^r$ becomes proportional to the action of $N_k$ on $F^p_{\rm lim}$ as well as on  
${\rm Gr}^p_{\rm lim}= F^p_{\rm lim} / F^{p+1}_{\rm lim}$ :
\begin{equation} 
{\overline{ \nabla}}_{\theta_k}=-\frac{1}{2 \pi i} N_k\ .
\end{equation}     
Also a section $s_\mathbb{Z}\in {\cal H}_\mathbb{Z}$ of the integer local system ${\cal H}_\mathbb{Z}$ on $\frak{D}^r$
can be extended as ${\overline s}_{\mathbb{Z}} ={\cal O}({\underline N})s_\mathbb{Z} $ to $\frak{D}^r$ and defines 
an integral structure ${\overline s}_{\mathbb{Z}}(0)$ over $z=0$. However, there is a freedom  in the choice of 
 coordinates on the discs $\frak{D}^r$. More precisely, the change of coordinates $\zs\to {\underline{\tilde z}({\underline z})}$ induces a choice   
$\exp\left(2 \pi i \alpha_k (d \tilde z/d z)_k(0) N_k\right)$, referred to as nilpotent orbit,  
in the choice of the integral structure ${\overline s}_{\mathbb{Z}}(0)$.  

The second filtration of the \emph{limiting mixed Hodge structure} at the  boundary is the ascending  
\emph{monodromy weight filtration}:
\begin{equation} 
W_\bullet: W_0\subset W_1\subset \cdots \subset W_{2n-1}\subset W_{2n}=H^n(M_{\underline z},\mathbb{C})\ , 
\label{eq:monodromyweightfiltration}
\end{equation}   
with ${\rm Gr}^W_k=W_p/W_{p-1}$. The spaces $W_\bullet$ are defined by the action of the operator 
\begin{equation}
\begin{array}{rl} 
(\text{i.}) & \qquad\qquad  N(W_k)\subset N(W_{k-2})\ , \\ [2 mm] 
(\text{ii.}) & \qquad\qquad N^k : {\rm Gr}^W_{n+k}\xrightarrow{\sim}   {\rm Gr}^W_{n-k}\  .
\end{array} 
\label{eq:conditions} 
\end{equation}  
The  first few and the last $W_\bullet$ are explicitly given by
\begin{equation} 
\begin{array}{rl}
W_0	&=	{\rm Im}(N^n)\,,\\
W_1	&=	{\rm Im}(N^{n-1}) \cap {\rm Ker}(N)\,,\\
W_2	&=	{\rm Im}(N^{n-2}) \cap {\rm Ker}(N)+{\rm Im}(N^{n-1}) \cap {\rm Ker}(N^2)\,,\\
	&\, \vdots	\\
W_{2n-1}&= {\rm Ker}(N^n)\,.
\end{array}
\end{equation}
A key point is that  $F_{\rm lim}^\bullet$ induces a Hodge structure of pure  weight $k$ on ${\rm Gr}^W_k$~\cite{MR382272}, 
and the triple $(s_{\mathbb{Z}}(0),F_{\rm lim}^\bullet, W_\bullet)$ fits together to define a polarized  limiting mixed
Hodge structure.  Its \emph{limiting Hodge diamond} is given according to Lemma 1.2.8 of ref.~\cite{MR498551} by (see also ref.~\cite{MR664326}) 
\begin{equation} 
H_{\rm lim}^{p,q}=F^p_{\rm lim}\cap W_{p+q} \cap (\bar F^p_{\rm lim}\cap W_{p+q}+\sum_{j\ge 1} \bar F^{q-j}_{\rm lim} \cap   W_{p+q-j-1})\,,
\label{eq:Hlim}
\end{equation}  
with the property that $W_l=\oplus_{p+q\le l } H^{p,q}_{\rm lim}$, $F^p_{\rm lim}=\oplus_{r\ge p} H^{r,s}_{\rm lim}$, $H_{\rm lim}^{p,q}$ 
projects isomorphically to $H^{p,q}({\rm Gr}^W_{p+q})$  and  $H^{p,q}_{\rm lim}={\overline H^{p,q}}_{\rm lim}\ ({\rm  mod}\ W_{p+q-2})$.     
From  the Feynman integral point of view, one would like to have an application of this 
structure  like for the conifold, i.e., one that relates the branching behavior of the periods  
to the singularity type  in the fibre and predicts something  concrete about the integral. Consider a filtration of a complex 
$K^\bullet$, with the standard definition $\rd\cdot \rd=0$ and the cohomology $H^\bullet(K^\bullet)=\oplus_{p\ge 0} H^p(K^\bullet)$, where  
$H^p(K^\bullet)=\frac{Z^p}{\rd K^{p-1}}$ and $Z^p={\rm ker}\{\rd:K^p\rightarrow K^{p+1}\}$ are cycles while $\rd K^{p-1}=B^p\in Z^p$ 
are boundaries. One gets a spectral sequence $E_r=\oplus_{p,q\ge 0} E^{p,q}_r$ with 
$\rd_r:E_r^{p,q}\rightarrow E_r^{p+1,q-r+1}$, $\rd_r\cdot \rd_r=0$  and $H^\bullet(E_r)=E_{r+1}$, and this spectral sequence typically converges, i.e.,  there is 
an $r_0$ so that $E_r=E_{r+1}=\ldots\eqqcolon E_\infty^{p,q} $ for all $r\ge r_0$. One says that  $E_r$ \emph{degenerates} at $r_0$  and \emph{abuts} to $H^\bullet(K^\bullet)$. 
The spectral sequence that comes from the filtration of $K$  is given by 
\begin{equation}
\begin{array}{rl} 
E_0		&=	F^p K^{p+q}/ F^{p+1}K^{p+q}, \\  
E_1^{p,q}	&=	H^{p+q}({\rm Gr}^p K^\bullet),\\ [1 mm]  
E_r^{p,q}	&=	\displaystyle{\frac{\{ c\in F^p K^{p+q}|\rd c \in F^{p+r} K^{p+q+1}\}}{\rd F^{p-r+1} K^{p+q+1}+ F^{p+1} K^{p+q}}},\\ [2mm]
		& \, \vdots	 \\
 E^{p,q}_\infty&={\rm Gr}^p(H^{p,q}(K^\bullet))\ .
 \end{array} 
 \end{equation}    
 A statement that relates logarithmic degenerations generally  to the structure of the singularity of the fibre $M_0=\pi^{-1}(0)$ in 
 $\pi:M\mapsto \frak{D}$, or rather to its resolutions, is as follows: If by a chain of blow ups the singularity 
 of  $M_0$ can be made to a reduced divisor $E$ with normal crossing components $E_i$ for $i=0,\ldots, k$, then 
 the Hodge spectral sequence based on $F^\bullet_{\rm lim}$ degenerates at $_{F_{\rm lim}}E_1^{pq}$, while the \emph{monodromy weight spectral  
 sequence} degenerates at $_WE_2$, and one has~\cite{MR2393625,MR3822913}
 \begin{equation} 
 _WE_1^{p,q}=\bigoplus_{k=0}^{2p} H^{q+2(p-k)}(E[2 p-k],\mathbb{Q})\ ,
 \end{equation} 
 where we define $E[k]$ by the disjoint union as $E[k]=\coprod_{I} E_{i_0} \cap  E_{i_1} \cap \cdots \cap E_{i_k}$, $I=\{i_0<i_1<\cdots <i_k\}$.  In particular, $E^{p,0}_1$ is 
 identified with the  complex $0\rightarrow H^0(D[0])\rightarrow H^0(D[1])\rightarrow \cdots \rightarrow H^0(D[2 n])\rightarrow 0$  and 
 ${\rm Gr}_0^WH^k_{\rm lim}(M_0)=H^k(\Gamma_I)$, where $\Gamma_I$ is the dual intersection complex of $M_0$. What we 
 said about the limiting mixed Hodge  structures and monodromies in this section does not require the Calabi-Yau 
 property $c_1(M)=0$. Calabi-Yau manifolds, however, have generically a MUM-point.

A point of  \emph{maximal unipotent monodromy} or MUM-point fulfills the following conditions:
\begin{enumerate}[label=(\roman*.)]

	\item  The point is defined by  $P=\bigcap_{i=1}^r D_i$ in an $r$ dimensional moduli space ${\cal M}_\text{cs}$  
and all monodromies $T_{\gamma_{D_i}}$  corresponding to the loops around all normal crossing divisors $D_i$ are unipotent.

	\item One has ${\rm dim}(W_0)={\rm dim}(W_1)=1$ and ${\rm dim} (W_2)=1+r$.

	\item For a basis  $g^0,g^1,\ldots, g^r$ of 
$W_2$, with $g^0$ a basis of $W_0$  and $N_k$ defined by eq.~\eqref{eq:defNk}, the $r\times r$ matrix $m_k^j$, $j,k=1,\ldots, r$  
defined by  $N_k g^j=m_k^j g^0$ is invertible.

\end{enumerate} 
These criteria given in ref.~\cite{MR1416348} might be sufficient, but there are easier necessary 
conditions in many cases. For example, for one-parameter Calabi-Yau three-folds with fourth-order differential operators, it is 
sufficient that all four local exponents $\alpha$ are equal~\cite{MR3822913} at $z_0$ to make  $z_0$ a MUM-point. More 
generally, one can characterize the MUM-point by demanding that at $z_0$ the solutions of the leading order symbols of the complete set of generators of the 
differential ideal should  be $(r+1)$-fold degenerate with local exponent $\alpha$, say, and that there is one  normalized holomorphic solution $\varpi_0({\underline z})=z^\alpha+{\cal O}({\underline{z}}^{1+\alpha}) $ 
with this  local exponent and $r$ independent single logarithmic solutions of the form $\varpi_k=\frac{1}{2 \pi i} \varpi_0({\underline z})\log(z_k)+
\Sigma_k({\underline z})$~\cite{MR1316509}, where we have chosen $\Sigma_k({\underline z})=z_k^{\alpha+1}+ {\cal O}({\underline{z}}^{\alpha+2})$. That replaces a.)-c.). 

More generally, let $I_p$ an index set of order $|I_p|=p$ and define the Frobenius basis:  
\begin{equation} 
S_{(p),k}(\zs)={\small{\frac{1}{(2\pi i)^{p} p!}}}\sum_{I_{p}} \kappa_{(p),k}^{i_1,\ldots, i_{p}}  \varpi_0({\underline z})  \log(z_{i_1})\cdots  \log(z_{i_{p}})+ {\cal O}(\uz^{1+\alpha})\, ,
\label{eq:frobenius_S} 
\end{equation}
where ${\cal O}(\uz^{1+\alpha})$ can also include logarithmic terms of total power up to $p-1$. The isomorphism in eq.~\eqref{eq:conditions} (ii.) implies that there is a non-degenerate pairing over $\mathbb{Q}$  
between the solutions $S_{(n-p),k}$  and $S_{(p),k}$ for  $k=1,\ldots,|S_{(p)}(\zs)|= |S_{(n-p)}(\zs)|$  and $p=0,\ldots, n$. 
Here $|S_{(p)}(\zs)|$ denotes the total number of solutions which are of leading order $p$ in $\log(z_i)$.  
The statement is, roughly, that  solutions of degree $n-p$ ($n=l-1$) in 
the logarithms $\log(z_k)$ are dual to solutions of degree $p$ in the logarithms. In particular, the 
unique holomorphic solution with $p=0$, $S_{(0),1}=\varpi_0({\underline z})$, is dual to the unique solution which is of maximal degree $n$ in the logarithms. The 
paring of the other solutions in this Frobenius basis depends on the details of the  intersection numbers 
$\kappa_{(p),k}^{i_1,\ldots, i_{p}}$.  To define this  pairing over  $\mathbb{Z}$ and to get a basis of solutions 
that correspond to period integrals over an integral basis of cycles in $H_n(M_n,\mathbb{Z})$, 
one has to analyze the pairing in eq.~\eqref{eq:pairingG} and the map $\frak{M}^{-1}$ defined by the $\widehat \Gamma$-class, as explained in section \ref{subsection: MirrorSymmetry}.  
The basis change which transforms the Frobenius basis in eq.~\eqref{eq:frobenius_S} to this integer basis is \emph{triangular} with respect to 
the grading by the logarithmic  degree, i.e., it adds to solutions of degree $p$ in the logarithms only solutions of lower degree in the logarithms, with 
coefficients that depend on the global topology of $M_{n}$ and $\zeta$ values.    
  
Note that  the leading symbols of the differential ideal allow one to calculate the $\kappa_{(0),0}^{i_1,\ldots, i_{n}}$ up to a constant. In the case of mirror 
symmetry, these are the classical intersections $\kappa_{(0),0}^{i_1,\ldots, i_{n}}=D_{i_1}\cap \cdots \cap D_{i_n} $ of the basis of 
divisors in  the mirror $W_n$ to $M_n$, and the pairing over $\mathbb{Q}$ can be identified  with the intersection pairing in its Chow ring.  Much of the structure of the logarithmic 
solutions will survive if $n={\rm dim}(M_n)$ is replaced with the size of the Jordan block  for a non-complete degeneration.   For 
example, in ref.~\cite{MR3822913} it is argued that  the limiting mixed Hodge structure of one-parameter Calabi-Yau three-folds can be one 
of the following types:
\begin{itemize}

	\item The generic point $F$ is characterized by generic local exponents.
	\item The conifold point $C$  has local exponents $(a,b,b,c)$ and a single $2\times 2$ Jordan block.
	\item The $K$ point has local exponents $(a,a,b,b)$ and two $2\times 2$ Jordan blocks.
	\item Finally, the MUM-point $M$ has local exponents $(a,a,a,a)$ and a $4\times 4$ Jordan block. 
	
\end{itemize}
Here different characters stand for different rational numbers and the \emph{limiting mixed Hodge diamond} $H^{p,q}_{\rm lim}$ in eq.~\eqref{eq:Hlim} for the different 
degenerations at $F-,C-,K-,M-$ points are depicted below:           
$$
\begin{array}{rl}
F&:\quad  
\begin{array}{ccccccc}
&\phantom{000}&\phantom{000}&\phantom{0}0 \phantom{0}& \phantom{000}&\phantom{000}&\\
&&0&&0&&\\
&0&&0&&0&\\
1&&1&&1&&1\\
&0&&0&&0&\\
&&0&&0&&\\
&&&0&&&\\
\end{array}\qquad \qquad \qquad 
C:\ \quad 
\begin{array}{ccccccc}
&\phantom{000}&\phantom{000}&\phantom{0}0 \phantom{0}& \phantom{000}&\phantom{000}&\\
&&0&&0&&\\
&0&&1&&0&\\
1&&0&&0&&1\\
&0&&1&&0&\\
&&0&&0&&\\
&&&0&&&\\
\end{array}
\end{array}\, \phantom{.}
$$
\begin{equation} 
\begin{array}{rl}
K&:\quad 
\begin{array}{ccccccc}
&\phantom{000}&\phantom{000}&\phantom{0}0 \phantom{0}& \phantom{000}&\phantom{000}&\\
&&0&&0&&\\
&1&&0&&1&\\
0&&0&&0&&0\\
&1&&0&&1&\\
&&0&&0&&\\
&&&0&&&\\
\end{array} \qquad  \qquad \qquad M: \quad
\begin{array}{ccccccc}
&\phantom{000}&\phantom{000}&\phantom{0}1\phantom{0}& \phantom{000}&\phantom{000}&\\
&&0&&0&&\\
&0&&1&&0&\\
0&&0&&0&&0\\
&0&&1&&0&\\
&&0&&0&&\\
&&&1&&&\\
\end{array}\,.
\end{array}
\label{eq:exampleslimitone}
\end{equation} 
For example, the previous considerations allow us to completely classify the singular points of the Calabi-Yau three-fold associated to the four-loop banana integral. The complex moduli space is ${\cal M}_{\textrm{4-loop}}=\mathbb{P}^1\setminus \{z=0,1/25,1/9,1,\infty\}$. The local exponents of the singular points are summarized in the Riemann $\mathcal{P}$-symbol in eq.~\eqref{Riemann34}. In particular, using eq.~\eqref{eq:exampleslimitone}, we immediately see
that $z=0$ is a $M$-point (MUM-point), $z=1/25,1/9,1$ are $C$-points (conifolds), and 
$z=\infty$ is a $K$-point.

The SL$(2,\mathbb{C})$ orbit theorem~\cite{MR382272}  extends the  standard SL$(2,\mathbb{C})$ Lefshetz 
decomposition on polarized K\"ahler structures~\cite{MR1288523} to the limiting Hodge 
structure. Using this one can see for example that a $3\times 3$ Jordan block in the above decomposition
is not possible.  The  generic form of the limiting Hodge structures for multi-parameter 
families has been studied in refs.~\cite{MR3505643,MR4012553}. These works characterize  the types of \emph{limiting  mixed Hodge structures}
that can occur in these cases, and one finds as a consequence which types of critical divisors can intersect. 
A generic feature is that at the MUM-point the horizontal middle
cohomology with the degeneracies $1=h_{n,0},h_{n-1,1}^\text{hor},\ldots, h_{1,n-1}^\text{hor}, h_{0,n}=1$ is 
mapped to the vertical entries of the limiting mixed Hodge structure $1=h_{\rm lim}^{0,0},  
h_{\rm lim}^{1,1},\ldots,  h_{\rm lim}^{n-1,n-1}, h_{\rm lim}^{n,n}=1$ as explained in refs.~\cite{Klemm:1996ts,Bizet:2014uua}. This has clearly 
bearings on the degenerations that can occur in maximal cut Feynman integrals as illustrated at the end of  
section~\ref{subsection: MirrorSymmetry}.  Since the K\"ahler potential is given  in terms of the 
periods in eq.~\eqref{eq:Kaehlerpotential}, the mixed Hodge structure in eq.~\eqref{eq:Hlim}, together  with 
eq.~\eqref{eq:frobenius_S}, determines its leading  logarithmic degeneration. One can therefore determine the leading 
behavior of the Weil-Peterssen metric and  distinguish for example whether the critical divisors are at 
finite or infinite distance from the bulk of the moduli space. This leading behavior  is enough to make statements about the  
swampland distance conjectures for Calabi-Yau three-folds, see refs.~\cite{Blumenhagen:2018nts,Grimm:2018ohb,Joshi:2019nzi} and~\cite{Palti:2019pca} for a review. 
The exact metric has been fixed using the Barnes integral representation and derivatives of the gamma function at the MUM -point before~\cite{MR1115626,Klemm:1992tx,MR1316509,Hosono:1994ax}.  
In ref.~\cite{Joshi:2019nzi}  the Weil-Peterssen metric was determined exactly at the possible degenerations ~\eqref{eq:exampleslimitone} of hypergeometric one-parameter Calabi-Yau three-folds.

\paragraph{The Frobenius basis and the integer basis.}
\label{para:Frobeniusandintegerbasis }

So far, even if all classical constants in the symmetric tensors  $\kappa_{(p),k}$ have been determined, the 
basis in eq.~\eqref{eq:frobenius_S} for the periods is only what is called a \emph{Frobenius basis} by mathematicians. In fact, 
eq.~\eqref{eq:bryantgriffith} extends to the singular locus $z_0$ and gives non-trivial relations  between the intersection numbers.  
A Frobenius basis does not correspond to a basis of cycles $\Gamma$ for $H_{n}(M_n,\mathbb{Q})$, sometimes called a \emph{Betti 
basis} and of course not to a basis  in $H_{n}(M_n,\mathbb{Z})$, which we call an \emph{integral basis}. A maximal cut 
integral does  correspond to an integral over an element in  $H_{n}(M_n,\mathbb{Z})$, and so the latter
has to be found if one is interested in the maximal cuts computed with an integral basis of cycles. A basis transformation from a generic Frobenius basis to a rational or integral basis will involve interesting 
\emph{transcendental  numbers}. 

A pedestrian way to construct an integral basis proceeds by the following method: Resolve the critical loci in the 
moduli space to divisors with normal crossings. After that step, we construct near sufficiently many points $\uz_i$ 
and, in particular, around the intersections of the critical divisors, local Frobenius bases of solutions $\tilde \Pi_{\uz_i}$ of the 
Picard-Fuchs operators. Here, `sufficient' means that the finite regions of convergence of the  $\tilde \Pi_{\uz_i}$ define sufficiently 
many overlapping patches $U_i$ to cover  ${\cal M}_\text{cs}$. Once one has picked such a system of local solutions  
$\tilde \Pi_{\uz_i}$, $i=1,\ldots,s$, one finds a global basis by analytic continuation of the solutions into all patches. 
Between neighboring patches $U_i$ and $U_j$ with $U_j\cap U_j\neq \emptyset$, one can construct numerically 
connection  matrices $C_{ij}$ such that $\tilde \Pi_{\uz_i}=C_{ij} \tilde \Pi_{\uz_j}$ between patches,  eventually in intermediate steps 
to achieve the necessary numerical precision.  In  this globally defined  basis one can construct the 
simultaneous action of all independent generators of the monodromy group $\Gamma_{\cal M}$. 
The latter generate (not freely, but with the so-called Van Kampen relations) the monodromies around all critical divisors. 
A basis change, involving transcendental  entries, makes all these generators simultaneously elements of ${O}({\bf \Sigma},\mathbb{Z})$ and 
leads up to conjugation in  ${O}({\bf\Sigma},\mathbb{Z})$  to the desired integer basis of solutions  $\Pi$.  

This complicated procedure can be much simplified if one knows certain integral geometric cycles a priori. For 
example, at the conifolds the vanishing $S^n$-spheres can be identified, and the corresponding integral 
over $\Omega$ can be perturbatively  performed to low order in the moduli  to get the exact normalization.  
Most information concerning the integral basis can be extracted at the MUM-point. As we have seen in 
eq.~\eqref{eq:max_cut_feyn_par}, the  holomorphic solution at the MUM-point  is an integral over an $n$-torus ${\bf T} = T^n = T^{l-1}$, 
which can be performed by taking residues, cf.~eqs.~\eqref{cutMHS} and \eqref{eq:cicyT}.  
Consider now the unique cycle ${\bf S}$, whose period degenerates at the MUM-point with the 
highest power of the logarithms  i.e. with order  $\log^n$.  The latter is  dual to {\bf T}, whose period 
has no logarithm, according to eq.~\eqref{eq:conditions} (ii),  as explained after eq.~\eqref{eq:frobenius_S}. Since 
both cycles are unique, they are dual with respect to the intersection form ${\bf\Sigma}$. The cycle ${\bf S}=S^n$ 
corresponds  to the $S^n$-sphere that vanishes at the conifold  locus that is nearest to the MUM point 
under consideration. This property of the dual periods is known to hold quite generally and  plays an 
important role in homological mirror symmetry, where the shift monodromy by the maximal degenerating 
cycle of the MUM point at the nearest  conifold is known as Seidel-Thomas  twist~\cite{MR1831820}.
   
It is a very remarkable fact found in  ref.~\cite{Bonisch:2020qmm} that for the banana integral the maximal cut integral that corresponds to the period over this ${\bf S}$ yields  
the imaginary part of the integral above threshold. We have thus been able to identify two distinguished cycles in the Calabi-Yau: The sphere ${\bf S}$ 
provides the imaginary part above threshold. It corresponds in loop momentum space to the maximal cut contour $\Gamma_{\textrm{Im}}$ 
from eq.~\eqref{eq:optical_thm}, and so it has a direct physical interpretation and relevance. The torus ${\bf T}$ considered in ref.~\cite{Vanhove:2018mto}, 
which corresponds in loop momentum space to the maximal cut contour $\Gamma_T$ from eq.~\eqref{eq:max_cut_feyn_par}, 
does not seem to have any known physical interpretation. Its importance, however, lies in the fact that 
it furnishes the unique holomorphic period at the MUM-point. This integral allows one to reconstruct the 
generators of the  Picard-Fuchs differential ideal and plays an important role in understanding  
this ideal (cf. ref.~\cite{Bonisch:2020qmm}, as well as section~\ref{sec:Bananadimreg}). 
Amazingly, these two distinguished cycles are exactly the dual cycles that play a
crucial  role in homological mirror symmetry, as discussed above.

 Finally, let us mention the important observation by Deligne that the mixed Hodge structure becomes a mixed Hodge-Tate 
 structure at the MUM-point~\cite{MR1416353}. In this situation one  expects that the $\widehat \Gamma$-class  
 governs the integral structure  and the \emph{transcendental weight} of the periods. As it turns out, this 
 $\widehat \Gamma$-class (and the closely related Mellin-Barnes representation of the 
 banana integrals) are the most effective analytic tools to find the integral basis and to perform
 some of its analytic continuations, respectively. We explain in section \ref{subsection: MirrorSymmetry} that 
 the  $\widehat\Gamma$-class can also be extended to include the inhomogeneous solutions, i.e., the full Feynman 
 integral. In section \ref{sec:mellin-barnes} we show that the Mellin-Barnes representation 
 of the banana integral allows one to prove this $\widehat\Gamma$-conjecture if one considers the right contours.  
 We  illustrate the construction of the integral basis including inhomogeneous  solutions for the four-loop 
 equal-mass banana graph in section \ref{para:4lequalmass}, where also the generators of the 
 monodromy group are given and the transcendental weights are specified.

\subsection{Mirror symmetry and the $\widehat \Gamma$-conjecture at MUM-points}
\label{subsection: MirrorSymmetry}

In this section we explain how we can determine the integral basis using \emph{mirror symmetry} 
if a pair mirror manifolds $(M_n,W_n)$ is known. This turns out to be easy at the MUM-point. 
Let us recapitulate the main ideas that underly this application. 
   
There is a quite  \emph{remarkable fact}, namely that  Calabi-Yau manifolds come quite 
generically in \emph{mirror pairs} $(M_n,W_n)$.  This can  be understood as the exchange of 
two deformation-- or moduli spaces. It has precise implications on how a 
banana integral can degenerate, e.g., in the large momentum regime. 

So far we have only described the complex structure moduli space  
${\cal M}_\text{cs}$ of the manifold $M_n$. One can depicture the  infinitessimal directions of  this moduli 
space as infinitessimal deformations  $\delta g_{\tilde \imath \tilde \jmath}$  of the Calabi-Yau K\"ahler metric  $g_{i\bar \jmath}$ that 
preserve the Calabi-Yau property namely its Ricci-flatness, $R_{i\bar \jmath}(g_{i\bar \imath}+\delta g_{\tilde \imath \tilde \jmath})=0$. 
While the K\"ahler metric  in a given complex structure has mixed index structure ${i\bar \jmath}$, the deformation 
$\delta g_{\tilde \imath \tilde \jmath}$ can have any  index structure. It is clear that the pure deformations  
correspond to the \emph{complex} complex structure  deformations, which change the meaning of the unbarred and barred indices.  
Morevover, using the Weitzenb\"ock formula~\cite{MR1288523}, one shows that the latter are related to harmonic forms  spanning $H^{1}(M_n,TM_n)$, i.e., to 
complex structure deformations~\cite{MR2109686}. Deformations with mixed  index structure  are identified with  \emph{real} K\"ahler structure deformations. They 
correspond  to a choice of the K\"ahler form $\omega$ as a real linear combination of $h^{1,1}(M_n)$ 
harmonic $(1,1)$-forms. The  classical  K\"ahler moduli space is hence of real dimension $h^{1,1}(M_n)$. It has been 
suggested by type II string theory on $M_n$ (see ref.~\cite{MR3965409})  that  one should complete the choice of  the 
real  K\"ahler form  $\omega$ with the choice of real Neveu-Schwartz B-field $b=b_{i\bar \jmath} d x^i d \bar x^{\bar \jmath}$  
whose equations of motion imply that it is also a harmonic $(1,1)$ to describe a   \emph{complexified} K\"ahler moduli space.  
Let us fix topological curve classes  $C_i$ for $i=1,\ldots, h^{1,1}$ in $H_{2}(M_n)$ dual to a reference basis of $H^{1,1}(M_n)$ 
at $t_0$  on $M_n$.  The independent K\"ahler parameters of the large volume  Calabi-Yau  $n$-fold $M_n$  are identified  
with the complexified  areas   
\begin{equation}
t^i=\int_{C_i}( \omega + i b)\,, \qquad\text{for } i=1,\ldots, h^{1,1}(M_n) \,.
\label{eq:complexKaehler} 
\end{equation}  
These curves\footnote{Recall that the area of a curve $(C_i)$ is given by $\text{area}(C_i)=\int_{C_i}\omega$.} parametrize the 
\emph{complexified K\"ahler moduli space} ${\cal M}_\text{cKs}(M_n)$. 

 The mirror symmetry conjecture states that for a Calabi-Yau $n$-fold $M_n$ there is a Calabi-Yau  $n$-fold $W_n$ so 
 that the structures 
 \beq
 	H^{p,q}(M_n)	\cong	H^{n-p,q}(W_n)        
\label{eq:mirror}     
\eeq
are identified.   On the one hand, as reviewed in section~\ref{subsec:CYmanifolds}, the infinitesimal  complex structure deformations are described by the cohomology groups 
$H^1(M_n,TM_n)\cong H^{n-1,1}(M_n)$. They are unobstructed and the dimension of $\mathcal M_\text{cs}^{h^{n-1,1}}(M_n)$ is  
$h^{n-1,1}$, as indicated. On the other hand, we know from eq.~\eqref{eq:complexKaehler} that the complexified K\"ahler 
moduli space ${\cal M}_\text{cKs}^{h^{1,1}}(M_n)$ has dimension $h^{n-1,1}(M_n)$. So, schematically, mirror symmetry states 
that the structures associated to the following moduli spaces are identified:\footnote{The 
statement also applies to $\mathrm{K3}$ surfaces where $H^{1,1}(M_{\mathrm{K3}})\sim H^{1,1}(W_{\mathrm{K3}})$, and there is anyways 
only a universal  $\mathrm{K3}$, albeit in a more subtle sense. The exchange means in this case  that the 
polarization is changed, so that the role of the transcendental-- and the holomorphic cycles are exchanged. More generally, 
in the symmetric cohomology groups, like $H^{\frac{n}{2},\frac{n}{2}}(X)$ for $n$ even, one can define vertical-- and horizontal 
pieces, that get exchanged.}
\begin{equation}
{\cal M}_\text{cs}^{h^{n-1,1}(M_n)}(M_n) \Longleftrightarrow {\cal M}_\text{cKs}^{h^{1,1}(W_n)}(W_n) \quad\text{and}\quad  {\cal M}_\text{cKs}^{h^{1,1}(M_n)}(M_n) \Longleftrightarrow {\cal M}_\text{cs}^{h^{n-1,1}(W_n)}(W_n)\, . 
\end{equation}
Note that  eq.~\eqref{eq:mirror}  corresponds to a $90$ degree rotation of the Hodge diamond of $M_n$ relative to the one of $W_n$,
in which the  unique $(n,0)$-- and $(0,n)$-forms  on $M_n$ and  the unique $(0,0)$-cohomology  
element  and the unique $(n,n)$-volume form on $W_n$, respectively, are identified. In other words, mirror symmetry exchanges 
the vertical-- and horizontal cohomologies and their associated structures. In particular, it exchanges $H^n_\text{hor}(M_n)$ with  $H_\text{vert}=\oplus_{k=0}^n H^{k,k}_\text{vert}$. 
This is also what we see when comparing the middle cohomology of  $M_n$ with the limiting mixed Hodge structure at the MUM-points  
according to eq.~\eqref{eq:Hlim}, for example when comparing the $F$-point with the $M$-point in eq.~\eqref{eq:exampleslimitone}. Using monodromy 
considerations, the following \emph{mirror map} can be  identified at the MUM-points:
\begin{equation} 
t^k(z)=\frac{S_{(1),k}(\zs)}{S_{(0),0}(\zs)}=\frac{1}{2 \pi i} \left( \log(z_k)+  \frac{\Sigma_k(\zs)}{\varpi_0(\zs)} \right) \qquad\text{for } k=1,\ldots, h^{11}(W_n)=h^{n-1,1}(M_n) \ .
\label{eq:mirrormap}
\end{equation}  
The last ingredient is the \emph{homological mirror symmetry conjecture}  which states the equivalence of the derived  categories on $M_n$ and $W_n$: 
\begin{equation}
	\begin{array}{rl}
		&D^\pi(\text{Fukaya} (M_n))\\
		&\text{the bounded derived Fukaya}\\ 
		&\text{category of } M_n    
	\end{array} 
		\quad \Longleftrightarrow \quad 
	\begin{array}{rl}
		&D^b(\text{Coh}(W_n))\\
		&\text{the bounded derived category}\\ 
		&\text{of coherent sheaves on } W_n  \, ,  
 	\end{array}   
 \end{equation} 
According to the conjecture, mirror symmetry is really supposed be an order two isomorphism $\frak{M}:  D^b(\text{Coh}(W_n)) \rightarrow D^\pi({\rm Fukaya} (M_n))$ 
between these categories respecting all structures. 
The objects in $D^\pi({\rm Fukaya} (M_n))$ are \emph{Lagrangian cycles} supporting local systems. The definition of the 
Lagrangian cycles  $L$ uses the symplectic structure $L|_\omega=0$, as it is familiar from classical mechanics.   
They have real dimension $n$ and can be characterized by their homology classes  $\Gamma$ in $H_n(M_n,\mathbb{Z})$ that 
carry a \emph{mass} $M_\Gamma(\zs)$ given by $M_\Gamma(\zs)=e^{K(\zs)/2} |Z_\Gamma(\zs)|$ related to the period  or charge $\Pi_\Gamma(\zs)$.  The 
objects in $D^b(\text{Coh}(W_n))$ are \emph{coherent holomorphic sheaves}. They are  supported on holomorphic sub-manifolds 
and carry additional bundle structures and can be characterized by their class ${\cal G}$ in  the 
algebraic $K$-theory group $K_\text{alg}^0$. A key point is that, on the one hand, their \emph{charge } $\Pi_{{\cal G}}({\underline{t}})$ can 
be calculated using the $\widehat \Gamma$-class of ${\cal G}$ in the large volume regime in 
terms of classical intersections of divisors and  characteriztic classes on $W_n$, and on the other 
hand, they can be identified at the MUM-points  with the periods of the mirror using the mirror map in eq.~\eqref{eq:mirrormap} as 
$\Pi_{{\cal G}}(\underline{t})=\Pi_{\frak{M}({\cal G})}(\underline{t})$. Here we introduced the convention that $\Pi_{\Gamma}(\underline{t})$ is evaluated 
in the K\"ahler gauge  $X^0=S_{(0),0}=1$. Note  that $M_\Gamma(\zs)$ is invariant  under K\"ahler gauge transformations.

The motivation for defining the $\widehat \Gamma$-class orginated in the idea of 
identifying the pairing in both categories more naturally.  Both categories have such a pairing 
between the charge classes of objects  and auto-equivalences that leave the pairing invariant. 
In the Fukaya category the pairing is induced  by the intersection pairing coming from ${\bf \Sigma}$ (we abbreviate it as $\Gamma \circ \Gamma'$),   
and the  auto equivalences  can be identifyied with the monodromy group action on  $\Gamma$. 
The natural pairing for objects in $D^b(\text{Coh}(W_n))$,  after mapping  $K_\text{alg}^0$ to $H_\text{vert}$ using the Chern map, 
is the Euler pairing 
\begin{equation} 
{\cal G}\circ {\cal G}'=\int_{W_n} {\rm ch}({\cal G}^*)\,    {\rm ch}({\cal G}' )\, {\rm Td} (TW_n)\ .
\label{eq:pairingG}
\end{equation}  
The \emph{Strominger-Yau-Zaslow conjecture} implies that the sky-scrapper sheaf ${\cal O}_\text{pt}$ and the structure
 sheaf   ${\cal O}_{W_n}$ are mapped  to $\frak{M}({\cal O}_\text{pt})={\bf T}$   and $\frak{M}({\cal O}_{W_n})={\bf S}$, 
where the classes of the two special Lagrangian $n$-cycles  ${\bf T,S}$ have been specified at the end of section \ref{sssec:boundary}. 
In simple cases, $\Pi_{\bf S}$ could be analytically continued  to the MUM-point and some data of  $Z_{{\cal O}_{W_n}}(t)=\Pi_{\bf S}$ 
for  ${\cal O}_{W_n}= \frak{M}({\bf S})$ were known for three-folds, like the famous $\zeta(3)\chi(W_n)$ term \cite{MR1115626} and 
the $\frac{(2 \pi i)^2}{2 4} c_2\cdot D$ terms~\cite{Hosono:1994ax}.  The Todd class ${\rm Td}$ is a multiplicative class generated 
by \cite{MR1335917}
\begin{equation} 
\frac{x}{1-e^{-x}}\ .
\end{equation}  
The $\widehat \Gamma$-class proposal~\cite{MR2282969,MR2683208,MR2483750,MR3536989} is to take  a `square root' of the Todd  class using the following identity 
\begin{equation} 
\Gamma\left(1+\frac{x}{2 \pi i}\right)  \Gamma\left(1-\frac{x}{2 \pi i}\right)=e^{-x/2} \frac{x}{1-e^{-x}} \,,
\end{equation}          
and define the $\widehat \Gamma$-class by 
\begin{equation}
	\widehat \Gamma(TW_n)=\prod_{i} \Gamma\left(1+\frac{\delta_i}{2\pi i}\right)=\exp\left(-\gamma c_1(TW_n)+\sum_{k\ge 2}(-1)^k (k-1)!\, \zeta(k)\, {\rm ch}_k( {\cal G})\right) \, ,   
\end{equation}  
with the Euler-Mascheroni constant $\gamma$. Here $\delta_i$ are the Chern roots of $TW_n$. The transition from the Chern characters to 
the Chern classes  $c_k$  is decribed by Newton's formula 
 \begin{equation}
 	{\rm ch}_k		=	(-1)^{(k+1)} k\, \Bigg[ \log\left( 1+\sum_{i=1}^\infty c_i\, x^i\right) \Bigg]_k \ ,
\end{equation}    
 where $[*]_k$ means to take the $k^{\textrm{th}}$ coefficient (in $x$) of the expansion of the expression $*$.
 On Calabi-Yau spaces one defines ${\cal G}\circ  {\cal G}'=\int_{W_n} {\overline {\psi({\cal G})}} \psi({\cal G}')$  with
  $\Psi({\cal G})={\rm ch}({\cal G}) \cdot \widehat\Gamma (TW_n)$. The operation $\overline {\psi({\cal G})}$ gives a sign $(-1)^k$ on elements in $H^{2k}$, and one
  gets as desired ${\cal G}\circ {\cal G}'=\Gamma\circ \Gamma'$ with $\Gamma=\frak{M}({\cal G})$ and $\Gamma'=\frak{M}({\cal G}')$.  Moreover, 
  the charges  of ${\cal G}$ in the large volume limit of $W_n$, which corresponds to a MUM-point of $M_n$, can be 
  calculated  as~\cite{MR2282969,MR2683208,MR2483750,Bizet:2014uua,Gerhardus:2016iot,MR3536989} 
  \begin{equation} 
  \Pi_{{\cal G}}({\underline{t}}) =\int_{W_n} e^{{\underline \omega}\cdot {\underline t}} \, \widehat \Gamma(TW_n){\rm  ch}({\cal G}) +  {\cal O}(e^{-{\underline t}})\ .
\label{K-theorycharge} 
 \end{equation}     
 If we know the image of the  class of the cycle of a maximal cut, we can use eq.~\eqref{K-theorycharge}
 to compute its precise asymptotic at the MUM-point.  For the banana integral the maximal cut contour related to the imaginary part of the  integral was  identified with ${\bf  S}$ in ref.~\cite{Bonisch:2020qmm}, and it has the dual 
 ${\cal G}={\cal O}_{W_n}$ with $\textrm{ch}({\cal O}_{W_n})=1$. Therefore, it was possible to extract the asymptotic expansion of the Feynman integral involving all the transcendental 
 numbers by identifying
 \begin{equation} 
 \Pi_{\bf S}(\underline{t}(\zs))=\int_{W_n} e^{{\underline \omega}\cdot {\underline t}}\, \widehat \Gamma(TW_n)+ {\cal O}(e^{-{\underline t}})=Z_{{\cal O}_{W_n}}({\underline t}) \, ,
 \label{eq:maxcutasymptotic} 
 \end{equation}  
and comparing the powers of $t^k$ on both sides using  the mirror map in eq.~\eqref{eq:mirrormap}, see also below. 
This uniquely defines the transformation from the Frobenius basis in eq.~\eqref{eq:frobenius_S} to the integer cycle basis and relates the large momentum behavior of the  banana integrals to 
the  topological data of the Calabi-Yau space $W_n$ given in eq.~\eqref{CICY}, where the dimension of the Calabi-Yau space 
$n={\rm dim}(M_n)={\rm dim}(W_n)=l-1$ is determined by the loop order $l$. We also note that the cycle ${\bf T} \sim T^{n}$ 
can be identified with the skyscraper sheaf ${\cal O}_{\textrm{pt}}$ and $\Pi_{{\cal O}_{\textrm{pt}}}=1$~\cite{MR2683208}. Hence, in this case we get no logarithm 
and the corresponding solution is the holomorphic one and ${\cal O}_W\circ  {\cal O}_{pt}={\bf S}\cap {\bf T} =1$. 
It is possible to get the full set of integral $K$-theory classes and specify a complete integral basis of periods  using eq.~\eqref{K-theorycharge}. This 
is reviewed  for three-folds and four-folds in ref.~\cite{Bizet:2014uua}, but should be possible for  all $W_{l-1}$.

The last point to make here goes beyond the case of maximal cuts and should fit into the 
framework of the third generalization of Deligne to define mixed Hodge structures on 
singular manifolds for open cycles. It was  found in ref.~\cite{Bonisch:2020qmm} and proven 
in ref.~\cite{Iritani:2020qyh} that the full banana integral, which is defined over the open cycle $\sigma_l$ in the $n+1=l$-dimensional 
Fano  variety $F$ with $W_n\subset F$ such that the $n$-dimensional $W_n$ embeds as canonical hypersurface, is determined by 
an extended $\widehat \Gamma$-class 
\begin{equation} 
\widehat \Gamma_F(T{F})=\frac{\widehat A(T{F})}{\widehat \Gamma^2(T{F})} = \frac{\Gamma(1-c_1)}{\Gamma(1+c_1)} \cos(\pi c_1) \ , 
\end{equation}            
where $\widehat A$ is the Hirzebruch A-roof genus \cite{MR1335917} and $c_1=c_1(F)\neq 0$. Using this we can get the 
asymptotic behavior of the full Feynman integral by the identification
\begin{equation} 
J_{l,\underline{0}}(\zs,0)=\int_{F} e^{\underline\omega\cdot {\underline t}}\, \widehat\Gamma _F(TF) + {\cal O}(e^{-{\underline t}})\ .
 \label{eq:fullFeynamasymptotic}
 \end{equation}  
The integer symplectic basis element that corresponds to  $J_{l,\underline{0}}(\zs,0)$ can now be determined 
by expanding eq.~\eqref{eq:fullFeynamasymptotic} in the parameters $\underline t(\zs)$. To do this one has to calculate the classical topological intersection data that occur in this expansion. Let $I^{(k)}$ a set of $k$ indices, with $1\le I^{(k)}_p\le h_{11}(F)$ for all $p=1,\ldots , k$. Then typical terms that appear are the  intersections 
of $l$ divisors $D_i$ for $i=1,\ldots, h_{11}(F)$  in $F$, i.e.{}, $\int_{F} \bigwedge_{p=1}^l \omega_{I^{(l)}_p}=\bigcap_{p=1}^l D_{I^{(l)}_p}$ 
or the  intersection of the $k^{\textrm{th}}$ Chern class  $c_k$ with $l-k$ such divisors in $F$, i.e., $\int_{F} c_{k} \bigwedge_{p=1}^{l-k} 
\omega_{I^{l-k}_p}=[c_k]\cap \bigcap_{p=1}^{l-k} D_{I^{(l-k)}_p}$ etc. The evaluation is 
feasible  by simple and fundamental techniques in algebraic geometry and fixes the numerical coefficients of 
a degree $l$ polynomial in $\underline t$ that represents $J_{l,\underline{0}}(\zs,0)$ up to ${\cal O}(e^{-\underline t})$ 
corrections. For the Fano variety $F_l$ in eq.~\eqref{CICY},
the calculation was performed in ref.~\cite{Bonisch:2020qmm} in  detail. Inserting the mirror map in eq.~\eqref{eq:mirrormap},
we can hence get the precise coefficients of the leading logarithmic terms and since $J_{l,\underline{0}}(\zs,0)$
is a solution to a linear differential equation we can uniquely combine the Frobenius solutions in eq.~\eqref{eq:frobenius_S}
to get the exact linear combination that specifies an integer basis element of periods.

\paragraph{Confirmation by the Mellin-Barnes integral representation.}
In section~\ref{sec:mellin-barnes} we will present a Mellin-Barnes integral representation for banana integrals.
Using the contours prescribed in the evaluation of residues of the Mellin-Barnes integral 
in eq.~\eqref{eq:MBready2},  we can confirm both eq.~\eqref{eq:maxcutasymptotic} and eq.~\eqref{eq:fullFeynamasymptotic}, at least for a given loop order. More precisely, we can evaluate the leading behavior of the Mellin-Barnes integral in eq.~\eqref{eq:HypGeomBanana} in the large momentum region $z_i=0$ for all $i=1,\hdots, l+1$.
At the zeroth order in $\epsilon$  we can then determine  the coefficients of the leading powers of logarithms $\log(z_k)$. 
Using again the  identification of $\log(z_k)\sim t^k$ (see eq.~\eqref{eq:mirrormap}),  we confirm, in particular, by the expansion of eq.~\eqref{MBfirstorder}
to zeroth order in $\epsilon$ exactly the predictions of eqs.~\eqref{eq:maxcutasymptotic} and~\eqref{eq:fullFeynamasymptotic} in the equal-mass case. 
From the expansion of eq.~\eqref{eq:asymp_generic} we can confirm similar predictions of the $\widehat \Gamma$-class conjecture for the generic-mass case.

\subsubsection{The four-loop equal-mass banana integral and its  (transcendental) weight}
\label{para:4lequalmass}
In this section we illustrate the abstract concepts introduced above on the example of the banana integrals. 
We also show that there are two ways one can define a notion of weight to the banana integrals. The first one is the weight of the associated cohomology groups, while the second refers the notion of transcendental weight known from the physics literature. As we will see these two notions of weight will differ, but (at least for the banana integrals) they are related in a simple way.

In general, let $X$ be a smooth projective variety defined over a number field 
$K$. We can form its periods by integrating elements of the $K$-vector space 
$H^r_{\text{dR}}(X,K)$ over elements in $H_r(X,\mathbb{Z})$. 
Conjecturally, two periods coming from $H^{r_1}_{\text{dR}}(X_1,K)$ and 
$H^{r_2}_{\text{dR}}(X_2,K)$ can only agree if $r_1=r_2$ (e.g., this would imply that only 
periods of algebraic 0-forms, i.e., of constant rational functions, can be in $K$). In particular, we then have a well-defined weight of periods 
given by the weight of the associated cohomology groups. 

As explained in section \ref{subsection: MirrorSymmetry},
we can construct with the $\widehat \Gamma$-class  in eq.~\eqref{K-theorycharge} 
an integral symplectic basis $\uPi_l^{\text{IS}}(z) = (\varpi_{l,0}^{\textrm{IS}}(z),\ldots,\varpi_{l,l-1}^{\textrm{IS}}(z))^T$ for the periods of the Calabi-Yau manifold $W_{l-1}$. We denote the transition matrix 
from the Frobenius basis $\uPi_l(z)$ in eq.~\eqref{eq:frobenius} (and we use implicitly the notation of eq.~\eqref{eq:Pi_vec}) to the integral symplectic basis $\uPi_l^{\text{IS}}(z)$ by $\frak{T}$, 
i.e.,
\begin{align}
 \uPi_l^{\textrm{IS}}(z) =   \frak{T}\, \uPi_l(z)
\end{align}
is a period vector associated with $H^{l-1}_{\text{dR}}(W_{l-1})$. Therefore we assign to the elements of $ \uPi_l^{\textrm{IS}}(z)$ the \emph{weight} $l-1$. From  eq.~\eqref{eq:fullFeynamasymptotic} (see also ref.~\cite{Bonisch:2020qmm}) we 
 have constants $\lambda^{(l)}_k$ for $k=0,...,l$ so that the banana integral is given by:
\begin{align}
    J_{l,1}(z,0)= \sum_{k=0}^l \lambda^{(l)}_k \varpi_{l,k}(z) \, ,
\end{align}
where $\varpi_{l,k}(z)$ is a special solution to the inhomogeneous equation satisfied by $J_{l,1}(z,0)$.
Now consider the extended basis
\begin{align}
   \widehat{\uPi}_l(z) \coloneqq  \left(\begin{matrix} 2\pi i\,\uPi_l^{\textrm{IS}}(z) \\J_{l,1}(z,0)\end{matrix}\right)  = \left( \begin{array}{cccc}
         &2\pi i \frak{T}&&0  \\
         \lambda^{(l)}_0 &\cdots& \lambda^{(l)}_{l-1} &\lambda^{(l)}_l 
    \end{array} \right)\,
     \left(\begin{matrix} \uPi_l(z) \\\varpi_{l,l}(z)\end{matrix}\right)
    \, .
\end{align}
In this basis the monodromy matrices are unimodular. The K\"unneth theorem implies that the weight is additive, i.e., if two periods have weights $r_1$ and $r_2$, their product has weight $r_1+r_2$. The fact that $2\pi i$ can be realized as a period of $H_{\text{dR}}^2(\mathbb{P}^1)$ (and thus it is assigned weight 2), one finds that the first $l$ elements of $\widehat{\uPi}_l(z)$ can, e.g., be realized as periods of $H_{\text{dR}}^{l+1}(\mathbb{P}^1 \times W_{l-1})$. Hence, we would associate weight $l+1$ to the first $l$ elements of $\widehat{\uPi}_l(z)$. We can now extend this to conjecturally define a weight of the banana integral. We want this weight to be invariant under the monodromy, and since the monodromy adds rational multiples of the first $l$ elements of $\widehat{\uPi}_l(z)$ to the banana integral we conclude that the weight of $J_{l,1}(z,0)$ should also be $l+1$. 

Note that this definition of weight differs from the notion of \emph{transcendental weight} encountered in physics, cf., e.g., refs.~\cite{Kotikov:2001sc,Kotikov:2002ab,Kotikov:2004er,Kotikov:2007cy,Broedel:2018qkq}. For example, the one-loop equal-mass banana integral evaluates to a logarithm in $D=2$ dimensions, which is assigned transcendental weight $l=1$, and there are arguments to assign transcendental weight $l$ to an $l$-loop equal-mass banana integral in $D=2$ dimensions, cf., e.g., refs.~\cite{Broedel:2018qkq,Broedel:2019kmn}. In the following we sketch a way to define a notion of transcendental weight compatible with the physics literature on the Calabi-Yau periods that compute the maximal cuts of $J_{l,1}(z,0)$, and we comment on how this notation is related to the notion of weight defined above. We also show that by an argument similar to the one above, we can conclude that $J_{l,1}(z,0)$ should have transcendental weight $l$, consistent with the (folkloristic) expectations from physics.

We have seen that $z=0$ is a MUM-point. Consequently, the monodromy matrix ${\bf T}_0$ is unipotent, and its logarithm ${\bf N}_0 \coloneqq -\log({\bf T}_0)$ is nilpotent (cf. eq.~\eqref{eq:defNk}). By the same reasoning as in section~\ref{sssec:boundary}, we can define an increasing (monodromy weight) filtration $W_{\bullet}$ on the $\mathbb{C}$-vector space spanned by the periods $\varpi_{l,k}(z)$ (cf. eq.~\eqref{eq:monodromyweightfiltration}), and this filtration satisfies eq.~\eqref{eq:conditions}. We then say that a period has \emph{transcendental weight} $k$ if it lies in $W_{2k}$, but not in $W_{2k-2}$. Loosely speaking, this filtration captures the logarithmic behavior of a period as one approaches the MUM-point $z=0$: elements of transcendental weight 0 are holomorphic at the MUM-point; elements of transcendental weight 1 degenerate with a single power of $\log(z)$ at the MUM-point; etc. In particular, $\varpi_{l,k}(z)$ is assigned transcendental weight $k$ according to this definition. The matrix $\mathfrak{T}$ contains products of zeta values and $i\pi$ so that the entries of $\uPi_l^{\textrm{IS}}(z)$ are then assigned transcendental weight $l-1$ (see the four-loop example below), which is the same as the notion of weight defined above. In general, however, the two notions will not give the same value. 
The same monodromy argument as above then shows that the only transcendental weight consistent with the monodromy that one can assign to $J_{l,1}(z,0)$ is $l$. Note that this is consistent with the fact that $J_{l,1}(z,0)$ degenerates like $\log^l(z)$ close to the MUM-point $z=0$.

Let us briefly compare the two definitions of weight. We see that the two definitions agree up to a factor of 2 (coming from the fact that we had assigned transcendental weight 1 to $\pi$) and an $l$-dependent offset (coming from the fact that we had assigned transcendental weight 0 to $\varpi_{l,0}(z)$). Together this gives the difference between the weight $l+1$ and the transcendental weight $l$ of $J_{l,1}(z,0)$. The definition of weight given above, however, does not apply to the singular point $z=0$. Our notion of transcendental weight, instead, is tightly connected to the existence of a singular point, which is even a MUM-point. Moreover, we stress that this definition of transcendental weight depends on the choice of the MUM-point. Indeed, the unique holomorphic period $\varpi_{l,0}(z)$ at the MUM-point $z=0$ is assigned transcendental weight 0 w.r.t. to this choice of MUM-point. However, $\varpi_{l,0}(z)$ may develop logarithmic behavior close to another MUM-point, and so it would have transcendental weight $w>0$ w.r.t. to this other choice. This is in agreement with the definition of the transcendental weight for elliptic Feynman integrals in ref.~\cite{Broedel:2018qkq}, which requires the choice of a period of the elliptic curve. It may therefore be more appropriate to talk about the \emph{transcendental weight w.r.t. a certain choice of MUM-point}. Clearly, for multiple polylogarithms the transcendental weight obtained in this way is independent of this choice, and we recover the usual definitions from refs.~\cite{Kotikov:2001sc,Kotikov:2002ab,Kotikov:2004er,Kotikov:2007cy}, which do not require such a choice to be made.

\paragraph{The  $l=4$ example.}
\label{para:l=4}

We have already seen that at four loops we have ${\cal M}_{\textrm{4-loop}}\coloneqq\mathbb{P}^1\setminus \{z=0,1/25,1/9,1,\infty\}$. Furthermore, at 
$z=0$ there is a MUM-point, for $z\in \{1/25,1/9,1\}$ there are $C$-points and for
$z=\infty$ there is a $K$-point. From  eq.~\eqref{K-theorycharge}, see refs.~\cite{Bizet:2014uua,Gerhardus:2016iot} for the four-fold case, we get 
\begin{align}
    \frak{T}=(2\pi i)^3 		\begin{pmatrix}
						\frac{-8 \zeta(3)}{(2\pi i)^3} & \frac{12}{24 \cdot 2\pi i} & 0 & \frac{12}{(2\pi i)^3} \\
						\frac{12}{24} & 0 & \frac{-12}{(2\pi i)^2} & 0 \\
						1 & 0 & 0 & 0 \\
						0 & \frac{1}{2\pi i} & 0 & 0 
					\end{pmatrix} \,.
\label{eq:4l_Tfrak}
\end{align}
Moreover, from eq.~\eqref{eq:fullFeynamasymptotic},  see~ref.~\cite{Bonisch:2020qmm}  for more details, we  have
\begin{align}
    \begin{pmatrix}
         \lambda_0^{(4)}  	\\
       	 \vdots		\\
         \lambda_4^{(4)}
    \end{pmatrix}		=	\begin{pmatrix}
           					-450\zeta(4)-80\pi i\zeta(3)\\
           					80\zeta(3)-120\pi i\zeta(2)\\
     					       	180\zeta(2)\\
         					20\pi i\\
 					        -5
 					   \end{pmatrix} \, .
\end{align}
The monodromy group in the $\widehat{\uPi}_l$-basis is then generated by the unimodular matrices
\begin{equation}
\begin{aligned}
	 &{\bf T}_0 = 
		\begin{pmatrix}
			1 & -1 & 3 & 6 & 0	\\
			0 & 1 & -6 & -12 & 0	\\
 			0 & 0 & 1 & 0 & 0	\\
			0 & 0 & 1 & 1 & 0	\\
			-10 & 0 & 0 & 0 & 1
		\end{pmatrix} \,, \qquad 			&&{\bf T}_{1/25} =
										\begin{pmatrix}
											1 & 0 & 0 & 0 & 0	\\
											0 & 1 & 0 & 0 & 0	\\
 											-10 & 0 & 1 & 0 & 0	\\
											0 & 0 & 0 & 1 & 0	\\
											10 & 0 & 0 & 0 & 1
										\end{pmatrix} \,, \\
	&{\bf T}_{1/9} =
		\begin{pmatrix}
 			-9 & -2 & 2 & 0 & 0	\\
 			0 & 1 & 0 & 0 & 0	\\
			-50 & -10 & 11 & 0 & 0\\
			-10 & -2 & 2 & 1 & 0	\\
			0 & 0 & 0 & 0 & 1
		\end{pmatrix} \,,
									&&{\bf T}_1 =
										\begin{pmatrix}
 											-39 & -16 & 16 & -24 & 0	\\
 											60 & 25 & -24 & 36 & 0	\\
 											-100 & -40 & 41 & -60 & 0	\\
 											-40 & -16 & 16 & -23 & 0	\\
											0 & 0 & 0 & 0 & 1
										\end{pmatrix} \,, \\
	& {\bf T}_\infty =
		\begin{pmatrix}
			31 & 17 & -19 & 42 & 0	\\
 			-60 & -35 & 42 & -96 & 0	\\
			60 & 30 & -29 & 60 & 0	\\
 			30 & 16 & -17 & 37 & 0	\\
 			0 & 0 & 0 & 0 & 1
		\end{pmatrix} \,,
\end{aligned}
\label{eq:monodromy_mats_4l}
\end{equation}
which satisfy ${\bf T}_0 \, {\bf T}_{1/25}\,  {\bf T}_{1/9} \, {\bf T}_1 \, {\bf T}_\infty=\mathbbm{1}$.
We notice that the shift at the `conifold'  ${\bf T}_{1/25}$ is $-10$, while for an shrinking $S^3$ 
that is dual to the single logarithmic period at this singular locus one would expect from eq.~\eqref{eq:PicardLefshetz} 
that the shift is by $1$.  The reason is that, as mentioned in section~\ref{ssec:cutsbanana},  the one-parameter 
Calabi-Yau family  for the equal-mass banana integral is obtained by dividing by a symmetry group $G$. In the case 
at hand, $G=\mathbb{Z}/(10\,  \mathbb{Z})$ is divided from the manifold $\hat X^{5,45}$, or rather  from 
eq.~\eqref{eq:singbananatoric}, and one takes care of the projective resolution in a second step, as made explicit
in  ref.~\cite{candelas2019one} (the $k=1$ case). In this way $G$ acts freely on the shrinking $S^3$ and makes it into 
a lense space $L(1,10)$, as explained in ref.~\cite{Gopakumar:1997dv}.

\subsection{Comments on the Relation:  Feynman integrals versus  Calabi-Yau motives}
\label{subs:integralsmotives}

In section~\ref{ssec:cutsbanana} we saw that there are different geometric realization  
of  the periods relevant to the maximals cuts of banana integralsin $D=2$ dimensions. We introduced more 
abstractly the properties of  families of Calabi-Yau motives that can be --
and have been -- studied at least partially in their own right. For example, the Legendre family $\mathcal E_\text{Leg}$ is 
one of the four hypergeometric motives associated to elliptic curves~\cite{MR2500571}; the operator
${\cal L}_{2}$ in eq.~\eqref{eq:diffbanana} corresponds to one of the  Ap\'ery-like families of motives, which were systematically 
studied in ref.~\cite{MR2500571} (listed there as case $C$). The four-loop equal-mass 
banana integral corresponds to a one-parameter Calabi-Yau family of motives, and appears  
in the less complete list described in ref.~\cite{MR3822913} as AESZ 34. However, in 
mathematics there is an even more fundamental perspective on motives which can provide even more detailed information about Feynman 
graphs (see, e.g., refs.~\cite{Bloch:2005bh,Bloch:2008jk,marcollibook,Brown:2015fyf}). We will first give a rough 
definition for this concept and then turn to families of Calabi-Yau motives that are 
defined for higher-dimensional  Fano  varieties and local Calabi-Yau spaces and finally
summarize what the properties of Calabi-Yau motives predict concerning the properties
of Feynman integrals.

\subsubsection{The mathematical perspective on motives}
\label{motives}

The idea of \emph{motives} was proposed by Grothendieck to capture the \emph{cohomological structure of varieties}. We want to briefly explain this idea without going much into detail and giving definitions for all occurring objects. For more details we refer to Milne \cite{milne} and the survey paper by Zagier \cite{ZagierSurvey}. We start by explaining geometric motives. Let $X$ be a smooth projective variety of some dimension $n$. For simplicity, we assume that $X$ is defined over $\mathbb{Q}$ (more generally one could consider any number field). For every integer $0 \leq r \leq 2n$ there are \emph{different cohomology theories} that we can associate with $X$:
\begin{itemize}
	\item Considering the complex points on $X$, we obtain a topological space $X(\mathbb{C})$ which gives rise to the \emph{Betti cohomology group} $H^r(X(\mathbb{C}),\mathbb{Q})$. This is the $\mathbb{Q}$-vector space defined as the homology of the cochain complex tensored with $\mathbb{Q}$.
	\item Using that $X$ is defined over $\mathbb{Q}$ allows us to define the \emph{algebraic de Rham cohomology group} $H^r_{\text{dR}}(X)$. This is the $\mathbb{Q}$-vector space defined as the hypercohomology of the algebraic de Rham complex and was first considered by Grothendieck \cite{grothendieck}. This space has a Hodge filtration
  \begin{align}
    F^rH^r_{\mathrm{dR}}(X) \subseteq F^{r-1}H^r_{\mathrm{dR}}(X) \subseteq ... \subseteq F^0H^r_{\mathrm{dR}}(X) = H^r_{\mathrm{dR}}(X)
  \end{align}
  which, after tensoring with $\mathbb{C}$, gives the familiar Hodge decomposition
  \begin{align}
    H^r_{\text{dR}}(X) \otimes_{\mathbb{Q}} \mathbb{C} = \bigoplus_{p+q=r}H^{p,q}(X(\mathbb{C})) \, .
  \end{align}
	\item Letting $\overline{X}$ be the variety $X$ regarded as a variety over $\overline{\mathbb{Q}}$ one has for any prime $\ell$ the $\ell$-adic cohomology group $H^r(\overline{X},\mathbb{Q}_\ell)$. This is the $\mathbb{Q}_\ell$-vector space defined as the inverse limit of \emph{\'Etale cohomology groups}. The \emph{Galois group} $\text{Gal}(\overline{\mathbb{Q}}/\mathbb{Q})$ naturally acts on $\overline{X}$ and this induces an action on $H^r(\overline{X},\mathbb{Q}_\ell)$.
\end{itemize}

These cohomology groups are related in several ways. For example, the integration of differential forms over closed chains gives a natural isomorphism
\begin{align}
  H^r(X(\mathbb{C}),\mathbb{Q}) \otimes_{\mathbb{Q}} \mathbb{C} \cong H^r_{\mathrm{dR}}(X) \otimes_{\mathbb{Q}} \mathbb{C} \, ,
\end{align}
where the transcendental numbers occuring in this isomorphism are the periods of $X$. Further, there is a natural comparison isomorphism
\begin{align}
  H^r(X(\mathbb{C}),\mathbb{Q}) \otimes_{\mathbb{Q}} \mathbb{Q}_\ell \cong H^r(\overline{X},\mathbb{Q}_\ell) \, .
\end{align}

We now define a \emph{geometric motive} as a rational linear subspace $V \subseteq H^r(X(\mathbb{C}),\mathbb{Q})$ that is compatible with the Hodge decomposition and the action of the Galois group. By this we mean that for the complexification $V_\mathbb{C}= V \otimes_{\mathbb{Q}} \mathbb{C}$,
\begin{align}
  V_{\mathbb{C}} = \bigoplus_{p+q=r} (V_{\mathbb{C}} \cap H^{p,q}(X(\mathbb{C})))\,,
\end{align}
and that for any prime $\ell$ the action of $\text{Gal}(\overline{\mathbb{Q}}/\mathbb{Q})$ on $H^r(X(\mathbb{C}),\mathbb{Q}) \otimes_{\mathbb{Q}} \mathbb{Q}_\ell$ (induced by the comparison isomorphism) can be restricted to $V_{\mathbb{Q}_\ell} = V \otimes_{\mathbb{Q}} \mathbb{Q}_\ell$. We call $r$ the weight and $\text{dim} V$ the rank of the geometric motive $V$. The simplest example of a geometric motive is the complete space $H^r(X(\mathbb{C}),\mathbb{Q})$. Other simple examples are isotypical components of $H^r(X(\mathbb{C}),\mathbb{Q})$ with respect to the natural representation of a suitable group $G$ acting on $X$. \\

More generally, a motive can be thought of as a suitable collection of vector spaces (equipped with a Hodge decomposition and an 
action of $\text{Gal}(\overline{\mathbb{Q}}/\mathbb{Q})$ with additional compatibilities), and it does not have to come from 
any specific variety. Examples of motives that do not refer to specific varieties (although they can be 
realized in this way) are hypergeometric motives and motives associated with modular forms. Considering the motivic structure of Feynman integrals can lead to concrete arithmetic predictions as we summarize next.

\paragraph{Arithmetic Predictions.}

Many standard conjectures from algebraic geometry can be generalized 
to the theory of motives. Examples for this include the \emph{Hodge conjecture} and the \emph{Tate conjecture}. For any rational linear subspace $V \subseteq H^r(X(\mathbb{C}),\mathbb{Q})$ the Hodge conjecture would imply that $V$ is already a geometric motive if it is compatible with the Hodge decomposition while the Tate conjecture would imply that $V$ is already a geometric motive if it is compatible with the Galois action. In particular, it is an interesting question to which extend the equality of Galois representations of motives implies the 
equality of the associated periods, and vice versa. Practically, this relationship can, e.g., be studied by identifying 
Feynman integrals at special points (seen as periods) with special values of $L$-functions 
(which are completely determined by a Galois representation). For Calabi-Yau three-folds such 
relations are studied in refs.~\cite{adek} and \cite{BoenischThesis}. A very explicit and general conjecture in this 
direction is \emph{Deligne's period conjecture} \cite{MR546622}, see also  \cite{MR1852188}. 
This predicts that for a certain class of motives an associated $L$-function value is a rational multiple 
of a specific minor of the period matrix.

\paragraph{Analytic Predictions.}

One  short and simplified message about the relation between the  cut integrals and families of 
Calabi-Yau  manifolds is that the algebraic functions and 
the elliptic periods known to arise from maximal cut computations get 
conjecturally replaced (at least in some cases) by the period integrals of Calabi-Yau $(l-1)$-fold families associated 
to  the (horizontal) middle (co)homology  of rank $h^{l-1}_\text{hor}$,  or more generally, by families of 
Calabi-Yau motives of weight $(l-1)$ and rank $r$. Many of the concepts for controlling the functions analytically 
as well as numerically carry over  from the elliptic integrals to the Calabi-Yau periods defined either by the 
Calabi-Yau-geometry or the Calabi-Yau motive. For the banana graphs  this is not a vague 
conjecture anymore. Thanks to the techniques for Calabi-Yau periods and their extensions, 
the analytic and numerical properties of the cut integrals and  the full banana integral, 
can by now be well controlled at all loop orders in $D=2$ dimensions~\cite{Bonisch:2020qmm}, 
and as explained in sections~\ref{sec:Bananadimreg} and~\ref{subsec:equalmasseps}, also for general $\epsilon$. 

An observation that is trivial from the geometric point of view is the fact that, once one has identified 
the family of Calabi-Yau motives as defined in section \ref{ssec:cutsbanana} and its extensions to the inhomogeous Picard-Fuchs differential ideal (including the general $\epsilon$-dependence, see section \ref{sec:Bananadimreg}), and once one was
able to to fix the boundary conditions, say in the \emph{infrared} (by which we mean the small mass limit, $m_i^2\to 0$) as in eq.~\eqref{eq:fullFeynamasymptotic} 
(or more generall by eq.~\eqref{eq:asymp_generic}),  then the complete analytic properties of the integral 
are fixed. This includes, in  particular,  its behavior in the \emph{ultraviolet} (by which we mean the small momentum limit, $p\to 0$) or in other kinematic  
limits. Some of these physical properties are directly related to prominent geometric properties 
that are related to the global monodromies, e.g.,  the imaginary part of the banana 
integral above threshold is precisely determined by the Seidel-Thomas shift at the conifold that is  
nearest to the MUM point.  

It is also noticable that there is geometrically a prominent real and monodromy invariant quantity 
namely the exponential of the   K\"ahler potential $e^K$ in eq.~\eqref{eq:Kaehlerpotential}, which roughly 
corresponds to combinations of the absolute values of maximal cut integrals.  Remarkably, this 
quantity can be calculated in its own rights by localization techniques as the \emph{sphere partition  
function} of the string propagating in very general Calabi-Yau backgrounds~\cite{Benini:2012ui,Doroud:2012xw,Jockers:2012dk}. 
The latter are described by gauge linear $\sigma$-models, generalizing the  geometries that we have 
encountered here, namely complete intersections in toric varieties, to determinental and more general 
embeddings into Grassmannians, flag manifolds and more general ambient spaces. 
The periods on the other hand have been realized as a splitting of the sphere partition 
function into hemisphere partitions functions  with a complete set of boundary 
conditions~\cite{Hori:2013ika}.  

As it is clear from the points $7, 8$ and $9$ in table~\ref{tabledictionary} that from the geometric point of view the 
extension of the homogeneous Calabi-Yau differential ideal 
to the inhomogeneous differential equations  corresponds to 
performing a chain integral in relative cohomology  of the ambient space 
rather than an integral over closed cycles in the Calabi-Yau manifold. The corresponding 
extension of the Calabi-Yau operator is  conceptually very similar to the 
calculation of open string disk amplitudes ending on special Lagrangians in the Calabi-Yau manifold, 
as it has been pioneered for non-compact Calabi-Yau spaces in refs.~\cite{Aganagic:2000gs,Aganagic:2001nx} 
for non-compact toric special Lagrangians and for compact Calabi-Yau three-folds  for the 
Walcher special Lagrangians in ref.~\cite{Walcher:2006rs}.

In section~\ref{sec:physics_consquences} we will summarize the consequences of the mathematics of Calabi-Yau motives for Feynman integrals.
Before we do this, however, we make a comment of mathematical nature, which is interesting in its own right, but which will also play a role in the discussion in section~\ref{sec:physics_consquences}.
 
\paragraph{More general geometrical realizations of families of Calabi-Yau motives.}
We have seen in section~\ref{subsec:CYmanifolds} that, almost by definition, Calabi-Yau varieties have a vanishing first Chern class, $c_1=0$.
However, Calabi-Yau motives can  appear in the middle cohomology of manifolds 
with positive Chern class, $c_1>0$, i.e., of non-Calabi-Yau spaces in the strict sense of the definition 
given in section \ref{subsec:CYmanifolds}. A first example was discussed in ref.~\cite{Candelas:1993nd} 
in the context of a case that seems to contradict the mirror symmetry hypothesis.  The argument is as follows: There is 
an example  of a  Calabi-Yau three-fold $M_3^0$ with no complex structure  deformations (now many examples 
of these so-called `rigid' Calabi-Yau spaces are known) and $84$ elements in $H^{1,1}(M_3^0,\mathbb{Z})$. 
By the remarks in section \ref{subsec:CYmanifolds},   having no complex structure  deformations  
implies $H^{2,1}(M_3^0,\mathbb{Z})=0$. Inspecting eq.~\eqref{eq:mirror}, one concludes that it cannot have 
a mirror Calabi-Yau space $W_3^0$ because the latter should have $H^{1,1}(W_3^0)=0$, which means  that 
it  would not be K\"ahler. In ref.~\cite{Candelas:1993nd}  a Calabi-Yau motive describing 
the $84$ K\"ahler deformations as a variation of a  Hodge structure in terms of periods was 
nevertheless found  in the middle cohomology  $H^7(\tilde W_7)$, where  the seven-fold $\tilde W_7$  is 
defined  as the vanishing of a \emph{smooth} cubic $P=0$ in $\mathbb{P}^8$. One can show that  $F^7H^7(\tilde W_7)= 
F^6H^7(\tilde W_7)=0$ while  $F^5H^7(\tilde W_7)=\mathbb{C}$. Closer inspections show that $H^7(\tilde W_7)$
has a Hodge decomposition $0,0,1,84,84,1,0,0$. Moreover, the unique $(5,2)$-form can be expressed as $\oint_{P=0} \frac{\mu}{P}$ 
with $\mu$ the standard measure on $\mathbb{P}^8$  defined in eq.~\eqref{measure}. Note that this expression is scale-invariant 
and the Griffiths reduction formula can be applied to get the  Picard-Fuchs differential ideal. This was done 
at least on a symmetric slice in ref.~\cite{Candelas:1993nd}. Similar more involved  realizations of  Calabi-Yau motives (in fact elliptic curve motives) 
were discussed  for the kite-- and the double box Feynman integral in ref.~\cite{Bloch:2021hzs}.  Here the elliptic curve 
motive was identified in manifolds $M_3$ and $M_5$, respectively, which were defined  as the vanishing of \emph{singular cubics} in $\mathbb{P}^4$ and  
$\mathbb{P}^6$, respectively. In their middle cohomology an elliptic curve motive was found since $H^3(M_3)$ had a decomposition $0,1,1,0$ and 
$H^5(M_5)$ had $0,0,1,1,0,0\ $, respectively. It was explained  in ref.~\cite{Bloch:2021hzs} for the latter case 
how to perform  the blow up that resolves the singularities and how to identify the elliptic motives  at an abstract level.

\subsection{Summary of the most important structures relevant for Feynman integrals}
\label{sec:physics_consquences}

We conclude this section with a summary of the mathematical structures related to Calabi-Yau motives reviewed in this section and on how they are related to Feynman integrals. The goal is to provide at one glance the lessons learned from Calabi-Yau motives for banana integrals, and to speculate on how they may generalize to more general Feynman integrals.

\paragraph{Landman's theorem and logarithmic divergences of maximal cuts.}

As a consequence of Landman's theorem (cf.~eq.~\eqref{eq:Landman}), the periods of a family of algebraic varieties of  dimension $n$ cannot develop logarithmic divergences worse than $\log(\Delta)^n$ as $\Delta\to0$. Since the maximal cuts of Feynman integrals are expected to be periods of families of algebraic varieties parametrized by the external kinematics, the logarithmic divergences of the maximal cuts should contain information about the dimension of the family of algebraic variety (rather, the rank of the motive, see below). More precisely, if a maximal cut behaves like $\log(\Delta)^m$ in some kinematic limit $\Delta\to0$, then the dimension of the algebraic variety or the rank of the motive cannot be less than $m$. We stress that this statement is generic and applies to any algebraic variety or motive, independently if it  is of Calabi-Yau type.

\paragraph{Calabi-Yau motives for Feynman integrals.}

By now there is compelling evidence that the geometry of Calabi-Yau manifolds plays an important role for higher-loop Feynman integrals, and we have several infinite families of $l$-loop Feynman integrals associated to Calabi-Yau $(l-1)$-folds, cf., e.g., refs.~\cite{Bloch:2014qca,MR3780269,Bonisch:2020qmm,Klemm:2019dbm,Bourjaily:2018ycu,Bourjaily:2018yfy,Bourjaily:2019hmc}. This begs two immediate questions:
\begin{enumerate}[label=(\roman*)]
	\item Is it possible to assign a unique geometric object to a given Feynman graph?
	\item Are all Feynman integrals associated to Calabi-Yau manifolds (or suitable generalizations thereof), or are there Feynman graphs that lead to geometric objects that require vastly different geometries?
\end{enumerate}
Let us start by commenting on the first question. We have seen in section~\ref{ssec:cutsbanana} that there are two Calabi-Yau $(l-1)$-folds that we can associate to the $l$-loop banana integrals, namely the variety $M_{l-1}^{\textrm{HS}}$ defined by the vanishing of the second Symanzik polynomial (cf.~eq.~\eqref{eq:HSgeom}) and the variety $M_{l-1}^{\textrm{CI}}$ defined as a complete intersection in $\mathbb{P}_{l+1}$ (cf.~eq.~\eqref{MCicy}). It is known that these varieties are in general not diffeomorphic (e.g., for $l=4$ and $l=5$ they have different Euler characterictics). Hence, we can associate (at least) two non-diffeomorphic manifolds to a given banana graph. It thus seems that it is in general not possible to associate a uniquely-defined algebraic variety to a given Feynman graph. Note that for $l=2$ the Calabi-Yau one-fold $M_{1}^{\textrm{HS}}$ and $M_{1}^{\textrm{CI}}$ define the same elliptic curve, so there may be a unique variety attached to the two-loop sunrise integral, along the lines of the findings of ref.~\cite{Frellesvig:2021vdl}. Our arguments indicate a potential breakdown at higher loops of one being able to attach a unique algebraic variety to a given Feynman integral.

Instead of attaching a family of varieties to a given Feynman graph, it seems more appropriate to consider the \emph{family of motives} one can attach to it. Loosely speaking, one can think of a motive as a suitable linear subspace inside the (co)homology groups compatible with the action of the Galois group (see section~\ref{motives}). Two non-diffeomorphic (families of) varieties may define the same (family of) motives. This is indeed the case for the banana graphs: $M_{l-1}^{\textrm{HS}}$ and  $M_{l-1}^{\textrm{CI}}$ contain the same motive as part of their cohomology, even though they are distinct as manifolds, cf.~eq.~\eqref{eq:simplemirror}. We thus conclude that if one can indeed identify a unique geometric object for each Feynman graph, it should be a motive rather than an algebraic variety. 

From this perspective, the second question can now be rephrased in the following way: Are all Feynman integrals associated to (families of) Calabi-Yau motives, or do other motives make their appearance? In refs.~\cite{Huang:2013kh,Hauenstein:2014mda} several families of maximal cuts were analyzed that lead to Riemann surfaces of genus $g>1$. The latter are definitely not Calabi-Yau manifolds in the strict sense of section~\ref{subsec:CYmanifolds} (because $n=1$ and $h^{1,0}=g>1$). However, there are examples of Calabi-Yau {motives} that describe higher-genus surfaces, cf. the discussion earlier in this section, in particular refs.~\cite{MR1467889,Hori:2000kt,MR3636672} mentioned there. 
That discussion implies that it is insufficient to  look at the geometry, say defined by the ${\cal F}$-polynomial of a Feynman 
graph, and conclude  from the fact that it represents a higher-genus geometry with  $c_1<0$ that the Feynman integral   
cannot  be related to a Calabi-Yau motive. In fact, in local mirror symmetry, the  limits of Calabi-Yau period 
motives appear  frequently as motives of higher curves that can be constructed  in this limit~\cite{MR1467889,Hori:2000kt}. 
The main point is, similar to the $c_1>0$ discussed  above, that a non-standard, in this case meromorphic differential, is used to define the motive, 
and this fact cannot be detected by just looking at the curve. Explicit examples of local genus-two Calabi-Yau Picard-Fuchs differential ideals, which  
are limits of Calabi-Yau motives, have been worked out in ref.~\cite{MR3636672}.    
It would be interesting to study the examples in refs.~\cite{Huang:2013kh,Hauenstein:2014mda} in more detail. If they define Calabi-Yau motives, all known examples of Feynman integrals would be associated to Calabi-Yau motives, and it would then be tantalizing to speculate that this may be a general feature.

\paragraph{Maximal cuts and the Frobenius-basis.}

Maximal cuts play an important role in the study of scattering amplitudes and Feynman integrals. They are defined by integrating the differential form defining a Feynman integral over a contour that encircles all the poles of the propagators. Constructing an explicit basis for the maximal cut contours can be a complicated task, see, e.g., refs.~\cite{CaronHuot:2012ab,Primo:2016ebd,Primo:2017ipr,Bosma:2017ens} for concrete examples where a basis of maximal cut contours was constructed. 

However, for applications it may not be needed to construct a basis of cycles explicitly. In particular, Calabi-Yau varieties have generically a MUM-point (though exceptions to this rule are known to exist, see ref.~\cite{MR3951103}), and close to a MUM-point it is possible to construct a basis of periods, called a Frobenius basis, characterized by an increasing hierarchy of logarithms, cf. the discussion around eq.~\eqref{eq:frobenius}. The Frobenius basis provides a basis for the solution space of the Picard-Fuchs differential ideal (or, in physics parlance, the set of homogeneous differential equations for the Feynman integral), and thus also for the maximal cut integrals. However, the Frobenius basis is \emph{not an integral basis}, by which we mean a basis for the periods/maximal cuts computed by integrating over cycles from integral homology, i.e., linear combinations of cycles with integer coefficients. Instead, as we have explained in section~\ref{sssec:boundary} and illustrated on the four-loop example in section~\ref{para:4lequalmass}, the change of basis from the Frobenius basis to the integral basis involves a rotation matrix whose entries are transcendental numbers. In applications to Feynman integrals, however, the concrete form of this rotation may not be required (the important point is to have a basis). The advantage of the Frobenius basis lies in the fact that, if a MUM-point is identified, it can be constructed in a much more straightforward way than the integral basis, whose construction requires a detailed knowledge of a basis of geometric cycles. In section~\ref{subsec:non-max-cuts} we comment on how some of these concepts may generalize to dimensional regularization and to non-maximal cuts.

\paragraph{Quadratic relations among cuts.}

When solving differential equations for Feynman integrals, it is important to know the Wronskian of the system, or, equivalently, a matrix of maximal cuts in integer dimensions, cf. section~\ref{sec:deqs}. Understanding the entries of the Wronskian and their relations is thus very important for applications. In section~\ref{sssect:Griffithstransversality} we have explained that as a consequence of the Griffiths transversality condition in eq.~\eqref{eq:transversality}, the entries of the Wronskian for Calabi-Yau motives (i.e., the maximal cuts) satisfy a collection of quadratic relations, cf. eq.~\eqref{eq:bryantgriffith}. The right-hand side of eq.~\eqref{eq:bryantgriffith} is a rational function that can be calculated explicitly and is known as the Yukawa coupling. In particular, for one-parameter families of Calabi-Yau motives the Yukawa coupling satisfies the simple differential equation given in eq.~\eqref{eq:Yukawa_equation}. We will illustrate these quadratic relations and their application to solve differential equations for the equal-mass banana integrals in section~\ref{sec:banint}.

\paragraph{Identifying one-parameter families of Calabi-Yau motives.}

In many applications one-parameter families of Feynman integrals (which depend on two kinematic scales) play an important role. It is therefore an important question how to determine, whether the homogeneous Picard-Fuchs operator annihilating the maximal cuts, describes a family of Calabi-Yau motives. In section~\ref{sssect:Griffithstransversality} we reviewed a necessary condition, based on self-adjointness, for an operator to describe a family of Calabi-Yau motives, cf. eq.~\eqref{eq:selfadjointness}. This criterion involves the Yukawa coupling, which can itself be determined from the differential operator and the differential equation~\eqref{eq:Yukawa_equation}. If this necessary condition is satisfied, we expect that many of the properties reviewed in this paper apply. Therefore, we believe that this criterion will be an important tool in the future to identify two-scale Feynman integrals that can be solved using the techniques we have described.

\paragraph{Calabi-Yau motives and the transcendental weight of Feynman integrals.}

A mysterious property of Feynman integrals and scattering amplitudes is their transcendental weight. In particular, in certain special quantum field theories like the $\mathcal{N}=4$ supersymmetric Yang-Mills theory in four space-time dimensions, it was observed that, whenever an $l$-loop scattering amplitude can be expressed in terms of multiple polylogarithms, then it only involves multiple polylogarithms of transcendental weight $2l$. Clearly, such a property calls for a deeper theoretical explanation. An important question in this context is how the notion of transcendental weight generalizes to scattering amplitudes that cannot be evaluated in terms of multiple polylogarithms, but require functions associated to more complicated geometries. A proposal for how to generalize the notation of transcendental weight to include elliptic polylogarithms and iterated integrals of modular forms was put forward in ref.~\cite{Broedel:2018qkq}. The proposal of ref.~\cite{Broedel:2018qkq} predicts, in particular, the expected transcendental weight of the two- and three-loop banana integrals in $D=2$ dimensions (which are expected to have transcendental weights two and three respectively; the one-loop banana evaluates to a logarithm and has transcendental weight one), and of the two-loop elliptic double-box integral, that appears in the planar $\mathcal{N}=4$ Super Yang-Mills theory~\cite{Kristensson:2021ani}. It is thus natural to expect that in general $l$-loop banana in $D=2$ dimensions or train-track integrals in $D=4$ dimensions have uniform transcendental weight $l$ and $2l$, respectively. If and how the notion of transcendental weight generalizes concretely to other geometries, in particular of Calabi-Yau type, is still unexplored, mostly due to a lack of explicit results for such integrals.

In section~\ref{sssec:boundary} we have argued that the structures associated with the boundaries of the moduli space of a family of Calabi-Yau varieties, in particular its limiting mixed Hodge structure, allow one to motivate that the $l$-loop banana integral in $d=2$ dimensions has transcendental weight $l$, as expected. In a first step, we see that the monodromy weight filtration allows one to associate a transcendental weight with every period (i.e., maximal cut) of the Calabi-Yau variety. Loosely speaking, at the MUM-point the motive degenerates to a mixed-Tate motive, and the leading behavior of the periods is described by logarithms and zeta values, whose transcendental weight is well understood. The degeneration of the periods at the MUM-point then determines their associated transcendental weight. While it seems that this construction only allows one to define the transcendental weight of the periods/maximal cuts, we have argued at the end of section~\ref{subsection: MirrorSymmetry} how we can uplift the definition of the transcendental weight to include the full $l$-loop banana integrals in $D=2$ dimensions. The idea is that, after changing basis from the Frobenius basis to an integral basis, the monodromy matrices take integer values (cf. eqs.~\eqref{eq:4l_Tfrak} and~\eqref{eq:monodromy_mats_4l}), and we see that under the action of the monodromy group the full Feynman integral gets admixtures from Calabi-Yau periods of transcendental weight $l-1$ multiplied by an additional factor of $\pi$ increasing their total transcendental weight by one. Therefore, the only way to define the transcendental weight of the full $l$-loop banana integral in $D=2$ dimensions that is natural and consistent with the monodromy is therefore to define its transcendental weight as $l$, in agreement with the known results for $l\le 3$. It would be interesting to explore if similar considerations can be applied to the $l$-loop train-track integrals in $D=4$ dimensions, which are expected to have uniform transcendental weight $2l$, since they define scattering amplitudes in the planar $\mathcal{N}=4$ supersymmetric Yang-Mills theory. It was recently confirmed that the two-loop train track graph has the expected uniform transcendental weight~\cite{Kristensson:2021ani}. Understanding how to extend the approach of section~\ref{sssec:boundary} to higher-loop train track integrals may shed light on the mysterious uniform transcendental weight property of  $\mathcal{N}=4$ supersymmetric Yang-Mills beyond multiple polylogarithms.

Let us conclude by mentioning that the concept of uniform transcendental weight is closely related to the concept of \emph{pure functions} that has appeared in physics, cf.~ref.~\cite{ArkaniHamed:2010gh} for the definition for multiple polylogarithms, and ref.~\cite{Broedel:2018qkq} for the extensions for elliptic curves and iterated integrals of modular forms. We will comment on pure functions for higher-loop banana integrals at the end of the next section.

\paragraph{Modularity and families of elliptic curves and one-parameter families of K3 surfaces.}

Most concepts mentioned so far generalize very nicely from elliptic motives to general Calabi-Yau motives. It 
is important to stress that  some concepts are \emph{not expected} to generalize  straightforwardly. 
Most notably,  for families ${\cal E}$ of elliptic curves and families $\mathcal{K}3$ of \emph{algebraic} K3 surfaces, there is a uniformizing map 
to a symmetric space   ${\cal M}_\text{cs}({\cal E}) \cong \Gamma_{\cal E}\backslash \mathrm{SU}(1,1)/\mathrm{U}(1)$  and ${\cal M}_\text{cs}({\cal K}_3)\cong \Gamma_{{\cal K}3}\backslash \mathrm{SO}(2,20-\rho)/(\mathrm{SO}(2)\times \mathrm{SO}(20-\rho))$ respectively, where $\Gamma_{\cal E}$ and $\Gamma_{{\cal K}3}$ are discrete groups determined from the monodromy of the family.   For elliptic curves,  $H=\mathrm{SU}(1,1)/\mathrm{U}(1)$ is the complex upper half plane $H$,  and the monodromy group  $\Gamma_{\cal E}\subset {\rm SL}(2,\mathbb{Z})$ is a congruence subgroup. For 
K3 surfaces, $(1,\rho-1)$ denotes the signature of the Picard-Lattice $\Gamma_{1,{\rho-1}}$ in ${\bf \Sigma}_{\text{K3}}$  
whose rank $\rho\ge 2$ depends on the algebraic embedding  of the  K3 surface, and $\Gamma_{{\cal K}3}\in \mathrm{SO}(2,20-\rho,\mathbb{Z})$. 
The occurance  of modular--  or automorphic symmetries, respectively, has the consequence that 
the elliptic integrals can be written as modular forms,  and the periods of the K3 as more general automorphic forms, 
which sometimes have a simple relation to modular forms. For instance, for complex one-parameter K3 
surfaces $\rho=19$ always. This is related to the fact that the corresponding  third-oder Picard-Fuchs operator 
can be always written as a  symmetric square of the second-order operator of  elliptic curves, cf., e.g., ref.~\cite{Doran:1998hm}. For Calabi-Yau $n$-folds 
with $n\geq 3$, which have the full $\mathrm{SU}(n)$ holonomy, ${\cal M}_\text{cs}$ \emph{cannot} be mapped to a symmetric 
space of the form above. So its is expected that the theory of periods and therefore of Feynman integrals, can become  
qualitatively very different for loop orders $l\ge 4$. As summarized in ref.~\cite{MR3822913}, integral structures like the integrality of the coefficients of  the unique  
holomorphic period at the MUM-point, the integrality in the coefficients of the mirror map and most 
noticeable of the instanton expansion much studied for Calabi-Yau three-folds, but also established 
for Calabi-Yau four-folds, are clearly harbingers of the interesting arithmetic properties of Calabi-Yau motives 
that might become an important  feature of Feynman integrals at higher-loop order, beyond the 
examples of the banana integrals.

\section{Equal-mass banana Feynman integrals in $D=2$ dimensions}
\label{sec:banint}

The goal of this section is to show how one can combine ideas from geometry and Calabi-Yau manifolds with the method of computing Feynman integrals using Gauss-Manin systems. We have already seen in section~\ref{subsec:f-odeqs} that the Wronskian and its inverse play important roles in solving the first-order Gauss-Manin systems in terms of iterated integrals. The Wronskian can be identified with the matrix of maximal cuts in $D=2$ space-time dimensions. We therefore start this section by continuing our analysis of what the geometry of Calabi-Yau manifolds can teach about these maximal cuts.

\subsection{Maximal cuts of $l$-loop banana integrals in $D=2$ dimensions}

In this section we use insights from the theory of Calabi-Yau manifolds to give an explicit description of the maximal cuts of equal-mass banana integrals at an arbitrary number of loops. Our starting point is the observation that the maximal cuts of $J_{l,1}(z;0)$ compute the periods of a Calabi-Yau $(l-1)$-fold, cf.~eq.~\eqref{eq:max_cut_feyn_par} and refs.~\cite{Bloch:2014qca,Bloch:2016izu,Vanhove:2018mto,Klemm:2019dbm,Bonisch:2020qmm}. Hence, while it may be hard to find an explicit basis of maximal cut contours $\Gamma_j$ to evaluate the cut integrals in momentum space (see section~\ref{subsec:f-odeqs}), we can find a basis for the maximal cuts of $J_{l,1}(z;0)$: they form a basis for the periods of the Calabi-Yau $(l-1)$-fold. In section~\ref{sec:CY} we have argued that the moduli space of every Calabi-Yau manifold contains a MUM-point, which was identified with the large momentum limit, $z\to0$. It is thus natural to consider as basis of periods the Frobenius basis $\uPi_l(z)$ at the MUM-point defined in eq.~\eqref{eq:Pi_vec}. As a consequence, all maximal cuts of $J_{l,1}(z;0)$ are of the form
\beq\label{eq:max_cut_decomp}
	J_{l,1}^{\Gamma}(z) 		= 	\ualpha^{\Gamma} \cdot \uPi_l(z) = \sum_{j=1}^{l}\alpha_j^{\Gamma}\,\varpi_{l,j-1}(z) \qquad\text{for } \ualpha^{\Gamma}\in\mathbb{C}^{l}\,.
\eeq
At this point we make a comment: As already mentioned in section~\ref{sssec:boundary}, the elements of the Frobenius basis $\varpi_{l,j}(z)$ are not obtained by integrating the $(l-1,0)$-form over a cycle defined in integral homology. The maximal cut contours, however, are defined with integer coefficients (they are `genuine geometric objects'). In section~\ref{sssec:boundary} we have discussed how to work out the change of basis (see also the four-loop example at the end of section~\ref{subsection: MirrorSymmetry}). The advantage of working with the Frobenius basis is that its structure is well understood and we have efficient methods to determine it (especially close to the MUM-point $z=0$).

The element of the Frobenius basis can be evaluated for all real values of $z$, as explained in section~\ref{subsec:PF} and ref.~\cite{Bonisch:2020qmm}. Note that $J_{l,1}^{\Gamma}(z)$ is always annihilated by the differential operator $\cL_l$ defined in section~\ref{subsec:PF}. In the following it will be useful to write the Picard-Fuchs operator in terms of usual derivatives
\beq
\cL_l =\sum_{k=0}^lB_{l,k}(z)\partial_z^{k}\,,
\eeq
where the $B_{l,k}(z)$ are polynomials. The Wronskian can then be chosen as: 
\beq
{\bf W}_l(z) \coloneqq 
					\begin{pmatrix} 		\varpi_{l,0}(z) 			& \varpi_{l,1}(z) 	 			& \hdots	& \varpi_{l,l-1}(z)	 			\\
									\partial_z\varpi_{l,0}(z) 		& \partial_z\varpi_{l,1}(z) 		& \hdots	& \partial_z\varpi_{l,l-1}(z)		\\
									 \vdots  				& \vdots					& 		& \vdots					\\
									\partial_z^{l-1}\varpi_{l,0}(z) 	& \partial_z^{l-1}\varpi_{l,1}(z) 	& \hdots	& \partial_z^{l-1}\varpi_{l,l-1}	(z)
					\end{pmatrix}\,,
\label{defwronskian}
\eeq
and we have $\uJ_l^{\Gamma}(z) = {\bf W}_l(z) \ualpha^{\Gamma}$, cf.~eq.~\eqref{eq:max_cut_Gamma}. The determinant of the Wronskian is
\beq
\det{\bf W}_l(z) = \left((-1)^lz^{-3}\,\textrm{Disc}(\cL_l)\right)^{-l/2} = \left(z^{l-3}\prod_{k\in \Delta^{(l)}}(1-kz)\right)^{-l/2}\,,
\label{eq:WronskiDet}
\eeq
where $\text{Disc}(\mathcal L_l)$ is given in eq.~\eqref{eq:discriminant}. To prove this identity, first note that from $\partial_z {\bf W}_l(z) = {\bf B}_{l,0}(z){\bf W}_l(z)$ it follows that 
\beq
\partial_z\det{\bf W}_l(z) = \mathrm{Tr}\,{\bf B}_{l,0}(z)\,\det{\bf W}_l(z)\,,
\label{detw}
\eeq
where for our choice of basis 
${\bf B}_{l,0}(z)$ has the form
\begin{equation}
	{\bf B}_{l,0}(z)	=	\begin{pmatrix} 		0 & 1 & 0 &\hdots&0 \\
								0 & 0 & 1 & \hdots & 0\\
								 \vdots  & \vdots & \ddots& &\vdots\\
								0 & 0 &0&\hdots&1 	\\
								-\frac{B_{l,0}(z)}{B_{l,l}(z)} &-\frac{B_{l,1}(z)}{B_{l,l}(z)} &-\frac{B_{l,2}(z)}{B_{l,l}(z)}& \hdots & -\frac{B_{l,l-1}(z)}{B_{l,l}(z)}
				\end{pmatrix}\,.
\label{defB}
\end{equation}
Computing the operator $\mathcal{L}_l$ with the procedure explained in ref.~\cite{Bonisch:2020qmm}, one finds that
\beq\bsp
    B_{l,l}(z)&\,= \text{Disc}(\cL_{l}) = (-z)^l \prod_{k \in \Delta^{(l)}}(1-kz) \,,\\
    B_{l,l-1}(z) &\,= \frac{l}{2}\left(\partial_z B_{l,l}(z)-\frac{3}{z}B_{l,l}(z)\right)\,.
\esp\eeq
This gives 
\begin{align}
   \mathrm{Tr}\,{\bf B}_{l,0}(z)	=	-\frac{B_{l,l-1}(z)}{B_{l,l}(z)}	=	\frac{l}{2}\left( \frac{3-l}{z}+\sum_{k\in\Delta^{(l)}} \frac{k}{1-kz} \right)\,.
   \label{eq:TraceConnection}
\end{align}
It is then easy to see that the determinant is proportional to the right-hand side of eq.~\eqref{eq:WronskiDet}, and the constant of proportionality is fixed by our normalization of the Frobenius basis.

As explained in section~\ref{subsec:f-odeqs}, the Wronskian and its inverse play important roles when solving the Gauss-Manin system satisfied by the Feynman integrals. The elements of the inverse Wronskian are $(l-1)\times(l-1)$ minors of ${\bf W}_l(z)$, i.e., they are homegeneous polynomials of degree $l-1$ in the entries of ${\bf W}_l(z)$. It seems from eq.~\eqref{eq:B_tilde_N_tilde} that the integrand of the iterated integrals will involve polynomials of degree $(l-1)$, and beyond the leading order in $\eps$ even of degree $l$. In the following we show that the quadratic relations from Griffiths transversality from section~\ref{sssect:Griffithstransversality} allow one to reduce this degree considerably.

\subsection{Bilinear relations among maximal cuts from Griffiths transversality}
\label{sec:quad_rel_max_cut}

In section~\ref{sec:CY}, in particular in eq. \eqref{eq:bryantgriffith}, we have explained that the Calabi-Yau periods and their derivatives satisfy bilinear relations as a result of Griffiths transversality. Since the periods and their derivatives are nothing but the maximal cuts for the master integrals in eq.~\eqref{eq:equal_mass_MIs}, Griffiths transversality leads to a set of bilinear relations among maximal cuts. The goal of this section is to describe these relations in the equal-mass case and to write them down explicitly for the first few loop orders, and to highlight some of their consequences for Feynman integrals.

Recall from section~\ref{sec:CY} that there is a bilinear pairing -- the intersection pairing ${\bf \Sigma}$ -- on the entries of ${\bf W}_l(z)$. If we work with the Frobenius basis $\uPi_l(z)$, Griffiths transversality in eq.~\eqref{eq:bryantgriffith} takes the form
\begin{align}
    {\uPi}_{l}(z)^T\, {\bf \Sigma}_l\, \partial_z^k{\uPi}_{l}(z) = \begin{cases} 0 & , \ k<l-1\,, \\ C_{l-1}(z) & ,\ k=l-1\,,\end{cases}
    \label{eq:GriffithsExplicit}
\end{align}
where the intersection matrix ${\bf \Sigma}_l$ is given by
\begin{align}
    {\bf \Sigma}_l &= \left( \begin{array}{cccc}
         & & & 1  \\
         & & -1 & \\
         & 1 & & \\
         \iddots & & &
    \end{array} \right) \, .
\end{align}
We now use this to derive a differential equation satisfied by $C_l(z)$. First note that for any $0 \leq k \leq l-1$ we have
\begin{equation}
\begin{aligned}
    C_l(z)&={\uPi}_{l}(z)^T\, {\bf \Sigma}_l\, \partial_z^{l-1}{\uPi}_{l}(z) = \partial_z (\underbrace{{\uPi}_{l}(z)^T\, {\bf \Sigma}_l\, \partial_z^{l-2}{\uPi}_{l}(z)}_{=0})-\partial_z{\uPi}_{l}(z)^T\, {\bf \Sigma}_l\, \partial_z^{l-2}{\uPi}_{l}(z)
    \\ &= ...= (-1)^k \partial_z^k {\uPi}_{l}(z)^T\, {\bf \Sigma}_l\, \partial_z^{l-1-k}{\uPi}_{l}(z)\,,
    \label{eq:GriffithsAntidiagonal}
\end{aligned}
\end{equation}
and thus 
\begin{align}
    \partial_z^k {\uPi}_{l}(z)^T\, {\bf \Sigma}_l\, \partial_z^{l-1-k}{\uPi}_{l}(z) = (-1)^k C_l(k) \, .
\end{align}
Using this successively, we get
\begin{equation}
\begin{aligned}
    {\uPi}_{l}(z)^T\, {\bf \Sigma}_l\, \partial_z^l{\uPi}_{l}(z) &= \partial_z C_l(z)- \partial_z {\uPi}_{l}(z)^T\, {\bf \Sigma}_l\, \partial_z^{l-1}{\uPi}_{l}(z)\\
    &=2 \partial_z C_l(z)+\partial_z^2 {\uPi}_{l}(z)^T\, {\bf \Sigma}_l\, \partial_z^{l-2}{\uPi}_{l}(z)\\
    &=...=l \partial_z C_l(z) +(-1)^l \partial_z^l{\uPi}_{l}(z)^T\, {\bf \Sigma}_l\, {\uPi}_{l}(z)\,.
\end{aligned}
\end{equation}
The $(-1)^{l+1}$-symmetry of ${\bf \Sigma}_l$ then implies 
\begin{align}
    {\uPi}_{l}(z)^T\, {\bf \Sigma}_l\, \partial_z^l{\uPi}_{l}(z) = \frac{l}{2}\partial_z C_l(z) \, .
\end{align}
Finally using this together with the Picard-Fuchs equation we find that
\begin{equation}
\begin{aligned}
    0&={\uPi}_{l}(z)^T\, {\bf \Sigma}_l\, \mathcal{L}_l{\uPi}_{l}(z)\\
    &=B_{l,l-1}(z){\uPi}_{l}(z)^T\, {\bf \Sigma}_l\, \partial_z^{l-1}{\uPi}_{l}(z)+B_{l,l}(z){\uPi}_{l}(z)^T\, {\bf \Sigma}_l\, \partial_z^l{\uPi}_{l}(z)\\
    &=B_{l,l-1}(z) C_l(z)+\frac{l}{2}B_{l,l}(z) \partial_z C_l(z)\,,
\end{aligned}
\end{equation}
and so we have (cf. eq.~\eqref{eq:Yukawa_equation}):
\begin{align}
    \partial_z C_l(z) +\frac{2}{l} \frac{B_{l,l-1}(z)}{B_{l,l}(z)}C_l(z)=0 \, .
\end{align}
Comparing with eq.~\eqref{eq:TraceConnection} and fixing the constant of proportionality from our normalization of the Frobenius basis (cf. eq.~\eqref{eq:frobenius}), we can find an explicit expression for the Yukawa coupling:
\begin{align}
    C_l(z)=\frac{1}{z^{l-3}\prod_{k\in\Delta^{(l)}}(1-kz)}\,.
\end{align}

Equation~\eqref{eq:GriffithsExplicit} can be interpreted as a collection of bilinear relations between the maximal cuts of $J_{l,1}(z;0)$ and $J_{l,k}(z;0)$ for $k>1$. We obtain more relations by differentiation, e.g.,
\beq\bsp
\partial_zC_l(z) &\,= \partial_z{\uPi}_{l}(z)^T\, {\bf \Sigma}_l\, \partial_z^{l-1}{\uPi}_{l}(z) + {\uPi}_{l}(z)^T\, {\bf \Sigma}_l\, \partial_z^{l}{\uPi}_{l}(z)\\
&\,= \partial_z{\uPi}_{l}(z)^T\, {\bf \Sigma}_l\, \partial_z^{l-1}{\uPi}_{l}(z) - \sum_{j=0}^{l-1}\frac{B_{l,j}(z)}{B_{l,l}(z)} {\uPi}_{l}(z)^T\, {\bf \Sigma}_l\, \partial_z^{j}{\uPi}_{l}(z)\\
&\,= \partial_z{\uPi}_{l}(z)^T\, {\bf \Sigma}_l\, \partial_z^{l-1}{\uPi}_{l}(z) - \frac{B_{l,l-1}(z)}{B_{l,l}(z)} C_l(z)\,,
\esp\eeq
where in the second step we used the fact that $\cL_l\uPi_l(z)=0$ and in the third step we used eq.~\eqref{eq:GriffithsExplicit}. We can proceed in this way to compute all the entries of the matrix
\begin{align}
  \mathbf{Z}_l(z) = \left( \begin{array}{ccc}
       {\uPi}_{l}(z)^T\,{\bf \Sigma}_l\, {\uPi}_{l}(z) & \cdots & {\uPi}_{l}(z)^T\,{\bf \Sigma}_l\, \partial_z^{l-1}{\uPi}_{l}(z) \\
       \vdots & \ddots & \vdots \\
       \partial_z^{l-1}{\uPi}_{l}(z)^T\,{\bf \Sigma}_l\, {\uPi}_{l}(z) & \cdots & \partial_z^{l-1}{\uPi}_{l}(z)^T\,{\bf \Sigma}_l\, \partial_z^{l-1}{\uPi}_{l}(z) \\
  \end{array} \right) \,.
  \label{eq:ZMatrix}
\end{align}
It follows from the previous considerations that all entries are rational functions. Note that ${\bf Z}_l(z)^T=(-1)^{l+1}\,{\bf Z}_l(z)$. 

Let us work out these relations, or equivalently the matrix ${\bf Z}_l(z)$, for the first few loop orders. It turns out that the matrix ${\bf Z}_l(z)^{-1}$ has a more compact form, so we give examples for this matrix for $l \leq 4$:
\begin{align}
{\bf Z}_1(z)^{-1} =& \frac{1}{z^2}-\frac{4}{z} \\
\nonumber{\bf Z}_2(z)^{-1} =& \left(
\begin{array}{cc}
 0 & -\frac{1}{z}+10-9 z \\
 \frac{1}{z}-10+9 z & 0 \\
\end{array}
\right) \\
\nonumber    {\bf Z}_3(z)^{-1} =& \left(
\begin{array}{ccc}
 \frac{1}{z^2}-\frac{8}{z} & -10+64 z & 1-20 z+64 z^2 \\
 -10+64 z & -1+20 z-64 z^2 & 0 \\
 1-20 z+64 z^2 & 0 & 0 \\
\end{array}
\right) \\
\nonumber{\bf Z}_4(z)^{-1} =& \left(
\begin{array}{cc}
 0						&	-\frac{1}{z}+28-285 z+450 z^2 	\\
 \frac{1}{z}-28+285 z-450 z^2 	& 	0						\\
1-70 z+777 z^2-900 z^3 		& 	-z+35 z^2-259 z^3+225 z^4 	\\
z-35 z^2+259 z^3-225 z^4 	& 	0
\end{array}
\right. \\
\nonumber&\hspace{3cm}	\left.
\begin{array}{cc}
 -1+70 z-777 z^2+900 z^3 	& 	-z+35 z^2-259 z^3+225 z^4 	\\
 z-35 z^2+259 z^3-225 z^4 	& 	0						\\
0						& 	0 						\\
0					 	& 	0
\end{array}
\right)~.
\end{align}
For $l=1$ this just shows that the period is an algebraic function; for $l=2$ we get the well-known Legendre relation for the periods of an elliptic curve; for $l=3$ we find that the Picard-Fuchs operator is a symmetric square. Also note that, while for low loop orders the entries of the antidiagonals of ${\bf Z}_l(z)^{-1}$ are equal (up to a possible alternating sign and zeroes on the diagonal), this pattern does not hold in general for higher-loop orders. The only non-trivial exception for this is the middle antidiagonal (see eq.~\eqref{eq:GriffithsAntidiagonal}).

The procedure that we have just described to derive bilinear relations works for arbitrary values of $l$. An important question is if and how many of these relations are independent (at least for generic values of $z$), and if there are bilinear relations that are not captured by Griffiths transversality. The symmetry properties of ${\bf Z}_l(z)$ imply that there are $\frac{l(l+1)}{2}$ independent entries for $l$ odd and $\frac{l(l-1)}{2}$ for $l$ even. For loop orders $l \leq 4$ we have searched for additional relations by evaluating the Wronskian matrix at a generic rational point $z_0$ and using a floating point LLL algorithm to find linear dependencies between bilinear products of elements of the Wronski matrix and $C_l(z_0)$. Within the used precision all of the resulting relations follow from Griffiths transversality. Based on these observations, we conjecture that all global bilinear relations among the maximal cuts for $l$-loop banana integrals follow from Griffiths transversality.

The bilinear relations from Griffiths transversality have important consequences for the inverse Wronskian ${\bf W}_{l}(z)^{-1}$. In terms of the matrix $\mathbf{Z}_l(z)$ defined in eq.~\eqref{eq:ZMatrix}, one can express the inverse Wronskian as
\begin{align}
    {\bf W}_{l}(z)^{-1}={\bf \Sigma}_l{\bf W}_{l}(z)^T\mathbf{Z}_l(z)^{-1} \, .
\label{eq:winverse}
\end{align}
Explicitly, this gives for example, with $1\le k\le l$,
\begin{align}\label{eq:W-1}
    {\bf W}_{l}(z)^{-1}_{k,l}&=\frac{(-1)^{l+k}\varpi_{l,l-k}(z)}{C_l(z)}\,,\\
    {\bf W}_{l}(z)^{-1}_{k,l-1}&=\frac{(-1)^{l+k+1}}{C_l(z)}\left[\partial_z\varpi_{l,l-k}(z)+\left(\frac{l}{2}-1 \right)\frac{\partial_z C_l(z)}{C_l(z)}\varpi_{l,l-k}(z) \right] \, .
\end{align}
Similar relations can be derived for all other entries of the inverse Wronskian. In particular, all entries of the inverse Wronskian are linear in the entries of ${\bf W}_l(z)$. Note that this implies polynomial relations of higher degree between the entries of ${\bf W}_l(z)$, because the entries of the inverse are proportional to $(l-1)\times(l-1)$ minors. Moreover, we see that Griffiths transversality also determines $\det {\bf W}_{l}(z)$. Using eq. \eqref{eq:winverse} one finds
\begin{equation}
	\det {\bf W}_{l}(z) = C_l(z)^{l/2}~,
\end{equation}
which is also a cross check for eq.~\eqref{eq:WronskiDet}.

Let us conclude this section with some comments: First, we mention that this is not the first time that bilinear relations between maximal cuts of banana integrals were considered. In particular, refs.~\cite{Broadhurst:2018tey,Zhou:2019rgc,Zhou:2020glw,Fresan:2020anx} considered quadratic relations between moments of Bessel functions, which are closely related to banana integrals and their cuts. Quadratic relations for maximal cuts for $l=2$ and $l=3$ were also studied in ref.~\cite{Lee:2018jsw}, and such relations also arise from the twisted Riemann bilinear relations~\cite{ChoMatsumoto,Frellesvig:2020qot}. Since all quadratic relations follow from Griffiths transversality (at least conjecturally), it would be interesting to compare those quadratic relations to the ones presented here. Finally, in this section we have only discussed the equal-mass case. Griffiths transversality holds more generally also for the periods, i.e., the maximal cuts, in the case of distinct  propagator masses. It would be interesting to work out these relations explicitly.

\subsection{Banana integrals and integrals of Calabi-Yau periods}
\label{subsec:banintperiods}

So far we have discussed what we can learn about the matrix of maximal cuts for the equal-mass banana integrals, in particular relations among maximal cuts and the inverse of the Wronskian. In this section we apply these results to obtain for the first time a representation of higher-loop banana graphs in terms of (iterated) integrals of Calabi-Yau periods. For simplicity, we focus here on $\eps=0$. Extending our results to include higher orders in $\eps$ is straightforward, assuming the matrices ${\bf B}_{l,0}(z)$ and ${\bf B}_{l,k}(z)$ for $k=2,\hdots,l$ are known. An efficient technique to obtain the matrices ${\bf B}_{l,0}(z)$ for high loop order was presented (implicitly) in ref.~\cite{Bonisch:2020qmm}. In the next section we will show how to extend this method to include higher-order terms in $\eps$.

Our starting point is eq.~\eqref{eq:banana_DEQ} for $\eps=0$:
\beq
\partial_z{\uJ}_{l}(z;0) = {\bf B}_{l,0}(z)\,{\uJ}_{l}(z;0) +  (-1)^{l+1}(l+1)!\, \frac{z}{z^l\prod_{k\in\Delta^{(l)}}(1-kz)}\,\hat{\underline{e}}_l\,,
\eeq
with $\hat{\underline{e}}_l = (0,\ldots,0,1)^T$.
Changing variables like in eq.~\eqref{eq:L_to_J}, we obtain
\beq
\partial_z\uL_{l}^{(0)}(z) = (-1)^{l+1}(l+1)!\, \frac{z}{z^l\prod_{k\in\Delta^{(l)}}(1-kz)}\,{\bf W}_l(z)^{-1}
 \hat{\underline{e}}_l= (l+1)!\,\,\frac{{\bf \Sigma}_l\,\uPi_l(z)}{z^2}\,,
\eeq
where in the last step we used eq.~\eqref{eq:W-1} in vector form, i.e. ${\bf W}_{l}(z)^{-1}\,\hat{\underline{e}}_l=(-1)^{l+1}\frac{{\bf \Sigma}_l\underline\Pi_l(z)}{C_l(z)}$. This equation can easily be solved by quadrature
\beq\bsp\label{eq:L0_sol}
\uL_{l}^{(0)}(z) &\,= \uL_{l}^{(0)}(0) +(l+1)!\,{\bf \Sigma}_l\int_{\vec{1}_0}^z\rd w\,\frac{\uPi_l(w)}{w^2} \,,
\esp\eeq
or equivalently
\begin{align}\label{eq:J0_sol}
\uJ_{l}^{(0)}(z) &\,= {\bf W}_l(z)\uL_{l}^{(0)}(0) +(l+1)!\,{\bf W}_l(z)\,{\bf \Sigma}_l\int_{\vec{1}_0}^z\rd w\,\frac{\uPi_l(w)}{w^2}~.
\end{align}
For the individual master integrals, i.e., the individual components of $\uJ_l^{(0)}$, we find:
\beq\bsp\label{eq:Jk0_sol}
J_{l,k}^{(0)}(z) &\,= \partial_z^{k-1}\uPi_l(z)^T\,{\bf \Sigma}_l\,\uL_{l}^{(0)}(0) +(l+1)!\,\partial_z^{k-1}\uPi_l(z)^T\,{\bf \Sigma}_l\int_{\vec{1}_0}^z\rd w\,\frac{\uPi_l(w)}{w^2}\,.
\esp\eeq
Here ${\vec{1}_0}$ denotes the unit tangent vector at 0. In the limit $z\to 0$, we have
\beq\label{eq:Pi_limit}
\uPi_l(z) = z\,\left(1,\log z,\frac{1}{2}\log^2z,\ldots,\frac{1}{(l-1)!}\log^{l-1}z\right)^T + \ord(z^2)\,,
\eeq
so that the integral in eq.~\eqref{eq:J0_sol} diverges if the lower integration limit is zero. The divergency is regulated by introducing the tangential base point ${\vec{1}_0}$. For a comprehensive review of how this regularization can be practically implemented into the framework of iterated integrals on curves see, e.g., ref.~\cite{Brown:mmv}. In the present case, the tangential base point regularization (often called shuffle regularization) reduces to the prescription:
\beq\label{eq:tangential_base_point}
\int_{\vec{1}_0}^z\rd w\,\frac{\varpi_{l,k}(w)}{w^2}\coloneqq \frac{1}{(k+1)!}\log^{k+1}z + \int_0^z\frac{\rd w}{w}\left(\frac{\varpi_{l,k}(w)}{w} - \frac{1}{k!}\log^{k}w\right)\,.
\eeq
It is easy to check that the remaining integral in eq.~\eqref{eq:tangential_base_point} is absolutely convergent (as long as the range $[0,z]$ does not contain any singular point of $\cL_l$).

Let us now determine the initial condition $\uL_l^{(0)}(0)$. We start from eq.~\eqref{eq:Jk0_sol} for $k=1$.
Since the integral in the second term in eq.~\eqref{eq:tangential_base_point} is convergent, this integral vanishes like a power in the limit $z\to0$, and so this term behaves like $\ord(z^2)$. Using eq.~\eqref{eq:Pi_limit}, we find:
\beq\bsp\label{eq:Jk0_sol0}
J_{l,1}(z;0) &\,=-(-1)^l(l+1)\,z\,\log^{l}z+ z\,\sum_{j=0}^{l-1}L_{l,l-j}^{(0)}(0) \frac{(-\log z)^{j}}{j!}+ \ord(z^2)\,.
\esp\eeq
We see that the vector $\uL_{l}^{(0)}(0)$ is uniquely determined once we know the leading asymptotics of $J_{l,1}(z;0)$ in the limit $z\to 0$. This limit can be related for arbitrary loop order to a novel $\widehat\Gamma$-class~\cite{Bonisch:2020qmm}. In particular, in ref.~\cite{Bonisch:2020qmm} a generating functional for the coefficients of the logarithms in the limit $z\to0$ was obtained (see, for example, Table 2 of ref.~\cite{Bonisch:2020qmm}, for explicit results through $l\le 6$). More precisely, comparing eq.~\eqref{eq:Jk0_sol0} to the asymptotics obtained in ref.~\cite{Bonisch:2020qmm}, we find that
\beq
L^{(0)}_{l,k}(0) = \frac{(l+1)!}{(k+1)!}\,\lambda^{(k)}_0\,,
\eeq
where the $\lambda^{(k)}_0$ are $\mathbb{Q}[i\pi]$-linear combinations of zeta values of uniform transcendental weight $k$ defined by the generating function~\cite{Bonisch:2020qmm}:
\beq\label{eq:BC_D=2}
\sum_{k=0}^{\infty}\frac{x^k}{(k+1)!}\,\lambda^{(k)}_0 = -e^{-i\pi x+\sum_{k=1}^{\infty}\frac{2\zeta_{2k+1}}{2k+1}x^{2k+1}} = -\frac{\Gamma(1-x)}{\Gamma(1+x)}\,e^{-2\gamma x - i\pi x}\,.
\eeq
Equation~\eqref{eq:J0_sol} is one of the main results of this paper. It expresses all master integrals for equal-mass banana integrals at $l$-loop in terms of an integral over the periods of a Calabi-Yau $(l-1)$-fold. The initial condition $\uL_l^{(0)}(0)$ can be given in the form of a generating functional~\cite{Bonisch:2020qmm}, so that all the ingredients needed in eq.~\eqref{eq:J0_sol} are known (the vector of periods $\uPi_l(z)$ and the initial condition $\uL_l^{(0)}(0)$ coming from the novel $\widehat{\Gamma}$-class in ref.~\cite{Bonisch:2020qmm}). We find it remarkable that such a compact formula of geometric origin exists for all loop orders. Note that the relations among maximal cuts from Griffiths transversality play an important role in deriving eq.~\eqref{eq:J0_sol}. Finally, we remark that it is possible to prove eq.~\eqref{eq:Jk0_sol} for $k=1$ by acting directly with the operator $\cL_l$. This operator annihilates the first term in eq.~\eqref{eq:Jk0_sol}. For the second term, we find:
\beq\bsp
    \mathcal{L}_l \left( {\uPi}_{l}(z)\, {\bf \Sigma}_l \int_{\vec{1}_0}^z \frac{\rd w}{w^2}\, {\uPi}_{l}(w) \right) &= A_{l,l}(z)\sum_{k=0}^{l-1} \binom{l}{k}\partial_z^k{\uPi}_{l}(z)\, {\bf\Sigma}_l \,\partial_z^{l-k-1} \frac{{\uPi}_{l}(z)}{z^2}\\
    &=A_{l,l}(z)\sum_{k=0}^{l-1} \binom{l}{k}(-1)^{l-k-1}\partial_z^{l-1}{\uPi}_{l}(z)\, {\bf\Sigma}_l\, \frac{{\uPi}_{l}(z)}{z^2}\\
    &=-\frac{A_{l,l}(z)C_l(z)}{z^2}\sum_{k=0}^{l-1}\binom{l}{k}(-1)^k\\
    &=(-1)^{l+1}\frac{A_{l,l}(z)C_l(z)}{z^2}\\
    &=-z \, ,
\esp\eeq
in agreement with the rescaled inhomogeneity in ref.~\cite{Bonisch:2020qmm} (see also eq. \eqref{inhom} for $\eps=0$).

\subsection{Some considerations about pure functions}
\label{subsec:purefunctions}

Let us conclude this section by commenting on the concept of pure functions of uniform transcendental weight for the banana integrals. At the end of section~\ref{sec:CY}, we have argued that the natural transcendental weight that one can assign to an $l$-loop banana integral in $D=2$ dimensions compatible with the monodromy is $l$. Let us illustrate how this same conclusion can be reached from eq.~\eqref{eq:Jk0_sol}. We discuss the case $k=1$ in detail. 

Based on expectations from $l\le 3$ (cf., e.g., refs.~\cite{Adams:2017ejb,Broedel:2018iwv,Broedel:2018qkq,Broedel:2019kmn,Adams:2018yfj,Bogner:2019lfa}), we expect that 
\beq\label{eq:pure_expectation}
J_{l,1}(z;0) = \varpi_{l,0}(z)\,P_l(z)\,, 
\eeq
where $P_l(z)$ is a so-called \emph{pure function of uniform transcendental weight $l$}~\cite{ArkaniHamed:2010gh,Broedel:2018qkq}. A function $P_l(z)$ is said to be pure of transcendental weight $l$ if, loosely speaking, $P_l(z)$ has only logarithmic singularities and satisfies an inhomogeneous differential equation $\partial_zP_l(z) = f(z)$, where $f(z)$ is a pure function of transcendental weight $l-1$. Examples of pure functions include multiple polylogarithms (which appear for $l=1$), elliptic polylogarithms and iterated integrals of modular forms (which appear for $l=2$ or $3$). The definition of pure functions for other geometries is so far unexplored.

We now argue that we can indeed cast $J_{l,1}(z;0)$ in the expected form in eq.~\eqref{eq:pure_expectation}, and that the resulting function $P_l(z)$ can indeed be qualified as pure. Starting from eq.~\eqref{eq:Jk0_sol}, we obtain:
\beq\bsp\label{eq:Pl_explicit}
P_l(z) &\,= \underline{T}_l(z)^T\,{\bf \Sigma}_l\,\uL_{l}^{(0)}(0) +(l+1)!\,\underline{T}_l(z)^T\,{\bf \Sigma}_l\int_{\vec{1}_0}^z\rd w\,\frac{\varpi_{l,0}(w)}{w}\,\frac{\underline{T}_l(w)}{w}\\
\esp\eeq
where we defined $\underline{T}_l(z) = (T_{l,0}(z),\ldots,T_{l,l-1}(z))^T$ with
\beq
T_{l,0}(z)\eqqcolon1 \quad\text{and}\quad T_{l,k}(z) \eqqcolon \frac{\varpi_{l,k}(z)}{\varpi_{l,0}(z)}  \quad\text{for}\quad 1\le k\le l-1\,.
\eeq
From the generating functional in eq.~\eqref{eq:BC_D=2} it is easy to see that $L_{l,k}^{(0)}(0)$ is a linear combination of zeta values and powers of $i\pi$ of uniform transcendental weight $k$. We now argue that the remaining ingredients to eq.~\eqref{eq:Pl_explicit} can be interpreted as pure functions of uniform transcendental weight, in such a way that these definitions agree with the known definitions for $l\le 3$, and also with the monodromy considerations from section~\ref{sec:CY}.

Let us start by discussing the holomorphic period $\varpi_{l,0}(z) = z+\mathcal{O}(z^2)$. In this normalization (cf. eq.~\eqref{eq:frobenius}), $\varpi_{l,0}(z)$ is assigned transcendental weight 0. This agrees with the known results for $l\le 3$ (cf.~appendix~\ref{app:low_loop}). In particular, for $l=1$, we have an algebraic function:
\beq
\varpi_{1,0}(z) = \frac{z}{\sqrt{1-4z}}\,.
\eeq
Note that $\varpi_{l,0}(z)$ is holomorphic in a neighborhood of the MUM-point $z=0$, but not necessarily close to another singular point.

Next, let us discuss the functions $T_{l,k}(z)$. With the normalization for the functions $\Sigma_{l,k}(z)$ in eq.~\eqref{eq:frobenius}, we have
\beq\label{eq:T_at_0}
\lim_{z\to0}\left[T_{l,k}(z) -\frac{1}{k!}\log(z)^k\right]=0\,.
\eeq
Therefore, it makes sense to qualify $T_{l,k}(z)$ as a pure function of transcendental weight $k$. Note that this is the only choice of transcendental weight that is consistent with the limit $z\to 0$. Moreover, it is also consistent with the transcendental weight defined from the monodromy (cf. section~\ref{sec:CY}). We also note that the normalization $\Sigma_{l,k}(z) = \mathcal{O}(z^2)$ for $k>0$ is the only choice of normalization that leads to eq.~\eqref{eq:T_at_0}, i.e., such that the limit $z\to0$ has uniform transcendental weight. Finally, assigning a uniform transcendental weight $k$ to $T_{l,k}(z)$ agrees with the purity and the transcendental weight assignment for $l=2$ or $3$, where we have $T_{2,1}(z)\sim 2\pi i \tau$, $T_{3,1}(z)\sim 2\pi i \tau$, $T_{3,2}(z)\sim (2\pi i \tau)^2$, where $\tau$ is the modulus of the elliptic curve for the two-loop sunrise, which is pure of transcendental weight 0~\cite{Broedel:2018qkq} (at three loops, the K3 surface is the symmetric square of the elliptic curve appearing at two loops, so the same $\tau$ appears, cf. refs.~\cite{verrill1996,Bloch:2014qca,MR3780269,Broedel:2019kmn}).

Finally, let us discuss the integral of the Calabi-Yau period in eq.~\eqref{eq:Pl_explicit}. From eq.~\eqref{eq:tangential_base_point} we see that $\int_{\vec{1}_0}^z\rd w\,\frac{\varpi_{l,0}(w)}{w^2}\,{{T}_{l,k}(w)}$ degenerates like $\log(z)^{k+1}$ in the limit $z\to0$, so it is natural to define its transcendental weight as $k+1$. This definition is again consistent with the known results for this integral for $l\le 3$. For example, for $l=1$, we have
\beq
\int_{\vec{1}_0}^z\rd w\,\frac{\varpi_{1,0}(w)}{w^2} = \log z + \int_{0}^z\frac{\rd w}{w\,\sqrt{1-4w}} = \log\left(\frac{1-\sqrt{1-4z}}{1+\sqrt{1-4z}}\right) - \log(z)\,.
\eeq
A similar calculation shows that for $l=2$ or $3$, these integrals can be evaluated in terms of integrals of Eisenstein series for $\Gamma_1(6)$, which are pure and have the correct uniform transcendental weight (cf. refs.~\cite{Adams:2017ejb,Broedel:2019kmn}).

To conclude, we see that it is possible to identify functions that can be qualified as \emph{pure functions of uniform transcendental weight} in the sense of refs.~\cite{ArkaniHamed:2010gh,Broedel:2018qkq}. With these definitions, $J_{l,1}(z;0)$ admits a decomposition as in eq.~\eqref{eq:pure_expectation}, with $P_l(z)$ a pure function of uniform transcendental weight $l$, which is the transcendental weight we had obtained also from monodromy considerations in section~\ref{sec:CY}. We stress that our definition of pure functions has passed some highly non-trivial tests: the functions degenerate to pure functions of the expected transcendental weight in the limit $z\to0$, and our definitions agree with the known cases for $l\le 3$, where algebraic functions, elliptic integrals, (elliptic) polylogarithms and iterated integrals of Eisenstein series appear, for which the transcendental weight had previously been defined.

\section{Banana Feynman integrals in dimensional regularization}
\label{sec:Bananadimreg}  

So far we have only considered banana integrals in $D=2$ dimensions. 
The results of section~\ref{sec:banint} beg the question if the compact geometric formula in eq.~\eqref{eq:J0_sol} can be generalized to include higher orders in the dimensional regulator $\eps$. 
In this section we extend the results of ref.~\cite{Bonisch:2020qmm} to include dimensional regularization. In a first step we compute an explicit \emph{hypergeometric series representation} for the generic-mass banana integral in the large momentum limit, $p^2\gg m_i^2>0$. This hypergeometric representation serves a twofold purpose: First, it allows us to obtain the leading asymptotic behavior of all banana integrals in the large momentum limit, which provides the boundary condition for solving the differential equations, similar to eq.~\eqref{eq:BC_D=2}. 
Second, we use this representation to derive a complete set of differential equations satisfied by the banana integrals.

\subsection{A hypergeometric series representation of the banana integral}
\label{sec:mellin-barnes}  

\paragraph{A Mellin-Barnes representation for banana integrals.}

Our starting point is the derivation of a \emph{Mellin-Barnes (MB) integral representation} for the $l$-loop banana integral with arbitrary masses and exponents of the propagators. 
If all propagator masses are zero, the integral is trivial and can be evaluated in terms of gamma functions. We assume from now on that at least one propagator is massive.

To derive the MB representation for  $\tilde I\coloneqq I_{\unu}(p^2,\underline m^2;D)$, we start from the Feynman parameter representation in eq.~\eqref{bananageneralgeneric} and adapt the approach of ref.~\cite[app. A]{Broedel:2019kmn} to the $l$-loop case (the two-loop case was treated in ref.~\cite{Berends:1993ee}; see also ref.~\cite{Adams:2013nia}).
Recall the well-known identity
\begin{equation}\label{eq:mbA3}
\frac{1}{(A+B)^\lambda}=\int_{c-i\infty}^{c+i\infty}\frac{\dd\xi}{2 \pi i}  \  A^\xi B^{-\xi -\lambda} \frac{\Gamma(-\xi) \Gamma(\xi+\lambda)}{\Gamma(\lambda)} ~,
\end{equation}
where for appropriate $c$ the contour runs parallel to the imaginary axis and separates the left poles (due to $\Gamma(\xi+\lambda)$) from the right poles of the integrand (due to $\Gamma(-\xi)$). In the following it will be sufficient to keep the choice of integration contour implicit. We can apply eq.~\eqref{eq:mbA3} to $\mathcal{F}(p^2,\underline m^2)^{-\omega}$ to obtain
\beq\bsp\label{eq:bananaFw}
	\tilde I &= \int\frac{\dd \xi_0}{2 \pi i} \frac{\Gamma(-\xi_0) \Gamma(\xi_0+\omega)}{\Gamma(\omega)} (-p^2)^{\xi_0}  \\ 
&\quad \times
 \int_{[0,\infty)^l}\!\! \dd x_1 \dots \dd x_l 
 \left(\prod_{i=1}^{l+1}x_i^{\xi_0+\delta_i}\right)
\left(\sum_{i=1}^{l+1}\prod_{\substack{
   j=1 \\
   j\neq i
  }}^{l+1} x_j\right)^{-\xi_0-\frac{d}{2}} 
\! \! \! \! \left(\sum_{i=1}^{l+1} m_i^2 x_i\right)^{-\xi_0-\omega} ~,
\esp\eeq
where we have set $x_{l+1}=1$ by going to a set of affine coordinates. Note that eq.~\eqref{eq:bananaFw} is only valid if at least one mass $m_i$ is different from zero. We now repeatedly apply eq.~\eqref{eq:mbA3} to the term of the form $(\sum_{i=i_0}^{l+1}m_{i_0}^2 x_i)^{-\lambda_{i_0-1}}$, giving $m_{i_0}^2x_{i_0}$ the role of $A$ and producing another term of the same form, but with $i_0$ raised by unity. Let us assume that exactly $\bq + 1$ out of $l+1$ (i.e., $1\leq \bq + 1 \leq l+1$) propagator masses are non-zero, and without loss of generality these are the masses $m_1,...,m_{\bq + 1}$. At the expense of introducing $\bq $ additional\footnote{If $\bq = l$, the integral can also be written with one MB parameter less, i.e., without $\xi_l$.} MB parameters $\xi_1, ..., \xi_{\bq}$, we may split off factors of $x_1, ..., x_{\bq}$ with appropriate exponents. It is easy to see that the resulting product of ratios of gamma functions partially telescopes, leaving us with the expression
\begin{align}
	\tilde I& = \int\frac{\dd \xi_0}{2 \pi i}\cdots \int\frac{\dd \xi_{\bq}}{2 \pi i}
\left(\prod_{i=0}^{\bq}\Gamma(-\xi_i)\right)
 \frac{\Gamma\left(\omega+\sum_{i=0}^{\bq}\xi_i\right)}{\Gamma(\omega)} (-p^2)^{\xi_0} \left( \prod_{i=1}^{\bq}(m_i^2)^{\xi_i} \right) m_{\bq + 1}^{-\omega-\sum_{i=0}^{\bq}\xi_i} \nonumber \\
&\times \int_{[0,\infty)^l} \dd x_1 \dots \dd x_l
\left( \prod_{i=1}^{l+1}x_i^{\xi_0+\delta_i} \right) 
\left(\sum_{i=1}^{l+1}\prod_{\substack{
   j=1 \\
   j\neq i
  }}^{l+1} x_j\right)^{-\xi_0-\frac{d}{2}} 
  x_1^{\xi_1}\cdots x_{\bq}^{\xi_{\bq}} \   x_{\bq + 1}^{-\omega-\sum_{i=0}^{\bq}\xi_i} \ .
\end{align}
At this point we may consecutively perform the integration over $x_1, ..., x_{l}$ with the help of the Euler beta type integral:
\begin{equation}
\int_0^\infty \dd x \ x^\alpha \, (A+Bx)^{\beta} = A^{1+\alpha+\beta} B^{-1-\alpha} \frac{\Gamma(1+\alpha)\Gamma(-1-\alpha-\beta)}{\Gamma(-\beta)} \ .
\end{equation}
Careful bookkeeping of the exponents $\alpha$ and $\beta$ again shows that many $\Gamma$-factors cancel, and we may simplify the resulting expression to
\begin{align}
	\tilde I =	\int \frac{\dd \xi_0}{2 \pi i}&\cdots \int\frac{\dd \xi_{\bq}}{2 \pi i}
\left( \prod_{i=0}^{\bq }  \Gamma( -\xi_i ) \right)
\frac{\Gamma ( \omega + \sum_{i=0}^{\bq } \xi_i )}{\Gamma( \omega )}
(-p^2)^{\xi_0}
\left( \prod_{i=1}^{\bq } \left(m_i^2\right)^{\xi_i}  \right)(m_{\bq + 1}^2)^{-\omega-\sum_{i=0}^{\bq } \xi_i } \nonumber \\
 & \times \left( \prod_{i=1}^{\bq} \Gamma(-\xi_i -\delta_i -\epsilon )\right)  \Gamma(\omega + \sum_{i=0}^{\bq} \xi_i -\delta_{\bq + 1} -\epsilon  ) \left( \prod_{i=\bq + 2}^{l+1} \Gamma(-\delta_i -\epsilon  )\right) \\ 
\nonumber & \times \frac{\Gamma(1+(l-1)\epsilon + \sum_{i=1}^{l} \delta_i + \sum_{i=0}^{\bq} \xi_i )}{\Gamma(1-\epsilon + \xi_0)} \ .
\end{align}
The integrand may be further simplified by writing it in terms of $\xi_{0}^{*}\coloneqq-( 1+l \epsilon +\delta +\sum_{i=1}^{\bq} \xi_i)-\xi_0$ with $\delta \coloneqq \sum_{i=1}^{l+1} \delta_i$. Relabelling $\xi_{0}^{*} \rightarrow \xi_0$, and letting $\delta_0\coloneqq\delta_{\bq + 1}$, $m_0 \coloneqq m_{\bq + 1}$ and $\xi \coloneqq \sum_{i=0}^{\bq} \xi_i $, we can write the integral in the following symmetric form:
\beq\bsp\label{eq:MBready}
	\tilde I=\left( \prod_{i=\bq + 2}^{l+1} \Gamma(-\delta_i - \epsilon )\right) \int\frac{\dd \xi_0}{2 \pi i}\cdots \int\frac{\dd \xi_{\bq}}{2 \pi i} 
\ \left(-\frac1{p^2}\right)^{1+l \epsilon + \delta + \xi } \left( \prod_{i=0}^{\bq} \left(m_i^2\right)^{\xi_i} \right)  \\ 
\times 
\left[ \prod_{i=0}^{\bq }\Gamma(-\xi_i) \Gamma\left(-\xi_i -\epsilon -\delta_i \right) \right]
\frac{\Gamma(1+l\epsilon+ \delta + \xi  )}{\Gamma(1+l \epsilon + \delta ) \, \Gamma(-(l+1)\epsilon - \delta - \xi )}
 \ .
\esp\eeq
Equation~\eqref{eq:MBready} is a compact MB representation valid for all banana integrals, for arbitrary values for the number $l$ of loops and the space-time dimension $D$, the propagator masses $\underline{m}^2$ and the exponents $\unu=1+\underline{\delta}$. We find it remarkable that such a simple and compact expression exists. 

\begin{figure}[!t]
	\begin{minipage}{0.4\textwidth}
		\includegraphics[width=\textwidth]{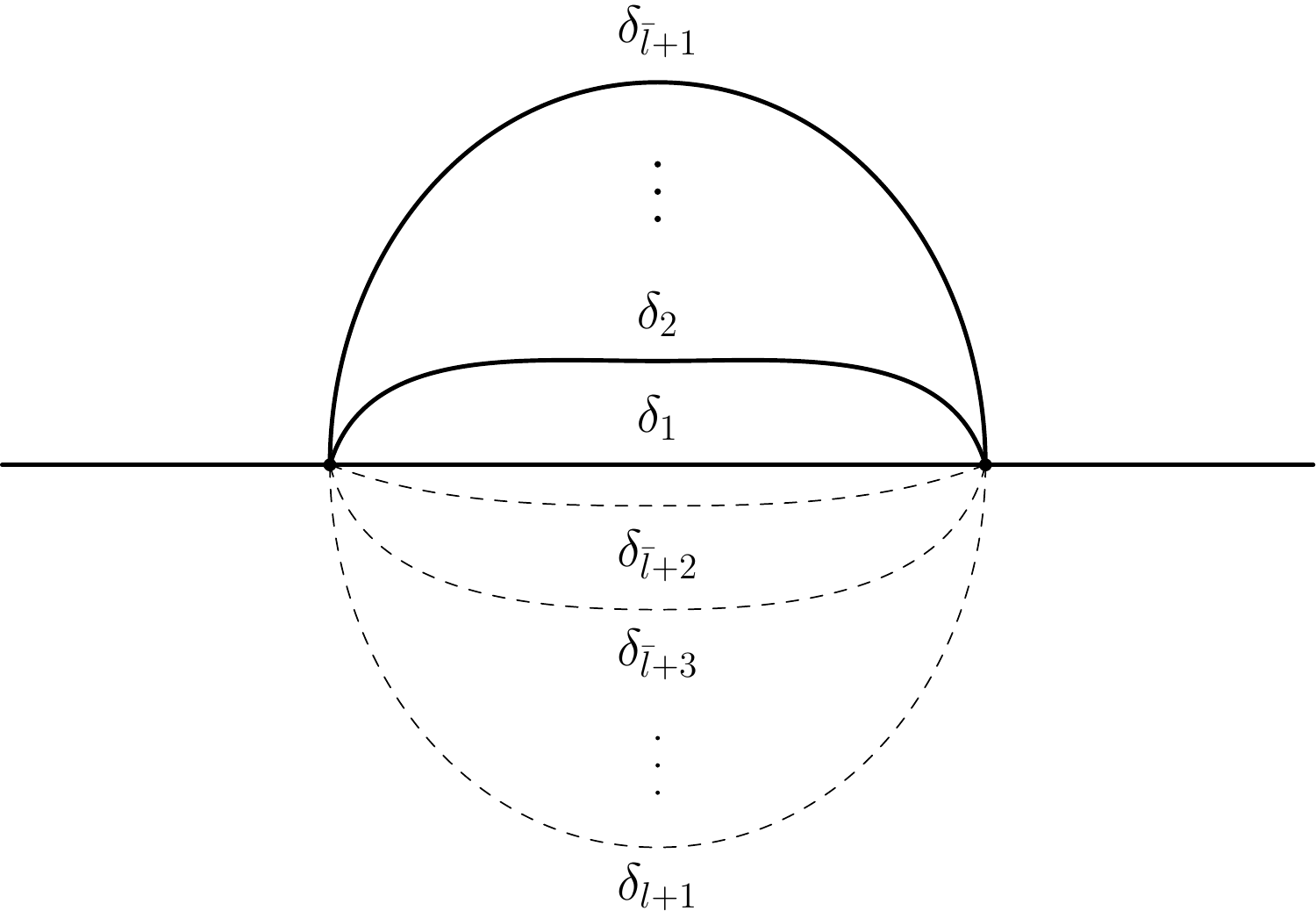}
	\end{minipage}
	\hspace{10pt}
	\begin{minipage}{0.15\textwidth}
		\includegraphics[width=\textwidth]{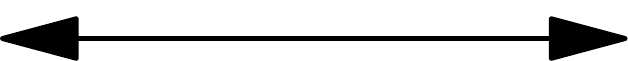}
	\end{minipage}
	\hspace{10pt}
	\begin{minipage}{0.4\textwidth}
		\includegraphics[width=\textwidth]{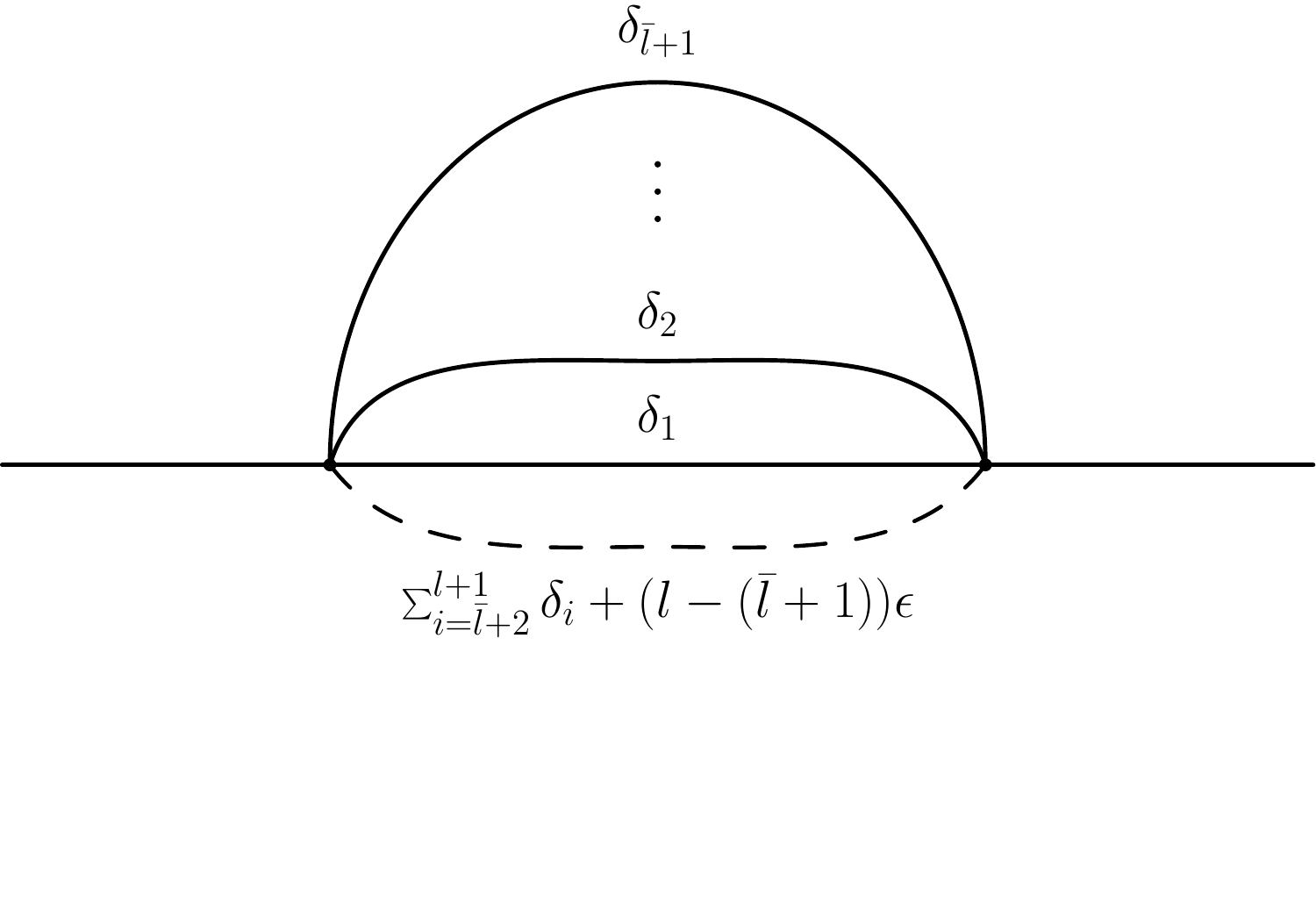}
	\end{minipage}
	\caption{Equivalence between a banana integral with several massless propagators and a single massless propagator with a modified exponent.}
	\label{fig:AdditionMassless}
\end{figure}

Before we evaluate eq.~\eqref{eq:MBready} as a hypergeometric series, let us make a comment about the role of massless propagators.
We observe that, ignoring an overall prefactor, the remaining integral in eq.~\eqref{eq:MBready} almost looks like an integral over the massive part of the banana graph only, where $\bq$ plays the role of the (reduced) loop order. The additional massless propagators only enter via the combination
\begin{equation}
l\epsilon + \delta \ = \ \bq \epsilon + \delta_{(\bq)}\,,
\end{equation}
with $\delta_{(\bq)}$ defined by
\begin{equation}
\delta_{(\bq)} =  \sum_{i=1}^{\bq + 1} \delta_i   + \sum_{i=\bq +2}^{l+1} (\delta_i + \epsilon ) \eqqcolon  \sum_{i=1}^{\bq + 1} \delta_i  + \delta_0\ .
\end{equation}
The terms in $\delta_0$ only concern the massless propagators, $i=\bq+2, \dots, l+1$, and they cannot distinguish between $l-\bq $ massless propagators with integer exponents $1+\delta_i$ and a \emph{single} massless propagator with \emph{modified} exponent $1 + [\sum_{i=\bq + 2}^{l+1}\delta_i + (l-\bq - 1) \epsilon]$ (see figure~\ref{fig:AdditionMassless}). This is consistent with the fact that we could have integrated out pairs of massless propagators iteratively and replace each pair by a single massless propagator with a shifted exponent.
In terms of the shifted quantities 
\begin{equation}
\bar{\epsilon} \coloneqq \epsilon - \delta^{(0)} \ , \qquad \bar{\delta}_i \coloneqq \delta_i + \delta^{(0)} \ , \qquad \bar{\delta} \coloneqq   \sum_{i=1}^{\bq} \bar{\delta}_i \ ,
\end{equation}
which satisfy
\begin{equation}
l\epsilon + \delta \ = \ \bq \bar{\epsilon} + \bar{\delta} \,,
\end{equation}
eq.~\eqref{eq:MBready} (with the  $\Gamma$-prefactors 
removed) becomes:
\begin{align}\label{eq:MBready2}
 \int\frac{\dd \xi_0}{2 \pi i}\cdots &\int\frac{\dd \xi_{\bq}}{2 \pi i} 
\ \left(-\frac1{p^2}\right)^{1+ \bq \bar{\epsilon} + \bar{\delta} + \xi } \left( \prod_{i=0}^{\bq} \left(m_i^2\right)^{\xi_i} \right) \nonumber \\ 
\times 
&\left[ \prod_{i=0}^{\bq }\Gamma(-\xi_i) \Gamma\left(-\xi_i -\bar{\epsilon} -\bar{\delta_i} \right) \right]
\frac{\Gamma(1+ \bq \bar{\epsilon} + \bar{\delta} + \xi  )}{\Gamma(1+ \bq \bar{\epsilon} + \bar{\delta} ) \Gamma(-(\bq+1)\bar{\epsilon} - \bar{\delta} - \xi )}
 \ .
\end{align}
Up to the redefinition of $\epsilon$ and the $\delta_i$, this is indeed a $\bq$-loop massive banana integral with $\bq +1$ non-zero (generic) masses $m_i$ (compare with eq.~\eqref{eq:MBready} for $\bq=l$).

\paragraph{Evaluation in the large momentum regime.}

The MB integral in eq.~\eqref{eq:MBready} can be evaluated in the large momentum regime, $p^2\gg m_i^2>0$.  We close all integration contours to the right and sum up the residues. In order to get only positive exponents of $-\frac1{p^2}$, we only pick up the residues coming from the gamma functions inside the square brackets. For each variable $\xi_i$ and given integer $n_i$, there are two possibilities to pick up a residue from one of the two $\Gamma$-factors. In total this results in $2^{l+1}$ possibilities, which we label by an $(l+1)$-tuple $\underline j=(j_0, \dots, j_{l})$ whose entries are either $0$ or $1$. Without loss of generality, if $\bq = l$ (all propagators are massive) we obtain upon summation of all relevant residues:
\beq\bsp\label{eq:MBresSummed}
	\tilde I &= \sum_{\underline n \in \mathbb{N}_0^{l+1}} \sum_{\underline j \in \{0,1 \}^{l+1}}  \left(-\frac1{p^2}\right)^{1 + n + \underline\delta \cdot \underline j + (j-1)\epsilon } \ (-1)^{n} \, \prod_{i=0}^{l} \left(m_i^2\right)^{n_i + (1-j_i) (-\delta_i -\epsilon )}
 \\
& \qquad \times \left(\prod_{i=0}^l
\frac{
 \Gamma( -n_i + (-1)^{j_i} (\delta_i + \epsilon) )} {n_i!}
 \right)
\frac{\Gamma(1 + n+ \underline\delta\cdot\underline j+ (j-1)\epsilon )}{\Gamma(1+ l\epsilon + \delta ) \Gamma (-j\epsilon - n - \underline\delta \cdot\underline j )} \ .
\esp\eeq
Here we have written $n=\sum_{i=0}^l n_i$, and similarly $j=\sum_{i=0}^l j_i$, while $\underline\delta \cdot\underline j = \sum_{i=0}^l \delta_i j_i$. . 
Note that, once again, it is straightforward to adapt eq.~\eqref{eq:MBresSummed} to the case of massless propagators (i.e., for $\bq   < l$), because the quantities $\delta_i$ and $\epsilon$ enter eq.~\eqref{eq:MBresSummed} only parametrically. Thus, eq.~\eqref{eq:MBready2} simply evaluates to eq.~\eqref{eq:MBresSummed} with $\delta_i$ and $\epsilon$ replaced by  $\bar{\delta_i}$ and~$\bar{\epsilon}$, respectively. Also note that, since we are working with $p^2>0$, we have to interpret the non-integer powers of $(-p^2)$ by assigning a small positive imaginary part to $p^2$, $p^2\to p^2+i0$:
\beq
 \left(-\frac1{p^2}\right)^{(j-1)\epsilon } = (-p^2-i0)^{-(j-1)\epsilon } = e^{(j-1)i\pi\eps}\,\left(\frac{1}{p^2}\right)^{(j-1)\eps}\,.
 \eeq
Using the well-known identity
\begin{equation}\label{eq:gammaids}
\Gamma(-\epsilon-n)\ = \ (-1)^{n-1}\frac{\Gamma(\epsilon)\Gamma(1-\epsilon)}{\Gamma(1+n+\epsilon)} \ = \ (-1)^{n}\frac{\Gamma(-\epsilon)\Gamma(1+\epsilon)}{\Gamma(1+n+\epsilon)} \,,
\end{equation}
valid for integers $n$, we may rewrite eq.~\eqref{eq:bananaFw} as
\begin{align}\label{eq:MBresSummed2}
	\tilde I	&= (-1)^{l+1}
\frac{(\Gamma(-\epsilon)\Gamma(1+\epsilon))^{l+1}}{\Gamma(1+l\epsilon + \delta )}  
\sum_{   \underline j \in \{0,1 \}^{l+1}}
\sum_{\underline n \in \mathbb{N}_0^{l+1}}   
\left(-\frac1{p^2}\right)^{1 + n +(j-1)\epsilon + \underline \delta \cdot\underline j } \,
\prod_{i=0}^{l} \left(m_i^2\right)^{n_i + (j_i-1) (\delta_i + \epsilon )}  
\nonumber \\
&\!\!\! \times \!
\frac{(-1)^{n+j+\delta+\delta\cdot j}}
{\prod_{i=0}^{l} 
\Gamma(1 + n_i +(-1)^{j_i+1}(\delta_i + \epsilon)) \, n_i! }
 \frac{
 \Gamma(1+n+ \underline\delta\cdot\underline j +j\epsilon) \, \Gamma(1+n+\underline\delta\cdot\underline j+(j-1)\epsilon)}
 {\Gamma( -j\epsilon ) \, \Gamma(1+j \epsilon)} ~.
\end{align}
Note that terms with $j=0$ vanish due to a $\Gamma$-function in the denominator. Since we expect the Feynman integral to obey Picard-Fuchs differential equations with polynomial coefficients in the kinematic parameters and the dimensional regulator $\eps$, whose coefficients are rational numbers, it is natural to separate off the non-algebraic $\Gamma$-factors, which are regarded as coefficients of the specific linear combination of solutions that represent the banana integral. For a shorter notation we introduce the rising and falling factorials,\footnote{The rising factorial is also known as the Pochhammer function. Note that there are different notations for rising and falling factorials used in the literature.}
\begin{equation}
\begin{aligned}
	(x)_n		&=	x (x+1) \cdots (x+n-1)\,,	\\
	[x]_n		&=	x (x-1) \cdots (x-n+1)\,,
\label{pochhammer}
\end{aligned}
\end{equation}
with $(x)_0=[x]_0=1$, and we recall the functional equation of the $\Gamma$-function:
\begin{equation}
\Gamma(x+n) = \Gamma(x) \ (x)_{n} ~.
\end{equation}
We then arrive at the final expression:
\begin{align}\label{eq:MBresSummed3}
	&\tilde I	= 
\frac{(-1)^{l+1}}{\Gamma(1+ l \epsilon + \delta )} \left(-\frac1{p^2}\right)^{1+ l \epsilon + \delta } 	\\
&\times\sum_{\underline j \in \{ 0,1 \}^{l+1}}
(-1)^{j+\delta+\underline\delta\cdot\underline j}\frac{\Gamma(-\epsilon)^{l+1}\, \Gamma(1+\epsilon)^{l+1} \ \Gamma(1+j\epsilon + \underline\delta \cdot\underline j) \,  \Gamma(1+(j-1)\epsilon + \underline\delta \cdot\underline j)}{\Gamma(-j\epsilon) \, \Gamma(1+j\epsilon) \ \prod_{i=1}^{l+1} \Gamma(1+(-1)^{j_i+1} (\delta_i + \epsilon))} \nonumber \\ 
& \times  \left[ \prod_{i=1}^{l+1} \left(-\frac{m_i^2}{p^2}\right)^{(j_i-1)(\delta_i + \epsilon)} \sum_{\underline n\in\mathbb{N}_0^{l+1}} \frac{(1+j\epsilon + \underline\delta\cdot\underline j)_{n}\, (1+(j-1) \epsilon + \underline\delta\cdot\underline j)_{n}}{\prod_{i=1}^{l+1} (1+(-1)^{j_i+1}(\delta_i + \epsilon))_{n_i}} \ \prod_{i=1}^{l+1}\frac{1}{n_i !}\left(\frac{m_i^2}{p^2}\right)^{n_i}\right] \, .\nonumber
\end{align}
At this point we note that, for generic non-zero $\epsilon$, the banana integral in the large momentum regime is essentially a special linear combination of (multivariate) hypergeometric series. In particular, for the specific pattern of Pochhammer functions in eq.~\eqref{eq:MBresSummed3}, these hypergeometric series are known as Lauricella $F_C$ functions in $l+1$ variables. For the sake of completeness we have collected the main definitions and properties of the Lauricella $F_C$ functions in appendix \ref{sec:lauricella}. The representation in eq.~\eqref{eq:MBresSummed3}, however, is not valid for arbitrary values of $z_i=m_i^2/p^2$, because the series in eq.~\eqref{eq:MBresSummed3} have a finite radius of convergence. We will address the analytic continuation in section~\ref{sec:diffeqs}, where we will study the differential equations satisfied by the banana integrals.

For convenience we give the important special case $\delta=0$ explicitly, which, using eq.~\eqref{eq:gammaids} and some algebra, simplifies to
\begin{align}\label{eq:HypGeomBanana}
	& I_{1,\ldots,1}(p^2,\underline m^2;2-2\eps) = 
\frac{1}{\Gamma(1+ l \epsilon )} \left(\frac1{-p^2-i0}\right)^{1+ l \epsilon}
\sum_{\underline j \in \{ 0,1 \}^{l+1}}
\frac{\Gamma(-\epsilon)^{j}\, \Gamma(\epsilon)^{l+1-j} \   \Gamma(1+(j-1)\epsilon )}{\Gamma(-j\epsilon)} \nonumber \\ & \times
\left[ \prod_{i=1}^{l+1} \left(\frac{m_i^2}{-p^2-i0}\right)^{(j_i-1) \epsilon} \sum_{\underline n\in\mathbb{N}_0^{l+1}} \frac{(1+j\epsilon)_{n}\, (1+(j-1) \epsilon )_{n}}{\prod_{i=1}^{l+1} (1+(-1)^{j_i+1} \epsilon)_{n_i}}\  \prod_{i=1}^{l+1}\frac{1}{n_i !}\left(\frac{m_i^2}{p^2}\right)^{n_i}\right] \ .
\end{align}
We can see here that, similar to the case $D=2$, the generic-mass banana integral is symmetric in variables $m_i^2/p^2$ for $i=1,\hdots,l+1$. Or the other way round, the equal-mass case is somehow the diagonal of the generic-mass case. This reflects the fact that the banana graphs have a symmetry which interchanges the propagators.

\paragraph{Leading asymptotic behavior at large momentum.}
 
Letting $\underline n=(0,\dots,0)$ in eq.~\eqref{eq:HypGeomBanana}, we can immediately extract the leading behavior of the banana integrals at large momentum. For the generic-mass case one obtains:
\beq\bsp\label{eq:asymp_generic}
 I_{1,\ldots,1}&(p^2,\underline m^2;2-2\epsilon)= 
 -\frac{1}{\Gamma(1+l\epsilon)} e^{i\pi l\eps}\left(\frac1{p^2}\right)^{1+l\epsilon}\,
 \\
&\times\sum_{\underline j \in \{ 0,1 \}^{l+1}}
e^{i\pi(j-1)\eps}
   \frac{\Gamma(-\epsilon)^{j} \, \Gamma(\epsilon)^{l+1-j} \   \Gamma(1+(j-1)\epsilon )}{\Gamma(-j\epsilon)} \prod_{i=1}^{l+1} z_i^{(j_i-1) \epsilon}+ \mathcal{O}\left(z_i^2\right) \,.
\esp\eeq
Equation~\eqref{eq:asymp_generic} gives the leading asymptotics of $I_{1,\ldots,1}(p^2,\underline m^2;2-2\epsilon)$ and can be used as a boundary condition to solve the differential equations for the banana graphs.  
In the equal-mass case, eq.~\eqref{eq:asymp_generic} can be further simplified to
\begin{align}
J_{l,1}(z;\epsilon)&=-\sum_{k=1}^{l+1} \binom{l+1}{k}
\frac{\Gamma(-\epsilon)^{k} \ \Gamma(\epsilon)^{l+1-k} }
{\Gamma(-k\epsilon) } 
\frac{\Gamma(1+(k-1)\epsilon )}
{\Gamma(1+l\epsilon)} 
e^{(k-1)i\pi\eps}\, z^{1+(k-1)\epsilon}
 + \mathcal{O}\left(z^2\right) \,.
\label{MBfirstorder}
\end{align}
Expanding eq.~\eqref{MBfirstorder} around $\epsilon=0$, one obtains
\begin{equation}\label{eq:epsexp}
	J_{l,1}(z;\epsilon) = \sum_{n=0}^{\infty} J_{l,1}^{(n)}(z) \ \epsilon^n \ ,
\end{equation}
 and inspecting the leading order in $\epsilon$, $J_{l,1}^{(0)}$ \emph{precisely} reproduces the logarithmic structure of the $l$-loop banana Feynman integral in $D=2$ spacetime dimensions described in section 3 of ref.~\cite{Bonisch:2020qmm} (see also eq.~\eqref{eq:Jk0_sol0}). Equation~\eqref{MBfirstorder} thus proves the combinatorial pattern of Riemann zeta values in the constants $\lambda_k^{(l)}$ in eq.~\eqref{eq:BC_D=2}, and extends it to higher orders in $\eps$. It was also observed that the transcendental weight of $\lambda_k^{(l)}$, which is $l-k$, and the highest occurring power of logarithms in $\varpi_{l,k}(z)$, which is $k$, always add up to the loop order $l$. 
  
We can now partially generalize these observations to higher orders in the $\eps$-expansion. 
For simplicity we discuss the equal-mass case, while the generic-mass case can be treated similarly.
At order $\epsilon^n$ we find powers of $\log(z)$ up to $L=l+n$ and coefficients with transcendental weight up to $T=l+n$.
Indeed, at any order $\epsilon^n$, the constant of highest occurring transcendental weight $T=l+n$ always multiplies the holomorphic period $\varpi_{l,0}(z)$ of the Calabi-Yau $(l-1)$-fold. 
More generally, at order $\epsilon^n$, consider all terms that multiply constants of a certain transcendental weight $T$ and a certain power $L$ of logarithms, i.e., $\log^L(z)$. We find that there are only non-zero terms for $L+T \leq l+n $. For the maximum value $L+T = l+n$, these terms are again proportional to the holomorphic Calabi-Yau period  $\varpi_{l,0}(z)$ (with a rational number as proportionality constant), independently of the splitting between $L$ and $T$ and the value of $n$.

\subsection{Differential equations for banana Feynman integrals}
\label{sec:diffeqs}  

The goal of this section is to present a method to derive differential equations for banana integrals at arbitrary loop order. In principle, the differential equations can be derived using IBP relations (see section~\ref{sec:deqs}). However, for high numbers of loops $l$ and for many distinct values of the masses $m_i^2$, publicly available computer codes can be rather inefficient, and it can be hard to obtain the explicit form of the differential equations. Here we present an alternative method to derive a set of operators that define inhomogeneous differential equations for $I_{1,\ldots,1}(1,\uz;2-2\eps)$, cf. eq.~\eqref{eq:higher_DEQ}. As explained in section~\ref{sec:diffeqs}, these differential operators form an ideal, and there may be substantial freedom in choosing a generating set for this ideal. As a consequence, our set of operators may look substantially different from the one obtained from first-order differential equations and IBP relations. 

For the generic-mass case we use a purely combinatorial method to find the desired differential operators by analyzing the structure of the gamma functions appearing in the second line of eq.~\eqref{eq:HypGeomBanana}. We work with gamma functions instead of rising factorials, because this will simplify the formulas. This can be achieved simply be rescaling the coefficient in the first line of eq.~\eqref{eq:HypGeomBanana}. We define the $\eps$-Frobenius basis with an explicit $\eps$-dependent indicial by:
\begin{align}
	\mathcal{I}_{(j_1, \hdots, j_{l+1})}(\uz,\eps)	\coloneqq  \sum_{\underline n \in \mathbb{N}_0^{l+1}}       
	 \frac{\Gamma(1+n+j\epsilon)\, \Gamma(1+n+(j-1)\epsilon)}{\prod_{i=1}^{l+1} \Gamma(1+n_i)\, \Gamma(1+n_i+(-1)^{j_i+1}\epsilon)}   \prod_{i=1}^{l+1} z_i^{n_i + j_i \epsilon} \,.
\label{Frobgeneric}
\end{align}
Indeed, it is easy to see that eq.~\eqref{eq:HypGeomBanana} can be written as a linear combination of the $\mathcal{I}_{(j_1, \hdots, j_{l+1})}$. The power series in eq.~\eqref{Frobgeneric}, however, have a finite radius of convergence. Our goal is to derive a set of $1<j\leq l+1$ differential operators that annihilate the elements of the $\eps$-Frobenius basis, where we again use the notation $j=\sum_{i=1}^{l+1}j_i$. and similarly for $n$. The set of differential operators is then extended to a set of inhomogeneous differential equations by including the case $j=1$ (for now, we exclude the case $j=0$, but we briefly comment on it at the very end). These differential equations then serve as a starting point to analytically continue the $\eps$-Frobenius basis to all values of the $z_i$. 

Let us start by analyzing the maximal cuts. We want to find a set of differential operators $\{\mathcal{L}_k\}$ whose solution space is spanned by the $\eps$-Frobenius basis in eq.~\eqref{Frobgeneric} (and only those):
\begin{align}
	\text{Sol}(\{\mathcal L_k\})=\Big\langle \mathcal{I}_{(j_1, \hdots, j_{l+1})}(\uz,\eps) \Big\rangle_{1<j\leq l+1}~.
\label{solspacefrob}
\end{align}
 Clearly, this solution space will then also contain the maximal cuts of the banana integral for a given loop order $l$.

The most general linear differential operator with polynomial coefficients acting on the $\eps$-Frobenius basis can be written in the form 
\begin{align}
	\mathcal L_{\underline\alpha,\underline\beta}	=	\sum_{\underline\alpha,\underline\beta}	a_{\underline\alpha,\underline\beta}	~	\uz^{\underline\alpha}\underline\theta^{\underline\beta}~,
\label{generalop}
\end{align}
where $a_{\underline\alpha,\underline\beta}$ are constants and $\underline\alpha,\underline\beta$ are multi-indices, and $\uz^{\underline\alpha}\coloneqq\prod_{k=1}^{l+1}z_k^{\alpha_k}$ and $\underline\theta^{\underline\beta} \coloneqq \prod_{k=1}^{l+1}\theta_k^{\alpha_k}$. It is therefore sufficient to understand the action of the $\zs^{\underline\alpha}\underline{\theta}^{\underline\beta}$ on the elements of the $\eps$-Frobenius basis. 

Let us start by analyzing the maximally symmetric index $\underline j=(1,\ldots,1)$. With the rising and falling factorial from eq.~\eqref{pochhammer}, we find
\begin{equation}\begin{aligned}
	\uz^{\underline\alpha}\underline\theta^{\underline\beta}~ \mathcal I_{(1, \hdots, 1)}		&=	\sum_{\underline n \in \mathbb{N}_0^{l+1}}       \frac{\prod_{i=1}^{l+1}[n_i]_{\alpha_i}[n_i+\epsilon]_{\alpha_i}}{[n+l\epsilon]_{\alpha}[n+(l+1)\epsilon]_{\alpha}}	 	 \prod_{k=1}^{l+1}(n_k-\alpha_k+\epsilon)^{\beta_k}\\
	&\qquad\qquad\qquad	\times	 \frac{\Gamma(1+n+(l+1)\epsilon)\, \Gamma(1+n+l\epsilon)}{\prod_{i=1}^{l+1} \Gamma(1+n_i)\, \Gamma(1+n_i+\epsilon)}   \prod_{i=1}^{l+1} z_i^{n_i +  \epsilon} ~,
\label{actiongeneralop}
\end{aligned}\end{equation}
where in the left-hand side we again use the simplified notation $\alpha=\sum_i \alpha_i$. Using a computer algebra system we can now solve for the coefficients $a_{\underline\alpha,\underline\beta}$ in eq.~\eqref{generalop} such that the operator $\mathcal L_{\underline\alpha,\underline\beta}$ annihilates the $\eps$-Frobenius basis element with $\underline j=(1,\ldots,1)$,
\begin{align}
	\mathcal L_{\underline\alpha,\underline\beta}~ \mathcal I_{(1, \hdots, 1)}	=0~.
\label{generalop2}
\end{align}
For fixed $\alpha$ and $\beta$, any linear combination that annihilates $\mathcal I_{(1,\hdots,1)}$ can be computed in this way\footnote{A \texttt{Mathematica}-code which generates these operators can be downloaded from \url{http://www.th.physik.uni-bonn.de/Groups/Klemm/data.php} as supplementary data to this paper.}.

As an example, let us consider the two-loop case. Here we can construct the operators
\beq\bsp
	\mathcal L_1	&\,=	(1-z_1)(\theta_2-\epsilon)(\theta_3-\epsilon)-(z_2(\theta_3-\epsilon)+z_3(\theta_2-\epsilon))(2\theta_1+\theta_2+\theta_3+1-\epsilon)\,,	\\
	\mathcal L_4 	&\,=	\theta_1(\theta_1-\epsilon)-z_1(\theta_1+\theta_2+\theta_3+1)(\theta_1+\theta_2+\theta_3+1-\epsilon)	~,
\label{exampleops}
\esp\eeq
and $\mathcal L_2=\mathcal L_1(1\leftrightarrow2), ~\mathcal L_3=\mathcal L_1(1\leftrightarrow3),~ \mathcal L_5=\mathcal L_4(1\leftrightarrow2)$ and $\mathcal L_6=\mathcal L_4(1\leftrightarrow3)$	. For these operators we have chosen $\alpha\leq1$ and $\beta\leq2$. The set of operators 
\beq\label{eq:L2set}
\mathcal L^{(2)} \coloneqq \{\mathcal L_1, \hdots, \mathcal L_6\}
\eeq
 is enough to uniquely determine the $\eps$-Frobenius basis elements with $j=2,3$, and no other solutions.

More generally, for $\alpha$ and $\beta$ chosen large enough, the operators $\mathcal L_{\underline\alpha,\underline\beta}$ will generate the solution space in eq.~\eqref{solspacefrob}, i.e., $\text{Sol}(\{\mathcal L_{\underline\alpha,\underline\beta}\})=\langle \mathcal{I}_{(j_1, \hdots, j_{l+1})}(\uz,\eps) \rangle_{1<j\leq l+1}$. From our computations we actually found that the operators with $\alpha\leq1$ and $\beta\leq l$ are enough to generate the desired solution space:
\begin{align}
	\text{Sol}(\{\mathcal L_{\underline\alpha,\underline\beta}\}_{\alpha\leq1,\beta\leq l})=\Big\langle \mathcal{I}_{(j_1, \hdots, j_{l+1})} (z,\eps)\Big\rangle_{1<j\leq l+1}~.
\label{solspacefrob}
\end{align}
The set $\{\mathcal L_{\underline\alpha,\underline\beta}\}_{\alpha\leq1,\beta\leq l}$ is still overcomplete:
A properly chosen subset of $\{\mathcal L_{\underline\alpha,\underline\beta}\}_{\alpha\leq1,\beta\leq l}$ can be sufficient to generate the whole solution space. For example, the operators from the set $\mathcal L^{(2)}$ are sufficient to generate the desired $\eps$-Frobenius basis elements in the sunset case, although there exist $9$ linearly independent operators for $\alpha\leq1$ and $\beta\leq2$.

As mentioned in subsection \ref{para:MultiParameterPF}, it is in general not clear what is the best way of representing an ideal such that it yields the desired solution space. For instance, naively, the maximal cut of the two-loop banana integral was found to satisfy a fourth order homogeneous differential equation, cf., e.g., ref.~\cite{Caffo:1998du}. Later, it was shown that a second-order differential operator suffices (in two space-time dimensions), cf. refs.~\cite{Muller-Stach:2011qkg,MullerStach:2012mp,Remiddi:2013joa}. Using our method, we would obtain the set $\mathcal{L}^{(2)}$ of six operators, which has still the same solution space. Our set $\mathcal{L}^{(2)}$ consists of more operators, but of simpler type. For example, the polynomials appearing in the operators in eq.~\eqref{exampleops} are of small degree. The question of which representation is more appropriate depends on the concrete application that one has in mind.

So far we have only discussed how to find a set of operators $\{\mathcal L_{\underline\alpha,\underline\beta}\}$ that annihilate the functions $\mathcal{I}_{(j_1, \hdots, j_{l+1})} (z,\eps)$ with $1<j\le l+1$. Equivalently, the solution space $\textrm{Sol}(\{\mathcal L_{\underline\alpha,\underline\beta}\})$ will be generated by all the maximal cuts of the $l$-loop banana graph. 
In order to describe the full uncut Feynman integral, we need to include the corresponding functions with $j=1$. There are $\binom{l+1}{1}=l+1$ different functions of this type (and there is the same number of $l$-loop tadpole graphs for generic masses). These functions, however, are not elements of $\textrm{Sol}(\{\mathcal L_{\underline\alpha,\underline\beta}\})$ (i.e., they are not simultaneously annihilated by all the $\mathcal{L}_{\underline\alpha,\underline\beta}$). Instead, they  are special solutions to certain inhomogeneous differential equations obtained from the $\mathcal L_{\underline\alpha,\underline\beta}$. In analogy with eq.~\eqref{solspacemulti}, we define the solution space of a set of inhomogeneous differential equations:
\begin{align}\label{solspaceinhom}
	\text{SolInhom}&(\{(\mathcal L_k, g_k)\})=\{ f(\uz) |	\mathcal L_i f(\uz)=g_i(\uz)	 \text{ for all }	(\mathcal L_i, g_i) \in \{ (\mathcal L_k,g_k) \}	\}~.
\end{align}
In this language, an element of the $\eps$-Frobenius basis with $j=1$ lies in the solution space $\text{SolInhom}(\{(\mathcal L_{\underline\alpha,\underline\beta}, g_{\underline\alpha,\underline\beta})\})$, for a specific set of inhomogeneities $\{g_{\underline\alpha,\underline\beta}\}$ depending on the particular chosen $j$-vector.  In order to determine these inhomogeneities, we simply apply every generator $\mathcal{L}_{\underline\alpha,\underline\beta}$ to every element of the Frobenius basis with $j=1$. By this procedure one finds for each basis element with $j=1$  inhomogeneities of the form $z_i^\epsilon$. 

Let us illustrate this again on the example of the two-loop banana integral. Acting with the the operators from $\mathcal L^{(2)}$ yields the following inhomogeneities:
\begin{equation}\begin{aligned}
	\mathcal L_1	 \mathcal I_{(1,0,0)}(\uz,\eps) 	&=	\frac1{\Gamma(-\epsilon)^2}~z_1^\epsilon\,,	\\
	\mathcal L_2	 \mathcal I_{(0,1,0)}(\uz,\eps)  	&=	\frac1{\Gamma(-\epsilon)^2}~z_2^\epsilon	\,,\\
	\mathcal L_3	 \mathcal I_{(0,0,1)}(\uz,\eps)  	&=	\frac1{\Gamma(-\epsilon)^2}~z_3^\epsilon	\,.
\label{exampleinhoms}
\end{aligned}\end{equation}
All other operators give zero when applied to the three $\eps$-Frobenius solutions with $j=1$ 
We stress that there is still a freedom in how we choose the set $\{(\mathcal L_{\underline\alpha,\underline\beta}, g_{\underline\alpha,\underline\beta})\}$ and construct the corresponding solution spaces. For example, we could write the vector space $\text{SolInhom}(\{(\mathcal L_{\underline\alpha,\underline\beta}, g_{\underline\alpha,\underline\beta})\})$ as a sum of three solution spaces
\beq\label{eq:SolInhomSum}
\text{SolInhom}(\{\mathcal L_{\underline\alpha,\underline\beta}, g_{\underline\alpha,\underline\beta}\}) = \sum_{p=1}^3 \text{SolInhom}(\mathcal{L}^{(2,p)}) = \Big\langle \mathcal{I}_{(j_1, j_2, j_{3})} (\zs,\eps)\Big\rangle_{1\le j\leq 3}\,,
\eeq
with
\beq
\mathcal{L}^{(2,p)} = \{(\mathcal{L}_i,g_i): g_i(\zs) = \delta_{ip}\,z_p^{\eps}\,\Gamma(-\eps)^{-2}, 1\le i\le 6\}\,.
\eeq
Since $I_{1,1,1}(\uz,D)$ is symmetric under a permutation of the $z_i$, the inhomogeneous term must also have this property. It is therefore sufficient to consider $\{(\mathcal L_{\underline\alpha,\underline\beta}, g_{\underline\alpha,\underline\beta})\}$ such that the solution space contains the sum $\mathcal I_{(1,0,0)}+\mathcal I_{(0,1,0)}+\mathcal I_{(0,0,1)}$, but it does not contain each summand separately. This is achieved by the choice 
\beq
\text{SolInhom}(\{(\mathcal L_{\underline\alpha,\underline\beta}, g_{\underline\alpha,\underline\beta})\}) = \text{SolInhom}(\mathcal{L}^{(2)}_{\textrm{inhom}})\,,
\eeq
with
\beq
\mathcal{L}^{(2)}_{\textrm{inhom}} = \{(\mathcal L_1,{\Gamma(-\epsilon)^{-2}}~z_1^\epsilon), (\mathcal L_2,{\Gamma(-\epsilon)^{-2}}~z_2^\epsilon), (\mathcal L_3,{\Gamma(-\epsilon)^{-2}}~z_3^\epsilon), (\mathcal L_4,0), (\mathcal L_5,0), (\mathcal L_6,0)\}\,.
\eeq
Note that $\text{SolInhom}(\mathcal{L}^{(2)}_{\textrm{inhom}})$ is contained in the sum of vector space in eq.~\eqref{eq:SolInhomSum}, but the converse is not true.
Finally, we can also describe the solution space of a set of homogeneous equations, by multiplying $\mathcal{L}_1$, $\mathcal{L}_2$ and $\mathcal{L}_3$ from left by an operator that annihilates the inhomogeneity. We have
\beq
\text{Sol}(\tilde{\mathcal{L}}^{(2)}) =  \Big\langle \mathcal{I}_{(j_1, j_2, j_{3})} (\zs,\eps)\Big\rangle_{1\le j\leq 3}\,,
\eeq
with 
\beq\label{eq:L2tilde}
\tilde{\mathcal{L}}^{(2)} \coloneqq \{ (\theta_1-\epsilon)\mathcal L_1, (\theta_2-\epsilon)\mathcal L_2, (\theta_3-\epsilon)\mathcal L_3, \mathcal L_4, \mathcal L_5, \mathcal L_6	\}\,.
\eeq
The previous discussion makes it clear that it is a matter of taste whether we consider a set of inhomogeneous differential operators or a set of higher-order differential operators; the resulting solution spaces contain all functions necessary to solve the problem at hand. Our strategy of first constructing combinatorially an ideal which can then be extended to a set of inhomogeneous differential equations guarantees that we generate the correct solution space. 

The method we have just described can easily be implemented into a computer algebra system. We can in this way derive a set of inhomogeneous differential equations satisfied by $J_{l,\underline{0}}(\uz,\eps)$. 
The coefficients of the linear combination in the Frobenius basis in eq.~
\eqref{Frobgeneric} can be read off by comparing to the hypergeometric series representation in eq.~\eqref{eq:HypGeomBanana} (or by using eq.~\eqref{eq:asymp_generic} as a boundary condition in the large momentum limit). At this point we have to make an important comment. Our strategy to obtain the differential equations consisted in starting from the MB representation, which leads to the hypergeometric series representation in eq.~\eqref{eq:HypGeomBanana}. It may thus appear that we did not gain anything, because we have derived the differential equations \emph{after} we knew the solution, cf. eq.~\eqref{eq:HypGeomBanana}. The series representation in eq.~\eqref{eq:HypGeomBanana}, however, does not converge for all values of $\uz$. The differential equations allow us to analytically continue the series in eq.~\eqref{eq:HypGeomBanana}, e.g., by transforming the differential equation to another point and to obtain local power series representations close to that point (see the discussion in section~\ref{sec:deqs}).

Let us make some comments about our differential equations. First, we emphazise that our procedure allows us to derive differential equations for arbitrary values of $\uz$, including zero masses. This follows immediately from the fact that eq.~\eqref{eq:HypGeomBanana} is valid also in the case of massless propagators. 
Second, we point out that from our higher-order inhomogeneous differential equation for $J_{l,\underline{0}}(\uz,\eps)$, we can easily obtain the first-order Gauss-Manin system for the master integrals in eq.~\eqref{eq:banana_MIs}. When extracting the entries of the matrix $\widetilde{{\bf A}}(\uz;\epsilon)$ in eq.~\eqref{eq:firstorder}, we need to divide by the discriminant of the system. This introduces typically a very long and complicated polynomial, especially in the multi-parameter case. Therefore, the matrix $\widetilde{{\bf A}}(\uz;\epsilon)$ is usually very complicated, and we prefer to work with the larger, but simpler, set of differential operators $\{\mathcal L_{\alpha,\beta}\}$ constructed in this section.

\subsubsection{Comments on the number of solutions.}
\label{subsec:commentsnumbersolutions}

We conclude with some comments about the number of elements in the $\eps$-Frobenius basis in eq.~\eqref{Frobgeneric} for a given loop order $l$. We count the number of solutions with the same value of $j$, which itself counts how many $\epsilon$'s appear in the indicials to the differential operators. Thereby, we find for fixed $j$ exactly $\binom{l+1}{j}$ different solutions. In total we obtain the following sequence:
\begin{align}
	\underbrace{\binom{l+1}{1}=l+1}_{j=1}	\quad 	\Bigg| \quad	\underbrace{\binom{l+1}{2}}_{j=2} \quad	\underbrace{\binom{l+1}{3}}_{j=3}\quad	\hdots\quad \underbrace{\binom{l+1}{l+1}=1}_{j=l+1}~.
\label{numbersolutions}
\end{align}
Here the vertical line separates the special solutions of the inhomogeneous equations from the homogeneous ones. For example, for the two-loop case this reduces to 
\begin{align}
	3 \quad | \quad 3\quad 1~,
\end{align}
in agreement with the discussion above. If we compare the number of solutions in the generic-mass case for $\epsilon=0$ and $\epsilon\neq0$, we observe that for $\epsilon\neq0$ we have more solutions. For example, at two-loop order the number of solutions for $\epsilon=0$ is
\begin{align}
	1 \quad | \quad 1\quad 1~.
\end{align}
The fact that the number of master integrals is smaller in exactly $D=2$ dimensions was already observed in refs.~\cite{Muller-Stach:2011qkg,MullerStach:2012mp,Remiddi:2013joa} for the two-loop banana graph. Comparing to the number of solutions for $\eps=0$ for the maximal cut in $D=2$ dimensions in ref.~\cite{Bonisch:2020qmm}, we see that generically the dimension of the solution space, and therefore the number of master integrals, increases by introducing a non-vanishing dimensional regularization parameter $\epsilon$.

Let us make another  comment about the number of solutions for different values of $j$. We can extend the sequence in eq.~\eqref{numbersolutions} by including the function in eq.~\eqref{Frobgeneric} for $j=0$. There is exactly one such function. Note that this solution can also be included into eq.~\eqref{eq:MBresSummed3}, because it would enter the linear combination with a vanishing prefactor. We then obtain the sequence: 
\begin{align}
	\underbrace{1=	\binom{l+1}{0}}_{j=0}	\quad 	\Bigg|\Bigg| \quad		\underbrace{\binom{l+1}{1}=l+1}_{j=1}	\quad 	\Bigg| \quad	\underbrace{\binom{l+1}{2}}_{j=2} \quad	\underbrace{\binom{l+1}{3}}_{j=3}\quad	\hdots\quad \underbrace{\binom{l+1}{l+1}=1}_{j=l+1}~.
\label{numbersolutions2}
\end{align}
The solution space of the functions in eq.~\eqref{Frobgeneric} with $0\le j\le l+1$ can be described in terms of differential equations in different ways. One possibility is that one allows only certain linear combinations of operators determining the solution space corresponding to the sequence in eq.~\eqref{numbersolutions}. One may apply appropriate $\theta$-derivatives to the same operators, similar to the procedure of extending the solution space by the $\eps$-Frobenius solutions with $j=1$. To be precise, let us look again at the two-loop case. Here the $\eps$-Frobenius element with $j=0$ can be included into the solution space if one considers the following ideal $\{ \theta_1(\theta_1-\epsilon)\mathcal D_1, \theta_2(\theta_2-\epsilon)\mathcal D_2, \theta_3(\theta_3-\epsilon)\mathcal D_3, \mathcal D_4, \mathcal D_5, \mathcal D_6\}$. This ideal allows exactly the $8$ desired solutions which can be grouped into
\begin{align}
	1\left.\left.	\right|\right| \quad	\left.	3	\quad 	\right| \quad	3\quad	1~.
\label{numbersolutions2loopcase}
\end{align}

More generally, the pattern of the number of solutions in eq.~\eqref{numbersolutions2} corresponds to the pattern of the dimensions of the (co)homology groups of the 
ambient space $\mathbb{P}_{l+1}=\otimes_{i=1}^{l+1} \mathbb{P}^1_{(i)}$  the ambient space in which the Calabi-Yau space for the critical spacetime dimensions $\epsilon=0$ is 
embedded, as it is explained in eq.~\eqref{CICY}. It is tantalising to speculate that in dimensional regularization the solutions can be 
interpreted as some kind of twisted quantum deformation of the cohomology of the $\mathbb{P}_{l+1}$.

\subsection{Interpretation in terms of cut integrals}
\label{subsec:non-max-cuts}  

In this section we provide an interpretation of the additional special solutions to the inhomogeneous Picard-Fuchs differential ideal in terms of non-maximal cut integrals. This interpretation complements and extends the interpretation of the maximal cut integrals as solutions to the associated homogeneous system and its relationship to the Frobenius basis for the solution space of the Picard-Fuchs differential ideals for the maximal cuts. We start by defining non-maximal cuts in general (not restricted to banana integrals), and then comment on the relationship to the solution space of the Picard-Fuchs differential ideal at the end of this section.

Let us consider the setup and the notation of section~\ref{sec:deqs}. 
A \emph{non-maximal cut contour} $\Gamma$ for the integral $I_{\unu}(\ux;D)$ is a contour  that encircles some of the propagators of $I_{\unu}(\ux;D)$. We assume that $\Gamma$ encircles at least one propagator. If it encircles all of them, then this definition agrees with the definition of maximal cut contours given in section~\ref{sec:deqs}. At this point we have to make an important comment about dimensional regularization. In section~\ref{sec:deqs} we had only defined maximal cut integrals in \emph{integer} dimensions $D$, where the concept of an integration contour has an immediate geometrical interpretation. For the following discussion, it will be useful to extend the definition of cut integrals in dimensional regularization. The corresponding integration contours can still be defined geometrically, but they need to be interpreted as \emph{twisted cycles}~\cite{AomotoKita}, see also refs.~\cite{Mizera:2017rqa,Mastrolia:2018uzb,Mizera:2019gea,Frellesvig:2019kgj,Frellesvig:2019uqt}. The distinction will not be crucial for the discussion that follows, and it is sufficient to apply intuition from ordinary cycles in integer dimensions. It is important, however, to point out that  beyond one loop non-maximal cut integrals may diverge even if the original Feynman integral is finite, cf., e.g., refs.~\cite{Abreu:2014cla,Ananthanarayan:2021cch}. The divergences are of infrared origin and arise from massless particles that are put on-shell when taking the residues. Therefore, it is important to work with an appropriate infrared regulator when discussing non-maximal cut integrals. Dimensional regularization provides such a regulator.

Following eqs.~\eqref{eq:cut_def} and~\eqref{eq:equal-mass_cuts}, if $\Gamma$ is a cut contour and $J_i(\zs;\eps)$ denotes a master integral, then we denote the corresponding cut integral by $J_i^{\Gamma}(\zs;\eps)$. The vector $\uJ^{\Gamma}(\zs;\eps)$ then satisfies the same system of differential equations as the vector of master intergals $J_i(\zs;\eps)$ in eq.~\eqref{eq:JDEQ}, i.e., we have
\beq\label{eq:cut_DEQ}
\rd \uJ^{\Gamma}(\zs;\eps) = \widetilde{A}(\zs;\eps)\uJ^{\Gamma}(\zs;\eps)  \quad \text{for every cut contour $\Gamma$}\,.
\eeq

Let us discuss how we can construct a basis of cut integrals. 
We say that a sector is \emph{reducible} if every integral from this sector can be written as a linear combination of integrals from lower sectors. Let $\Theta_1,\ldots,\Theta_s$ denote the set of irreducible sectors. There is a natural partial order on the $\Theta_r$ (coming from the partial order on sectors, see section~\ref{sec:deqs}). In particular, we choose $\Theta_1=(1,\ldots,1)$. We denote by $\uJ_r(\zs;\eps) = (J_{r,1}(\zs;\eps),\ldots,J_{r,M_r}(\zs;\eps))^T$ the master integrals in the sector $\Theta_r$ (by which we mean that those master integrals cannot be expressed as linear combinations of integrals from lower sectors; cf. eq.~\eqref{eq:sector_masters}).
In each sector $\Theta_r$ we can now choose a basis of $M_r$ maximal cut contours, i.e., a set of $M_r$ independent cut contours that encircle precisely the propagators that define the sector $\Theta_r$. Let us denote the basis of maximal cut contours in the sector $\Theta_r$ by $\Gamma_{r,1},\ldots,\Gamma_{r,M_r}$. Note that for $r=1$ and integer $D$, we recover the maximal cut contours defined in section~\ref{subsec:f-odeqs}. Each contour $\Gamma_{r,i}$ defines a valid non-maximal cut contour for integrals from sectors with more propagators.

Consider the $M\times M$ matrix (where $M=\sum_{r=1}^sM_r$ is the total number of master integrals):
\beq
{\bf J}(\zs;\eps) \eqqcolon \Big(\uJ^{\Gamma_{1,1}}(\zs;\eps),\ldots,\uJ^{\Gamma_{1,M_1}}(\zs;\eps),\uJ^{\Gamma_{2,1}}(\zs;\eps),\ldots, \uJ^{\Gamma_{s,M_s}}(\zs;\eps)\Big)\,.
\eeq
It is easy to check that the columns of ${\bf J}(\zs;\eps)$ are linearly independent (for generic $\zs$), and so ${\bf J}(\zs;\eps)$ is a fundamental solution matrix for the system in eq.~\eqref{eq:cut_DEQ}. As a corollary, we conclude that every master integral can be written as a linear combination of its cut integrals in the basis of cut contours $\Gamma_{r,i}$:
\beq\label{eq:cut_decomposition}
J_k(\zs;\eps) = \sum_{r=1}^s\sum_{i=1}^{M_r}a_{r,i}(\eps)\,J_k^{\Gamma_{r,i}}(\zs;\eps)\,,
\eeq
where the coefficients $a_{r,i}(\eps)$ may depend on $\eps$, but they are independent of $\zs$. Note that this relation is very reminiscient of the celebrated \emph{Feynman Tree Theorem}~\cite{Feynman:1963ax,Feynman:1972mt}. It would be interesting to work out the relationship between the basis decomposition in eq.~\eqref{eq:cut_decomposition} and the Feynman Tree Theorem in the future.

Assume now that ${\cal D}^{(k)}$ generates the Picard-Fuchs differential ideal that annihilates $J_k(\zs;\eps)$, i.e., it is a complete set of differential operators that annihilate $J_k(\zs;\eps)$. Since the solution space of the Picard-Fuchs differential ideal must agree with the general solution obtained from the system in eq.~\eqref{eq:cut_DEQ}, eq.~\eqref{eq:cut_decomposition} implies that
\beq\label{eq:Sol_decomposition}
\textrm{Sol}({\cal D}^{(k)}) =\Big\langle J_k^{\Gamma_{r,i}}(\zs;\eps) \Big\rangle_{{1\le r\le s; 1\le i\le M_r}}
=\sum_{r=1}^s\Big\langle J_k^{\Gamma_{r,i}}(\zs;\eps) \Big\rangle_{{1\le i\le M_r}}\,,
\eeq
where in the second inequality we have made explicit the fact that the solution space can be decomposed into contributions from cut contours from different sectors.
The previous considerations show that we can, at least in principle, obtain a basis of the solution space of the system of differential equations satisfied by the master integrals that consists entirely of cut integrals. Just like in the case of maximal cuts, however, constructing such a basis of cycles explicitly can be a monumental task, and it is in general not possible to follow this route.

In the following, we argue that the special solutions from section~\ref{sec:diffeqs} that extend the solution space for the Picard-Fuchs differential ideal for the maximal cuts can be identified with non-maximal cuts. However, similar to the discussion of the relationship between the Frobenius basis and the maximal cuts defined via cycles from integral homology, the special solutions constructed in the previous section will not be obtained from non-maximal cut contours defined over the integers. 

Let us illustrate this on the example of the two-loop case. In particular, let us discuss the two-loop master integral $J_{1,\underline{0}}(\zs;\eps)$. The Picard-Fuchs differential ideal is generated by the set $\tilde{\cal L}^{(2)}$ in eq.~\eqref{eq:L2tilde}. Its solution space admits the decomposition
\beq\label{eq:Sol2_decomposition}
\textrm{Sol}(\tilde{\cal L}^{(2)}) = \textrm{Sol}({\cal L}^{(2)}) +  \Big\langle\mathcal I_{(1,0,0)}(\uz,\eps)\Big\rangle+\Big\langle\mathcal I_{(0,1,0)}(\uz,\eps) \Big\rangle +\Big\langle\mathcal I_{(0,0,1)}(\uz,\eps)\Big\rangle\,,
\eeq
where ${\cal L}^{(2)}$ was defined in eq.~\eqref{eq:L2set}. Let us interpret eq.~\eqref{eq:Sol2_decomposition} in the light of eq.~\eqref{eq:Sol_decomposition}. We know from section~\ref{sec:deqs} that there are four irreducible sectors for the two-loop banana integral, namely 
\beq
\Theta_1=(1,1,1)\,, \quad \Theta_2=(1,1,0)\,, \quad\Theta_3=(1,0,1)\,,\quad \Theta_4=(0,1,1)\,,
\eeq
and $M_1=4$ and $M_2=M_3=M_4=1$. Let us start by discussing the first term in eq.~\eqref{eq:Sol2_decomposition}. Its interpretation is similar to the discussion in section~\ref{sssec:boundary} and below eq.~\eqref{eq:max_cut_decomp} (which were restricted to $D=2$ dimensions): The cut contours $\Gamma_{1,i}$, $1\le i\le 4$, define a basis for the maximal cut contours of $J_{1,\underline{0}}(\zs;\eps)$.\footnote{Note that these contours define the maximal cuts in $D=2-2\eps$ dimensions. Consequently, there are four maximal cut contours, and not only two, just like there are more master integrals in  $D=2-2\eps$ than in $D=2$ dimensions.} The maximal cut integrals $J_{1,\underline{0}}^{\Gamma_{1,i}}(\zs;\eps)$ are annihilated by the differential operators from ${\cal L}^{(2)}$, and they form an integral basis for the solution space:
\beq
\textrm{Sol}({\cal L}^{(2)}) = \Big\langle J_{1,\underline{0}}^{\Gamma_{1,i}}(\zs;\eps) \Big\rangle_{{1\le i\le 4}}\,.
\eeq
This integral basis may be hard to construct, as it requires a detailed knowledge of the cycles.
We know, however, that the solution space $\textrm{Sol}({\cal L}^{(2)})$ is equally generated by the $\eps$-Frobenius basis in eq.~\eqref{solspacefrob}. The $\eps$-Frobenius basis is not an integral basis, but its advantage is that we can construct it explicitly. 

Next, let us discuss the remaining three terms in eq.~\eqref{eq:Sol2_decomposition}. The non-maximal cut integrals $J_{1,\underline{0}}^{\Gamma_{r,1}}(\zs;\eps)$ for $2\le r\le 4$ are not annihilated by the elements of ${\cal L}^{(2)}$ (because for each sector there is a tadpole integral whose maximal cut is non-zero). As a consequence, $J_{1,\underline{0}}^{\Gamma_{r,1}}(\zs;\eps)$ for $2\le r\le 4$ satisfies an inhomogeneous equation, i.e., it is annihilated by the elements of $\tilde{\cal L}^{(2)}$! Thus, we see that the last three terms in eq.~\eqref{eq:Sol2_decomposition} represent the contributions from the non-maximal cuts in eq.~\eqref{eq:Sol_decomposition}, which we quote here for the two-loop case:
\beq
\textrm{Sol}(\tilde{\cal L}^{(2)}) 
=\Big\langle J_{1,\underline{0}}^{\Gamma_{1,i}}(\zs;\eps) \Big\rangle_{{1\le i\le 4}}+\sum_{r=2}^4\Big\langle J_{1,\underline{0}}^{\Gamma_{r,1}}(\zs;\eps) \Big\rangle=
\textrm{Sol}({\cal L}^{(2)}) +\sum_{r=2}^4\Big\langle J_{1,\underline{0}}^{\Gamma_{r,1}}(\zs;\eps) \Big\rangle\,.
\eeq
Again, constructing explicitly the integer cycles $\Gamma_{r,1}$ with $r>1$ can be extremely complicated, but we can work with the elements of the non-integer $\eps$-Frobenius basis with $j=1$.
Note that it would be wrong to conclude that the elements of the $\eps$-Frobenius basis with $j=1$ are the non-maximal cut integrals $J_{1,\underline{0}}^{\Gamma_{r,1}}(\zs;\eps)$ for $r>1$, defined by integrating over cycles from integral homology. Just like for the maximal cuts, the elements of the $\eps$-Frobenius basis correspond to cut integrals over cycles that are not defined in integral homology, and a given non-maximal cut integral $J_{1,\underline{0}}^{\Gamma_{r,1}}(\zs;\eps)$ is in general for $r>1$ a linear combination with transcendental coefficients of different terms in the $\eps$ -Frobenius, including those with $j>1$.

\section{Equal-mass banana integrals and the Bessel function representation}
\label{subsec:equalmasseps}  

\subsection{Differential operators from Bessel functions}
\label{subsec:diffopbessel}

In this section we focus on the equal-mass banana integrals, and we will derive differential equations for them. One can of course apply the techniques from the previous section, e.g., by considering the equal-mass case as a one-parameter sub-slice in the generic-mass parameter space. However, there are typically several problems if one considers a one-parameter sub-slice in a much larger parameter space, and one cannot simply restrict the partial differential equations obtained in the previous section to the equal-mass case. Moreover, the dimension of the $\eps$-Frobenius basis is smaller, because the equal-mass case is highly symmetric, and there are fewer master integrals. The goal of this section is to present an alternative method to derive the differential equation for equal-masses, based on the Bessel function representation of banana integrals.

We start from the Bessel function representation in $D=2-2\epsilon$ dimensions of the $l$-loop Banana Feynman integral:
\begin{equation}
	J_{l,1}(z=1/t;\epsilon)	=	\frac {2^{l(1-\epsilon)}}{\Gamma(1+l\epsilon)}t^\frac \epsilon2 \int_0^\infty x^{1+l\epsilon}~I_{-\epsilon}(\sqrt tx)K_\epsilon (x)^{l+1}~\mathrm dx~,
\label{besselrep}
\end{equation}
valid for $t\coloneqq 1/z<(l+1)^2$, where $I_\alpha(x)$ and $K_\alpha(x)$ are the modified Bessel functions. A derivation of this representation can be found in ref.~\cite{Berends:1993ee} (see eq. (9) there, which also includes generic masses $m_i$) and in ref.~\cite{Vanhove:2014wqa} for $\epsilon=0$.

Recall from the $\epsilon=0$ case that the maximal cut of the $l$-loop banana integral in $D=2$ dimensions is given by a period integral of an $(l-1)$-dimensional Calabi-Yau variety. The full Feynman integral in turn is given by a linear combination of Calabi-Yau periods, all satisfying the homogeneous differential equation of the maximal cut integral, plus a special solution to an inhomogeneous version of this differential equation.
The differential operator is of degree $l$. Since there are $l$ equal-mass banana integrals at $l$ loops in dimensional regularization (cf.~eq.~\eqref{eq:equal_mass_MIs}), we expect that $J_{l,1}(z;\eps)$ satisfies an inhomogeneous differential equation of degree $l$, whose coefficients are polynomials in $z$ and $\eps$. We now describe a method to determine this operator and the corresponding inhomogeneity.

The modified Bessel functions $I_\alpha(x)$ and $K_\alpha(x)$ satisfy the modified Bessel differential equation:
\begin{equation}
	\mathcal Bf(x)=0\,,	\quad\text{with}\quad \mathcal B=\theta_x^2-x^2-\alpha^2~.
\label{besseleq}
\end{equation}
Using a proposition from refs.~\cite{doi:10.1080/10586458.2008.10129032,MR1809982}, we can construct an operator $\mathcal B_{l+2}$ of degree $l+2$ by the recurrence
\begin{equation}
	\mathcal B_0=1,~\mathcal B_1=\theta_x	\quad\text{and}\quad		\mathcal B_{k+1}=\theta_x\mathcal B_k-(x^2+\epsilon^2)(l+2-k)\mathcal B_{k-1}\,.
\end{equation}
This operator annihilates the $(l+1)^{\textrm{th}}$ power of $K_\epsilon(x)$, i.e.,
\begin{equation}
	\mathcal B_{l+2} K_\epsilon^{l+1}(x)	=	0~.
\end{equation}
Next we use this operator to get the identity
\begin{equation}
	\int_0^\infty x^{1+l\epsilon}~I_{-\epsilon}(\sqrt tx)~\mathcal B_{l+2}K_\epsilon (x)^{l+1}~ \rd x	=	0~.
\label{eqstart}
\end{equation}
Using integration-by-parts, we find\footnote{Here we assume that the boundary terms vanish, as they do for the functions in eq.~\eqref{eqstart}.}
\begin{equation}
	\int_0^\infty f(x)~\theta_x^m g(x)~\mathrm \rd x	=	(-1)^m\int_0^\infty g(x)~(\theta_x+1)^mf(x)~ \rd x\,,
\label{partid}
\end{equation}
as well as the identity
\begin{equation}
	(\theta_x+1)^mx^nf(x)	=	x^n(1+n+\theta_x)^mf(x)\,,
\label{id1}
\end{equation}
we can commute this operator in front of the other Bessel function, and we obtain an operator $\tilde{\mathcal{B}}_{l+2}$ 
 with the property
\begin{equation}
	\int_0^\infty x^{1+l\epsilon}~K_\epsilon (x)^{l+1}~\tilde{\mathcal B}_{l+2}I_{-\epsilon}(\sqrt tx)~\rd x	=	0~.
\label{eq2}
\end{equation}
Note that these operators contain only even powers of $x$.

So far we have still operators in the integration variable $x$. To obtain operators in the kinematic variable $t$, we use the identities:
\begin{align}
	\theta_x^nI_{-\epsilon}(\sqrt tx)	&=	2^n\theta_t^nI_{-\epsilon}(\sqrt tx)\,,	\\
	(x^2)^nI_{-\epsilon}(\sqrt tx)	&=	\left(\frac 1t(4\theta_t^2-\epsilon^2)\right)^nI_{-\epsilon}(\sqrt tx)~.
\label{id2}
\end{align}
At this stage we obtain
\begin{equation}
	\tilde{\mathcal D}_{l+2}(t)\int_0^\infty x^{1+l\epsilon}~K_\epsilon (x)^{l+1}I_{-\epsilon}(\sqrt tx)~\rd x	=	0~.
\end{equation}
Commutation of $t^{\frac\epsilon2}$ into the above expression is simply done by replacing $\theta_t$ by $\theta_t + \frac\epsilon2$ in $\tilde{\mathcal D}_{l+2}$. Thus we obtain an operator $\mathcal D_{l+2}(t)$ such that: 
\begin{equation}
	\mathcal D_{l+2}(t)\left\{ \frac 1{\Gamma(1+l\epsilon)}2^{l(1-\epsilon)}t^\frac \epsilon2 \int_0^\infty x^{1+l\epsilon}~I_{-\epsilon}(\sqrt tx)K_\epsilon (x)^{l+1}~\rd x \right\}	=	0~.
\label{homop}
\end{equation}
The operator $\mathcal D_{l+2}$ is of degree $l+2$. As such, it contains a too high number of derivatives in comparison with our expectations (recall that we expect a differential operator of order $l$). However, it turns out that we can factorize this operator in the following way
\begin{equation}
	\mathcal D_{l+2} = \mathcal N \ \theta_t (\theta_t -\epsilon) \ \mathcal L_{l,\eps}~,
\label{factorizationop}
\end{equation}
where we will fix the normalization $\mathcal N$ later. The operators $\mathcal L_{l,\eps}$ are of degree $l$ and give the homogeneous part of the desired inhomogeneous differential equation for $J_{l,1}(z;\eps)$. Let us denote the inhomogeneity by $S_l(t;\eps)$. We must have
\begin{equation}
	\mathcal L_{l,\eps}  J_{l,1}(z=1/t,\eps) = S_l(t;\eps)\,.
\label{oureq}
\eeq
Since $\theta_t (\theta_t -\epsilon)\mathcal L_{l,\eps}  J_{l,1}(z=1/t,\eps)=0$, the inhomogeneity must satisfy
\begin{equation}
\theta_t(\theta_t-\epsilon)S_l(t;\eps)=0 ~.
\end{equation}
Hence, it is of the form $S_l(t;\eps)=\alpha(\epsilon)+\beta(\epsilon)t^\epsilon$, where $\alpha(\epsilon)$ and $\beta(\epsilon)$ are constants in $t$ but may depend on the dimensional regulator $\epsilon$. 
At the large momentum point $z=1/t=0$, 
 we know the leading $\mathcal O(z)$ terms from eq.~\eqref{MBfirstorder}. We can then fix $\alpha(\epsilon)$ and $\beta(\epsilon)$ by transforming the differential operators from the variable $t$ to $z$. By slight abuse of notation, we use the same symbols $\mathcal L_{l,\eps}$ and $S_l(z;\eps)$ to write $\mathcal L_{l,\eps}  J_{l,1}(z,\eps) = S_l(z;\eps)$. Due to eq.~\eqref{MBfirstorder}, the inhomogeneity reads
\begin{equation}
	S_l(z;\eps)	=	-(l+1)! ~ z ~\frac{\Gamma(1+\epsilon)^l}{\Gamma(1+l\epsilon)}	=	(l+1)!z\, J_{l,0}(z;\epsilon)~,
\label{inhom}
\end{equation}
which generalizes the $\epsilon=0$ case presented in ref.~\cite{Vanhove:2014wqa}. Moreover, we see that the inhomogeneity is related to the tadpole given in eq.~\eqref{eq:equal_mass_MIs}.
Explicitly, the operators in terms of logarithmic derivatives can be found for the first few loop orders in table~\ref{tablediffops} and a \texttt{Mathematica}-code to generate them also for higher loop orders can be found in the supplementary data\footnote{You can download this from \url{http://www.th.physik.uni-bonn.de/Groups/Klemm/data.php}.} to this paper. The normalization constant $\mathcal N$ in eq. \eqref{factorizationop} is fixed such that $\left.\mathcal L_{l,\eps}\right|_{\theta=z=\eps=0}=1$. We stress that our differential operators are exact in $\eps$. 

\begin{table}[h]
\centering
{\small{
\begin{tabular}{@{}cp{12cm}@{}}
\toprule
Loop order $l$ & Differential operator $\mathcal L_{l,\eps}$ \\ \midrule
1 & $1+\epsilon-2 z -(1-4 z) \theta$\\[1ex]
2 & $(1+2 \epsilon ) (1+\epsilon-3 z +z \epsilon )+\left(-2-3 \epsilon+10 z +10 z \epsilon +9 z^2 \epsilon \right) \theta +(1-z) (1-9 z) \theta ^2$\\[1ex]
3 & $(1+2 \epsilon ) (1+3 \epsilon ) (1+\epsilon-4 z +2 z \epsilon )+(-3-12 \epsilon+18 z +60 z \epsilon -11 \epsilon ^2+28 z \epsilon ^2\newline+64 z^2
   \epsilon ^2) \theta 	-3 (-1+10 z) (1+2 \epsilon ) \theta ^2-(1-4 z) (1-16 z) \theta ^3$\\[1ex]
4 & $(1+2 \epsilon ) (1+3 \epsilon ) (1+4 \epsilon ) (1+\epsilon-5 z +3 z \epsilon )  +(-4-30 \epsilon+28 z +189 z \epsilon \newline+26 z^2 \epsilon -225 z^3
   \epsilon -70 \epsilon ^2+343 z \epsilon ^2-225 z^3 \epsilon ^2-50 \epsilon ^3+84 z \epsilon ^3+414 z^2 \epsilon ^3) \theta \newline	+ (6-63 z+26
   z^2-225 z^3+30 \epsilon -315 z \epsilon -675 z^3 \epsilon +35 \epsilon ^2-343 z \epsilon ^2-363 z^2 \epsilon ^2\newline-225 z^3 \epsilon ^2) \theta
   ^2	-2 \left(2-35 z+225 z^3+5 \epsilon -105 z \epsilon +259 z^2 \epsilon +225 z^3 \epsilon \right) \theta ^3\newline+(1-z) (1-9 z) (1-25 z) \theta ^4$\\[1ex]
5 & $(1+2 \epsilon ) (1+3 \epsilon ) (1+4 \epsilon ) (1+5\eps) (1+\eps-6z+4z\eps)+(-5-60 \epsilon+40 z +448 z \epsilon\newline +1152 z^3 \epsilon -255 \epsilon ^2+1664 z \epsilon ^2+472
   z^2 \epsilon ^2-3456 z^3 \epsilon ^2-450 \epsilon ^3+2128 z \epsilon ^3-4608 z^3 \epsilon
   ^3\newline-274 \epsilon ^4+208 z \epsilon ^4+2816 z^2 \epsilon ^4)\theta + (10+90 \epsilon-112 z+1152 z^3 -1008 z \epsilon +236 z^2 \epsilon +1152 z^3 \epsilon +255 \epsilon
   ^2-2772 z \epsilon ^2-10368 z^3 \epsilon ^2+225 \epsilon ^3-2128 z \epsilon ^3-4864 z^2
   \epsilon ^3-4608 z^3 \epsilon ^3)\theta^2 \newline +(-10+168 z-236 z^2+4608 z^3-60 \epsilon +1120 z \epsilon -85 \epsilon ^2+1848 z \epsilon ^2-2976
   z^2 \epsilon ^2\newline-6912 z^3 \epsilon ^2)\theta^3 + (5-140 z+5760 z^3+15 \epsilon -560 z \epsilon +3920 z^2 \epsilon)\theta^4 \newline - (1-4z)(1-16z)(1-36z)\theta^5$\\[1ex]
 \bottomrule
\end{tabular}
}}
\caption{The homogeneous differential operators $\cL_{l,\eps}$ that annihilate the maximal cuts of the banana integrals in $D=2-2\eps$ dimensions.}
\label{tablediffops}
\end{table}

Just like in the generic-mass case, we can rewrite our higher-order differential equation for $J_{l,1}(z;\eps)$ in eq.~\eqref{oureq} in the Gauss-Manin form in eq.~\eqref{eq:banana_DEQ}. For this we write the operator in terms of normal derivatives using eq.~\eqref{eq:d_to_theta}:
\begin{equation}
	\mathcal L_{l,\eps}	=	\sum_{k=0}^l B_{l,k}(z;\eps)\, \partial_z^k\,,
\label{normalderivatives}
\end{equation}
and we define ${\bf B}_{l}(z;\eps)$ similarly as in eq.~\eqref{defB}, but now with the $\eps$-dependent entries of $B_{l,k}(z;\eps)$. From this one can read off the coefficients ${\bf B}_{l,k}(z)$ in the expansion given in eq.~\eqref{1}. We observe that the limit of ${\bf B}_{l}(z;\eps)$ exists for $\eps\to0$. The initial condition for the equation is given by eq.~\eqref{MBfirstorder}.  Hence, we can use eq.~\eqref{eq:B_tilde_N_tilde} and the results for the Wronskian ${\bf W}_l(z)$ of section~\ref{sec:banint} to solve eq.~\eqref{eq:DEG_inhom_L} in terms of iterated integrals involving periods of a Calabi-Yau $(l-1)$-fold. The resulting expressions generalize the closed formula for $\eps=0$ from eq.~\eqref{eq:Jk0_sol} to higher orders in the dimensional regulator. Equation~\eqref{eq:winverse} implies that the matrix $\widetilde{{\bf B}}_{l,k}(z)$ in eq.~\eqref{eq:B_tilde_N_tilde} is at most quadratic in the periods, to all loop orders and for all orders in $\eps$.

\subsection{Discussion of the equal-mass operators in dimensional regularization}
\label{subsec:discussion}

\paragraph{The Riemann $\mathcal{P}$-symbols.}

Let us discuss some of the properties of the differential operators $\mathcal L_{l,\eps}$. 
Due to the $\epsilon$-deformation the indicials of the differential operators $\mathcal L_{l,\eps}$ shift. As in the $\eps=0$ case, one can collect the indicials at the singular points of the differential operator in the Riemann $\mathcal P$-symbol. For example for $l=2,3,4$ we find:
\begin{equation}
\begin{aligned}
	\mathcal P_2	&	\left\{	\begin{matrix*}[l]	%
							0 				& \frac1{9}			& 1			&	\infty			\\ \hline
							1+\eps\phantom{1}	& -2\eps				& -2\eps		&	0			\\
							1+2\eps			& 0					& 0			&	\eps			
						\end{matrix*} \right\}~,	\quad	\mathcal P_3		\left\{	\begin{matrix*}[l]	%
							0 				& \frac1{16}			& \frac14		&	\infty			\\ \hline
							1+\eps\phantom{1}	& 0					& 0			&	-\eps			\\
							1+2\eps			& \frac12	-3\eps		& \frac12-3\eps	&	0			\\
							1+3\eps			& 1					& 1			&	\eps
						\end{matrix*} \right\}\,,	\\[1ex]
	\mathcal P_4	&	\left\{	\begin{matrix*}[l]	%
							0 				& \frac1{25}			& \frac19	& 1		&	\infty			\\ \hline
							1+\eps\phantom{1}	& 0					& 0		& 0		&	0			\\
							1+2\eps			& 1-4\eps				& 1-4\eps	& 1-4\eps	&	\eps			\\
							1+3\eps			& 1					& 1		& 1		&	1			\\
							1+4\eps			& 2					& 2		& 2		&	1+\eps
						\end{matrix*} \right\}~.
\label{Riemanneps34}
\end{aligned}
\end{equation}
We can also give the general Riemann $\mathcal P$-symbol at arbitrary loop order $l$. For even $l$ we have
\begin{equation}
\begin{aligned}
	\mathcal P_\text{even}	&	\begin{Bmatrix*}[l]	%
							0 				& \frac1{(l+1)^2}		& \frac1{(l-1)^2}		& \cdots		& 1				& \infty			\\ \hline
							1+\eps			& 0					& 0				& \cdots		& 0				& 0				\\
							1+2\eps			& 1					& 1				& \cdots		& 1				& 0+\eps			\\
							1+3\eps			& 2					& 2				& \cdots		& 2				& 1				\\
							 				& 					& 				&  			& 				&  1+\eps			\\
							\vdots			& \vdots				& \vdots			& \vdots		& \vdots			& \vdots			\\
							 1+(l-1)\eps		& l-2					& l-2				& \cdots		& l-2				& \frac l2-1		\\
							1+l\eps			& \frac l2-1-l\eps		& \frac l2-1-l\eps	& \cdots		& \frac l2-1-l\eps	& \frac l2-1+\eps	\\		
						\end{Bmatrix*}~,
\label{Riemannepseven}
\end{aligned}
\end{equation}
whereas for odd $l$ we find
\begin{equation}
\begin{aligned}
	\mathcal P_\text{odd}	&	\begin{Bmatrix*}[l]	%
							0 				& \frac1{(l+1)^2}		& \frac1{(l-1)^2}		& \cdots		& \frac14			& \infty			\\ \hline
							1+\eps			& 0					& 0				& \cdots		& 0				& 0				\\
							1+2\eps			& 1					& 1				& \cdots		& 1				& 0+\eps			\\
							1+3\eps			& 2					& 2				& \cdots		& 2				& 1				\\
							 				& 					& 				& 			& 				&  1+\eps			\\
							\vdots			& \vdots				& \vdots			& \vdots		& \vdots			& \vdots			\\
							 				& 					& 				& 			& 				& \frac {l-3}2		\\
							1+(l-1)\eps		& l-2					& l-2				& \cdots		& l-2				& \frac {l-3}2+\eps	\\
							1+l\eps			& \frac l2-1-l\eps		& \frac l2-1-l\eps	& \cdots		& \frac l2-1-l\eps	& \frac {l-3}4-\frac {l-1}2\eps	\\		
						\end{Bmatrix*} ~.
\label{Riemannepsodd}
\end{aligned}
\end{equation}
One can easily prove that all Riemann $\mathcal P$-symbols satisfy the Fuchsian relation in eq.~\eqref{fuchsianrelation}, recalling that the number of singular points is given by $s=\frac l2+3$ for even $l$ and $s=\frac {l-1}2+3$ for odd $l$. Additionally, these Riemann $\mathcal P$-symbols should be compared to the corresponding quantities for $\eps=0$ in eq.~\eqref{Riemann34}.
In general, we observe that the indicials get shifted by $\eps$-contributions such that the degeneracy of the indicials is gone. In particular, this lift of degeneracy turns the MUM-point with indicials $\{1,\hdots,1\}$ into a regular point with indicials $\{1+\epsilon, \hdots, 1+l \epsilon\}$, where all elements of the $\eps$-Frobenius basis are now given by series solutions (free of logarithms). In contrast, the leading behavior of the special solution does not change and is $\mathcal O(z)$, so it does not get an $\eps$-dependent local exponent as the other solutions. 
The special solution is free of logarithms, too. With this indicial structure in mind, eq.~\eqref{MBfirstorder} suffices to fix all the coefficients in the linear combination of Frobenius solutions. As a side note we mention that we can compute the determinant of the Wronskian also for the $\eps$-Frobenius basis. Using the relation in eq. \eqref{detw} we find:
\begin{equation}
	\det{\bf W}_l(z) =	\left(z^{l-3-(l+1)\eps} \prod_{k\in \Delta^{(l)}}(1-kz)^{1+2\eps}\right)^{-l/2}~.
\end{equation}
The other identities computed in section \ref{sec:banint} cannot be generalized (as least not easily) to the $\eps\neq0$ case, because Griffiths transversality does not hold for generic $\eps$. This can for example be checked from the self-adjointness statement after eq. \eqref{eq:selfadjointness}. To be more precise, for $l=1,2$ the operators are for generic values of $\eps$ self-adjoint. Only in $D=1,2,3$, i.e., $\eps=-\frac12,0,\frac12$, also the operators for $l\geq3$ listed in table \ref{tablediffops} are self-adjoint.

\paragraph{Structure of the differential operators expanded in $\eps$.}

Here we give a brief account of the $\eps$-expansion of the differential equations for the equal-mass banana integrals. More precisely, we 
 characterize the differential equations that govern the $J_{l,1}^{(n)}(z)$ in eq.~\eqref{eq:epsexp}. We know from ref.~\cite{Bonisch:2020qmm} for $\eps=0$ that $\mathcal{L}_l J_{l,1}(z,0)=-(l+1)! \, z$, where $\mathcal{L}_l=\mathcal{L}_l^{(0)}=\mathcal{L}_{l,\eps=0}$ is the Picard-Fuchs operator of degree $l$ that annihilates the Calabi-Yau periods, or equivalently the maximal cuts of $J_{l,1}(z,0)=J_{l,1}^{(0)}(z)$. This implies that $\mathcal{L}_l^{ (0, \text{inh} ) } \mathcal{L}_l^{(0)}  J_{l,1}^{(0)}=0$, with $\mathcal{L}_l^{(0, \text{inh} )} =\theta_z -1$.
Similarly, at order $\epsilon^1$ we find a degree-$l$ operator $\mathcal{L}_l^{(1)}$ such that:
\begin{equation}
\mathcal{L}_l^{(1)} \mathcal{L}_l^{ (0, \text{inh} ) } \mathcal{L}_l^{(0)}\ J_{l,1}^{(1)} = P_{l,1}(z) \,,
\end{equation}
where $P_{l,1}(z)$ is a polynomial. 
Hence, there is degree-1 operator $\mathcal{L}_l^{(1,\text{inh})}$ that annihilates this polynomial, which implies:
\begin{equation}
\mathcal{L}_l^{(1,\text{inh})}\mathcal{L}_l^{(1)} \ \mathcal{L}_l^{ (0, \text{inh} ) } \mathcal{L}_l^{(0)} \ J_{l,1}^{(1)} = 0 \ .
\end{equation}
Indeed, the higher-order contributions $J_{l,1}^{(n)}$ belong to iterated extensions of the original (relative) Calabi-Yau system, and satisfy a differential equation of the form:
\begin{equation}\label{eq:higherextension}
\mathcal{L}_l^{(n,\text{inh})}\mathcal{L}_l^{(n)} \cdots \ \mathcal{L}_l^{(1,\text{inh})}\mathcal{L}_l^{(1)} \  \mathcal{L}_l^{ (0, \text{inh} ) } \mathcal{L}_l^{(0)} \ J_{l,1}^{(n)} = 0 \ ,
\end{equation}
where $\mathcal{L}_l^{(n)}$ is differential operator of degree $l$ in $\theta_z$, while $\mathcal{L}_l^{(n,\text{inh})}$ is of degree 1 and annihilates the polynomial inhomogeneity at the respective order $\epsilon^n$. As the explicit expressions for these differential operators quickly become lengthy, we will refrain from displaying them here, but shall mention that, generically, also integer indicials different from unity appear for solutions to the operator on the left hand side of eq.~\eqref{eq:higherextension}. With increasing value of $n$, the vanishing order of the discriminant polynomial multiplying the highest derivative increases as well. Besides that, the discriminant polynomial gets multiplied with further polynomial factors, which, however, describe apparent singularities of the differential system.

\subsection{Numerical results at low loop orders.}
\label{subsec:numerics}

It is possible to solve the differential equation in eq.~\eqref{oureq} in an efficient way even for high values of $l$ and for higher orders in $\eps$. The method to solve such equations was reviewed in section~\ref{subsec:PF}. As an example, we show in figure~\ref{figure:plots} the results for $J_{l,1}^{(n)}(z)$ for $2\le l\le 4$ and $0\le n\le 2$ for $0\le t=1/z\le 40$. The numerical results for $l\le 3$ are in principle not new, and they agree with the previous results obtained in refs.~\cite{Remiddi:2016gno,Bogner:2017vim,Primo:2017ipr,Duhr:2019rrs,Bezuglov:2021jou}. The four-loop results are new and presented for the first time in this paper. Note that, as expected, the imaginary part of $J_{l,1}^{(n)}(z)$ vanishes for $t<(l+1)^2$.  For $l=2$ there is a logarithmic singularity at $t=9$, whereas for $l\geq3$ it is only a non-differentiable point, due to the vanishing of the logarithmic Frobenius basis element at $t=(l+1)^2$, as can be seen from the Riemann $\mathcal P$-symbols in eq.~\eqref{Riemanneps34}.

\begin{figure}[!htbp]\centering
	\includegraphics[scale=0.43]{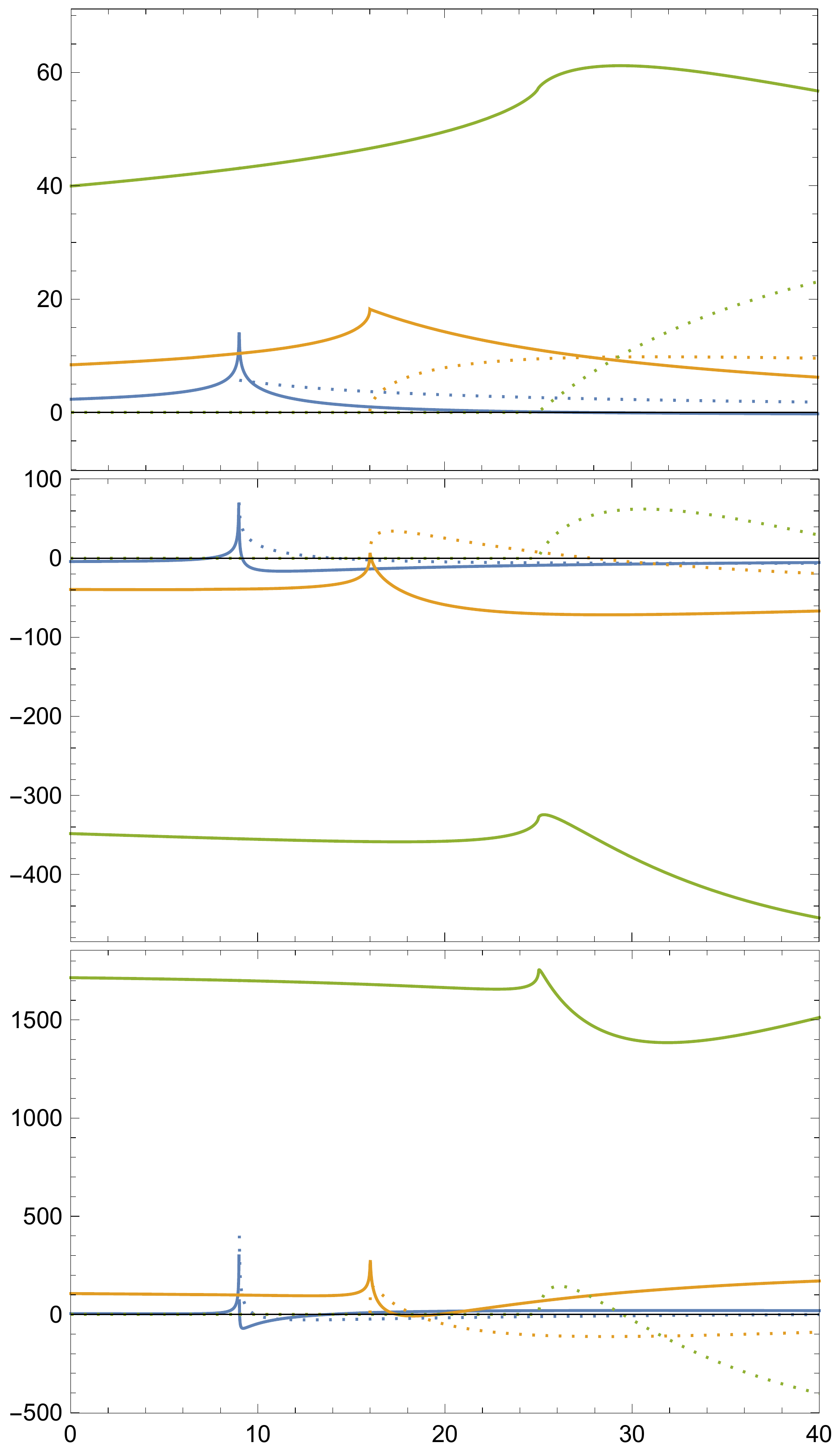}
\caption{The banana integrals $J_{l,1}^{(n)}$ for $l=2,3,4$ (blue, orange, green) and $n=0,1,2$ (upper, middle and lower panels). On the $x$-axis we have drawn the variable $t=p^2/m^2=1/z$. The solid lines correspond to the real part, whereas the dashed lines represent the imaginary part.}
\label{figure:plots}
\end{figure}

\section{Conclusion}
\label{sec:conclusion}  

Over the last decade it has become clear that multi-loop Feynman integrals are closely connected to topics in modern algebraic geometry. 
Understanding the geometry attached to a Feynman integral allows one to infer some of its properties, e.g., its differential equations, the special functions it evaluates to, or the structure of the singularities and monodromies. Very recently, it was shown that several examples of high-loop Feynman integrals are associated to Calabi-Yau geometries, cf., e.g., ref.~\cite{Bloch:2014qca,MR3780269,Bourjaily:2018yfy,Bourjaily:2018ycu,Bourjaily:2019hmc,Klemm:2019dbm,Bonisch:2020qmm}. In particular, in refs.~\cite{Klemm:2019dbm,Bonisch:2020qmm}, some of us have applied methods from Calabi-Yau geometry and mirror symmetry to study the properties of banana graphs with an arbitrary number of loops in $D=2$ dimensions. However, not much is known about Feynman integrals related to Calabi-Yau manifolds, mostly due to a lack of understanding of the relevant mathematics from the particle physics side.

One of the main goals of this paper is to build a bridge between the worlds of Calabi-Yau motives on the one hand, and of multi-loop Feynman integrals on the other. In section~\ref{sec:CY} we have reviewed in detail the geometric concepts relevant to families of Calabi-Yau manifolds, in particular the structure of their moduli spaces and the period integrals. While these mathematical concepts are not new, our goal is to present them in a way that provides an entry point into the subject for physicists working on Feynman integrals. We put a lot of emphasis on illustrating abstract mathematical concepts on concrete examples of families of elliptic curves, like the Legendre and Weierstrass family or the two-loop banana integrals. We expect that the geometric concepts reviewed in section~\ref{sec:CY} will play an increasingly important role in Feynman integral computations in the future. 

The first new and original result of this paper is a compact analytic expression for all master integrals for the equal-mass banana integrals in $D=2$ dimensions for an arbitrary number of loops $l$. In eq.~\eqref{eq:Jk0_sol} we have expressed these master integrals in terms of the periods of a one-parameter family of Calabi-Yau varieties and integrals of these periods (multiplied by a rational function). This representation is the immediate generalization of the representation of the two- and three-loop integrals in terms of iterated integrals of modular forms~\cite{ManinModular,Brown:mmv,Bloch:2014qca,Adams:2017ejb,MR3780269,Broedel:2018iwv,Broedel:2019kmn}. This is the first time that complete analytic results for higher-loop banana integrals in $D=2$ dimensions are presented. We find it remarkable that one can obtain compact analytic results valid for arbitrary loops: so far, only very few (non-trivial) families of Feynman integrals for arbitrary number of loops and general kinematics have been obtained, cf.~e.g., refs.~\cite{Usyukina:1993ch,Usyukina:1992jd,Broadhurst:1995km,Brown:2012ia,Schnetz:2012nt,Drummond:2012bg,Caron-Huot:2018dsv,Basso:2017jwq}. Our results extend this collection by a new infinite family of integrals, and it is the first example of such an infinite family that cannot be expressed in terms of multiple polylogarithms. We emphasize that a solid understanding of the geometry associated to equal-mass banana integrals in $D=2$ dimensions, in particular, the Picard-Fuchs operators, the large momentum behavior described by the $\widehat{\Gamma}$-class~\cite{Bonisch:2020qmm} and the quadratic relations from Griffiths transversality, was crucial in deriving eq.~\eqref{eq:Jk0_sol}. 

The second main result of this paper is the generalization of the results of ref.~\cite{Bonisch:2020qmm} to include dimensional regularization. In section~\ref{sec:Bananadimreg} we have presented a method to describe the Picard-Fuchs differential equations satisfied by $l$-loop banana integrals with arbitrary values for the propagator masses in $D=2-2\eps$ dimensions. We emphasize that, due to the large number of scales, these differential equations may be challenging to obtain with conventional techniques and computer programs. At the same time, we have obtained the leading asymptotic behavior in the large momentum limit, which provides a convenient boundary condition. In section~\ref{subsec:equalmasseps} we have presented an efficient alternative method to derive the Picard-Fuchs differential operator in the equal-mass case in dimensional regularization. The corresponding differential equations can be solved efficiently (at least for the first few loop orders), either using the Frobenius method to obtain fast-converging series representations, or analytically in terms of iterated integrals of Calabi-Yau periods by extending the results of section~\ref{sec:banint}. This opens the way to obtain for the first time complete results for all equal-mass banana integrals, for an arbitrary number of loops, in all kinematics regions, in dimensional regularization.

Let us also give some outlook for future directions of research. First, in section~\ref{sec:banint} we have obtained for the first time analytic results for Feynman integrals that involve (iterated) integrals over a Calabi-Yau period. Since this class of integrals generalizes in a natural way the iterated integrals of modular forms that appear at two- and three-loops, we expect iterated integrals of Calabi-Yau periods to play an increasingly important role for Feynman integral computations in the future. Unlike the case of modular forms, however, not much is known about such iterated integrals even in the mathematics literature, and it would be interesting to study them in more detail from a purely mathematical standpoint. Second, it would be interesting to explore in how far the techniques based on Frobenius bases and Picard-Fuchs differential ideals we have used in this paper can be applied to solve other Feynman integrals. First steps in this directions were taken in refs.~\cite{NasrollahpPeriodsFeynmanDiagrams2016,delaCruz:2019skx,Klausen:2019hrg}, but a complete understanding in how far these techniques can be practically applied is still an open question. Finally, it would be interesting to understand more generally when a Feynman integral is associated to a Calabi-Yau geometry or motive, or if (and which) more complicated geometric structures play a role. In refs.~\cite{Bourjaily:2018ycu,Bourjaily:2018yfy,Bourjaily:2019hmc,Vergu:2020uur} several infinite classes of multi-loop integrals associated to Calabi-Yau varieties have been identified, though starting from two loops in the non-planar sector~\cite{Huang:2013kh,Georgoudis:2015hca} and three loops in the planar sector~\cite{Hauenstein:2014mda}, geometries associated to Riemann surfaces of genus $g>1$ have been identified. For the future it would be interesting to have a clear understanding, and possibly a classification, of the geometries relevant to quantum field theory computations.

\section*{Acknowledgment}

We like to thank the members of the international \emph{Group de Travail on differential equations} in Paris  
for  stimulating talks  and in particular Spencer Bloch, Vasily Golyshev, Matt Kerr, Fernando Rodriguez 
Villegas, Emre Sert\"oz and  Duco van Straten  for discussions on Feynman integrals and Calabi-Yau motives.      

K.B. is supported by the International Max Planck Research School on Moduli Spaces of the Max Planck Institute for Mathematics in Bonn.


\appendix

\section{Ideals of differential operators in one variable}
\label{app:pid_one_var}

Let $\mathbb{Q}(z)$ be the field of rational functions in one complex variable $z$, and $R$ is the ring of differential operators of the form $\cL=\sum_{i=0}^pa_i(z)\partial_z^i$, $a_i(z)\in\mathbb{Q}(z)$, $1\le i\le p$ and $a_p(z)\neq 0$. We call $p\coloneqq\textrm{ord}(\cL)$ the order of $\cL$, and $a_p(z)\coloneqq\textrm{Disc}(\cL)$ its discriminant.

\begin{prop} R is a principal left-ideal domain, i.e., every ideal in $R$ is of the form $R\cL_0$ for some $\cL_0\in R$.
\end{prop}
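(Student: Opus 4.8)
The plan is to mimic the classical proof that a polynomial ring over a field is a principal ideal domain, with the order $\textrm{ord}(\cL)$ playing the role of the Euclidean degree and a right-division algorithm replacing ordinary polynomial division. The whole argument rests on one lemma, everything else being formal.

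First I would check that $R$ is an integral domain and that the order is additive under composition. For nonzero operators $\cL_1=a_p(z)\partial_z^p+\cdots$ and $\cL_2=b_q(z)\partial_z^q+\cdots$, commuting $\partial_z$ past a rational function produces only lower-order terms, so the top-order term of $\cL_1\cL_2$ is $a_p(z)b_q(z)\,\partial_z^{p+q}$. Since $\mathbb{Q}(z)$ is a field, $a_p(z)b_q(z)\neq 0$, hence $\cL_1\cL_2\neq 0$ and $\textrm{ord}(\cL_1\cL_2)=\textrm{ord}(\cL_1)+\textrm{ord}(\cL_2)$. The same computation shows that left multiplication by $\partial_z^k$ leaves the leading coefficient of an operator unchanged and raises its order by $k$; I will use this repeatedly below.

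The key step is the right Euclidean division: for all $\cL_1,\cL_2\in R$ with $\cL_2\neq 0$ there exist $\mathcal{Q},\mathcal{R}\in R$ with $\cL_1=\mathcal{Q}\,\cL_2+\mathcal{R}$ and either $\mathcal{R}=0$ or $\textrm{ord}(\mathcal{R})<\textrm{ord}(\cL_2)$. I would prove this by induction on $\textrm{ord}(\cL_1)$. If $\textrm{ord}(\cL_1)<\textrm{ord}(\cL_2)$, take $\mathcal{Q}=0$ and $\mathcal{R}=\cL_1$. Otherwise, writing $\cL_1=a_m(z)\partial_z^m+\cdots$ and $\cL_2=b_p(z)\partial_z^p+\cdots$ with $m\geq p$, the operator $\tfrac{a_m(z)}{b_p(z)}\,\partial_z^{m-p}\,\cL_2$ has the same leading term $a_m(z)\partial_z^m$ as $\cL_1$, so $\cL_1-\tfrac{a_m(z)}{b_p(z)}\,\partial_z^{m-p}\,\cL_2$ has order strictly less than $m$, and the induction hypothesis finishes the step. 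It is precisely here that one needs $b_p(z)$ to be invertible, i.e.\ that the coefficients lie in the field $\mathbb{Q}(z)$ and not merely in $\mathbb{Q}[z]$; this is the main (and essentially the only) obstacle in the proof.

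Finally, the standard conclusion: let $I\subseteq R$ be a left ideal. If $I=\{0\}$ then $I=R\cdot 0$. Otherwise pick $\cL_0\in I$ of minimal order among the nonzero elements of $I$. For any $\cL\in I$, divide $\cL=\mathcal{Q}\,\cL_0+\mathcal{R}$ as above; then $\mathcal{R}=\cL-\mathcal{Q}\,\cL_0\in I$ because $I$ is a left ideal, and $\mathcal{R}$ is either $0$ or of order $<\textrm{ord}(\cL_0)$, which by minimality of $\cL_0$ forces $\mathcal{R}=0$. Hence $\cL=\mathcal{Q}\,\cL_0\in R\cL_0$, so $I=R\cL_0$. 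In particular every element of $I$ has the form $\tilde{\cL}\cL_0$, which is exactly the statement invoked in section~\ref{subsec:PF}: the Picard--Fuchs ideal in one variable is generated by its distinguished operator $\cL_0$ of lowest order.
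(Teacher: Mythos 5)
Your proof is correct and is essentially the paper's argument: both hinge on killing the leading term of $\cL$ by subtracting a left multiple of the minimal-order element $\cL_0$ of the form $(\text{rational function})\cdot\partial_z^{m-d}\cL_0$, which is possible precisely because the coefficients lie in the field $\mathbb{Q}(z)$. The only difference is organizational — you package the order-reduction induction as a standalone right-division-with-remainder lemma (and explicitly verify the domain property and additivity of order, which the paper omits), whereas the paper runs the same induction directly on the elements of the given ideal.
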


\begin{proof}
Let $\cI\subseteq R$ be an ideal. We need to show that there is $\cL_0\in \cI$ such that $\cI=R\cL_0$. If $\cI$ is the zero or trivial ideal, $\cI=0$ or $\cI=R$, then the claim is obviously true. We therefore assume from now on that $\cI$ is neither zero nor trivial.

Define
\beq
d_{\cI} \eqqcolon \min\{\textrm{ord}(\cL) : \cL\in \cI\textrm{ and } \cL\neq 0\} >0 \,,
\eeq
where the inequality follows from the fact that $\cI$ is neither zero nor trivial. It is easy to see that there is $\cL_0\in\cI$ such that $\textrm{ord}(\cL_0)=d_{\cI}$ and $\textrm{Disc}(\cL_0) = 1$. We now show that this $\cL_0$ generates $\cI$. 

Let $\cL =\sum_{i=0}^pa_i(z)\partial_z^i \in \cI$, $a_p(z)\neq0$. We need to show that there is $\widetilde{\cL}\in R$ such that $\cL=\widetilde{\cL}\cL_0$. We proceed by induction in $\textrm{ord}(\cL)=p\ge d_{\cI}$. 
\begin{itemize}
\item If $p=d_{\cI}$, it is easy to see that $\textrm{ord}(\cL-a_p(z)\cL_0)<d_{\cI}$. Therefore we must have $\cL-a_p(z)\cL_0=0$.
\item If $p>d_{\cI}$, we assume that the claim is true for all operators in $\cI$ of order up to $p-1$. We have $\textrm{ord}(\cL-a_p(z)\partial_z^{p-d_{\cI}}\cL_0)<p$, and so by induction hypothesis there is $\cL_1\in R$ such that $\cL-a_p(z)\partial_z^{p-d_{\cI}}\cL_0=\cL_1\cL_0$.
\end{itemize}

\end{proof}

\section{Maximal cuts of equal-mass banana integrals up to three loops}
\label{app:low_loop}

In this appendix we present closed formulas for the periods $\varpi_{l,k}(z)$ for $l\le 3$ in terms of algebraic functions and complete elliptic integrals of the first kind. We focus on the large-momentum region $0<z<1/(l+1)^2$. The expressions in other regions require careful analytic continuation, cf. refs.~\cite{Remiddi:2016gno,Primo:2017ipr,Bogner:2017vim}.

At one loop, there is only one period, which is an algebraic function:
\beq
\varpi_{1,0}(z) = \frac{z}{\sqrt{1-4z}}\,.
\eeq
Under analytic continuation, the square can change sign. The monodromy group is thus $\Gamma_{\mathcal{E}_{\textrm{ban}_1}} = \mathbb{Z}_2$. 

At two loop order, the maximal cuts of $J_{2,1}(z;\eps)$ can be expressed in terms of complete elliptic integrals of the first kind, cf. ref.~\cite{Laporta:2004rb}. We introduce the shorthand notation:
\beq
\lambda = \frac{\left(\sqrt{z}+1\right) \left(3 \sqrt{z}-1\right)^3}{\left(\sqrt{z}-1\right) \left(3 \sqrt{z}+1\right)^3}\,.
\eeq
We then find:
\beq\bsp\label{eq:omega_2_loop_K}
\varpi_{2,0}(z) &\,= \frac{2 z}{\pi  \sqrt{1-27 z^2+18 z+8 \sqrt{z}}}\,\textrm{K}\!\left(1-\lambda\right)\,,\\
\varpi_{2,1}(z) &\,= -\frac{4 z}{\sqrt{1-27 z^2+18 z+8 \sqrt{z}}}\, \textrm{K}\!\left(\lambda\right)\,.
\esp\eeq
The monodromy group is $\Gamma_{\mathcal{E}_{\textrm{ban}_2}} = \Gamma_1(6)$~\cite{Bloch:2013tra,Adams:2017ejb,Frellesvig:2021vdl}. The periods in eq.~\eqref{eq:omega_2_loop_K} can be expressed in terms of Eisenstein series of weight one for $\Gamma_1(6)$~\cite{Bloch:2013tra,Adams:2017ejb}.

At three loops the Picard-Fuchs operator is a symmetric square, and the periods can be written as products of complete elliptic integrals of the first kind~\cite{Joyce,Primo:2017ipr}. If we define:
\beq\bsp
\lambda_1 &\,= \frac{32 z^{3/2}}{i \sqrt{1-16 z}-8 \left(i \sqrt{1-4 z}-2 \sqrt{z}\right) z+i \sqrt{1-4 z}}\,,\\
\lambda_2 &\,=-\frac{32 z^{3/2}}{i \sqrt{1-16 z}+8 \left(i \sqrt{1-4 z}-2 \sqrt{z}\right) z-i \sqrt{1-4 z}}\,,
\esp\eeq
we have:
\beq\bsp\label{eq:omega_3_loop_K}
\varpi_{3,0}(z) &\,= \frac{\sqrt{\lambda_1 \lambda_2}}{4 \pi ^2}\,\textrm{K}(\lambda_1) \,\textrm{K}(\lambda_2)\,,\\
\varpi_{3,1}(z) &\,= -\frac{\sqrt{\lambda_1 \lambda_2}}{2 \pi }\,\textrm{K}(\lambda_1) \,\textrm{K}(1-\lambda_2) -\frac{ \sqrt{\lambda_1\lambda_2}}{4 \pi }\,i \textrm{K}(\lambda_1) \textrm{K}(\lambda_2)\,,\\
\varpi_{3,2}(z) &\,=\frac{1}{6} \sqrt{\lambda_1 \lambda_2}\,\textrm{K}\!(1-\lambda_1) \,\textrm{K}(1-\lambda_2) +\frac{i}{6} \,\sqrt{\lambda_1 \lambda_2}\,\textrm{K}(\lambda_1)\, \textrm{K}(1-\lambda_2)\\
&\, -\frac{1}{8}\,\sqrt{\lambda_1 \lambda_2}\, \textrm{K}(\lambda_1)\, \textrm{K}(\lambda_2)\,.
\esp\eeq
The monodromy group is $\Gamma_{\mathcal{E}_{\textrm{ban}_2}} = \Gamma_0(6)^{+3}$~\cite{verrill1996}. $\Gamma_0(6)^{+3}$ contains $\Gamma_1(6)$ as a subgroup, and the  periods in eq.~\eqref{eq:omega_3_loop_K} can be expressed in terms of Eisenstein series of weight two for $\Gamma_1(6)$~\cite{Bloch:2014qca,MR3780269,Broedel:2019kmn}.

\section{The Lauricella hypergeometric series $F_C$}
\label{sec:lauricella}

As mentioned in section \ref{sec:Bananadimreg}, the $l$-loop banana integral in dimensional regularization is a linear combination of a specific kind of multivariate hypergeometric series, which were first described by Lauricella \cite{lauricella1893}. For completeness, we shall briefly elaborate on this facts in the present appendix.

 Consider the series in square brackets in eq.~\eqref{eq:MBresSummed3}, first for $j=(1,\dots,1)$. It can be recognized as a Lauricella hypergeometric series in $l+1$ variables $z_i$~\cite{lauricella1893} (see also refs.~\cite{Exton:1976yx,MR3616338}):
\begin{equation}\label{eq:defFC}
F_C^{(l+1)}(a,b;c_1,\dots,c_{l+1}; z_1,\dots, z_{l+1}) \coloneqq \sum_{n \in \mathbb{N}_0^{l+1}} \frac{(a)_{|n|} \ (b)_{|n|}}{\prod_{i=1}^{l+1} (c_i)_{n_i}} \prod_{i=1}^{l+1} \ \frac{z_i^{n_i}}{n_i !} \ .
\end{equation}
The case at hand is described by
\begin{equation} \label{eq:lauricparam}
a=1+(l+1)\epsilon+\delta, \qquad  b=1+l\epsilon+\delta, \qquad c_i=1+\delta_i +\epsilon \ .
\end{equation}
Up to the overall normalization, eq.~\eqref{eq:HypGeomBanana} reduces in the $l=2$ case to eq.~(123) of ref.~\cite{Adams:2013nia}, which in turn corrects the result presented in ref.~\cite{Berends:1993ee}. Recurrence (or contiguity) relations with respect to the $a,b$ and $c$ parameters (see, e.g., eq.~(6.4.17) of ref.~\cite{Exton:1976yx}) then imply algebraic relations between banana integrals with different propagator exponents. A systematic account of this is given in ref.~\cite{Bytev:2016ibi}. 

Lauricella reports a system of second-order linear partial differential equations sastisfied by eq.~\eqref{eq:defFC}~\cite{lauricella1893}.
Focussing on $\delta=0$, we give an independent and simple construction of a family of equivalent or sub-systems in section~\ref{sec:diffeqs}, and discuss special features of the specific hypergeometric system obtained for eq.~\eqref{eq:lauricparam} such as the relation to the (relative) Calabi-Yau period system in $D=2$ dimensions. Note that, according to ref.~\cite{lauricella1893},  
the general solution of Lauricella's differential system
\begin{equation}\label{eq:LauricellaDE}
\left[ \theta_k (\theta_k + c_k - 1)- z_k (\theta + a)(\theta + b )\right] F = 0 \ ,
\end{equation}
where $k=1,\dots, l+1$ and $\theta = \sum_{k=0}^{l+1}\theta_k$, depends on $2^{l+1}$ integration constants. In the generic case, the different elements of a fundamental system can be labelled by
$\underline j \in \{ 0,1 \}^{l+1} $ and read explicitly
\begin{align}\label{eq:LauricellaSols}
F_j = \prod_{i=1}^{l+1} z_i^{j_i' c_i'} & \cdot F_C^{(l+1)}\!\left(
a + \sum_{i=1}^{l+1} j_i' c_i' , \
b + \sum_{i=1}^{l+1} j_i' c_i' \, ; \
j_1 c_1 + j_1' (1+c_1')\, , \dots ,\, j_n c_n + j_n' (1+c_n')\right) \, , \nonumber \\
 &\text{where} \qquad j_i' \coloneqq 1-j_i  \qquad \text{and} \qquad \qquad c_i' \coloneqq 1-c_i \ .
\end{align}
The $z_i$ arguments suppressed in eq.~\eqref{eq:LauricellaSols} are left unchanged with respect to eq.~\eqref{eq:defFC}. We observe that the terms in square brackets in eq.~\eqref{eq:MBresSummed3}, including the monomial prefactor and the $j=0$ solution, precisely correspond to the $2^{l+1}$ fundamental solutions in eq.~\eqref{eq:LauricellaSols} identified by Lauricella, once the appropriate identification of parameters in eq.~\eqref{eq:lauricparam} is made. This observation is true for a generic choice of $\delta$, i.e., generic propagator exponents $1+\delta_i$, and by the considerations following eq.~\eqref{eq:MBready} this applies mutatis mutandis also in the presence of massless extra propagators. 

The coincidence of all series solutions in the $\epsilon=\delta=0$ limit does not mean a drop in the dimension of the solution space of eq.~\eqref{eq:LauricellaDE}, but rather indicates the need for considering logarithmic solutions. For $a=b=c_1=\dots=c_{l+1}=1$ the solution space of eq.~\eqref{eq:LauricellaDE} yet extends the space of period integrals of the holomorphic $(l-1)$-form on the associated Calabi-Yau $(l-1)$-fold and the relative period extension associated with the chain integral over the simplex $\sigma_l$.

We shall also mention that the series in eq.~\eqref{eq:defFC} converges for
 \begin{equation}\label{eq:convdomain}
 |\sqrt{z_1}| \, + \, \dots \, + \, |\sqrt{z_{l+1}}| \ <1 \ ,
\end{equation} 
independently of the values of $a,b$ and $c_1, \dots, c_n$, i.e, this also defines the domain of convergence of the series obtained for other values of $j$ in eq.~\eqref{eq:MBresSummed3}. Partial results for the analytical continuation to other domains can be found in refs.~\cite{Exton:1976yx,Berends:1993ee,Ananthanarayan:2019icl,Ananthanarayan:2020xut}. In general, the singular locus of the Lauricella system of ref.~\cite{MR3238325} is determined by the zeroes of
\begin{equation}
\left(\prod_{k=1}^{l+1} z_k\right)  \prod_{\epsilon_1,\dots,\epsilon_{l+1}=\pm 1 } \left(1+\sum_{k=1}^{l+1} \epsilon_k \sqrt{z_k}\right) \ .
\end{equation}
The fundamental group of the complement of this singular locus was recently studied in ref.~\cite{MR3857198}.  For more (mathematically rigorous) studies of the Lauricella system see also refs.~\cite{MR3616338,MR4031249,MR4143725,MR3152203}.

\paragraph{A Laplace type representation.}

Before closing this appendix, we note that there is a Laplace type integral for the Lauricella function $F_C^{l+1}$~\cite{Exton:1976yx} , reading
\begin{align}
 & \ F_C^{(l+1)}(a,b;c_1,\dots,c_{l+1}; z_1,\dots, z_{l+1})\nonumber \\ &= \frac{1}{\Gamma(a) \Gamma(b)}\int_{0}^{\infty}  \int_{0}^{\infty} \mathrm{e}^{-s-t} s^{a-1} t^{b-1}  \ {}_{0}F_{1}(-;c_1;z_1 \, st)  \ \cdots \ {}_{0}F_{1}(-;c_{l+1};z_{l+1} \, st) \ \mathrm{d}s \, \mathrm{d}t \, .
\end{align}
Here $a$ and $b$ must have positive real parts. Also
\begin{equation}
{}_{0}F_{1}(-; c; z) =  \Gamma(c) \ z^{\frac{1-c}{2}} \ I_{c-1}(2 \sqrt{z}) 
\end{equation}
can be identified with a modified Bessel function of the first kind. For the case in eq.~\eqref{eq:lauricparam}, setting $\epsilon=\delta=0$, we recover the statement of ref.~\cite{Bonisch:2020qmm} that the holomorphic Calabi-Yau period around the MUM-point is (up to an overall factor) given by the double Borel sum  of the $(l+1)^{\textrm{th}}$ symmetric power of the series associated with $I_0$.


\bibliographystyle{JHEP}
\bibliography{References}

\end{document}